\documentclass{article}
\usepackage{graphicx} % Required for inserting images
\usepackage[a4paper, left=2.5cm, right=2.5cm, top=2.5cm, bottom=3cm]{geometry}

\usepackage{enumitem}
\setlist[itemize]{leftmargin=4em, label=\textbullet}  % Set defaults for itemize

\usepackage{setspace}
\usepackage[bottom]{footmisc} % tell latex to put footnotes at the bottom of the page.

\usepackage[
  backend=biber,
  style=apa6,
  sortcites=false,
doi=false,
  url=false,
  eprint=false,
  isbn=false,
  alldates=year
]{biblatex}

\usepackage{pdflscape} % for specific pages in wide format
\usepackage{threeparttable}
\usepackage{amsmath,amsfonts,amssymb,mathtools, amsthm,adjustbox} % Ensures you have math symbol support
\allowdisplaybreaks % allows for a new page within an align environment
\usepackage{subcaption} % combine several figures in one figure
\usepackage{cancel}
\usepackage{nameref}
\usepackage{placeins}
\usepackage{hyperref}
\hypersetup{
    colorlinks=true,
    linkcolor=black,
    citecolor=black,
    urlcolor=black
}

\usepackage{xurl}

\usepackage{thmtools}

\newcommand{\E}{\mathbb{E}}

\newcommand{\UDYX}{U}

\newcommand{\indep}{\perp\!\!\!\perp}

\newcommand{\defeq}{\vcentcolon =}

\usepackage[most]{tcolorbox}
\newtcolorbox[auto counter]{procedurebox}[2][]{
  colback=white,
  colframe=black,
  fonttitle=\bfseries,
  title={Procedure~\thetcbcounter: #2},
  #1
}

\newtheorem{remark}{Remark} % remarks
\newtheorem{proposition}{Proposition}
\newtheorem{lemma}{Lemma}[section]
\newtheorem{definition}{Definition}

\newtheoremstyle{assumptionstyle}
  {3pt}   % space above
  {3pt}   % space below
  {\itshape} % body font
  {}      % indent
  {\bfseries} % head font
  {:}     % punctuation after head
  { }     % space after head
  {\thmname{#1} \thmnote{#3}} % <-- THIS removes parentheses

\theoremstyle{assumptionstyle}
\newtheorem*{assumption}{Assumption}
\newcommand{\assref}[1]{\text{\nameref{#1}}}
\newcommand{\tableref}[1]{\nameref{#1}}

\usepackage{pgf,tikz}
\usetikzlibrary{arrows,shapes.arrows,shapes.geometric,
shapes.multipart,decorations.pathmorphing,positioning,
swigs,calc}
\usetikzlibrary{arrows.meta, positioning}

\usepackage{xcolor}

\usepackage{tabularx} % more table options
\usepackage{multirow} % cells over several rows in tables

\usepackage{array}
\newcolumntype{Y}{>{\raggedright\arraybackslash}X} % wrapped, left-aligned X column

\usepackage{appendix} % for the appendix

\usepackage{centernot}

\usepackage{pifont}

\usepackage{subcaption}

\newcommand{\dzero}{\textcolor{blue}{0}}

\newcounter{model}

\definecolor{gray}{gray}{0.65}

\title{Causal Graphs for Conditional Parallel Trends}
\author{Michael C. Knaus\thanks{University of Tübingen, Mohlstraße 36, 72074 Tübingen, Germany. michael.knaus@uni-tuebingen.de} \and  Henri Pfleiderer\thanks{University of Tübingen, Mohlstraße 36, 72074 Tübingen, Germany. henri.pfleiderer@uni-tuebingen.de} }
\date{\today}

\begin{document}

\maketitle

\begin{abstract}
Difference-in-Differences (DiD) is a widely used research design that often relies on a \emph{conditional parallel trends} (CPT) assumption. 
In contrast to settings with unconfoundedness, where causal graphs provide powerful frameworks for reasoning about valid conditioning variables, general-purpose graphical tools for CPT are missing.
We introduce transformed Single World Intervention Graphs (SWIGs), the $\Delta$-SWIGs, and prove that they enable us to read off conditional independencies via $d$-separation that imply CPT.
Using $\Delta$-SWIGs, we study valid conditioning strategies for DiD in complex settings with multiple periods and time-varying covariates. We show that when time-varying covariates affect the outcome, controlling for post-treatment variables is required for identification. However, even when such controls are included, pre-treatment parallel trends are only informative about a subset of the assumptions required for unbiased post-treatment effects, highlighting the limitations of purely empirical justifications of CPT.
\end{abstract}

\newpage

\section{Introduction}

Difference-in-Differences (DiD) is one of the most widely used research designs for causal inference in economics and related fields \parencite[e.g.][]{goldsmith-pinkham_tracking_2024}. It is also studied in a rapidly growing methodological literature that examines its identifying assumptions, extensions, and limitations \parencite[see, e.g.][for reviews]{roth_whats_2023,dechaisemartin_credible_2023,arkhangelsky_causal_2024}. 
The key identifying assumption in most DiD applications is that outcomes for treated and untreated units would have evolved in parallel in a counterfactual world with no treatment. In empirical practice, this assumption is often imposed conditional on observed covariates, corresponding to conditional parallel trends (CPT). For example, 19 of the papers published in the \textit{American Economic Review} in 2024 and 2025 explicitly assume some form of parallel trends, 14 of which impose CPT, and 13 of which use time-varying controls. However, none of these papers provide a formal justification for the choice of conditioning variables beyond presenting evidence on parallel pre-trends.\footnote{Appendix \ref{app:literature} provides details about the literature review.} This indicates that guidance on how to justify appropriate conditioning variables remains limited, especially when conditioning on time-varying covariates.

In settings with unconfoundedness or instrumental variables, causal graphs provide well-established frameworks for reasoning about valid conditioning variables and causal identification (see, e.g., textbooks by \textcite{pearl_causal_2016}, \textcite{frolich_impact_2019}, \textcite{cunningham_causal_2021}, \textcite{huber_causal_2023}, \textcite{chernozhukov_applied_2024}, \textcite{hernan_causal_2024}, \textcite{wager_causal_2024}, or reviews by \textcite{heckman_econometric_2022}, \textcite{hunermund_causal_2025}, \textcite{cinelli_crash_2024}, \textcite{imbens_causal_2024}, \textcite{abbring_philip_2025}). However, these tools do not directly apply to DiD settings, where identification is typically justified by functional form assumptions on time-invariant unobserved confounders that enter outcomes in an additively separable way. In particular, the recent literature typically states these assumptions only for the untreated potential outcome, while leaving treated potential outcomes, and thus the individual treatment effect, unrestricted (see Appendix \ref{app:swas-lit} for a review). 
We build on Single World Intervention Graphs (SWIGs) developed by \textcite{richardson_single_2013} to represent this perspective graphically. In settings with unconfoundedness, SWIGs encode conditional independencies involving potential outcomes and treatment that justify standard identification of average treatment effects, which can be read off via $d$-separation, a standard tool in causal graphs \parencite[see][for a primer]{richardson_single_2013a}.

This paper introduces transformed SWIGs, the $\Delta$-SWIGs, and shows that they provide a graphical characterization of CPT via $d$-separation. In particular, $\Delta$-SWIGs encode conditional independencies involving \textit{differences in} potential outcomes that correspond to conditional parallel trends and justify identification in DiD settings.
This provides a general framework for reasoning about conditioning strategies in DiD settings. We use this framework to analyze how features of the causal structure—such as outcome dynamics, time-varying covariates, and treatment–covariate feedback—affect the existence and form of valid conditioning strategies in the standard 2x2 setting and the multi-period setting of \textcite{callaway_differenceindifferences_2021}. In particular, we characterize how CPT can be justified based on additive separability and the causal structure alone, show that outcome dynamics generally preclude it without further restrictions, and establish that when time-varying covariates affect the outcome, post-treatment variables must be controlled for. We also show that pre-treatment parallel trends are informative only about a subset of the assumptions required for unbiased post-treatment effects. Finally, we translate these insights into practical guidance for implementing conditional DiD with time-varying covariates.

To illustrate the main issues and findings, Figure~\ref{fig:simulations_intro} presents results from a simple simulation with unobserved time-invariant confounders and time-varying observed covariates. The latter may be unaffected by the treatment (Figure \ref{fig:sim-no-dx}) or affected by the treatment (Figure \ref{fig:sim-dx}) (see Appendix \ref{app:simulations} for details). The figure reports conditional DiD estimates from a large sample for the group first treated in period four under three conditioning strategies: using only pre-treatment covariates (left), using covariates measured prior to each outcome in the respective DiD comparison (``pre-outcome'' controls, middle), and using the full sequence of time-varying covariates (right). We highlight several patterns that may appear surprising but can all be rationalized within our framework: (i) using only pre-treatment covariates yields parallel pre-trends and unbiased short-term effects but biased dynamic effects, (ii) using pre-outcome controls provides the same results as using the full sequence of covariates, (iii) in the absence of treatment–covariate feedback, strategies involving post-treatment variables are unbiased, (iv) in the presence of treatment–covariate feedback, all dynamic effect estimates are biased, reflecting omitted variable bias or ``wrong world control bias'', and (v) pre-trends do not diagnose post-treatment violations of CPT, while short-term effects remain unbiased even with treatment–covariate feedback. \textcite{ghanem_when_2026} raise similar issues in a three-period setting.

\begin{figure}[t]
    \centering
    \caption{Simulation with time-varying covariates and different conditioning strategies}\label{fig:simulations_intro}
    \begin{subfigure}{\textwidth}
    \centering
            \includegraphics[width=0.8\linewidth]{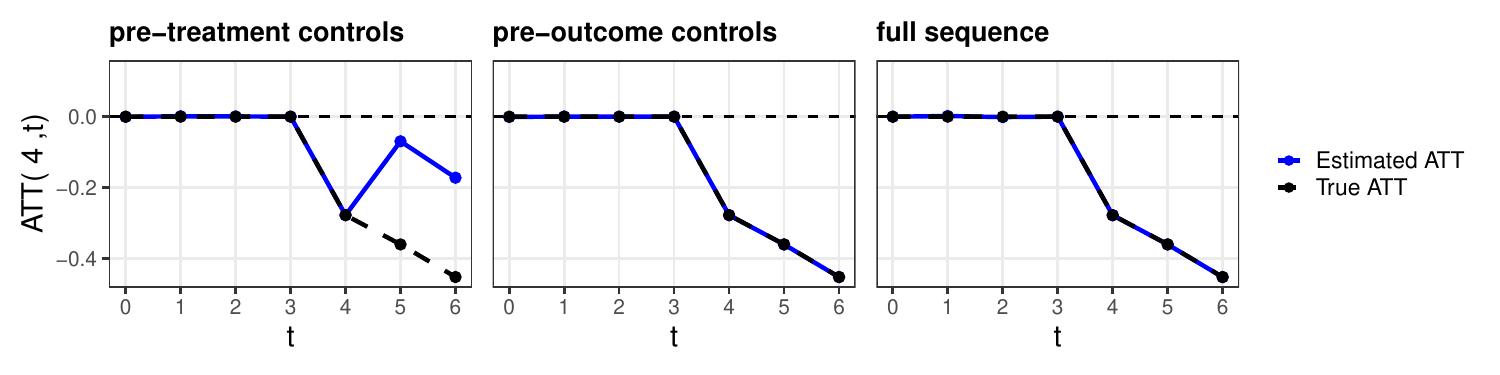}
            \caption{No treatment-covariate feedback}
        \label{fig:sim-no-dx}
    \end{subfigure}

\begin{subfigure}{\textwidth}
    \centering
            \includegraphics[width=0.8\linewidth]{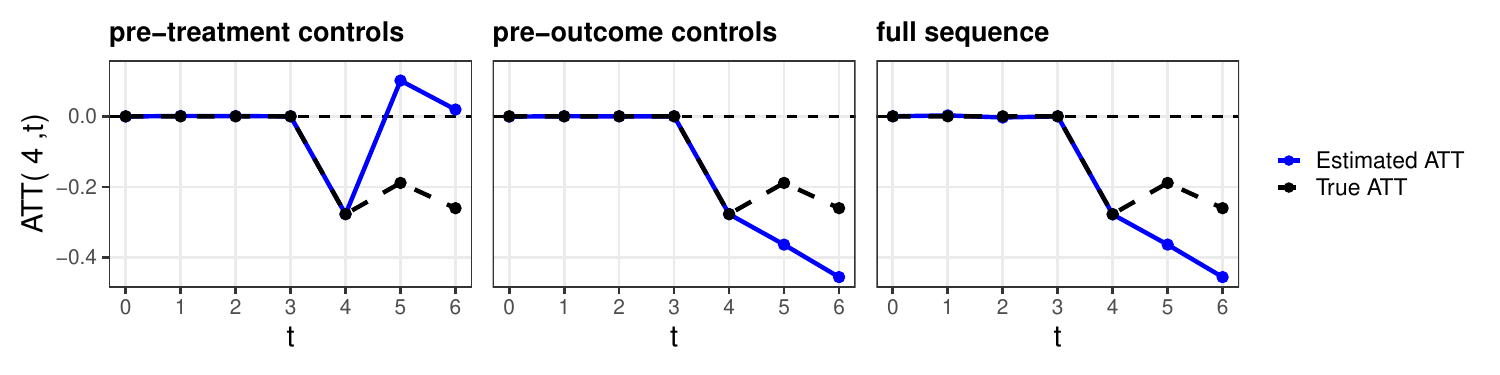}
            \caption{Treatment-covariate feedback}
        \label{fig:sim-dx}
    \end{subfigure}
\begin{minipage}{\linewidth}
\begin{small}
\textit{Note:} The sample size for the simulations is set to 10 million. Estimation is by conditional DiD with the never-treated as control group conditioning variables as specified in the subplot titles: pre-treatment covariates (left), covariates measured prior to each outcome in the respective DiD comparison (middle), and full sequence of time-varying covariates (right). The data-generating process with unobserved and additively separable time-invariant confounders and time-varying observed covariates and implementation details are provided in Appendix \ref{app:simulations}.   
\end{small}
\end{minipage}
\end{figure}

Overall, this suggests that identification should be derived from explicit structural considerations complemented by statistical evidence on parallel pre-trends, rather than treating CPT as a primitive assumption justified solely by such evidence. Our framework provides a transparent and principled way to conduct this reasoning and is particularly useful for analyses with time-varying covariates and multiple periods, especially when researchers are reluctant to rely on parametric or distributional assumptions beyond standard additive separability of unobserved confounders.

The remainder of the paper is organized as follows. Section~\ref{sec:related-lit} discusses the related literature and situates our contribution. Section~\ref{sec:prelim} introduces the necessary building blocks, including structural causal models, SWIGs, and $d$-separation. Section~\ref{sec:D-SWIG} develops the $\Delta$-SWIG framework and its connection to conditional parallel trends. Section~\ref{sec:time_varying_T2} and Section~\ref{sec:multiple} apply this framework to characterize valid conditioning strategies in 2x2 and multi-period settings. Section~\ref{sec:implications} discusses practical implications, and Section~\ref{sec:conclusion} concludes.

% Key ingredient of most DiD applications is a Conditional Parallel Trends (CPT) assumption like 
%     \begin{align*}\label{eq:CPT-X-2x2}\tag{CPT}
%         \underbrace{\E[Y_1(0) - Y_0(0) \mid X, D = 1]}_{\text{counterfactual trend of treated}} &=  \underbrace{\E[Y_1(0) - Y_0(0) \mid X, D = 0]}_{\text{factual trend of untreated}}  \\
%         % \Leftrightarrow \E[\Delta Y_1(0) \mid X, D = 1] &=  \E[\Delta Y_1(0)\mid X,D = 0]
%     \end{align*}

% Standard SWIGs are build to read off $Y(d) \indep D \mid X$ implying conditional mean independence $\E[Y(d) \mid X] = \E[Y(d) \mid X, D = d]$, which can be used to identify average potential outcome under unconfoundedness $\E[Y(d)] = \E[\E[Y(d) \mid X]] = \E[\E[Y(d) \mid X, D = d]] = \E[\E[Y \mid X, D = d]]$ (refs). The $\Delta$-SWIG allows to read off independencies like $Y_1(0) - Y_0(0) \indep D \mid X$ implying conditional parallel trends $\E[Y_1(0) - Y_0(0) \mid X, D=1] = \E[Y_1(0) - Y_0(0) \mid X,D=0]$, which are used to identify the counterfacutal time trend in DiD $\E[Y_1(0) - Y_0(0) \mid D=1] = \E[\E[Y_1(0) - Y_0(0) \mid X, D=1]\mid D=1] = \E[\E[Y_1(0) - Y_0(0) \mid X, D=0]\mid D=1] = \E[\E[Y_1 - Y_0 \mid X, D=1]\mid D=1]$ (ref).

\subsection{Related Literature} \label{sec:related-lit}

This paper is related and contributes to several strands of literature.
First, it contributes to the recent literature on justifying (conditional) parallel trends for DiD or DiD directly based on substantive knowledge about the data-generating process. 
Most prominently, \textcite{ghanem_selection_2024} provide selection-based necessary and sufficient conditions for parallel trends largely focusing on the two-period case without covariates. They discuss extensions to more time periods without covariates, or to two periods with time-varying covariates that are not affected by the treatment. We focus on sufficient conditions but also cover multiple time periods with time-varying covariates that may be affected by the treatment.
% Our graphical approach can deliver sufficient conditions for their main results on selection on unobservables.
% Their approach does, however, naturally accommodate parametric or distributional assumptions beyond the causal structure.
\textcite{ghanem_when_2026} study the informativeness of pre-trends in work conducted independently and concurrently with ours. In line with our findings, they show that with time-varying covariates, pre-trends based on pre-treatment controls can be misleading for causal effects in a three-period setting. Again, our results also hold for more time periods and provide more details about what can and cannot be deduced from pre-trends trends testing.
Therefore, our approach and results are complementary to those in \textcite{ghanem_selection_2024} and \textcite{ghanem_when_2026}.
\textcite{marx_parallel_2024} focus on specific models of dynamic choice and the parallel trends assumption.
\textcite{chabe-ferret_should_2025} considers a linear model with selection on pre-treatment outcomes. 
Both do not address the role of covariates.

The second strand of literature is on causal graphs for DiD.
\textcite{renson_using_2026} is most closely related using SWIGs to provide \textit{necessary} conditions for \textit{unconditional} parallel trends in \textit{two} time periods. % under a linear faithfulness assumption.
In contrast, we introduce $\Delta$-SWIGs to obtain \textit{sufficient} conditions for \textit{conditional} parallel trends with time-varying covariates and \textit{multiple} time periods. 
% We recover their insights that selection on pre-treatment outcomes and state-dependence are incompatible with parallel trends and extend the analysis to time-varying covariates and treatment covariate-feedback.
\textcite{weber_assumption_2015} compare identifying assumptions, including conditional parallel trends, relying on DAGs rather than SWIGs. 
% Also, their functional form restrictions apply to observed outcomes rather than only the untreated potential outcome, as is standard in the recent literature. 
\textcite{huber_joint_2024} uses DAGs to motivate a joint test of unconfoundedness and parallel trends.
Both studies are restricted to two time periods and cannot read off standard CPT involving differences in untreated potential outcomes because they are not represented in the DAGs.
\textcite{kim_gain_2021} consider ``gain scores'' - where the outcome of interest is the difference between pre- and post-treatment outcomes - in linear models and DAGs using path tracing rules.
\textcite{zhang_exploiting_2021} show how to incorporate equality constraints in linear models, with DiD as an important special case. 
Unlike these last two contributions, we do not restrict attention to linear models in two time periods and incorporate covariates.

Third, our work relates to the literature on DiD with time-varying covariates. 
\textcite{caetano_differenceindifferences_2024} establish identification under CPT conditional on the full sequence of time-varying covariates; our approach can be used to justify this assumption. We also show that controlling for fewer variables suffices under the same structural assumptions, thereby mitigating issues related to high dimensionality.
\textcite{caetano_difference_2024} consider CPT with post-treatment covariates in two periods, and provide conditions under which identification of treatment effects remains possible.
Again, our approach can provide sufficient conditions for these assumptions grounded in the causal structure, and further extends the analysis to multiple time periods and the informativeness of pre-trend testing.

% Fourth, \textcite{shahn_structural_2024} and \textcite{renson_identifying_2023} establish identification of causal effects under CPT with treatment-covariate feedback.
% Our graphical approach can be used to justify their CPT assumptions, which require ruling out unobserved confounding between the treatment and covariates.

Taken together, our work advances the literature along several dimensions. We introduce and show the validity of $\Delta$-SWIGs as a graphical tool to derive sufficient conditions for conditional parallel trends, accommodating time-varying covariates, treatment-covariate feedback, covariate dynamics, and multiple time periods — settings not jointly covered by existing approaches. Beyond identification, we provide additional testable implications and clarify when pre-trend tests are and are not informative about the validity of the identifying assumptions.

\section{Preliminaries} \label{sec:prelim}

% This section compactly introduces crucial building blocks for the main analysis below. 
Throughout the paper uppercase letters $V$ denote random variables, lowercase letters $v$ the corresponding realizations.
Bold letters denote sets of random variables $\mathbf{V}$ with realizations $\mathbf{v}$.

\subsection{2x2 Difference-in-Differences and Conditional Parallel Trends} 

The canonical 2x2 setting considers binary treatment $D$ and two periods $t=0,1$. It assumes potential outcomes $Y_t(d)$ in period $t$ under treatment $d$ and targets the average treatment effect on the treated $ATT := \E[Y_1(1) - Y_1(0) \mid D=1]$. It exploits that no unit is treated in period $t=0$ and some units are treated in $t=1$. Then, assuming parallel trends conditional on time-invariant control variables $X$ %\footnote{The unconditional case is implicitly covered as special case throughout the paper, but never explicitly discussed.}
\begin{align*}\label{eq:CPT-X-2x2}\tag{2x2-CPT}
        \underbrace{\E[\overbrace{Y_1(0) - Y_0(0)}^{\Delta Y_1(0)} \mid X, D = 1]}_{\text{counterfactual trend of treated}} &=  \underbrace{\E[\overbrace{Y_1(0) - Y_0(0)}^{\Delta Y_1(0)} \mid X, D = 0]}_{\text{factual trend of untreated}}
        % \Leftrightarrow \E[\Delta Y_1(0) \mid X, D = 1] &=  \E[\Delta Y_1(0)\mid X,D = 0]
\end{align*}
identifies the $ATT$ following standard arguments \parencite[see, e.g.][]{heckman_matching_1997, abadie_semiparametric_2005,lechner_estimation_2011}.%
\footnote{Identification also requires overlap. However, we abstract from overlap considerations in this paper and implicitly assume in our discussions that overlap holds. We also omit the no anticipation assumption at this stage. See Remark \ref{rem:no-anti} for a detailed discussion how it follows naturally once the graphical perspective is introduced.} 
We stick to this well-established setting until we have introduced all relevant components to consider more complex scenarios.

% The 2x2 setting with time-invariant controls is very closely related to unconfoundedness that usually assumes conditional independence $Y_1(0) \indep D \mid X$ implying conditional mean independence $\E[Y_1(0) \mid X, D=1] = \E[Y(d) \mid X,D = 0]$ to identify $ATT$ as $\E[Y(1) \mid D=1] - \E[\E[Y \mid X,D = 0] \mid D=1]$ (refs auch Rosenbaum Rubin ?, Heckman,..., discussion in Abadie). One leading application of SWIGs is to read off the conditional independence from a causal graph (cite Primer). The first step in this paper is to introduce $\Delta$-SWIGs to read off $Y_1(0) - Y_0(0) \indep D \mid X$ to justify \eqref{eq:CPT-X-2x2}. This allows us to introduce the required tools in a simple setting before applying the technology to more interesting scenarios.

\subsection{Structural Causal Models and Directed Acyclic Graphs} \label{sec:graphical_models}

A structural causal model (SCM) $\mathcal{M}$ consists of a set of endogenous variables $\mathbf V$,
exogenous variables $\mathbf U$, and functions $\mathcal{F} := \{f_{V_j}\}_{V_j \in \mathbf V}$. For each $V_j \in \mathbf V$, the model specifies a structural equation $V_j \defeq f_{V_j}(Pa(V_j), U_{V_j})$, where $Pa(V_j) \subseteq \mathbf{V} \setminus \{V_j\} $ denotes the parents (direct causes) of $V_j$, and $U_{V_j} \in \mathbf U$ are exogenous random variables.
The corresponding graph $\mathcal{G}$ contains a node for each $V_j \in \mathbf V$ and a directed edge from each member of $Pa(V_j)$ to $V_j$, with an arrowhead pointing to $V_j$.
% We use $Pa_\mathcal{G}(V_j)$ to denote the parent nodes of $V_j$ in $\mathcal{G}$, i.e., nodes with a direct edge to $V_j$.
If there exists a directed path $V_a \rightarrow ... \rightarrow V_d$, i.e.~with all arrows pointing towards $V_d$, then $V_d$ is call a descendant of $V_a$. We denote all descendants of a node $Desc(V_j)$.
If there are no cycles, i.e.~no directed sequences of edges from a node to itself, the graph is called a directed acyclic graph (DAG).\footnote{For a more in-depth introduction to SCMs and DAGs see, e.g., the textbooks by textbooks by \textcite{pearl_causal_2016}, \textcite{frolich_impact_2019}, \textcite{cunningham_causal_2021}, \textcite{huber_causal_2023}, \textcite{chernozhukov_applied_2024}, \textcite{hernan_causal_2024}, \textcite{wager_causal_2024}, or reviews by \textcite{heckman_econometric_2022}, \textcite{hunermund_causal_2025}, \textcite{cinelli_crash_2024}, \textcite{imbens_causal_2024}, \textcite{abbring_philip_2025}).}
% If the graph is acyclic and the variables $\mathbf U$ are mutually independent, the probability distribution over the variables $\mathbf V$, that is induced by the SCM and denoted $P(\mathbf{V})$, satisfies the Markov property with respect to the graph.
% This means that any variable $V_j$ is independent of its non-descendants conditional on its parents \parencite{pearl_causality_2009}.

Figure \ref{fig:ex-dag1} provides an example illustrating the conventions for causal graphs used throughout the paper: (i) exogenous variables like $U_{Y_0}$ are omitted unless their presence is useful; (ii) a missing circle around a variable indicates that it is unobservable. 
Figure \ref{fig:ex-dag1} depicts a setting with time-invariant observable confounders $X$ and unobservable time-invariant confounders $U$ where outcomes are measured in the pre- and post-treatment period. This is a setting where standard 2x2 DiD would often be considered to overcome unobserved confounding/heterogeneity indicated by the open backdoor path $D \leftarrow U \rightarrow Y_1$, which prevents identification of causal effects by controlling for $X$ \parencite[see, e.g.][Ch. 3.3]{pearl_causal_2016}. 

We note that common DAGs are useful to illustrate the problem in this setting, but not expressive enough to incorporate the DiD solution. In particular, they do not allow to read-off sufficient conditions for conditional parallel trends. This is in contrast to unconfoundedness settings where DAGs paired with the backdoor criterion are powerful to reason about good, bad and neutral controls \parencite{cinelli_crash_2024}. We build on Single World Intervention Graphs paired with $d$-separation to provide graph-based justifications for control variables in DiD. Both components are introduced in the following.

\begin{figure}[t]
    \caption{Causal graphs for a simple 2x2 setting and their corresponding SCMs}\label{fig:ex-dag}
    \centering
    % ================= Subfigure (a) =================
    \begin{subfigure}{0.49\textwidth}
        \centering
  \resizebox{0.6\linewidth}{!}{% 
        \begin{tikzpicture}
            \tikzset{line width=1.5pt, ell/.style={draw, fill = white, inner xsep=5pt,inner ysep=5pt, line width=1.5pt}, unobs/.style={ fill = none, inner xsep=5pt,inner ysep=5pt, line width=1.5pt}, swig vsplit={gap=8pt, inner line width right=0.5pt}};

            % Nodes
            \node[name=y0, ell, shape=ellipse]{$Y_0$};
            \node[name=y1, ell, shape=ellipse] at ($(y0)+(4cm,0cm)$) {$Y_1$};
            \node[name=d, ell, shape=ellipse] at ($(y1)+(-1cm,-2cm)$) {$D$};
            \node[name=u_dyx, unobs, shape=ellipse] at ($(y1)+(-2cm,-3.5cm)$) {$\UDYX$};
            \node[name=x, ell, shape=ellipse] at ($(y0)+(0cm,-2cm)$) {$X$};

            % Edges
            \draw[->,line width=1.5pt,>=stealth]
                (d) edge (y1)
                (u_dyx) edge (y0)
                (u_dyx) edge[bend right = 50, color=orange] (y1)
                (u_dyx) edge[color=orange] (d)
                (u_dyx) edge (x)
                (x) edge (y0)
                (x) edge (y1)
                (x) edge (d);
        \end{tikzpicture}
}
{\small
        \begin{align}
            Y_0 &\defeq f_{Y_0}(\UDYX,X,U_{Y_0})  \label{eq:scm-2x2-y0}\\
            Y_1 &\defeq f_{Y_1}(\UDYX,X,D,U_{Y_1}) \label{eq:scm-2x2-y} \\
            X & \defeq f_X (\UDYX,U_X) \\
            D   &\defeq f_{D}(\UDYX,X,U_D) \\
            U & \defeq f_U (U_U)
        \end{align}
}
        \caption{2x2 DAG}
        \label{fig:ex-dag1}
    \end{subfigure}
    % ================= Subfigure (b) =================
    \begin{subfigure}{0.49\textwidth}
        \centering
  \resizebox{0.65\linewidth}{!}{% 
    \begin{tikzpicture}
        \tikzset{line width=1.5pt, ell/.style={draw, fill = white, inner xsep=5pt,inner ysep=5pt, line width=1.5pt}, unobs/.style={ fill = none, inner xsep=5pt,inner ysep=5pt, line width=1.5pt}, swig vsplit={gap=8pt, inner line width right=0.5pt}};

% Nodes
    \node[name=y0, ell, shape=ellipse]{$Y_0\textcolor{gray}{(0)}$};
    \node[name=y1, ell, shape=ellipse] at ($(y0)+(4cm,0cm)$) {$Y_1(0)$};
    \node[name=d, ell, shape=swig vsplit] at ($(y1)+(-1cm,-2cm)$) {\nodepart{left}{$D$} \nodepart{right}{$0$}};
    \node[name=u_dyx, unobs, shape=ellipse] at ($(y1)+(-2.5cm,-3.5cm)$) {$\UDYX$};
    % \node[name=u_y0, unobs, shape=ellipse] at ($(y0)+(0cm,+1.5cm)$) {$U_{Y_0}$};
    % \node[name=u_y1, unobs, shape=ellipse] at ($(y1)+(0cm,+1.5cm)$) {$U_{Y_1}$};
    \node[name=x, ell, shape=ellipse] at ($(y0)+(0cm,-2cm)$) {$X$};

% Edges
\draw[->,line width=1.5pt,>=stealth]
(d) edge (y1)
(u_dyx) edge (y0)
(u_dyx) edge[bend right = 50] (y1)
(u_dyx) edge (d)
% (u_y0) edge (y0)
% (u_y1) edge (y1)
(u_dyx) edge (x)
(x) edge (y0)
(x) edge (y1)
(x) edge (d);
\end{tikzpicture}
}
{\small
\begin{align*}
    Y_0\textcolor{gray}{(0)} &\defeq f_{Y_0}(\UDYX,X,U_{Y_0}) \\
    % &= \alpha(\UDYX) + g_{Y_0}(X,U_{Y_0}) \\
    Y_1(0) &\defeq f_{Y_1}(\UDYX,X,0,U_{Y_1}) \\
    % &= \alpha(\UDYX) + g_{Y_1}(X,U_{Y_1}) \\
    X & \defeq f_X (\UDYX,U_X) \\
    D   &\defeq f_{D}(\UDYX,X,U_D) \\
            U & \defeq f_U (U_U) 
\end{align*}
}
        \caption{2x2 SWIG}
        \label{fig:ex-dag2}
    \end{subfigure}
\end{figure}
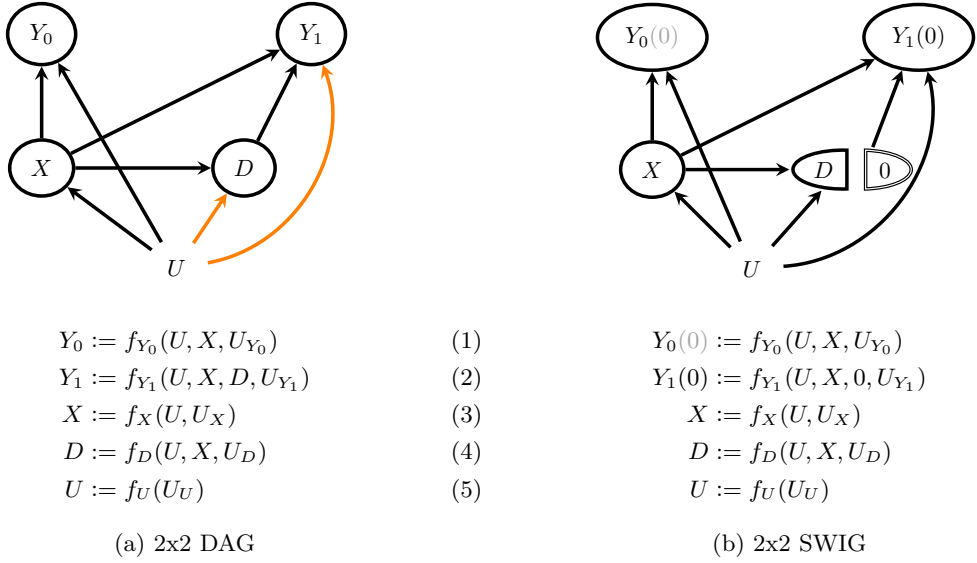

\subsection{Single World Intervention Graphs} \label{sec:swigs}

Single World Intervention Graphs (SWIGs) developed by \textcite{richardson_single_2013} unify DAGs and the potential outcomes framework.
In the underlying SCM, potential outcomes are obtained by fixing possibly multiple treatment variables $\mathbf D$ to a particular value $\mathbf d$ in the structural equations.\footnote{This \textit{fix} operator is discussed, e.g. in \textcite{heckman_causal_2015}. See Table 3.1 in \textcite{heckman_econometric_2022} for an illustration of the difference between the \textit{fix} and the \textit{do} operator, which is commonly applied in the context of DAGs.} This results in a new set of endogenous variables $\mathbf V (\mathbf d)$ consisting of observable and potential variables. For example, fixing $D = 0$ in Equation \eqref{eq:scm-2x2-y} of Figure \ref{fig:ex-dag1} yields $Y_1(0) := f_Y(U,X,0,U_{Y_1})$ representing the potential outcome in a counterfactual world where every unit remains untreated. All other variables remain unchanged as their structural equations do not depend on $D$ or its descendants.

The SWIG $\mathcal{G}(\mathbf{d})$ graphically mirrors the hypothetical intervention of fixing $\mathbf D = \mathbf d$ by transforming the original DAG in two steps. First, each of the treatment nodes $D_j \in \mathbf D$ are \textit{split} into a random part $D_j$ and a fixed part $d_j$. 
The random part of the split node inherits the incoming edges and the fixed part the outgoing edges.
Second, all descendants of the variables in $\mathbf D$ are \textit{relabeled} to become potential outcomes.\footnote{\textcite{hernan_causal_2024} and \textcite{chernozhukov_applied_2024} provide textbook introductions to SWIGs. \textcite{andersen_guide_2023} use SWIGs to discuss sample selection.}
We use a subscript to indicate if a specific expression refers to a SWIG, e.g.~$Pa_{\mathcal{G}(\mathbf{d})}(V_j (\mathbf d))$ denotes the parents of node $V_j (\mathbf d)$ in SWIG $\mathcal{G}(\mathbf{d})$.

Figure \ref{fig:ex-dag2} shows the SWIG for the two period case with unobserved confounding where the treatment node is split and its descendant node is relabeled as potential outcome $Y_1(0)$. The relabeling of the pre-treatment outcome in gray is optional but useful as explained in the following remark:
\begin{remark}(no anticipation \& \textcolor{gray}{color} convention) \label{rem:no-anti} \\
    The no anticipation assumption in the DiD literature states that potential outcomes in the pre-treatment period are unaffected by the treatment, i.e.~$Y_0(1) = Y_0(0) = Y_0$. This is encoded in DAG \ref{fig:ex-dag1} by a missing arrow $D \rightarrow Y_0$. Thus, $Y_0$ is not a descendant of $D$ and would not be relabeled as a potential outcome in the default relabeling step of \textcite{richardson_single_2013}. However, $\{Y_0\textcolor{gray}{(1)}, Y_0\textcolor{gray}{(0)}, Y_0\}$ represent an equivalence class in SWIG \ref{fig:ex-dag2}, which means that non-descendants of the treatment can be relabeled as potential outcomes if it is useful \parencite[Sec.~3.4]{richardson_single_2013}.
    Thus, we also relabel pre-treatment outcome nodes in our SWIGs to maintain compatibility with the DiD literature.\footnote{\textcite{richardson_single_2013} state on p.~30 "There may be situations in which we might find it useful to label a node with another variable from [the same equivalence class]". We think ensuring backwards compatibility with the previous literature is such a situation.} 
    To remind the reader that certain parts of the label are redundant, we display them in gray.
\end{remark}

\begin{remark}(consistency \& \textcolor{blue}{color} convention) \label{rem:consistency} \\
    Observable variable $V_j$ equals potential variable $V_j (\mathbf d)$ if $\mathbf D = \mathbf d$.\footnote{For example, $\E[Y_1 \mid D=0] = \E[f_{Y_1}(\UDYX,X,D,U_{Y_1}) \mid D=0] = \E[f_{Y_1}(\UDYX,X,0,U_{Y_1}) \mid D=0] = \E[Y_1(0) \mid D=0]$ in the setting of Figure \ref{fig:ex-dag}.} This consistency is used repeatedly in the following. To indicate when and why potential variables can be replaced by observable variables, we highlight the relevant match in blue. For example, $\E[Y_1(\dzero) \mid D=\dzero]$ is equivalent to $\E[Y_1 \mid D=0]$.
\end{remark}

\subsection{$d$-separation in SWIGs for Conditional Independencies} 

The benefit of SWIGs is that they allow to read off (conditional) independencies involving observable and potential variables via $d$-separation, which is a standard tool in the causal graph literature \parencite[e.g.][]{verma_causal_1988}. 
Two nodes $X,Y \in \mathbf{V}(\mathbf{d})$ are said to be $d$-separated by a set of nodes $\mathbf Z \subseteq \mathbf{V}(\mathbf{d}) \setminus \{X,Y\}$ in SWIG $\mathcal{G}(\mathbf{d})$ if every path between $X$ and $Y$ is blocked by the set $\mathbf Z$.
Practically, a path can be blocked i) by including the middle node of chains ($V_i \rightarrow V_j \rightarrow V_k$) or forks ($V_i \leftarrow V_j \rightarrow V_k$) in $\mathbf Z$, and/or ii) by neither including the middle node of a collider structure ($V_i \rightarrow V_j \leftarrow V_k$) nor its descendants in $\mathbf{Z}$.
We write $X \indep_{\mathcal{G}(\mathbf{d})} Y \mid \mathbf Z$ to express that two nodes $X$ and $Y$ are $d$-separated by the set of nodes $\mathbf Z$ in SWIG $\mathcal{G}(\mathbf{d})$. Appendix \ref{sec:app-dsep} provides a formal definition of $d$-separation and illustrates path blocking rules in an example that highlights rules frequently applied in the complex structures below.

Proposition 11 of \textcite{richardson_single_2013} shows that the distribution of $\mathbf{V}(\mathbf{d})$ satisfies the Markov property with respect to SWIG $\mathcal{G}(\mathbf d)$ if the data is generated by the underlying SCM. Most importantly for our purposes this implies that the graphical $d$-separation criterion can be applied to find (conditional) independencies among the variables displayed in a SWIG \parencite[][Theorem 12]{richardson_single_2013}. Concretely, let $X$, $Y$, and $\mathbf Z$ be nodes contained in SWIG $\mathcal{G}(\mathbf{d})$, then $X \indep_{\mathcal{G}(\mathbf{d})} Y \mid \mathbf Z, \mathbf{d} \Rightarrow X \indep Y \mid \mathbf Z$. In words, if the union of nodes $\mathbf Z$ and the fixed nodes $\mathbf{d}$ $d$-separate $X$ and $Y$, the corresponding random variables are conditionally independent given variables $\mathbf Z$ alone.%\footnote{The presence of $\mathbf{d}$ only matters for cases where treatment nodes emit more than one arrow like in the structures discussed in Section XXX.}

For example applying $d$-separation to Figure \ref{fig:ex-dag2} uncovers the conditional independence between the untreated potential outcome and the treatment $Y_1(0) \indep D \mid X, U$. This implies mean independence $\E[Y_1(0) \mid X,U, D=1] = \E[Y_1(0) \mid X,U, D = 0]$ that could be used in the standard unconfoundedness identification $ATT = \E[Y_1 \mid D=1] - \E[\E[Y_1 \mid X,U, D = 0] \mid D=1]$ if $U$ would be observable \parencite[e.g.,][]{rosenbaum_central_1983}.\footnote{See \textcite{richardson_single_2013a} for a more detailed introduction to SWIGs in the unconfoundedness setting.} However, $U$ is unobservable and DiD assumes conditional parallel trends \eqref{eq:CPT-X-2x2} instead. A conditional independence like $\Delta Y_1(0) \indep D \mid X$ would imply \eqref{eq:CPT-X-2x2}. However, this conditional independence does not follow from SWIG \ref{fig:ex-dag2} without further restrictions and the introduction of $\Delta$-SWIGs.

\begin{remark}(neutral conditioning \& \textcolor{gray}{color} convention) \label{rem:neutral} \\
In many relevant cases, $d$-separation given $\{\mathbf Z, \mathbf{d}\}$ and therefore independence $X \indep Y \mid \mathbf Z$ still holds after conditioning on neutral variables that we indicate in gray, i.e. we write~$X \indep Y \mid \mathbf Z, \textcolor{gray}{\mathbf N}$ if $\textcolor{gray}{\mathbf N}$ does not open paths already blocked by $\mathbf Z$.
\end{remark}

\subsection{Single World Additive Separability} \label{sec:swas}

DiD acknowledges the presence of time-invariant unobservable confounders but assumes that they enter additively separable in the untreated potential outcome $Y_t(0)$, as reviewed in Appendix \ref{app:general-swas}.
Translated to the SCM in Figure \ref{fig:ex-dag} this means that we restrict the functional form of the structural outcome equations:\footnote{For example, also \textcite{blundell_alternative_2009} and \textcite{bonhomme_back_2025} take this structural equation perspective instead of directly restricting untreated potential outcomes, which is most common in the recent DiD literature.}
    \begin{assumption}[$R_{2x2}^{\alpha}$] (2x2 single world additive separability) \label{ass:swas-2x2} \\
The outcome equations in Figure \ref{fig:ex-dag1} take the form
\begin{align*}
    Y_0&  :=  f_{Y_0}(\UDYX,X,U_{Y_0}) ~~~~ = \underbrace{\alpha(U,X) + g_{Y_0}(X,U_{Y_0})}_{Y_0(0)} & (\ref{eq:scm-2x2-y0}') \\
    Y_1&  :=  f_{Y_1}(\UDYX,X,D,U_{Y_1}) =  \underbrace{\alpha(U,X) + g_{Y_1}(X,U_{Y_1})}_{Y_1(0)} + D \cdot \underbrace{\tau(\UDYX,X,U_{Y_1})}_{Y_1(1) - Y_1(0)}. & (\ref{eq:scm-2x2-y}')
\end{align*}
\end{assumption}
The SCM perspective highlights a peculiarity of the DiD setting. While $U$ enters $Y_t(0)$ only via an additively separable and time-invariant function of time-invariant variables $\alpha(U,X)$, the individual treatment effect that is added in case of treatment $\tau(\UDYX,X,U_{Y_1})$ depends on $U$ in an arbitrary manner.\footnote{Note that in the unconditional case everything collapses to the standard two-way fixed-effects structure with $g_{Y_t}(U_{Y_t}) = \underbrace{\E[g_{Y_t}(U_{Y_t})]}_{\lambda_t} + \underbrace{g_{Y_t}(U_{Y_t}) - \E[g_{Y_t}(U_{Y_t})]}_{\varepsilon_t}$ such that $Y_t(0) = \alpha(U) + \lambda_t + \varepsilon_t$ (see also Appendix \ref{app:swas-lit}).}

\begin{figure}[t]
    \caption{DAG and SWIG for the 2x2 setting under single world additive separability}\label{fig:ex-dag-swas}
    \centering
    % ================= Subfigure (a) =================
    \begin{subfigure}{0.4\textwidth}
        \centering
  \resizebox{0.73\linewidth}{!}{%   
    \begin{tikzpicture}
        \tikzset{line width=1.5pt, ell/.style={draw, fill = white, inner xsep=5pt,inner ysep=5pt, line width=1.5pt}, unobs/.style={ fill = none, inner xsep=5pt,inner ysep=5pt, line width=1.5pt}, swig vsplit={gap=8pt, inner line width right=0.5pt}};

% Nodes
    \node[name=y0, ell, shape=ellipse]{$Y_0$};
    \node[name=y1, ell, shape=ellipse] at ($(y0)+(4cm,0cm)$) {$Y_1$};
    \node[name=d, ell, shape=ellipse] at ($(y1)+(-1cm,-2cm)$) {$D$};
    \node[name=u_dyx, unobs, shape=ellipse] at ($(y1)+(-2cm,-3.5cm)$) {$\UDYX$};
    \node[name=u_y0, unobs, shape=ellipse] at ($(y0)+(0cm,+1.5cm)$) {$U_{Y_0}$};
    \node[name=u_y1, unobs, shape=ellipse] at ($(y1)+(0cm,+1.5cm)$) {$U_{Y_1}$};
    \node[name=x, ell, shape=ellipse] at ($(y0)+(0cm,-2cm)$) {$X$};

% Edges

\draw[->,line width=1.5pt,>=stealth]
(d) edge (y1)
(u_dyx) edge[color = blue] node[pos = 0.9, right] {$+ \alpha$} (y0)
(u_dyx) edge[bend right = 50] (y1)
(u_dyx) edge (d)
(u_y0) edge (y0)
(u_y1) edge (y1)
(u_dyx) edge (x)
(x) edge (y0)
(x) edge (y1)
(x) edge (d);
    \end{tikzpicture}
}
{\small
    \begin{align*}
    Y_0  &\defeq f_{Y_0}(\UDYX,X,U_{Y_0}) \\
    & = \alpha(U,X) + g_{Y_0}(X, U_{Y_0}) \\
    Y_1  &\defeq f_{Y_1}(\UDYX,X,D,U_{Y_1}) \\
    & = \alpha(U,X) + g_{Y_1}(X,U_{Y_1}) + D \cdot \tau_1(\UDYX,X, U_{Y_1})\\
    X, & D,U \text{ like Figure \ref{fig:ex-dag}} 
\end{align*}
}
        \caption{2x2 DAG under Assumption \assref{ass:swas-2x2}}
        \label{fig:2x2-dag-swas}
    \end{subfigure}
    % ================= Subfigure (b) =================
    \begin{subfigure}{0.4\textwidth}
        \centering
  \resizebox{0.8\linewidth}{!}{%   
    \begin{tikzpicture}
        \tikzset{line width=1.5pt, ell/.style={draw, fill = white, inner xsep=5pt,inner ysep=5pt, line width=1.5pt}, unobs/.style={ fill = none, inner xsep=5pt,inner ysep=5pt, line width=1.5pt}, swig vsplit={gap=8pt, inner line width right=0.5pt}};

% Nodes
    \node[name=y0, ell, shape=ellipse]{$Y_0\textcolor{gray}{(0)}$};
    \node[name=y1, ell, shape=ellipse] at ($(y0)+(4cm,0cm)$) {$Y_1(0)$};
    \node[name=d, ell, shape=swig vsplit] at ($(y1)+(-1cm,-2cm)$) {\nodepart{left}{$D$} \nodepart{right}{$0$}};
    \node[name=u_dyx, unobs, shape=ellipse] at ($(y1)+(-2.5cm,-3.5cm)$) {$\UDYX$};
    \node[name=u_y0, unobs, shape=ellipse] at ($(y0)+(0cm,+1.5cm)$) {$U_{Y_0}$};
    \node[name=u_y1, unobs, shape=ellipse] at ($(y1)+(0cm,+1.5cm)$) {$U_{Y_1}$};
    \node[name=x, ell, shape=ellipse] at ($(y0)+(0cm,-2cm)$) {$X$};

% Edges

\draw[->,line width=1.5pt,>=stealth]
(d) edge (y1)
(u_dyx) edge[color = blue]  node[pos = 0.9, right] {$+ \alpha$}  (y0)
(u_dyx) edge[bend right = 50, color = blue] node[pos = 0.9, right] {$+ \alpha$} (y1)
(u_dyx) edge (d)
(u_y0) edge (y0)
(u_y1) edge (y1)
(u_dyx) edge (x)
(x) edge (y0)
(x) edge (y1)
(x) edge (d);
    \end{tikzpicture}
}
{\small
    \begin{align*}
    Y_0\textcolor{gray}{(0)}  &\defeq f_{Y_0}(\UDYX,X,U_{Y_0}) \\
    &= \alpha(U,X) + g_{Y_0}(X,U_{Y_0}) \\
    Y_1(0) &\defeq f_{Y_1}(\UDYX,X,0,U_{Y_1}) \\
    & = \alpha(U,X) + g_{Y_1}(X,U_{Y_1}) \\
    X, & D,U \text{ like Figure \ref{fig:ex-dag}} 
\end{align*}
}
        \caption{2x2 SWIG under Assumption \assref{ass:swas-2x2}}
        \label{fig:2x2-swig-swas}
    \end{subfigure}
\end{figure}
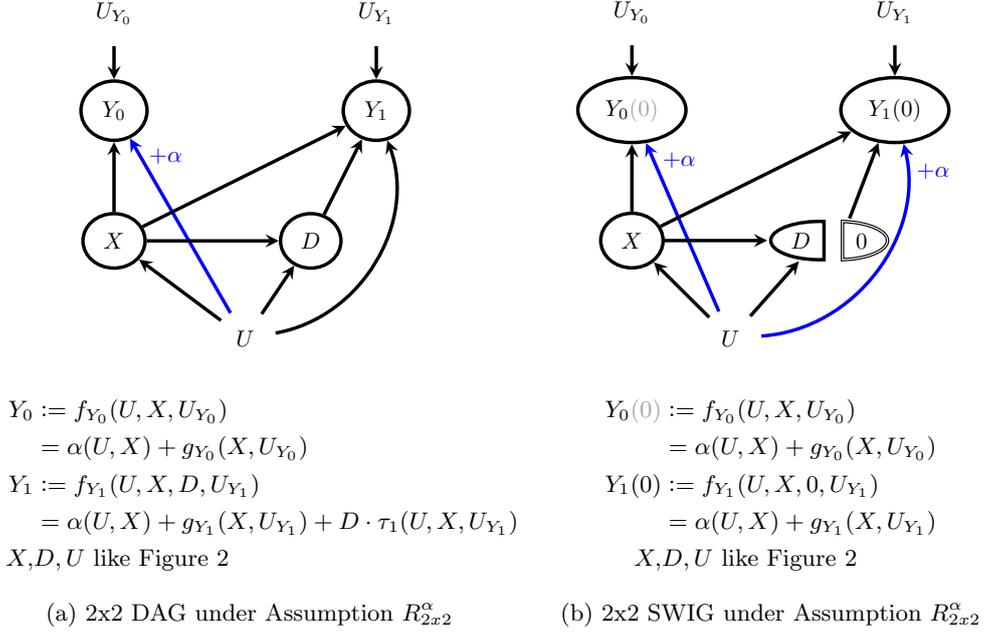

We graphically represent the assumption that $U$ enters $Y_t(0)$ only via the additively separable function by annotating the respective edges with $+\alpha$ in the DAG and SWIG of Figure \ref{fig:ex-dag-swas} that is otherwise identical to Figure \ref{fig:ex-dag}.\footnote{Note that we do not label the edges emitted from $X$ in the same manner, as $X$ enters $Y_t(0)$ via the time-varying functions $g_{Y_t}$ in addition to time-invariant function $\alpha$.}
This highlights that the DAG contains a single labeled edge $U \rightarrow Y_0$, indicating that $U$ enters $Y_0$ only via $\alpha(U,X)$, while the relation $U \rightarrow Y_1$ is nonparametric. It illustrates that $U$ is not additively separable in all \textit{observable} outcomes. However, $U$ enters all \textit{untreated potential} outcomes only via $\alpha(U,X)$ as depicted in Figure \ref{fig:2x2-swig-swas}.
The observation that this structure is implied by standard DiD modeling assumptions motivates the development of $\Delta$-SWIGs in the next section to directly read off \eqref{eq:CPT-X-2x2} from a causal graph.

% \subsection{Color and notation conventions}

% (Need to think aobut whether this sec is needed)

% We introduce the convention that the gray nodes in statements like $X \indep Y \mid \mathbf{Z_f}\textcolor{gray}{,\mathbf{Z_o}}$ are optional meaning that the independence holds in this context once the black nodes $\mathbf{Z_f}$ are conditioned on.

% First PO gray, then blue

\FloatBarrier

\section{The $\Delta$-SWIG} \label{sec:D-SWIG}

\subsection{2x2 Setting with Time-Invariant Control Variables} \label{sec:T2_example}

In this section, we outline how to construct a $\Delta$-SWIG that allows to read off $\Delta Y_1(0) \indep D \mid X$ and consequently \eqref{eq:CPT-X-2x2}. We build on the concrete 2x2 SWIG in Figure \ref{fig:2x2-swig-swas} to convey the main idea in a compact manner before Section \ref{sec:general_steps} provides the general procedure and proves its validity.

Note that Assumption \assref{ass:swas-2x2} implies that the first difference of untreated potential outcomes does not depend on $U$.\footnote{This follows because $\Delta Y_1(0) = f_{Y_1}(\UDYX,X,0,U_{Y_1}) - f_{Y_0}(\UDYX,X,U_{Y_0}) = \cancel{\alpha(\UDYX,X)} + g_{Y_1}(X,U_{Y_1}) - \cancel{\alpha(\UDYX,X)} - g_{Y_0}(X,U_{Y_0}) =: f_{\Delta Y_1(0)} (\cancel{U},X,U_{Y_1},U_{Y_0})$.} This motivates adding a "difference node" to the SWIG that inherits all incoming edges from its two levels \textit{except} the edge from $U$. Figure \ref{fig:D-swig-2x2} illustrates the "$\Delta$-SWIG" that results from augmenting SWIG \ref{fig:2x2-swig-swas} accordingly.

\begin{figure}[h]
    \caption{Full and pruned $\Delta$-SWIGs under \assref{ass:swas-2x2}}\label{fig:ex-d-swig}
    \centering
    % ================= Subfigure (a) =================
    \begin{subfigure}{0.48\textwidth}
        \centering
  \resizebox{0.8\linewidth}{!}{%        
   \begin{tikzpicture}
        \tikzset{line width=1.5pt, ell/.style={draw, fill = white, inner xsep=5pt,inner ysep=5pt, line width=1.5pt}, unobs/.style={ fill = none, inner xsep=5pt,inner ysep=5pt, line width=1.5pt}, swig vsplit={gap=8pt, inner line width right=0.5pt}};

% Nodes
    \node[name=y0, ell, shape=ellipse]{$Y_0\textcolor{gray}{(0)} $};
    \node[name=y1, ell, shape=ellipse] at ($(y0)+(5cm,0cm)$) {$Y_1(0)$};
    \node[name=dy1, ell, shape=ellipse] at ($(y0)+(2.5cm,2cm)$) {$\Delta Y_1(0)$};
    \node[name=d, ell, shape=swig vsplit] at ($(y1)+(-1cm,-2cm)$) {\nodepart{left}{$D$} \nodepart{right}{$0$}};
    \node[name=u_dyx, unobs, shape=ellipse] at ($(y1)+(-2.5cm,-3.5cm)$) {$\UDYX$};
    \node[name=u_y0, unobs, shape=ellipse] at ($(y0)+(-1cm,+1.5cm)$) {$U_{Y_0}$};
    \node[name=u_y1, unobs, shape=ellipse] at ($(y1)+(1cm,+1.5cm)$) {$U_{Y_1}$};
    \node[name=x, ell, shape=ellipse] at ($(y0)+(-1cm,-2cm)$) {$X$};

% Edges

\draw[->,line width=1.5pt,>=stealth]
(u_dyx) edge[color = blue] node[pos = 0.82, right] {$+ \alpha$}  (y0)
(u_dyx) edge[bend right = 50, color = blue] node[pos = 0.8, right] {$+ \alpha$} (y1)
(u_dyx) edge (d)
(u_y0) edge (y0)
(u_y1) edge (y1)
(u_y0) edge (dy1)
(u_y1) edge (dy1)
(u_dyx) edge (x)
(x) edge (y0)
(x) edge (y1)
(x) edge[bend right = 20] (dy1)
(x) edge (d);

\draw[->,line width=1.5pt,>=stealth] (d.40) -- (y1);
\draw[->,line width=1.5pt,>=stealth] (d.70) -- (dy1);
\end{tikzpicture}
}
        \caption{Full $\Delta$-SWIG}
        \label{fig:D-swig-2x2}
    \end{subfigure}
    \hfill
    % ================= Subfigure (b) =================
    \begin{subfigure}{0.48\textwidth}
        \centering
        \resizebox{0.43\linewidth}{!}{%
        \begin{tikzpicture}
        \tikzset{line width=1.5pt, ell/.style={draw, fill = white, inner xsep=5pt,inner ysep=5pt, line width=1.5pt}, unobs/.style={ fill = none, inner xsep=5pt,inner ysep=5pt, line width=1.5pt}, swig vsplit={gap=5pt, inner line width right=0.5pt}};

% Nodes
    \node[name=dy1, ell, shape=ellipse] {$\Delta Y_1(0)$};
    \node[name=d, ell, shape=swig vsplit] at ($(dy1)+(1cm,-4cm)$) {\nodepart{left}{$D$} \nodepart{right}{$0$}};
    \node[name=u_dyx, unobs, shape=ellipse] at ($(dy1)+(0cm,-5.5cm)$) {$\UDYX$};
    \node[name=x, ell, shape=ellipse] at ($(dy1) + (-2cm, -4cm)$) {$X$};

% Edges

\draw[->,line width=1.5pt,>=stealth]
% (d) edge (dy1)
(u_dyx) edge (d)
(u_dyx) edge (x)
(x) edge (d)
(x) edge (dy1);
\draw[->,line width=1.5pt,>=stealth] (d.70) -- (dy1);
\end{tikzpicture}
}
\caption{Pruned $\Delta$-SWIG}
        \label{fig:D-swig-2x2-pruned}
    \end{subfigure}
%     \begin{align*}
%     % Y_1\textcolor{gray}{(0)}  &\defeq f_{Y_0}(\UDYX,X,U_{Y_0}) \\
%     % &= \alpha(\UDYX) + g_{Y_0}(X,U_{Y_0}) \\
%     % Y_1(0) &\defeq f_{Y_1}(\UDYX,X,0,U_{Y_1}) \\
%     % &= \alpha(\UDYX) + g_{Y_1}(X,U_{Y_1}) \\
%     % D   &\defeq f_{D}(\UDYX,X,U_D) \\
%     % X & \defeq f_X (\UDYX,U_X) \\
%     \Delta Y_1(0) & \overset{\assref{ass:swas-2x2}}{=} g_{Y_1}(X,U_{Y_1}) - g_{Y_0}(X,U_{Y_0}) \\
%     & =:  f_{\Delta Y_1(0)} (X,U_{Y_1},U_{Y_0})     
% \end{align*}
\end{figure}

Most importantly, Theorem \ref{thm:Markov} below proves that $d$-separation also implies conditional independencies in such $\Delta$-SWIGs. 
Further, note that all paths between $D$ and $\Delta Y_1(0)$ in Figure \ref{fig:D-swig-2x2} are either blocked by colliders $Y_t(0)$ (e.g.~$D\leftarrow U \rightarrow Y_t(0) \leftarrow U_{Y_t} \rightarrow \Delta Y_1(0)$) and/or by conditioning on $X$ (e.g.~$D \leftarrow U \rightarrow X \rightarrow \Delta Y_1(0)$). Therefore, the $\Delta$-SWIG implies $\Delta Y_1(0) \indep D \mid X$, which implies \eqref{eq:CPT-X-2x2} and therefore the standard identification result $ATT = \E[\Delta Y_1 \mid D=1] - \E[\E[\Delta Y_1 \mid X, D = 0] \mid D=1]$ holds.

As $Y_0(0)$ and $Y_1(0)$ have no descendants in Figure \ref{fig:D-swig-2x2}, we can "prune" the full $\Delta$-SWIG by dropping $Y_0(0)$, $Y_1(0)$ and their incoming edges. Additionally, $U_{Y_0}$ and $U_{Y_1}$ do not need to be explicitly depicted as they only affect one variable in the pruned graph. The resulting pruned $\Delta$-SWIG in Figure \ref{fig:D-swig-2x2-pruned} establishes $\Delta Y_1(0) \indep D \mid X$ in a more compact manner. The benefit of pruning is that paths that are already blocked by colliders $Y_t(0)$ disappear and we can focus on blocking the two confounding paths $D \leftarrow X \rightarrow \Delta Y_1(0)$ and $D \leftarrow U \rightarrow X \rightarrow \Delta Y_1(0)$. 
Theorem \ref{thm:Markov} below shows that the described pruning keeps the validity of $d$-separation intact. Thus, unless the full $\Delta$-SWIG is instructive, we display appropriately pruned $\Delta$-SWIGs by default.

\subsection{General Procedure} \label{sec:general_steps}

Procedure \ref{proc:d-swig} generalizes the transformation steps of the previous section to construct a full or a pruned $\Delta$-SWIG from any SCM:
\begin{procedurebox}[label={proc:d-swig}]{$\Delta$-SWIG Construction}
\begin{enumerate}
    \item Assume SCM $\mathcal{M}$ with endogenous variables $\mathbf{V}$, exogenous variables $\mathbf{U}$, and functions $\mathcal{F}$ that potentially incorporate functional form assumptions.
    \item \emph{(optional)} Draw the corresponding DAG $\mathcal{G}$ with nodes $\mathbf{V}$ and egdes corresponding to the structural relations specified in $\mathcal{M}$.
    \item Draw the SWIG $\mathcal{G}(\mathbf{d})$ as described in Section \ref{sec:swigs} with nodes $\mathbf{V}(\mathbf{d})$. Explicitly include the exogeneous variables of the endogeneous variables that are differenced in step 6 (and 7).
    \item \emph{(optional)} Relabel pre-treatment outcomes as potential outcomes (see Remark \ref{rem:no-anti}).
    \item \emph{(optional)} Label edges that enter only via an additively separable function (see Section \ref{sec:swas}).
    \item Augment $\mathcal{G}(\mathbf{d})$ by a node representing the difference between two of its nodes $\Delta_{j,k}(\mathbf{d}) := V_j(\mathbf{d}) - V_k(\mathbf{d})$. This new difference node has \textbf{no outgoing edges} but \textbf{inherits incoming edges} from level nodes $V_j(\mathbf{d})$ and $V_k(\mathbf{d})$ \textit{unless} their edges exactly cancel due to the assumed functional form in $\mathcal{M}$.\footnote{Appendix \ref{app:edge-cancel} formalizes the underlying mechanism at the level of the structural equations.} The resulting $\Delta$-SWIG is denoted by $\mathcal{G}_\Delta^{full}(\mathbf{d})$.
    \item \emph{(optional)} Repeat step 6 with different level nodes to add additional difference nodes.
    \item \emph{(optional)} Prune those nodes in $\mathbf{V}(\mathbf{d})$ that have no descendants in $\mathcal{G}_\Delta^{full} (\mathbf{d})$ (and are not of interest for conditional independencies) by deleting them together with their incoming edges. The resulting graph is the pruned $\Delta$-SWIG denoted by $\mathcal{G}^{prune}_\Delta (\mathbf{d})$.
\end{enumerate}
\end{procedurebox}

The results provided in the following sections exploit that $\mathcal{G}_\Delta^{full}(\mathbf{d})$ or $\mathcal{G}^{prune}_\Delta(\mathbf{d})$ obtained via Procedure \ref{proc:d-swig} allow to read off conditional independencies involving difference nodes via $d$-separation, as Theorem \ref{thm:Markov} shows under standard assumptions for causal graphs:

\begin{restatable}{theorem}{MarkovTheorem}
\label{thm:Markov}
    Assume the distribution $P(\mathbf{V}(\mathbf{d}))$ satisfies the Markov property \parencite[][or Definition \ref{def:markov} in the Appendix]{kiiveri_recursive_1984} with respect to the subgraph $(\mathcal{G}(\mathbf{d}))_{\mathbf{V}(\mathbf{d})}$ of $\mathcal{G}(\mathbf{d})$ that is obtained by removing all fixed nodes $\mathbf{d}$ \parencite[][section 3.5]{richardson_single_2013}. Let $\mathcal{G}_\Delta (\mathbf{d})$ be a full or pruned $\Delta$-SWIG representing variables $\mathbf{V}_{\Delta}(\mathbf{d})$ obtained via Procedure \ref{proc:d-swig}.
    $\mathbf{X}$, $\mathbf{Y}$, and $\mathbf{Z}$ are disjoint subsets of $\mathbf{V}_{\Delta}(\mathbf{d})$. If $\mathbf{X}$ and $\mathbf{Y}$ are $d$-separated given $\mathbf{Z} , \mathbf{d}$ in $\mathcal{G}_\Delta(\mathbf{d})$, they are conditionally independent given $\mathbf{Z}$, i.e.,
    \begin{equation*}
        \mathbf{X} \indep_{\mathcal{G}_\Delta(\mathbf{d})} \mathbf{Y} \mid \mathbf{Z} , \mathbf{d}  \Rightarrow  \mathbf{X} \indep \mathbf{Y} \mid \mathbf{Z}.
    \end{equation*}

\end{restatable}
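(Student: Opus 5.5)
We prove Theorem \ref{thm:Markov} by showing that the joint distribution of $\mathbf{V}_\Delta(\mathbf{d})$ satisfies the local Markov property with respect to $\mathcal{G}_\Delta(\mathbf{d})$ with fixed nodes removed: every node is independent of its non-descendants given its parents. The result then follows from the classical equivalence between the local Markov property and the global $d$-separation property for DAGs (\textcite{kiiveri_recursive_1984}, Lauritzen et al.). This is the same step \textcite{richardson_single_2013} use in their Theorem 12, including the convention that fixed nodes $\mathbf{d}$ are always in the conditioning set and so never open paths. The argument proceeds in three steps: represent the augmented graph structurally, verify the Markov property for the full $\Delta$-SWIG, and show that pruning preserves it.

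\textbf{Structural representation.} By construction (Appendix \ref{app:edge-cancel}), each difference node satisfies
\[
\Delta_{j,k}(\mathbf{d}) = h\bigl(Pa_{\mathcal{G}_\Delta}(\Delta_{j,k}(\mathbf{d}))\bigr),
\]
so it is a deterministic function of its $\Delta$-SWIG parents. Since step 3 of Procedure \ref{proc:d-swig} draws the exogenous terms of the differenced levels explicitly, $h$ needs no additional noise. The parent set consists of the union of the level nodes' parents, minus edges that cancel, plus the exogenous terms. Because difference nodes have no outgoing edges, adding them keeps the graph acyclic and leaves every original node's parents and descendants among original nodes unchanged.

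\textbf{Local Markov property for the full graph.} For an original node $V$, its non-descendants in $\mathcal{G}^{full}_\Delta$ are its original non-descendants together with any difference nodes that are not its descendants. If a difference node $\Delta$ is a non-descendant of $V$, then neither level node depends on $V$, and no parent of $\Delta$ is a descendant of $V$. In that case $\Delta$ is a measurable function of original non-descendants of $V$, including exogenous nodes that are themselves nodes of the graph, so conditional independence from $V$ given $Pa(V)$ carries over from the assumed Markov property of $P(\mathbf{V}(\mathbf{d}))$. For a difference node, it is a function of its parents, so it is degenerate given them and hence independent of anything given its parents.

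One subtlety is that the exogenous nodes $U_{Y_t}$ must be included as graph nodes that satisfy the Markov property jointly with $\mathbf{V}(\mathbf{d})$. We handle this by taking the SWIG with these exogenous nodes explicitly drawn, which is Markov when the underlying SCM has independent errors (Proposition 11 of \textcite{richardson_single_2013}). I expect this to be the main obstacle: the hypothesis is stated for $\mathbf{V}(\mathbf{d})$, so we must either read $\mathbf{V}(\mathbf{d})$ as already containing the drawn exogenous variables, or argue that adding root nodes with single children preserves Markovness under independent errors.

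\textbf{Pruning and conclusion.} Pruning deletes nodes that have no descendants, together with their incoming edges. This yields an ancestral subgraph, so the graph is closed under ancestors of the remaining nodes. The marginal of a Markov distribution on an ancestral set is Markov with respect to the induced subgraph (Lauritzen's ancestral-set lemma). Dropping $U_{Y_t}$ nodes that now have a single child is the same kind of marginalization of a root that is only a parent of one node. That node simply absorbs the root as its own noise, which preserves the local Markov property. Finally, deterministic nodes cause no trouble for the direction we need, because ordinary $d$-separation is sound, though not complete, for distributions with deterministic nodes that satisfy the local Markov property. Applying global Markov to $\mathbf{X},\mathbf{Y}$ with conditioning set $\mathbf{Z}$ therefore gives the claimed implication.
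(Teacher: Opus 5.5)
Your proof is correct, but it takes a different route from the paper's.

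\textbf{How the paper argues.} For the full $\Delta$-SWIG, the paper uses the factorization form of the Markov property. Its lemma on adding a deterministic sink writes $P(\mathbf{V},S)=P(S\mid Pa(S))\prod_j P(V_j\mid Pa(V_j))$. Pruning is then handled purely graphically: deleting a sink $S\notin\mathbf{X}\cup\mathbf{Y}\cup\mathbf{Z}$ leaves every $d$-separation statement among the remaining nodes unchanged. The reason is that any path through a sink has it as an unconditioned collider with no descendants. Finally, Proposition 14 of Richardson and Robins moves the fixed nodes into the conditioning set.

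\textbf{How your route differs.} You verify the local Markov property node by node and invoke local $\Rightarrow$ global. You then prune probabilistically, by marginalizing onto an ancestral set.

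\textbf{What each approach buys.} Your route needs only the semi-graphoid axioms, so it holds for arbitrary distributions. You never need a density for the degenerate factor $P(S\mid Pa(S))$, which the factorization route implicitly requires when outcomes are continuous. Your route also gives the stronger conclusion that the marginal is itself Markov to the pruned graph. In addition, it explicitly justifies dropping single-child exogenous roots such as $U_{Y_t}$. The paper does this in its figures, but its sink-removal lemma does not cover it; it is harmless, since such a root can never be interior to a path. The paper's pruning argument, by contrast, is more elementary: it is pure path combinatorics with no probability.

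\textbf{Your ``main obstacle''.} The paper resolves it exactly by your first option: it reads the hypothesis as Markovness of the SWIG with $U_{V_j},U_{V_k}$ drawn explicitly, as step 3 of the procedure requires.

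\textbf{Fixed nodes.} Your remark that fixed nodes ``never open paths'' should be completed. Fixed nodes have no incoming edges, so they can appear on a path only as non-colliders. Since they are in the conditioning set, every path through them is blocked. This is what Proposition 14 supplies.

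\textbf{One correction.} Your clause that if $\Delta$ is a non-descendant of $V$ then ``neither level node depends on $V$'' is false in exactly the edge-cancelling case the construction is built for. For example, take $V=U$ with $U\to Y_0(0)$ and $U\to Y_1(0)$ but no edge $U\to\Delta Y_1(0)$. The argument needs only what you also state: that $\Delta=h(Pa_{\mathcal{G}_\Delta}(\Delta))$, and that no parent of $\Delta$ equals $V$ or lies in $Desc(V)$.
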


The assumption that $P(V(\mathbf{d}))$ satisfies the Markov property with respect to $(\mathcal{G}(\mathbf{d}))_{\mathbf{V}(\mathbf{d})}$ is the standard adaptation of the Markov property to SWIGs and is satisfied if the data is generated by the underlying SCM. %\footnote{For the Markov property see e.g. \textcite{kiiveri_recursive_1984}, \textcite{lauritzen_graphical_1996}, \textcite{peters_elements_2017}, \textcite{ pearl_causality_2009}.}
Remark \ref{rem:subgraph_markov} in Appendix \ref{app:definitions} discusses how the standard Markov property is connected to factorization according to a SWIG in \textcite{richardson_single_2013}, ensuring that $d$-separation implies conditional independence. 

The proof of Theorem \ref{thm:Markov} is provided in Appendix \ref{sec:thm_markov_proof} and exploits that a $\Delta$-SWIG is constructed by adding nodes $\Delta_{j,k}(\mathbf{d})$ that have no descendants. Thus, the variables represented by $\Delta_{j,k}(\mathbf{d})$ are deterministic functions of the variables represented by their parents in  $\mathcal{G}_\Delta^{full}(\mathbf{d})$, which ensures that the probability distribution of the variables $\mathbf{V}_\Delta(\mathbf{d})$ satisfies the Markov property with respect to $\mathcal{G}_\Delta^{full}(\mathbf{d})$.
The pruning step uses that removing nodes with no descendants leaves $d$-separation among the \emph{other} nodes in a graph unchanged, as it can only remove paths containing unconditioned colliders or unconditioned descendants of nodes on other paths. 

While we focus on difference nodes, the same arguments can be applied for other combinations of nodes, e.g.~ratios of nodes to remove multiplicative separable unobservables. We leave such extensions for future research.

\section{2x2 Setting with Time-Varying Covariates} \label{sec:time_varying_T2}

The example in Section \ref{sec:T2_example} illustrates how to deduce conditional parallel trends from a $\Delta$-SWIG in the 2x2 setting with time-invariant observable control variables, denoted $X$.
Now, we introduce time-varying control variables $X_t$ in the 2x2 framework. This allows to demonstrate the application of the $\Delta$-SWIG technology in a relatively compact setting, while discussing implications of outcome dynamics and endogenous control variables for CPT that also occur in multi-period settings.

\subsection{Conditional Parallel Trends} \label{sec:time_varying_T2_standard}

The DAG in Figure \ref{fig:dag-2x2-xt} depicts a setting with time-varying but pre-treatment covariates, $X_0$ and $X_1$.\footnote{Time-invariant covariates can be subsumed in $X_0$ because an explicit node $X$ would emit the same arrows as node $X_0$.} 
The SCM spelled out below the figure incorporates an adapted single world additive separability assumption. Therefore, the $\Delta$-SWIG in Figure \ref{fig:d-swig-2x2-xt} resulting from Procedure \ref{proc:d-swig} contains no edge $\UDYX \rightarrow \Delta Y_1(0)$.

\begin{figure}[t]
    \centering
    \caption{2x2 DAG and $\Delta$-SWIG with time-varying control variables}
    \label{fig:DAG_DSWIG_time_varying_T2}
\begin{subfigure}[t]{0.48\textwidth}
\centering
\makebox[\linewidth][c]{%
  \resizebox{0.7\linewidth}{!}{%
        \begin{tikzpicture}
        \tikzset{line width=1.5pt, ell/.style={draw, fill = white, inner xsep=5pt,inner ysep=5pt, line width=1.5pt}, unobs/.style={ fill = none, inner xsep=5pt,inner ysep=5pt, line width=1.5pt}, swig vsplit={gap=5pt, inner line width right=0.5pt}};

% Nodes
    \node[name=y0, ell, shape=ellipse]{$Y_0$};
    \node[name=y1, ell, shape=ellipse] at ($(y0)+(4cm,0cm)$) {$Y_1$};
    % \node[name=d, ell, shape=swig vsplit] at ($(y2)+(-0.5cm,-1.5cm)$) {\nodepart{left}{$D$} \nodepart{right}{$0$}};
    \node[name=d, ell, shape=ellipse] at ($(y1)+(-0.5cm,-1.5cm)$) {$D$};
    \node[name=x0, ell, shape=ellipse] at ($(y0)+(-1cm,-3cm)$) {$X_0$};
    \node[name=x1, ell, shape=ellipse] at ($(y1)+(-1cm,-3cm)$) {$X_1$};
    \node[name=u_dyx, unobs, shape=ellipse] at ($(y1)+(-2.5cm,-4.5cm)$) {$\UDYX$};
    \node[name=u_y0, unobs, shape=ellipse] at ($(y0)+(0cm,+1.5cm)$) {$U_{Y_0}$};
    \node[name=u_y1, unobs, shape=ellipse] at ($(y1)+(0cm,+1.5cm)$) {$U_{Y_1}$};

% Edges

\draw[->,line width=1.5pt,>=stealth]
(d) edge (y1)
% (u_dyx) edge[color = blue] node[midway, right] {$+ \alpha$} (y0)
% (u_dyx) edge[bend right = 60,color = blue] node[midway, right] {$+ \alpha$} (y1)
(u_dyx) edge node[pos = 0.9, right] {$+ \alpha$} (y0)
(u_dyx) edge[bend right = 60] (y1)
(u_dyx) edge[bend left = 20] (d)
(x0) edge (y0)
(x0) edge (y1)
(x1) edge[bend right= 35] (y1)
(x0) edge (x1)
(x0) edge (d)
% (x1) edge (d.left south)
(x1) edge (d)
(u_dyx) edge (x0)
(u_dyx) edge (x1)
(u_y0) edge (y0)
(u_y1) edge (y1);
    \end{tikzpicture}
    }
    }
{\small
    \begin{align*}
    Y_0  &\defeq f_{Y_0}(\UDYX,X_0,U_{Y_0}) = \alpha(\UDYX,X_0) + g_{Y_0}(X_0,U_{Y_0}) \\
    Y_1 &\defeq f_{Y_1}(\UDYX,X_0,X_1,D,U_{Y_1})  \\
    & =\alpha(\UDYX,X_0) + g_{Y_1}(X_0,X_1,U_{Y_1}) \\
    & \quad + D \cdot \tau_1(\UDYX, X_0,X_1, U_{Y_1}) 
    \end{align*} 
}
\caption{2x2 DAG with $X_t$}
        \label{fig:dag-2x2-xt}
\end{subfigure}
\hfill
\begin{subfigure}[t]{0.48\textwidth}
\centering
\makebox[\linewidth][c]{%
  \resizebox{0.7\linewidth}{!}{%
        \begin{tikzpicture}
        \tikzset{line width=1.5pt, ell/.style={draw, fill = white, inner xsep=5pt,inner ysep=5pt, line width=1.5pt}, unobs/.style={ fill = none, inner xsep=5pt,inner ysep=5pt, line width=1.5pt}, swig vsplit={gap=5pt, inner line width right=0.5pt}};

% Nodes
    \node[name=y0, ell, shape=ellipse]{$Y_0$};
    \node[name=dy1, ell, shape=ellipse] at ($(y0)+(4cm,0cm)$) {$\Delta Y_1(0)$};
    \node[name=d, ell, shape=swig vsplit] at ($(dy1)+(-0.5cm,-1.5cm)$) {\nodepart{left}{$D$} \nodepart{right}{$0$}};
    \node[name=x0, ell, shape=ellipse] at ($(y0)+(-1cm,-3cm)$) {$X_0$};
    \node[name=x1, ell, shape=ellipse] at ($(dy1)+(-1cm,-3cm)$) {$X_1$};
    \node[name=u_dyx, unobs, shape=ellipse] at ($(dy1)+(-2.5cm,-4.5cm)$) {$\UDYX$};
    \node[name=u_y1, unobs, shape=ellipse] at ($(y0)+(0cm,+1.5cm)$) {$U_{Y_0}$};

% Edges

\draw[->,line width=1.5pt,>=stealth]
(d) edge (dy1)
(u_dyx) edge[color = orange] node[pos = 0.9, right] {$+ \alpha$} (y0)
(u_dyx) edge[bend left = 20, color = orange]  (d)
(x0) edge (y0)
(x1) edge[bend right= 60] (dy1)
(x0) edge (x1)
(x0) edge (dy1)
(x0) edge (d)
(x1) edge (d.240)
(u_dyx) edge (x0)
(u_dyx) edge (x1)
(u_y1) edge[color = orange] (y0)
(u_y1) edge[color = orange] (dy1);
    \end{tikzpicture}
    }
    }
{\small
\begin{align*} 
   Y_0 \textcolor{gray}{(0)} &\defeq f_{Y_0}(\UDYX,X_0,U_{Y_0})= \alpha(\UDYX,X_0) + g_{Y_0}(X_0,U_{Y_0}) \\
    \Delta Y_1(0) &\defeq f_{\Delta Y_1(0)}(X_0,X_1,U_{Y_0},U_{Y_1})  \\
    & \defeq g_{Y_1}(X_0, X_1,U_{Y_1}) - g_{Y_0}(X_0,U_{Y_0}) \\
\end{align*}
}
\caption{2x2 $\Delta$-SWIG with $X_t$}
        \label{fig:d-swig-2x2-xt}
\end{subfigure}
\begin{minipage}{\linewidth}
\begin{small}
\textit{Note:} The intermediate SWIG and all structural equations are provided in Figure \ref{fig:SWIG_time_varying_T2} in the Appendix.    
\end{small}
\end{minipage}
\end{figure}
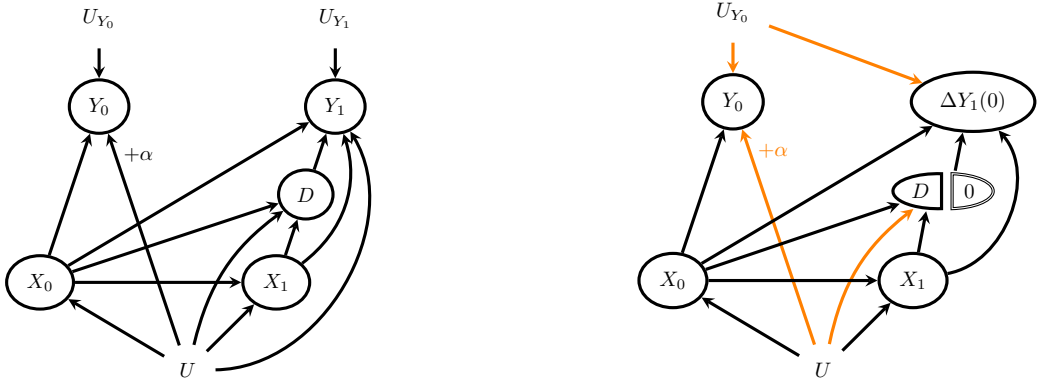

In this $\Delta$-SWIG, all paths between $D$ and $\Delta Y_1(0)$ are blocked given $X_0,X_1$.\footnote{For $t = 0,1$, the paths $D \leftarrow X_t \rightarrow \Delta Y_1 (0)$ and $D \leftarrow \UDYX \rightarrow X_t \rightarrow \Delta Y_1(0)$ are all blocked by conditioning on $X_t$. Additionally, the paths $D \leftarrow U (\rightarrow X_0) \rightarrow Y_0 \leftarrow U_{Y_1} \rightarrow \Delta Y_1(0)$ are blocked by collider $Y_0$.} Therefore, they are $d$-separated and we can deduce $\Delta Y_1(0) \indep D \mid X_0,X_1$ following Theorem \ref{thm:Markov}, which implies conditional parallel trends $\E[\Delta Y_1(0) \mid X_0, X_1, D = 0] = \E[\Delta Y_1(0) \mid X_0, X_1, D = 1]$ that enable standard identification of the $ATT$.

$\Delta$-SWIG \ref{fig:d-swig-2x2-xt} explicitly includes node $Y_0$, although it does not have descendants and therefore could be pruned from the graph. However, it is instructive to revisit graphically why pre-treatment outcomes are considered bad controls although they are measured before the treatment \parencite[see e.g.][]{daw_matching_2018,chabe-ferret_should_2025}.
Conditioning on collider $Y_0$ would open the path $D \leftarrow \UDYX \rightarrow Y_0 \leftarrow U_{Y_0} \rightarrow \Delta Y_1(0)$ and precludes reading off parallel trends conditional on $Y_0$. \textcite{cinelli_crash_2024} refer to this type of bad control in the unconfoundedness setting as inducing M-bias.

%%%%%%%%%%%%%%%%%%%%%%%%%%%%% Outcome Dynamics%%%%%%%%%%%%%%%%%%%%%%%%%%%%%%%%%%%%%%%%%%%%%%%%%%%%%%%%%%%%

\subsection{No Parallel Trends with Outcome Dynamics} \label{sec:2x2-ydyn}

Next, we graphically revisit why outcome dynamics are usually not compatible with CPT. Figure \ref{fig:dag-2x2-dyn} shows the DAG with the three possible instances of outcome dynamics depicted as dotted edges: state dependence ($Y_0 \rightarrow Y_1$), outcome-treatment feedback  ($Y_0 \rightarrow D$), and outcome-covariate feedback  ($Y_0 \rightarrow X_1$). We discuss them in separate $\Delta$-SWIGs for clarity, but the same conclusions follow from one $\Delta$-SWIG with all dynamics.\footnote{Note that $Y_0$ could not be pruned from the following $\Delta$-SWIGs because it now has a descendant. In contrast, level node $Y_1(0)$ can be pruned and is pruned.}

\textbf{State-dependence:}
Following Procedure \ref{proc:d-swig}, the difference node in $\Delta$-SWIG \ref{fig:d-swig-y-y} now inherits an arrow $Y_0 \rightarrow \Delta Y_1(0)$. This creates open path $D \leftarrow \UDYX \rightarrow Y_0 \rightarrow \Delta Y_1(0)$ that could be blocked by conditioning on $Y_0$. However, doing so opens path $D \leftarrow \UDYX \rightarrow Y_0 \leftarrow U_{Y_0} \rightarrow \Delta Y_1(0)$ and creates M-Bias, as discussed in the previous section. Therefore, $Y_0$ is a ``dilemma node'' that can block only one of two paths that can not be blocked otherwise. Consequently, $D$ and $\Delta Y_1(0)$ can not be $d$-separated and CPT can not be established in this structure \parencite[see e.g.][for a similar discussion]{renson_using_2026}.

\textbf{Outcome-treatment feedback:} 
$\Delta$-SWIG \ref{fig:d-swig-y-d} displays selection on pre-treatment outcomes. The $\Delta$-SWIG now contains the open path $D \leftarrow Y_0 \leftarrow U_{Y_0} \rightarrow \Delta Y_1(0)$. $Y_0$ is again a dilemma node. Conditioning on it would block this additional path while opening path $D \leftarrow \UDYX \rightarrow Y_0 \leftarrow U_{Y_0} \rightarrow \Delta Y_1(0)$. Again, $d$-separation can not be established in this setting, which is in line with previous results showing that selection on pre-treatment outcomes is in conflict with parallel trends \parencite[see e.g.][]{ashenfelter_using_1985,bonhomme_back_2025,marx_parallel_2024, renson_using_2026} unless a particular martingale condition holds, as discussed in \textcite{ghanem_selection_2024}.

\textbf{Outcome-covariate feedback:} 
$X_1$ is the dilemma node in $\Delta$-SWIG \ref{fig:d-swig-y-x}. Conditioning on $X_1$ is required to block the path $D \leftarrow X_1 \rightarrow \Delta Y_1(0)$. However, this opens path $D \leftarrow \UDYX \rightarrow Y_0 \leftarrow U_{Y_0} \rightarrow \Delta Y_1(0)$ because $X_1$ is a descendant of collider $Y_0$. Again, there is no set of observable variables that $d$-separates $D$ and $\Delta Y_1(0)$ \parencite[see][for a similar discussion using DAGs]{weber_assumption_2015}. 

\begin{figure}[t]
    \centering
    \caption{DAGs and $\Delta$-SWIGs with outcome dynamics, T = 2}
    \label{fig:DAG_DSWIG_outcome_dyn}
\begin{subfigure}[b]{0.48\textwidth}
\centering
\makebox[\linewidth][c]{%
  \resizebox{0.7\linewidth}{!}{%
        \begin{tikzpicture}
        \tikzset{line width=1.5pt, ell/.style={draw, fill = white, inner xsep=5pt,inner ysep=5pt, line width=1.5pt}, unobs/.style={ fill = none, inner xsep=5pt,inner ysep=5pt, line width=1.5pt}, swig vsplit={gap=5pt, inner line width right=0.5pt}};

% Nodes
    \node[name=y0, ell, shape=ellipse]{$Y_0$};
    \node[name=y1, ell, shape=ellipse] at ($(y0)+(4cm,0cm)$) {$Y_1$};
    % \node[name=d, ell, shape=swig vsplit] at ($(y2)+(-0.5cm,-1.5cm)$) {\nodepart{left}{$D$} \nodepart{right}{$0$}};
    \node[name=d, ell, shape=ellipse] at ($(y1)+(-0.5cm,-1.5cm)$) {$D$};
    \node[name=x0, ell, shape=ellipse] at ($(y0)+(-1cm,-3cm)$) {$X_0$};
    \node[name=x1, ell, shape=ellipse] at ($(y1)+(-1cm,-3cm)$) {$X_1$};
    \node[name=u_dyx, unobs, shape=ellipse] at ($(y1)+(-2.5cm,-4.5cm)$) {$\UDYX$};
    \node[name=u_y0, unobs, shape=ellipse] at ($(y0)+(0cm,+1.5cm)$) {$U_{Y_0}$};
    \node[name=u_y1, unobs, shape=ellipse] at ($(y1)+(0cm,+1.5cm)$) {$U_{Y_1}$};

% Edges

\draw[->,line width=1.5pt,>=stealth]
(d) edge (y1)
% (u_dyx) edge[color = blue] node[midway, right] {$+ \alpha$} (y0)
% (u_dyx) edge[bend right = 60,color = blue] node[midway, right] {$+ \alpha$} (y1)
(u_dyx) edge node[pos = 0.75, right] {$+ \alpha$} (y0)
(u_dyx) edge[bend right = 55] (y1)
(u_dyx) edge[bend left = 30] (d.230)
(x0) edge (y0)
(x0) edge (y1)
(x1) edge[bend right= 40] (y1)
(x0) edge (x1)
(x0) edge (d)
% (x1) edge (d.left south)
(x1) edge (d)
(u_dyx) edge (x0)
(u_dyx) edge (x1)
(u_y0) edge (y0)
(u_y1) edge (y1)
(y0) edge[dotted]  (y1)
(y0) edge[dotted]  (d)
(y0) edge[dotted]  (x1);
    \end{tikzpicture}
    }
    }
% \begin{minipage}{\linewidth}
% %
% {\small
% \centering (\ref{eq:T2_basic_Y0}) \& (\ref{eq:T2_basic_D}) \& (\ref{eq:T2_basic_X0}) \& (\ref{eq:T2_basic_X1})
% \begin{align*}
% \begin{split}
%     Y_1 &\defeq f_{Y_1}(\UDYX,X_0,X_1,Y_0,D,U_{Y_1}) \\
%     & =\alpha(\UDYX) + g_{Y_1}(X_0,X_1,Y_0,U_{Y_1}) \\
%     & \qquad + D \cdot \tau_1(\UDYX, X_0,X_1,Y_0, U_{Y_1})
%     \end{split} \\
% \end{align*}
% }
% \end{minipage}
\caption{2x2 DAG with outcome dynamics}
        \label{fig:dag-2x2-dyn}
\end{subfigure}
\hfill
\begin{subfigure}[b]{0.48\textwidth}
\centering
\makebox[\linewidth][c]{%
  \resizebox{0.7\linewidth}{!}{%
        \begin{tikzpicture}
        \tikzset{line width=1.5pt, ell/.style={draw, fill = white, inner xsep=5pt,inner ysep=5pt, line width=1.5pt}, unobs/.style={ fill = none, inner xsep=5pt,inner ysep=5pt, line width=1.5pt}, swig vsplit={gap=5pt, inner line width right=0.5pt}};

% Nodes
    \node[name=y0, ell, shape=ellipse]{$Y_0$};
    \node[name=dy1, ell, shape=ellipse] at ($(y0)+(4cm,0cm)$) {$\Delta Y_1(0)$};
    \node[name=d, ell, shape=swig vsplit] at ($(dy1)+(-0.5cm,-1.5cm)$) {\nodepart{left}{$D$} \nodepart{right}{$0$}};
    \node[name=x0, ell, shape=ellipse] at ($(y0)+(-1cm,-3cm)$) {$X_0$};
    \node[name=x1, ell, shape=ellipse] at ($(dy1)+(-1cm,-3cm)$) {$X_1$};
    \node[name=u_dyx, unobs, shape=ellipse] at ($(dy1)+(-2.5cm,-4.5cm)$) {$\UDYX$};
    \node[name=u_y1, unobs, shape=ellipse] at ($(y0)+(0cm,+1.5cm)$) {$U_{Y_0}$};

% Edges

\draw[->,line width=1.5pt,>=stealth]
(d) edge (dy1)
% (u_dyx) edge node[pos = 0.9, right] {$+ \alpha$} (y0)
% (u_dyx) edge[bend left = 30]  (d)
(u_dyx) edge[color = orange] node[pos = 0.9, right] {$+ \alpha$} (y0)
(u_dyx) edge[bend left = 27, color = orange]  (d.220)
(x0) edge (y0)
(x1) edge[bend right= 60] (dy1)
(x0) edge (x1)
(x0) edge (dy1)
(x0) edge (d)
(x1) edge (d)
(u_dyx) edge (x0)
(u_dyx) edge (x1)
% (u_y1) edge (y0)
% (u_y1) edge (dy1)
(u_y1) edge[color = orange] (y0)
(u_y1) edge[color = orange] (dy1)
(y0) edge[color = blue] (dy1);

    \end{tikzpicture}
    }
    }
% \begin{minipage}{\linewidth}
% %
% {\small
%    \centering (\ref{eq:T2_basic_Y0}) \& (\ref{eq:T2_basic_D}) \& (\ref{eq:T2_basic_X0}) \& (\ref{eq:T2_basic_X1})
%     \begin{align*}
%     \Delta Y_1(0) &\defeq f_{\Delta Y_1(0)}(X_0,X_1,Y_0,U_{Y_0},U_{Y_1}) \\
%     & \defeq g_{Y_1}(X_0,X_1,Y_0,U_{Y_1}) - g_{Y_0}(X_0,U_{Y_0})
% \end{align*}
% }
% \vspace{0.5cm}
% \end{minipage}
\caption{$\Delta$-SWIG with state dependence}
        \label{fig:d-swig-y-y}
\end{subfigure}
\begin{subfigure}[b]{0.48\textwidth}
\centering
\makebox[\linewidth][c]{%
  \resizebox{0.7\linewidth}{!}{%
        \begin{tikzpicture}
        \tikzset{line width=1.5pt, ell/.style={draw, fill = white, inner xsep=5pt,inner ysep=5pt, line width=1.5pt}, unobs/.style={ fill = none, inner xsep=5pt,inner ysep=5pt, line width=1.5pt}, swig vsplit={gap=5pt, inner line width right=0.5pt}};

% Nodes
    \node[name=y0, ell, shape=ellipse]{$Y_0$};
    \node[name=dy1, ell, shape=ellipse] at ($(y0)+(4cm,0cm)$) {$\Delta Y_1(0)$};
    \node[name=d, ell, shape=swig vsplit] at ($(dy1)+(-0.5cm,-1.5cm)$) {\nodepart{left}{$D$} \nodepart{right}{$0$}};
    \node[name=x0, ell, shape=ellipse] at ($(y0)+(-1cm,-3cm)$) {$X_0$};
    \node[name=x1, ell, shape=ellipse] at ($(dy1)+(-1cm,-3cm)$) {$X_1$};
    \node[name=u_dyx, unobs, shape=ellipse] at ($(dy1)+(-2.5cm,-4.5cm)$) {$\UDYX$};
    \node[name=u_y1, unobs, shape=ellipse] at ($(y0)+(0cm,+1.5cm)$) {$U_{Y_0}$};

% Edges

\draw[->,line width=1.5pt,>=stealth]
(d) edge (dy1)
% (u_dyx) edge node[pos = 0.9, right] {$+ \alpha$} (y0)
% (u_dyx) edge[bend left = 30]  (d)
(u_dyx) edge[color = orange] node[pos = 0.9, right] {$+ \alpha$} (y0)
(u_dyx) edge[bend left = 27, color = orange]  (d.220)
(x0) edge (y0)
(x1) edge[bend right= 60] (dy1)
(x0) edge (x1)
(x0) edge (dy1)
(x0) edge (d)
(x1) edge (d)
(u_dyx) edge (x0)
(u_dyx) edge (x1)
% (u_y1) edge (y0)
% (u_y1) edge (dy1)
(u_y1) edge[color = orange] (y0)
(u_y1) edge[color = orange] (dy1)
(y0) edge[color = blue] (d);

    \end{tikzpicture}
    }
    }
% \begin{minipage}{\linewidth}
% %
% {\small
%  \centering   (\ref{eq:T2_basic_Y0})  \& (\ref{eq:T2_basic_X0}) \& (\ref{eq:T2_basic_X1}) \& (\ref{eq:T2_basic_DY1}) 
% \begin{align*}
%     D   &\defeq f_{D}(X_0,X_1,Y_0,\UDYX,U_D)
% \end{align*}
% }
% \end{minipage}
\caption{$\Delta$-SWIG with outcome-treatment feedback}
        \label{fig:d-swig-y-d}
\end{subfigure}
\hfill
\begin{subfigure}[b]{0.48\textwidth}
\centering
\makebox[\linewidth][c]{%
  \resizebox{0.7\linewidth}{!}{%
        \begin{tikzpicture}
        \tikzset{line width=1.5pt, ell/.style={draw, fill = white, inner xsep=5pt,inner ysep=5pt, line width=1.5pt}, unobs/.style={ fill = none, inner xsep=5pt,inner ysep=5pt, line width=1.5pt}, swig vsplit={gap=5pt, inner line width right=0.5pt}};

% Nodes
    \node[name=y0, ell, shape=ellipse]{$Y_0$};
    \node[name=dy1, ell, shape=ellipse] at ($(y0)+(4cm,0cm)$) {$\Delta Y_1(0)$};
    \node[name=d, ell, shape=swig vsplit] at ($(dy1)+(-0.5cm,-1.5cm)$) {\nodepart{left}{$D$} \nodepart{right}{$0$}};
    \node[name=x0, ell, shape=ellipse] at ($(y0)+(-1cm,-3cm)$) {$X_0$};
    \node[name=x1, ell, shape=ellipse] at ($(dy1)+(-1cm,-3cm)$) {$X_1$};
    \node[name=u_dyx, unobs, shape=ellipse] at ($(dy1)+(-2.5cm,-4.5cm)$) {$\UDYX$};
    \node[name=u_y1, unobs, shape=ellipse] at ($(y0)+(0cm,+1.5cm)$) {$U_{Y_0}$};

% Edges

\draw[->,line width=1.5pt,>=stealth]
(d) edge (dy1)
% (u_dyx) edge node[pos = 0.8, right] {$+ \alpha$} (y0)
% (u_dyx) edge[bend left = 30]  (d)
(u_dyx) edge[color = orange] node[pos = 0.8, right] {$+ \alpha$} (y0)
(u_dyx) edge[bend left = 27, color = orange]  (d.220)
(x0) edge (y0)
(x1) edge[bend right= 60] (dy1)
(x0) edge (x1)
(x0) edge (dy1)
(x0) edge (d)
(x1) edge (d)
(u_dyx) edge (x0)
(u_dyx) edge (x1)
% (u_y1) edge (y0)
% (u_y1) edge (dy1)
(u_y1) edge[color = orange] (y0)
(u_y1) edge[color = orange] (dy1)
(y0) edge[color = blue] (x1);

    \end{tikzpicture}
    }
    }
% \begin{minipage}{\linewidth}
% %
% {\small
%    \centering (\ref{eq:T2_basic_Y0})  \& (\ref{eq:T2_basic_D}) \& (\ref{eq:T2_basic_X0}) \& (\ref{eq:T2_basic_DY1})
%     \begin{align*}
% X_1  &\defeq f_{X_1}(X_0,Y_0,\UDYX,U_{X_1}) 
% \end{align*}
% }
% \end{minipage}
\caption{$\Delta$-SWIG with outcome-covariate feedback}
        \label{fig:d-swig-y-x}
\end{subfigure}
\begin{minipage}{\linewidth}
\begin{small}
\textit{Note:} The intermediate SWIG and all structural equations are provided in Figure \ref{fig:SWIG_outcome_dyn} in the Appendix.
\end{small}
\end{minipage}
\end{figure}
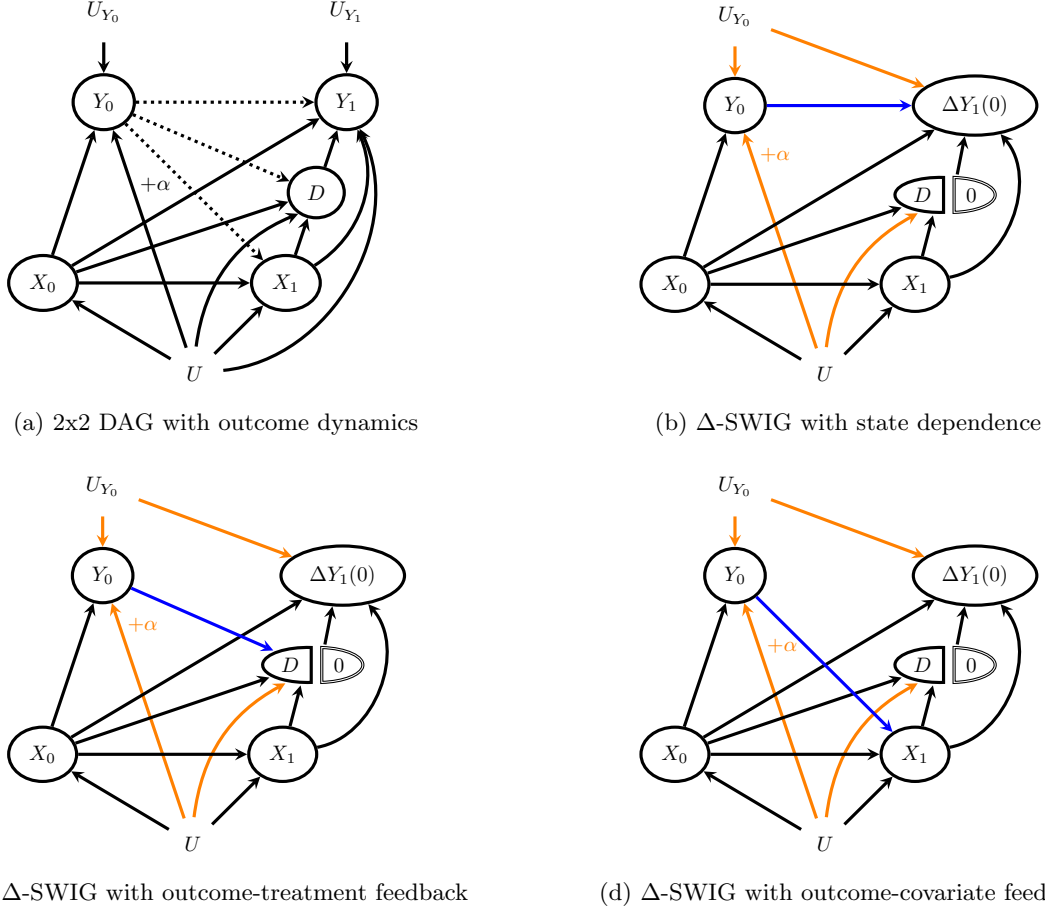

\textbf{Summary:} The presence of path $D \leftarrow U \rightarrow Y_0 \leftarrow U_{Y_0} \rightarrow \Delta Y_1(0)$ prevents $d$-separation if collider node $Y_0$ emits an arrow because it (i) either creates an additional chain where $Y_0$ is the only observable variable that could block it, making itself a dilemma node, or (ii) it becomes a parent of an otherwise crucial control variable, making its descendant $X_1$ a dilemma node. This prevents the deduction of CPT in 2x2 settings with outcome dynamics but also in multi-period settings without imposing additional restrictions. Therefore, we abstract from outcome dynamics in the following.

\begin{remark} \label{rem:strict-exog}
Note that outcome-treatment feedback and outcome-covariate feedback are similar to a violation of the \emph{strict exogeneity} assumption in the panel data literature. In \textcite{wooldridge_econometric_2010}, the strict exogeneity assumption is stated in different ways, e.g., $Cov(U_{Y_s},X_t) = 0$ ($Cov(U_{Y_s},D) = 0$) for all $s,t$. A sufficient graphical condition for this is unconditional $d$-separation of all $U_{Y_s}$ and all $X_t$ ($D$), which is not fulfilled in Figures \ref{fig:d-swig-y-d} and \ref{fig:d-swig-y-x} due to the direct edge from $Y_0$ to $X_1$ ($D$). Note that the standard DAG \ref{fig:dag-2x2-dyn} suffices to see this point \parencite[see e.g.][]{imai_when_2019, bonhomme_back_2025} but it is also visible in $\Delta$-SWIGs \ref{fig:d-swig-y-d} and \ref{fig:d-swig-y-x}. 
\end{remark}

\subsection{Post-Treatment Covariates} \label{sec:post-treatment}

Next, we discuss the role of post-treatment covariates in the 2x2 setting as considered in \textcite{caetano_difference_2024}. 
A corresponding DAG and $\Delta$-SWIG is depicted in Figure \ref{fig:DAG_DSWIG_post}. In contrast to Figure \ref{fig:DAG_DSWIG_outcome_dyn}, outcome dynamics are absent and $X_1$ is now affected by the treatment ($D \rightarrow X_1$).\footnote{Time-invariant covariates denoted $Z$ in \textcite{caetano_difference_2024} are subsumed into $X_0$. Drawing the $Z$ node explicitly would not change the discussion.}
Therefore, the $\Delta$-SWIG contains untreated potential covariate $X_1(0)$. $\Delta$-SWIG \ref{fig:d-swig-2x2-d-x} implies independence
\begin{equation}
    \Delta Y_1(0) \indep D \mid X_0,X_1(0)
    \label{eq:ccpr-cpt}
\end{equation}
and CPT $\E[\Delta Y_1(0) \mid X_0,X_1(0), D = 0] = \E[\Delta Y_1(0) \mid X_0,X_1(0), D = 1]$, which corresponds to Assumption 2 in \textcite{caetano_difference_2024}. However, $X_1(0)$ is unobservable for the treated units and identification based on this CPT is not feasible.
Thus, \textcite{caetano_difference_2024} discuss additional covariate unconfoundedness assumptions of the form
\begin{align}
    X_1(0) \indep D & \mid X_0 \label{eq:ccpr-cu1} \\
    X_1(0) \indep D & \mid X_0, Y_0 \label{eq:ccpr-cu2}    
\end{align}
in their Corollary 1. Graphically, both \eqref{eq:ccpr-cu1} and \eqref{eq:ccpr-cu2} are implied by $\Delta$-SWIG \ref{fig:d-swig-2x2-d-x} after removing the edge $\UDYX \rightarrow X_1(0)$, i.e.~assuming that $X_1$ is only indirectly affected by $U$ via $X_0$.\footnote{This represents the minimal modification to achieve \eqref{eq:ccpr-cu1} and \eqref{eq:ccpr-cu2}. Additionally, also $\UDYX \rightarrow X_0$ could be removed without changing the result. Alternatively, $\UDYX \rightarrow D$ could be removed, while keeping the $\UDYX \rightarrow X_t$ edges. This would create a standard unconfoundedness scenario where $Y_1(0) \indep D \mid X_0$ and a CPT is not required for identification.}\footnote{To see why \eqref{eq:ccpr-cu2} holds, note that path $D \leftarrow U \rightarrow Y_0 \leftarrow U_{Y_1} \rightarrow \Delta Y_1(0) \leftarrow X_1(0)$ is blocked by collider $\Delta Y_1(0)$ even conditional on the other collider $Y_0$.} Then, conditional independencies \eqref{eq:ccpr-cpt} and \eqref{eq:ccpr-cu1} collapse to $\Delta Y_1(0) \indep D \mid X_0$,\footnote{To see why, note that conditional independencies \eqref{eq:ccpr-cpt} and \eqref{eq:ccpr-cu1} imply $(X_1(0),\Delta Y_1(0)) \indep D \mid X_0$ by contraction and $\Delta Y_1(0) \indep D \mid X_0$ by decomposition.} which justifies standard DiD identification using CPT with pre-treatment covariates only, i.e.~$\E[\Delta Y_1(0) \mid X_0, D = 0] = \E[\Delta Y_1(0) \mid X_0, D = 1]$. This observation provides an alternative way to prove Corollary 1 (1) of \textcite{caetano_difference_2024}.

Additionally, \textcite{caetano_difference_2024} show in Corollary 1 (2) that combining \eqref{eq:ccpr-cpt} and \eqref{eq:ccpr-cu2} permits the non-standard identification result involving nested conditional expectations $ATT = \E[\Delta Y_1 | D=1] - \E[\E[\E[\Delta Y_1 \mid X_0, X_1,D = 0]  \mid X_0, Y_0, D = 0] \mid D=1]$. This is conceptually interesting because it demonstrates that identification is possible through a particular conditioning strategy involving $Y_0$, even though CPT conditional on $Y_0$ is not attainable (see Section \ref{sec:time_varying_T2_standard}). However, the practical relevance of this identification result appears limited. Appendix \ref{app:post-treatment} shows that there exists no $\Delta$-SWIG with unobserved confounding in which \eqref{eq:ccpr-cpt} and \eqref{eq:ccpr-cu2} hold but \eqref{eq:ccpr-cpt} and \eqref{eq:ccpr-cu1} do not hold. Thus, there is at least no graphical motivation to prefer the alternative identification result over standard conditional DiD identification. 

\textcite{caetano_difference_2024} also consider a covariate exogeneity assumption $(X_1(0) \mid X_0, D=1) \sim (X_1(0) \mid X_0, D=1)$ that is combined with CPT \eqref{eq:ccpr-cpt} to obtain additional identification results. A sufficient condition for covariate exogeneity is to remove edge $D \rightarrow X_1$ in DAG \ref{fig:dag-2x2-d-x}. However, this would be stronger than required. Covariate exogeneity can therefore not be represented in a $\Delta$-SWIG without modifications that are beyond the scope of the paper. This shows that $\Delta$-SWIGs are not compatible with all identifying assumptions that appear in the DiD literature. In particular, distributional assumptions like covariate exogeneity or parametric functional form assumptions are not naturally captured by $\Delta$-SWIGs. However, in many cases we expect that $\Delta$-SWIGs are applicable to justify at least parts of the identification arguments. For example, independence \eqref{eq:ccpr-cpt} can be justified by $\Delta$-SWIG \ref{fig:d-swig-2x2-d-x} but covariate exogeneity requires arguments going beyond the causal structure.

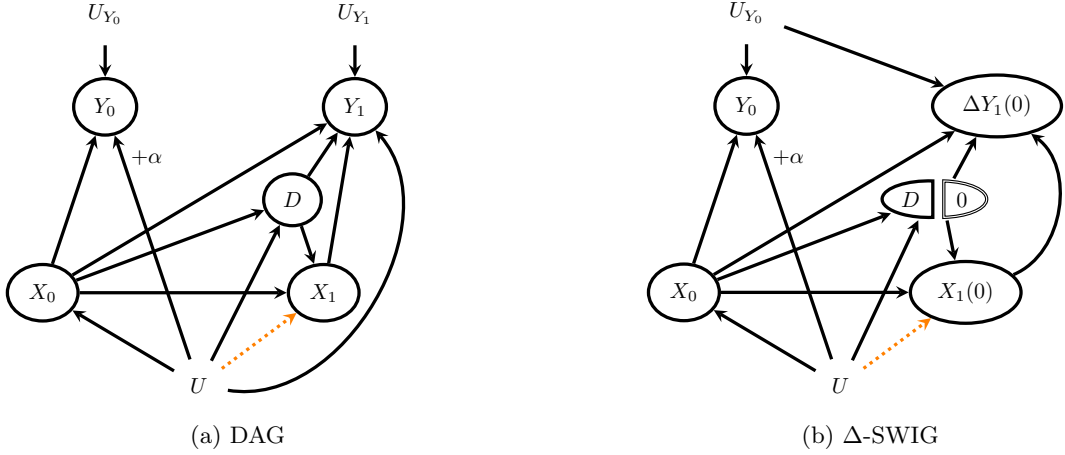
\begin{figure}[t]
    \centering
    \caption{DAG and $\Delta$-SWIG with post-treatment control variables}
    \label{fig:DAG_DSWIG_post}
\begin{subfigure}[b]{0.48\textwidth}
\centering
\makebox[\linewidth][c]{%
  \resizebox{0.78\linewidth}{!}{%
        \begin{tikzpicture}
        \tikzset{line width=1.5pt, ell/.style={draw, fill = white, inner xsep=5pt,inner ysep=5pt, line width=1.5pt}, unobs/.style={ fill = none, inner xsep=5pt,inner ysep=5pt, line width=1.5pt}, swig vsplit={gap=5pt, inner line width right=0.5pt}};

% Nodes
    \node[name=y0, ell, shape=ellipse]{$Y_0$};
    \node[name=y1, ell, shape=ellipse] at ($(y0)+(4cm,0cm)$) {$Y_1$};
    % \node[name=d, ell, shape=swig vsplit] at ($(y2)+(-0.5cm,-1.5cm)$) {\nodepart{left}{$D$} \nodepart{right}{$0$}};
    \node[name=d, ell, shape=ellipse] at ($(y1)+(-1cm,-1.5cm)$) {$D$};
    \node[name=x0, ell, shape=ellipse] at ($(y0)+(-1cm,-3cm)$) {$X_0$};
    \node[name=x1, ell, shape=ellipse] at ($(y1)+(-0.5cm,-3cm)$) {$X_1$};
    \node[name=u_dyx, unobs, shape=ellipse] at ($(y1)+(-2.5cm,-4.5cm)$) {$\UDYX$};
    \node[name=u_y0, unobs, shape=ellipse] at ($(y0)+(0cm,+1.5cm)$) {$U_{Y_0}$};
    \node[name=u_y1, unobs, shape=ellipse] at ($(y1)+(0cm,+1.5cm)$) {$U_{Y_1}$};

% Edges

\draw[->,line width=1.5pt,>=stealth]
(d) edge (y1)
% (u_dyx) edge[color = blue] node[midway, right] {$+ \alpha$} (y0)
% (u_dyx) edge[bend right = 60,color = blue] node[midway, right] {$+ \alpha$} (y1)
(u_dyx) edge node[pos = 0.9, right] {$+ \alpha$} (y0)
(u_dyx) edge[bend right = 70] (y1)
(u_dyx) edge (d)
(x0) edge (y0)
(x0) edge (y1)
(x1) edge (y1)
(x0) edge (x1)
(x0) edge (d)
% (x1) edge (d.left south)
(d) edge (x1)
(u_dyx) edge (x0)
(u_dyx) edge[color = orange, dotted]  (x1)
(u_y0) edge (y0)
(u_y1) edge (y1);
    \end{tikzpicture}
    }
    }
\caption{DAG}
        \label{fig:dag-2x2-d-x}
\end{subfigure}
\hfill
\begin{subfigure}[b]{0.48\textwidth}
\centering
\makebox[\linewidth][c]{%
  \resizebox{0.75\linewidth}{!}{%
        \begin{tikzpicture}
        \tikzset{line width=1.5pt, ell/.style={draw, fill = white, inner xsep=5pt,inner ysep=5pt, line width=1.5pt}, unobs/.style={ fill = none, inner xsep=5pt,inner ysep=5pt, line width=1.5pt}, swig vsplit={gap=5pt, inner line width right=0.5pt}};

% Nodes
    \node[name=y0, ell, shape=ellipse]{$Y_0$};
    \node[name=dy1, ell, shape=ellipse] at ($(y0)+(4cm,0cm)$) {$\Delta Y_1(0)$};
    \node[name=d, ell, shape=swig vsplit] at ($(dy1)+(-1cm,-1.5cm)$) {\nodepart{left}{$D$} \nodepart{right}{$0$}};
    \node[name=x0, ell, shape=ellipse] at ($(y0)+(-1cm,-3cm)$) {$X_0$};
    \node[name=x1, ell, shape=ellipse] at ($(dy1)+(-0.5cm,-3cm)$) {$X_1(0)$};
    \node[name=u_dyx, unobs, shape=ellipse] at ($(dy1)+(-2.5cm,-4.5cm)$) {$\UDYX$};
    \node[name=u_y1, unobs, shape=ellipse] at ($(y0)+(0cm,+1.5cm)$) {$U_{Y_0}$};

% Edges

\draw[->,line width=1.5pt,>=stealth]
(d) edge (dy1)
(u_dyx) edge node[pos = 0.9, right] {$+ \alpha$} (y0)
(u_dyx) edge (d)
(x0) edge (y0)
(x1) edge[bend right= 60] (dy1)
(x0) edge (x1)
(x0) edge (dy1)
(x0) edge (d)
(d) edge (x1)
(u_dyx) edge (x0)
(u_dyx) edge[color = orange, dotted] (x1)
(u_y1) edge (y0)
(u_y1) edge (dy1);

    \end{tikzpicture}
    }
    }
\caption{$\Delta$-SWIG}
        \label{fig:d-swig-2x2-d-x}
\end{subfigure}

\begin{minipage}{\linewidth}
\begin{small}
\textit{Note:} The intermediate SWIG and all structural equations are provided in Figure \ref{fig:SWIG_post} in the Appendix.
\end{small}
\end{minipage}
\end{figure}

\section{Multiple Time Periods} \label{sec:multiple}

The previous section largely revisits known results in the well-studied 2x2 setting through the $\Delta$-SWIG lens. This section uses the graphical perspective to establish new results and to generalize or refine existing ones in the multi-period DiD setting of \textcite{callaway_differenceindifferences_2021}. The following Section \ref{sec:implications} discusses then practical implications.

\subsection{Notation and Setting}
Consider $T = \mathcal{T} + 1$ time periods indexed by $t = 0,...,\mathcal{T}$. 
Let $\overline{V} = \{V_0,...,V_\mathcal{T}\}$ denote a sequence of time-varying random variables, and write $\overline{V}_t = \{V_0,...,V_t\}$,  $\underline{V}_t = \{V_t,...,V_\mathcal{T}\}$ for $t > 0$, $\overline V_{t,t'} = \{V_t,\ldots,V_{t'}\}$ for $0 < t \leq t'$ and $\varnothing$ otherwise. 
The same notation applies to realizations, e.g.~$\overline{v}$. Denote general differences as $\Delta V_{s,t} = V_{t} - V_{s}$ such that $\Delta V_{t} = \Delta V_{t-1,t}$.

\textcite{callaway_differenceindifferences_2021} consider the setting with staggered treatment, where no unit is treated in period 0 and treatment is irreversible:
\begin{assumption}[ST] \label{ass:staggered-trmnt}
    (staggered treatment) $D_0 = 0$ and $D_t \geq D_{t-1} ~\forall~t>0$.\footnote{The latter could be formalized in an SCM as $D_t  \defeq f_{D_t}(Pa(D_t),U_{D_t}) = (1-D_{t-1}) g_{D_t}(Pa(D_t), U_{D_t}) + D_{t-1} ~\forall~ t > 0$.}
\end{assumption}
As $D_0 = 0$ in ``all worlds'', it plays no role in the following arguments and is absent from all graphs. Therefore, we define treatment sequences as excluding $D_0$, i.e.~$\overline{D} := \{D_1,...,D_\mathcal{T}\}$, $\overline{D}_t := \{D_1,...,D_t\}$, and $\overline{D}_0 := \varnothing$. This has two advantages: (i) the 2x2 setting of the previous section is nested as the special case with $T=2$ and $D = D_1$, (ii) potential variables obtained by fixing a treatment sequence have an intuitive structure, e.g.~the potential outcome for being first treated in period 3 is denoted $Y_t(\overline{0}_{2},\underline{1}_{3}) = Y_t(0,0,1,1,\dots)$.

The target parameter in \textcite{callaway_differenceindifferences_2021} is the average treatment effect of the group first treated in period $g$ at time $t$:
\vspace{-0.5cm}
\begin{align} \label{eq:att-gt}
    ATT(g,t) := \E[\underbrace{Y_t(\overline{0}_{g-1},\underline{1}_{g})}_{Y_t(g)} - \underbrace{Y_t(\overline{0})}_{Y_t(\infty)} \mid \underbrace{\overline{D}_{g-1} =\overline{0}_{g-1}, \underline{D}_{g} = \underline{1}_{g}}_{G = g}]
\end{align}
The definition using the sequence notation is less common in the DiD literature. However, it follows the mechanics of SWIGs and is used throughout. The curly brackets in \eqref{eq:att-gt} illustrate how it maps to the common group notation used, e.g., in \textcite{roth_whats_2023} where $G := min\{t : D_t = 1\}$.

Identification of $ATT(g,t)$ is based on conditional parallel trends with the never-treated \eqref{eq:cpt-nt} or the not-yet-treated \eqref{eq:cpt-nyt}:
\begin{align}
\E[\Delta Y_t(\overline{0}) \mid \mathbf Z, \overline{D}_{m-1}=\overline{0}_{m-1}, \underline{D}_m=\underline{1}_m]
&= \E[\Delta Y_t(\overline{0}) \mid \mathbf Z, \overline{D}=\overline{0}], \ \forall\, 1 \le t \le \mathcal{T},\ m \le t \label{eq:cpt-nt} \\
\E[\Delta Y_t(\overline{0}) \mid \mathbf Z, \overline{D}_{m-1}=\overline{0}_{m-1}, \underline{D}_m=\underline{1}_m]
& =\E[\Delta Y_t(\overline{0}) \mid \mathbf Z, \overline{D}_s=\overline{0}_s], \ \forall\, m \le t \le s \le \mathcal{T},\ t \ge 2  \label{eq:cpt-nyt}
\end{align}
$\mathbf Z$ serves here as a placeholder.
%and may contain time-invariant and/or time-varying covariates. 
Like in the previous section, we use $\Delta$-SWIGs to obtain concrete sets of valid control variables or to explain the mechanisms that prevent reading off CPTs.

Generalizing the previous settings, we focus on the class of SCMs $\mathfrak{M} := \{\mathcal{M} : \mathbf{V} = \{ \overline{X},\overline{D}, \overline{Y}, U \} \}$, where $\overline{X}$, $\overline{D}$, and $\overline{Y}$ denote the covariate, binary treatment, and outcome sequences, respectively, and, $U$ denotes unobservable time-invariant confounders.\footnote{We abstract from time-invariant covariates $X$ to avoid notational clutter. All expressions that require conditioning on time-varying covariates can additionally condition on $X$ without changing the result.}
We also generalize single world additive separability to avoid stating a tailored version for each scenario:
\begin{assumption}[$R^\alpha$] \label{ass:SWAS-staggered}
    (single world additive separability - staggered treatment) \\
Additionally to Assumption \assref{ass:staggered-trmnt}, the structural equation for $Y_t$ in SCM $\mathcal{M} \in \mathfrak{M}$ has the form
\begin{align}
    Y_t  &\defeq  f_{Y_t}(Pa(Y_t), U_{Y_t}) =\underbrace{\alpha (U,C) + g_{Y_t}(Pa(Y_t) \setminus \{ U, \overline{D}_t \}, U_{Y_t})}_{Y_t(\overline{0})} + D_t \cdot \tau_t(Pa(Y_t)\setminus D_t, U_{Y_t}) ~\forall~t
\end{align}
where $C \defeq \left(\bigcap_{t = 0}^\mathcal{T} Pa(Y_t) \right) \setminus \{U,\overline{D} \}$ are parents of all $Y_t$, e.g.~$X_0$.
\end{assumption}
Like in Assumption \assref{ass:swas-2x2}, unobservable confounders enter never-treated potential outcomes only via an additively separable function and individual treatment effects are unrestricted. In particular, note that the latter may depend on the previous treatment sequence $\overline{D}_{t-1}$, which implies under Assumption \assref{ass:staggered-trmnt} that the effects may depend on the timing of the first treatment.\footnote{A general single world additive separability without Assumption \assref{ass:staggered-trmnt} is provided in Appendix \ref{app:general-swas}.} Assumption \assref{ass:SWAS-staggered} not only nests Assumption \assref{ass:swas-2x2} but also the untreated potential outcome models used in the literature. We collect them and show how they are special cases of Assumption \assref{ass:SWAS-staggered} in Appendix \ref{app:swas-lit}.

We also streamline the discussion by abstracting from outcome dynamics because they prevent reading off CPTs in model class $\mathfrak{M}$ (see discussion in Section \ref{sec:2x2-ydyn}):
\begin{assumption}[$R^Y$] \label{ass:no-y-dyn}
    (no outcome dynamics) $Desc(Y_t) = \varnothing$ in SCM $\mathcal{M} \in \mathfrak{M}$.
\end{assumption}

\subsection{Three Time Periods ($T=3$)} \label{sec:t3}

For expositional simplicity, we begin by considering the case $T=3$. The mechanics extend directly to $3 < T < \infty$, as discussed in Section \ref{sec:general-t}. DAG \ref{fig:dag-2x3} shows the $T=3$ setting under Assumptions \assref{ass:SWAS-staggered} and \assref{ass:no-y-dyn} with dotted edge $D_{1}\rightarrow X_2$ because we first consider the common setting without treatment-covariate feedback $D_{1}\not\rightarrow X_2$ and then allow for edge $D_{1}\rightarrow X_2$.

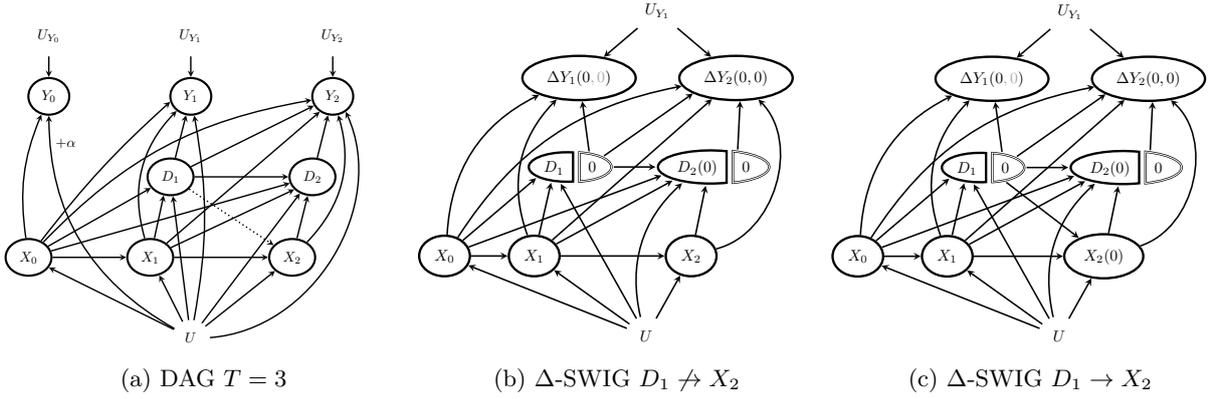
\begin{figure}[t]
    \centering
    \caption{$T = 3$ with and without treatment-covariate feedback}
    \label{fig:2x3}
\begin{subfigure}[b]{0.32\textwidth}
\centering
\makebox[\linewidth][c]{%
  \resizebox{1\linewidth}{!}{%
    \begin{tikzpicture}
        \tikzset{line width=1.5pt, ell/.style={draw, fill = white, inner xsep=5pt,inner ysep=5pt, line width=1.5pt}, unobs/.style={ fill = none, inner xsep=5pt,inner ysep=5pt, line width=1.5pt}, swig vsplit={gap=5pt, inner line width right=0.5pt}};

% Nodes
    \node[name=y0, ell, shape=ellipse]{$Y_0$};
    \node[name=y1, ell, shape=ellipse] at ($(y0)+(3.5cm,0cm)$) {$Y_1$};
    \node[name=y2, ell, shape=ellipse] at ($(y1)+(3.5cm,0cm)$) {$Y_2$};
    % \node[name=y4, ell, shape=ellipse] at ($(y2)+(3.5cm,0cm)$) {$Y_4$};
    \node[name=d1, ell, shape=ellipse] at ($(y1)+(-0.5cm,-2cm)$) {$D_1$} ;
    \node[name=d2, ell, shape=ellipse] at ($(y2)+(-0.5cm,-2cm)$) {$D_2$} ;
    % \node[name=d4, ell, shape=ellipse] at ($(y4)+(-0.5cm,-2.5cm)$) {$D_4$};
    \node[name=x0, ell, shape=ellipse] at ($(y0)+(-0.5cm,-4cm)$) {$X_0$};
    \node[name=x1, ell, shape=ellipse] at ($(y1)+(-1cm,-4cm)$) {$X_1$};
    \node[name=x2, ell, shape=ellipse] at ($(y2)+(-1cm,-4cm)$) {$X_2$};
    % \node[name=x4, ell, shape=ellipse] at ($(y4)+(-1cm,-4.5cm)$) {$X_4$};
    \node[name=u, unobs, shape=ellipse] at ($(y1)+(0cm,-6cm)$) {$U$};
    \node[name=e0, unobs, shape=ellipse] at ($(y0)+(0cm,+1.5cm)$) {$U_{Y_0}$};
    \node[name=e1, unobs, shape=ellipse] at ($(y1)+(0cm,+1.5cm)$) {$U_{Y_1}$};
    \node[name=e2, unobs, shape=ellipse] at ($(y2)+(0cm,+1.5cm)$) {$U_{Y_2}$};
    % \node[name=e4, unobs, shape=ellipse] at ($(y4)+(0cm,+1.5cm)$) {$U_{Y_4}$};

% Edges

\draw[->,line width=1pt,>=stealth]
(d1) edge (y1)
(d1) edge (y2)
(d1) edge (d2)
(d2) edge (y2)
(u) edge[bend left = 30] node[pos = 0.9, right] {$+ \alpha$}  (y0)
(u) edge[bend right = 10] (y1)
(u) edge[bend right = 55]  (y2.310)
(u) edge (d1)
(u) edge[bend left = 0] (d2)
(u) edge (x0)
(u) edge (x1)
(u) edge (x2)
(e0) edge (y0)
(e1) edge (y1)
(e2) edge (y2)
(x1) edge[bend left = 30] (y1)
(x2) edge[bend right = 30] (y2)
(x1) edge (d1.250)
(x2) edge (d2)
(x0) edge (d1)
(x1) edge (d2.215)
(x0) edge (x1)
(x1) edge (x2)
(x0) edge[bend left = 15] (y0)
(x0) edge[bend left = 5] (y1.195)
(x0) edge[bend left = 18] (y2)
(x0) edge (d2)
(x1) edge (y2.230)
(d1) edge[dotted] (x2)
;
\end{tikzpicture}
    }
    }
\caption{DAG $T=3$}
        \label{fig:dag-2x3}
\end{subfigure}
\hfill
\begin{subfigure}[b]{0.32\textwidth}
\centering
\makebox[\linewidth][c]{%
  \resizebox{1\linewidth}{!}{%
    \begin{tikzpicture}
        \tikzset{line width=1.5pt, ell/.style={draw, fill = white, inner xsep=5pt,inner ysep=5pt, line width=1.5pt}, unobs/.style={ fill = none, inner xsep=5pt,inner ysep=5pt, line width=1.5pt}, swig vsplit={gap=4pt, inner line width right=0.5pt}};

% Nodes
    \node[name=dy1, ell, shape=ellipse] {$\Delta Y_1(0\textcolor{gray}{,0})$};
    \node[name=dy2, ell, shape=ellipse] at ($(dy1)+(3.5cm,0cm)$) {$\Delta Y_2(0,0)$};
    \node[name=d1, ell, shape=swig vsplit] at ($(dy1)+(-0.2cm,-2cm)$) {\nodepart{left}{$D_1$} \nodepart{right}{$0$}};
    \node[name=d2, ell, shape=swig vsplit] at ($(dy2)+(-0.5cm,-2cm)$) {\nodepart{left}{$D_2(0)$} \nodepart{right}{$0$}};
    \node[name=x0, ell, shape=ellipse] at ($(dy1)+(-3cm,-4cm)$) {$X_0$};
    \node[name=x1, ell, shape=ellipse] at ($(dy1)+(-1cm,-4cm)$) {$X_1$};
    \node[name=x2, ell, shape=ellipse] at ($(dy2)+(-1cm,-4cm)$) {$X_2$};
    \node[name=u, unobs, shape=ellipse] at ($(dy1)+(1.5cm,-5.8cm)$) {$U$};
    \node[name=e1, unobs, shape=ellipse] at ($(dy1)+(1.75cm,+1.5cm)$) {$U_{Y_1}$};
    
% Edges

\draw[->,line width=1pt,>=stealth]
(d1) edge (d2)
(u) edge[bend right = 0] (d1.250)
(u) edge[bend left = 30] (d2.210)
(u) edge (x0)
(u) edge (x1)
(u) edge (x2)
(x0) edge[bend left = 30] (dy1)
(x0) edge[bend left = 23] (dy2)
(x1) edge[bend right = 0] (dy2)
(x1) edge[bend left = 32] (dy1)
(x2) edge[bend right = 60] (dy2)
(x1) edge (d1.225)
(x2) edge (d2)
(x0) edge (d1.200)
(x0) edge (d2.190)
(x1) edge (d2.200)
(x0) edge (x1)
(x1) edge (x2)
(e1) edge (dy1)
(e1) edge (dy2)
;

\draw[->,line width=1pt,>=stealth] (d1.45) -- (dy1);
\draw[->,line width=1pt,>=stealth] (d1.20) -- (dy2.200);
\draw[->,line width=1pt,>=stealth] (d2.40) -- (dy2.280);
\end{tikzpicture}
    }
    }
\caption{$\Delta$-SWIG $D_{1}\not\rightarrow X_2$}
        \label{fig:d-swig-2x3}
\end{subfigure}
\hfill
\begin{subfigure}[b]{0.32\textwidth}
\centering
\makebox[\linewidth][c]{%
  \resizebox{1\linewidth}{!}{%
    \begin{tikzpicture}
        \tikzset{line width=1.5pt, ell/.style={draw, fill = white, inner xsep=5pt,inner ysep=5pt, line width=1.5pt}, unobs/.style={ fill = none, inner xsep=5pt,inner ysep=5pt, line width=1.5pt}, swig vsplit={gap=4pt, inner line width right=0.5pt}};

% Nodes
    \node[name=dy1, ell, shape=ellipse] {$\Delta Y_1(0\textcolor{gray}{,0})$};
    \node[name=dy2, ell, shape=ellipse] at ($(dy1)+(3.5cm,0cm)$) {$\Delta Y_2(0,0)$};
    \node[name=d1, ell, shape=swig vsplit] at ($(dy1)+(-0.2cm,-2cm)$) {\nodepart{left}{$D_1$} \nodepart{right}{$0$}};
    \node[name=d2, ell, shape=swig vsplit] at ($(dy2)+(-0.5cm,-2cm)$) {\nodepart{left}{$D_2(0)$} \nodepart{right}{$0$}};
    \node[name=x0, ell, shape=ellipse] at ($(dy1)+(-3cm,-4cm)$) {$X_0$};
    \node[name=x1, ell, shape=ellipse] at ($(dy1)+(-1cm,-4cm)$) {$X_1$};
    \node[name=x2, ell, shape=ellipse] at ($(dy2)+(-1cm,-4cm)$) {$X_2(0)$};
    \node[name=u, unobs, shape=ellipse] at ($(dy1)+(1.5cm,-5.8cm)$) {$U$};
    \node[name=e1, unobs, shape=ellipse] at ($(dy1)+(1.75cm,+1.5cm)$) {$U_{Y_1}$};
    
% Edges

\draw[->,line width=1pt,>=stealth]
(d1) edge (d2)
(u) edge[bend right = 0] (d1.250)
(u) edge[bend left = 30] (d2.210)
(u) edge (x0)
(u) edge (x1)
(u) edge (x2)
(x0) edge[bend left = 30] (dy1)
(x0) edge[bend left = 23] (dy2)
(x1) edge[bend right = 0] (dy2)
(x1) edge[bend left = 32] (dy1)
(x2) edge[bend right = 60] (dy2)
(x1) edge (d1.225)
(x2) edge (d2)
(x0) edge (d1.200)
(x0) edge (d2.190)
(x1) edge (d2.200)
(x0) edge (x1)
(x1) edge (x2)
(e1) edge (dy1)
(e1) edge (dy2)
(d1) edge (x2)
;

\draw[->,line width=1pt,>=stealth] (d1.45) -- (dy1);
\draw[->,line width=1pt,>=stealth] (d1.20) -- (dy2.200);
\draw[->,line width=1pt,>=stealth] (d2.40) -- (dy2.280);
\end{tikzpicture}
    }
    }
\caption{$\Delta$-SWIG $D_{1}\rightarrow X_2$}
        \label{fig:d-swig-2x3-d-x}
\end{subfigure}
\begin{minipage}{\linewidth}
\begin{small}
\textit{Note:} The intermediate SWIGs are in Figure \ref{fig:swigs_3}. All structural equations are provided in Figure \ref{fig:SCMs_T3} in the Appendix.
\end{small}
\end{minipage}
\end{figure}

\subsubsection{Without Treatment-Covariate Feedback} \label{sec:2x3-wo-d-x}

The absence of treatment-covariate feedback resembles the setting discussed, e.g., in \textcite{caetano_differenceindifferences_2024}. The respective $\Delta$-SWIG is depicted in Figure \ref{fig:d-swig-2x3}.

\textbf{Period 1:}
Consider the short-term effect for the group that is first treated in period 1:
\begin{align}
    ATT(1,1) & = \E[Y_1(1\textcolor{gray}{,1}) - Y_1(0\textcolor{gray}{,0}) \mid D_1 = 1, D_2 = 1]\\
     & = \underbrace{\E[\Delta Y_1(\textcolor{blue}{1}\textcolor{gray}{,1})\mid D_1 = \textcolor{blue}{1}, D_2 = 1]}_{\text{observable trend}} - \underbrace{\E[\Delta Y_1(0\textcolor{gray}{,0}) \mid D_1 = 1, D_2 = 1]}_{\text{counterfactual trend}} \label{eq:att11-rearrange}
\end{align}
where the second line follows by adding $Y_0\textcolor{gray}{(1,1)} - Y_0\textcolor{gray}{(0,0)} = 0$ and rearranging.

Note that $\Delta$-SWIG \ref{fig:d-swig-2x3} contains multiple open paths between $\Delta Y_1(0\textcolor{gray}{,0})$ and $D_1, D_2(0)$ that preclude reading off unconditional joint independence $\Delta Y_1(0\textcolor{gray}{,0}) \indep D_1, D_2$, which would be sufficient to identify the counterfactual trend in \eqref{eq:att11-rearrange}. However, the graph implies two useful independencies
and the second one is further processed to prepare for the next step:\footnote{\eqref{eq:2x3-cia1} follows because $X_t$ blocks $D_1 \leftarrow (U \rightarrow) X_t \rightarrow \Delta Y_1(0\textcolor{gray}{,0})$, collider $\Delta Y_2(0,0)$ blocks paths like $D_1 \leftarrow U \rightarrow X_2 \rightarrow \Delta Y_2(0,0) \leftarrow U_{Y_1} \rightarrow \Delta Y_1(0\textcolor{gray}{,0})$, and fixed node 0 and collider $D_2(0)$ block paths like $D_1 \leftarrow U \rightarrow D_2(0) \leftarrow 0 \rightarrow \Delta Y_1(0\textcolor{gray}{,0})$. $X_2$ does not open any path conditional on $X_0,X_1$ and is therefore optional. \eqref{eq:2x3-cia2} follows from similar arguments. Notably, $D_1$ is a collider on paths like $D_2(0) \leftarrow X_t \rightarrow D_1 \leftarrow X_1 \rightarrow Y_1(0\textcolor{gray}{,0})$ but conditional on $X_t$ the path is always blocked and $D_1$ is optional.} 
\begin{align}
    \Delta Y_1(0\textcolor{gray}{,0}) & \indep D_1 \mid X_0,X_1\textcolor{gray}{,X_2} \label{eq:2x3-cia1} \\
    \Delta Y_1(0\textcolor{gray}{,0}) & \indep D_2(0) \mid X_0,X_1\textcolor{gray}{,X_2,D_1}  \label{eq:2x3-cia2} \\
    \Rightarrow \Delta Y_1(0\textcolor{gray}{,0}) & \indep D_2(\dzero) \mid X_0,X_1\textcolor{gray}{,X_2},D_1 = \dzero  & \label{eq:2x3-cia3} \\
    \Rightarrow \Delta Y_1(0\textcolor{gray}{,0}) & \indep D_2 \mid X_0,X_1\textcolor{gray}{,X_2},D_1 = 0  & \text{(consistency)} \label{eq:2x3-cia4}\\
    \Rightarrow \Delta Y_1(0\textcolor{gray}{,0}) & \indep D_2 \mid X_0,X_1\textcolor{gray}{,X_2},D_1  & \text{(\assref{ass:staggered-trmnt})} \label{eq:2x3-cia5}
\end{align}
While \eqref{eq:2x3-cia2} contains potential treatment $D_2(0)$, it becomes observable $D_2$ by conditioning on $D_1=0$ and consistency. The final line uses that $\Delta Y_1(0\textcolor{gray}{,0}) \indep D_2 \mid X_0,X_1\textcolor{gray}{,X_2},D_1 = d$ for all $d \in \{0,1\}$ holds by Assumption \assref{ass:staggered-trmnt}.\footnote{Then, $D_2 = 1$ - and is thus degenerate - conditional on $D_1 = 1$ such that $\Delta Y_1(0\textcolor{gray}{,0}) \indep D_2 \mid X_0,X_1\textcolor{gray}{,X_2},D_1 = 1$ holds. Combined with \eqref{eq:2x3-cia4}, this observation yields \eqref{eq:2x3-cia5}.}
Combining \eqref{eq:2x3-cia1} and \eqref{eq:2x3-cia5} by contraction yields then joint conditional independence
\begin{align}
    \Delta Y_1(0\textcolor{gray}{,0}) & \indep D_1, D_2 \mid X_0,X_1\textcolor{gray}{,X_2} \label{eq:2x3-joint-indep}
\end{align}
and a variety of CPTs for different control groups:
\begin{align}
 & \E[\Delta Y_1(0\textcolor{gray}{,0}) \mid X_0,X_1\textcolor{gray}{,X_2}, D_1 = 1, D_2 = 1] &  \text{counterfactual trend for }ATT(1,1) \\
=~ &\E[\Delta Y_1(\dzero\textcolor{gray}{,0}) \mid X_0,X_1\textcolor{gray}{,X_2}, D_1 = \dzero, D_2 = 0] & \text{observable never-treated trend} \label{eq:nt-trend}\\
=~ &\E[\Delta Y_1(\dzero\textcolor{gray}{,0}) \mid X_0,X_1\textcolor{gray}{,X_2}, D_1 = \dzero, D_2 = 1] & \text{observable later-treated trend} \label{eq:lt-trend} \\
=~ &\E[\Delta Y_1(\dzero\textcolor{gray}{,0}) \mid X_0,X_1\textcolor{gray}{,X_2}, D_1 = \dzero] & \text{observable not-yet-treated trend} \label{eq:nyt-trend}
\end{align}
Therefore, the counterfactual trend in \eqref{eq:att11-rearrange} can be identified, e.g., using the never-treated as control group $\E[\E[\Delta Y_1 \mid X_0,X_1\textcolor{gray}{,X_2}, D_1 = 0, D_2 = 0] \mid D_1 = 1, D_2 = 1]$. This means that $ATT(1,1)$ is identified and can be estimated by controlling for $X_0,X_1$ (and $X_2$) via standard regression or (augmented) inverse probability estimators.

Additionally, conditional \textit{pre-trends} for $ATT(2,2)$ are parallel. In particular, never-treated trend \eqref{eq:nt-trend} and later-treated trend \eqref{eq:lt-trend} coincide.
Consequently, pre-trend analyses are not expected to reject parallel pre-trends if they control for $X_0,X_1$ (and $X_2$).

\textbf{Period 2:} $\Delta$-SWIG \ref{fig:d-swig-2x3} further implies $\Delta Y_2(0,0) \indep D_1 \mid X_0,X_1,X_2$ and $\Delta Y_2(0,0) \indep D_2(0) \mid X_0,X_1,X_2\textcolor{gray}{,D_1}$. Following similar arguments as before they contract to joint conditional independence 
\begin{align}
    \Delta Y_2(0,0) & \indep D_1, D_2 \mid X_0,X_1,X_2 \label{eq:2x3-indep-y2}
\end{align}
and justify identification of dynamic effect $ATT(1,2)$ and short-term effect $ATT(2,2)$ using the never-treated as control group:
\begin{align}
&\E[\Delta Y_2(0,0) \mid X_0,X_1,X_2, D_1 = 1, D_2 = 1] & \text{counterfactual trend for }ATT(1,2)\\
=~ &\E[\Delta Y_2(0,0) \mid X_0,X_1,X_2, D_1 = 0, D_2 = 1] & \text{counterfactual trend for }ATT(2,2) \\
=~ &\E[\Delta Y_2(\dzero,\dzero) \mid X_0,X_1,X_2, D_1 = \dzero, D_2 = \dzero] & \text{observable never-treated trend} 
\end{align}

\subsubsection{With Treatment-Covariate Feedback}

Figure \ref{fig:d-swig-2x3-d-x} shows the $\Delta$-SWIG with treatment-covariate feedback. The main difference to Figure \ref{fig:d-swig-2x3} is that it includes now potential covariate $X_2(0)$. This has minor consequences for CPTs in period 1 but major consequences in period 2.

\textbf{Period 1:}
First, note that $\Delta$-SWIG \ref{fig:d-swig-2x3-d-x} implies
\begin{align} \label{eq:2x3-d-x-cia1}
    \Delta Y_1(0\textcolor{gray}{,0}) & \indep D_1, D_2 \mid X_0,X_1
\end{align}
which is nearly identical to \eqref{eq:2x3-joint-indep} but $X_2$ is not optional anymore. At first glance this suggests that CPT does not hold anymore conditional on the full covariate sequence $X_0,X_1,X_2$. However, the additional independence
\begin{align}
   \Delta Y_1(0\textcolor{gray}{,0}) & \indep X_2(0) \mid X_0,X_1 \textcolor{gray}{, D_1, D_2(0)} \\
   \Rightarrow \Delta Y_1(0\textcolor{gray}{,0}) & \indep X_2 \mid X_0,X_1, D_1 = 0, D_2 & \text{(consistency)} \label{eq:2x3-d-x-cov-uc}
\end{align}
implies that all results regarding identifiability of $ATT(1,1)$ and parallel pre-trends conditional on $X_0,X_1$ (and $X_2$) from the previous section apply identically with treatment-covariate feedback.\footnote{To see this formally, note that $\E[\Delta Y_1(0\textcolor{gray}{,0}) \mid X_0,X_1, D_1 = 1, D_2 = 1] = \E[\Delta Y_1(\dzero\textcolor{gray}{,0}) \mid X_0,X_1, D_1 = \dzero, D_2 = d] = \E[\Delta Y_1(\dzero\textcolor{gray}{,0}) \mid X_0,X_1,X_2, D_1 = \dzero, D_2 = d]$ where the first equality follows by \eqref{eq:2x3-d-x-cia1} and the second by \eqref{eq:2x3-d-x-cov-uc}. Therefore, the counterfactual trend for $ATT(1,1)$ is $\E[\Delta Y_1(0\textcolor{gray}{,0}) \mid X_0,X_1, D_1 = 1, D_2 = 1] = \eqref{eq:nt-trend} = \eqref{eq:lt-trend} = \eqref{eq:nyt-trend}$.}

\textbf{Period 2:} $\Delta$-SWIG \ref{fig:d-swig-2x3-d-x} implies two crucial independencies that could not be simplified further:
\begin{align}
    \Delta Y_2(0,0) & \indep D_2 \mid X_0,X_1,X_2,D_1=0 \label{eq:2x3-d-x-cia21} \\
    \Delta Y_2(0,0) & \indep D_1,D_2 \mid X_0, X_1, X_2(0) \label{eq:2x3-d-x-cia2}
\end{align}
Independence \eqref{eq:2x3-d-x-cia21} conditions on observable variables but only shows marginal independence regarding $D_2$. This leads to the following (non-)CPTs:
\begin{align}
& \E[\Delta Y_2(0,0) \mid X_0,X_1,X_2, D_1 = 0, D_2 = 1] & \text{counterfactual trend for }ATT(2,2) \\
= &\E[\Delta Y_2(\dzero,\dzero) \mid X_0,X_1,X_2, D_1 = \dzero, D_2 = \dzero] & \text{observable never-treated trend} \\
\neq &\E[\Delta Y_2(0,0) \mid X_0,X_1,X_2, D_1 = 1, D_2 = 1] & \text{counterfactual trend for }ATT(1,2)
\end{align}
The short-term effect $ATT(2,2)$ can still be identified using the never-treated trend. However, the dynamic effect $ATT(1,2)$ is not identified in the presence of treatment-covariate feedback.

Identification of $ATT(1,2)$ would require a joint independence conditional on observable variables. However, the minimal conditioning set to achieve joint independence is shown in \eqref{eq:2x3-d-x-cia2} and contains unobservable $X_2(0)$. This poses a dilemma where researchers can basically choose between two biases: only controlling for $X_0,X_1$ leads to confounding bias, but controlling additionally for $X_2 = D_1 X_2(1) + (1-D_1) X_2(0)$ instead of $X_2(0)$ leads to ``wrong world control bias''. % We provide explicit bias expressions and further discussion in Section XXX below.

\textbf{Summary:} Short-term effects $ATT(g,g)$ are still identified in the presence of treatment-covariate feedback. However, dynamic effects are not identified without further assumptions. Strikingly, pre-trends hold by construction in this setting and would not flag problems with the dynamic effect estimates. This rationalizes the patterns shown in the introduction.

\subsection{Multiple Time Periods and Long Differences} \label{sec:general-t}

This section extends the analysis beyond $T = 3$ and first differences. The latter is motivated by the conditional DiD estimand targeting $ATT(g,t)$ that involves general differences $\Delta Y_{g-1,t}$:\footnote{We focus here on the estimand using never-treated but note that everything discussed in this section applies identically for the not-yet-treated, as we show in Appendix \ref{sec:NYT_app}.}
{\small
\begin{align} \label{eq:did-estimand}
     DiD_{g,t}(\mathbf{Z}) & := \underbrace{\E[\Delta Y_{g-1,t}\mid \overline{D}_{g-1} = \overline{0}_{g-1}, \underline{D}_g = \underline{1}_g]}_{\text{observable trend of group g}}  - \underbrace{\E[\E[\Delta Y_{g-1,t} \mid \mathbf{Z}, \overline{D} = \overline{0}]\mid \overline{D}_{g-1} = \overline{0}_{g-1}, \underline{D}_g = \underline{1}_g]}_{\text{covariate adjusted never-treated trend}}
\end{align} }

for some adjustment set $\mathbf{Z}$.
This section discusses which adjustment sets ensure, under which model restrictions, that $\Delta Y_{g-1,t} (\overline{0}) \indep \underline{D}_{g} \mid \mathbf{Z}, \overline{D}_{g-1} = \overline{0}_{g-1}$ and therefore $ATT(g,t) = DiD_{g,t}\left(\mathbf{Z}\right)$.\footnote{The conditional independence implies CPT $\E[\Delta Y_{g-1,t} (\overline{0}) \mid \mathbf{Z}, \overline{D} = \overline{0}] = \E[\Delta Y_{g-1,t} (\overline{0}) \mid \mathbf{Z}, \overline{D}_{g-1} = \overline{0}_{g-1}, \underline{D}_g = \underline{1}_g]$, which is an essential part of the standard identification proof that is provided in Appendix \ref{app:attgt-ident} for completeness.}
To this end, we first show how to graphically obtain the minimal sufficient adjustment set, which can contain unobservable potential covariates and might be infeasible. Nevertheless, this set provides a constructive starting point for discussing the feasible adjustment sets in the following.

\subsubsection{Minimal Sufficient Adjustment Set} \label{sec:min-sas}

The previous sections establish how independencies leading to sufficient adjustment sets can be deduced from $\Delta$-SWIGs for $T \leq 3$, e.g. \eqref{eq:ccpr-cpt}, \eqref{eq:2x3-joint-indep}, \eqref{eq:2x3-indep-y2}, \eqref{eq:2x3-d-x-cia2}. The same strategies apply to larger $T$ but require to either draw many $\Delta$-nodes within one $\Delta$-SWIG or many $\Delta$-SWIGs. To circumvent this complexity, we show that the single SWIG from step 3 in Procedure \ref{proc:d-swig} suffices to compactly reason about minimal sufficient adjustment sets under Assumptions \assref{ass:SWAS-staggered} and \assref{ass:no-y-dyn}, building on the following result:
\begin{proposition} (minimal sufficient adjustment set) \label{prop:min-z0}
    Consider SCM $\mathcal{M} \in \mathfrak{M}$ and let $\mathcal{G}(\overline{0})$ be the corresponding SWIG constructed in step 3 of Procedure \ref{proc:d-swig}.
Then, under Assumptions \assref{ass:SWAS-staggered} and \assref{ass:no-y-dyn} the following conditional independence holds for all $g,t$:
    \begin{align} \label{eq:cia-general}
    \Delta Y_{g-1,t} (\overline{0}) \indep \underline{D}_{g} \mid~ & \underbrace{Pa_{\mathcal{G}(\overline{0})}(Y_{g-1}(\overline{0})) \setminus \{U,U_{Y_{g-1}}, \overline{0}\}, Pa_{\mathcal{G}(\overline{0})}(Y_{t}(\overline{0})) \setminus \{U,U_{Y_{t}}, \overline{0}\}}_{\mathbf{S}_{g,t}(\overline{0})}, \overline{D}_{g-1} = \overline{0}_{g-1}
\end{align}
Without further restrictions there is no $\mathbf{S}(\overline{0})$ such that also $\Delta Y_{g-1,t} (\overline{0}) \indep \underline{D}_{g} \mid \mathbf{S}(\overline{0}), \overline{D}_{g-1} = \overline{0}_{g-1}$ can be deduced but $\mathbf{S}_{g,t}(\overline{0}) := Pa_{\mathcal{G}(\overline{0})}(Y_{g-1}(\overline{0})) \setminus \{U,U_{Y_{g-1}}, \overline{0}\}, Pa_{\mathcal{G}(\overline{0})}(Y_{t}(\overline{0})) \setminus \{U,U_{Y_{t}}, \overline{0}\}$ is not a subset of $\mathbf{S}(\overline{0})$. The proof is provided in Appendix \ref{sec:proofs_porps_multiple}.
\end{proposition}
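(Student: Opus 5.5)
The plan is to construct a single $\Delta$-SWIG from $\mathcal{G}(\overline{0})$ via Procedure~\ref{proc:d-swig}, read off the independence with Theorem~\ref{thm:Markov}, and then convert the potential treatments to observables. The conversion uses consistency and Assumption~\assref{ass:staggered-trmnt}, exactly as in \eqref{eq:2x3-cia1}--\eqref{eq:2x3-joint-indep}.

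\textbf{Step 1: the graph.} Augment $\mathcal{G}(\overline{0})$ with the node $\Delta Y_{g-1,t}(\overline{0}) = Y_t(\overline{0}) - Y_{g-1}(\overline{0})$. Under Assumption~\assref{ass:SWAS-staggered} both levels contain the same term $\alpha(U,C)$, so the edges from $U$ cancel. Hence
\[
Pa(\Delta Y_{g-1,t}(\overline{0})) = \mathbf{S}_{g,t}(\overline{0}) \cup \{U_{Y_{g-1}},U_{Y_t}\} \cup \{\text{fixed nodes}\}.
\]
Under Assumption~\assref{ass:no-y-dyn} the level nodes $Y_s(\overline{0})$ have no descendants, so they can be pruned. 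Each $U_{Y_s}$ then has only the children $Y_s(\overline{0})$ (pruned) and the $\Delta$ node, so these become exogenous root parents of $\Delta Y_{g-1,t}(\overline{0})$ alone. The resulting graph is a valid pruned $\Delta$-SWIG, to which Theorem~\ref{thm:Markov} applies.

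\textbf{Step 2: $d$-separation.} Let $D_s(\overline{0})$ with $s \ge g$ be a random treatment node, and condition on $\mathbf{S}_{g,t}(\overline{0})$, on $\overline{D}_{g-1}$ and on the fixed nodes $\overline{0}$. Since $\Delta Y_{g-1,t}(\overline{0})$ has no children, every path into it enters through a parent.
\begin{itemize}
\item Parents in $\mathbf{S}_{g,t}(\overline{0})$ are conditioned non-colliders on such a path, so the path is blocked there.
\item A fixed parent has no incoming edges and is conditioned on, so it blocks any path through it.
\item $U_{Y_s}$ has no other neighbours after pruning, so no path passes through it.
\end{itemize}
Hence $\underline{D}_g(\overline{0}) \indep \Delta Y_{g-1,t}(\overline{0}) \mid \mathbf{S}_{g,t}(\overline{0}), \overline{D}_{g-1}$. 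Conditioning on $\overline{D}_{g-1}$ or on colliders elsewhere cannot reopen any of these paths, because every path is blocked at its final edge into $\Delta$.

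Restricting to $\overline{D}_{g-1} = \overline{0}_{g-1}$ and applying consistency turns $\underline{D}_g(\overline{0})$ into the observable $\underline{D}_g$. Potential covariates in $\mathbf{S}$ stay potential, which is why the statement writes $\mathbf{S}_{g,t}(\overline{0})$.

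\textbf{Step 3: minimality.} This is the step I expect to be the main obstacle, because it requires exhibiting SCMs in $\mathfrak{M}$ that defeat any alternative set. I would argue on the graph without further restrictions, meaning the graph in which $U$ points into every $X_s$ and into every $D_s$. Take any $V \in \mathbf{S}_{g,t}(\overline{0})$ and any candidate $\mathbf{S}(\overline{0})$ with $V \notin \mathbf{S}(\overline{0})$. The path $D_s(\overline{0}) \leftarrow U \rightarrow V \rightarrow \Delta Y_{g-1,t}(\overline{0})$ then cannot be blocked: $U$ is unobservable, and $V$ is not in the set.

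One subtlety: $V$ may be a treatment-descendant $X$ that is itself a direct cause of $D_s$. That case is even simpler, since the path $D_s \leftarrow V \rightarrow \Delta$ is already open. So $d$-separation cannot be deduced from this graph, which gives the claimed minimality in the sense of ``can be deduced''.
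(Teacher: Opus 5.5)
Your graphical core is correct and matches the paper's. Every path into the sink $\Delta Y_{g-1,t}(\overline{0})$ ends at a parent that is either conditioned on, fixed, or one of the exogenous $U_{Y_{g-1}}, U_{Y_t}$ whose only other children are sinks. You dispose of the latter by pruning; the paper uses condition (c) of Lemma~\ref{lemma:d_sep_neutral_controls}. Your minimality path $D_s(\overline{0}) \leftarrow U \rightarrow V \rightarrow \Delta Y_{g-1,t}(\overline{0})$ is, if anything, a more concrete version of the paper's informal argument.

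The gap is the last sentence of Step 2. Restricting to $\overline{D}_{g-1}=\overline{0}_{g-1}$ and invoking consistency identifies $D_g(\overline{0}_{g-1})$ with $D_g$. For $s>g$, however, the node $D_s(\overline{0}_{s-1})$ equals the observed $D_s$ only on $\{\overline{D}_{s-1}=\overline{0}_{s-1}\}$. On $\{D_g=1\}$ the observed $D_{g+1}$ equals one by Assumption~\assref{ass:staggered-trmnt}, whereas $D_{g+1}(\overline{0}_g)$ is unrestricted. So consistency does not turn $\underline{D}_g(\overline{0})$ into $\underline{D}_g$, and your independence for the potential vector does not yet deliver \eqref{eq:cia-general}. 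Your opening paragraph announces that Assumption~\assref{ass:staggered-trmnt} will be used, but Step 2 never uses it.

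The missing step is exactly where staggered adoption enters, and in your joint framework it is short. By recursive consistency, $\{\overline{D}_{g-1}=\overline{0}_{g-1}\}$ equals $\{\overline{D}_{g-1}(\overline{0}_{g-2})=\overline{0}_{g-1}\}$. On this event, consistency and Assumption~\assref{ass:staggered-trmnt} give $D_s=\max_{g\le r\le s} D_r(\overline{0}_{r-1})$ for every $s\ge g$. Hence $\underline{D}_g$ is a deterministic function of $\underline{D}_g(\overline{0})$ there, and conditional independence survives replacing one argument by a measurable function of it.

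The paper takes a different route. It derives one independence per period, $\Delta Y_{g-1,t}(\overline{0}) \indep D_s(\overline{0}_{s-1}) \mid \mathbf{S}_{g,t}(\overline{0}), \overline{D}_{s-1}(\overline{0}_{s-2})$, and converts each by consistency on $\{\overline{D}_{s-1}=\overline{0}_{s-1}\}$. It then uses that $D_s$ is degenerate once an earlier treatment equals one, and contracts inductively (Lemma~\ref{lemma:D_sequence_contraction}). With the function argument added, your route is slightly shorter, because the single joint $d$-separation makes the induction with contraction unnecessary.
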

The proposition shows that the minimal sufficient adjustment set consists of graphical parents $Pa_{\mathcal{G}(\overline{0})}$, excluding exogenous variables and $U$, of the two level nodes in the intermediate SWIG that map to $\Delta Y_{g-1,t} (\overline{0})$ in a $\Delta$-SWIG. $\mathbf{S}_{g,t}(\overline{0})$ might contain unobservable potential covariates and we define their observable analogue as $\mathbf{S}_{g,t} := \{X_t \in \overline{X} : X_t(\overline{0}) \in \mathbf{S}_{g,t}(\overline{0})\}$, which plays an important role in characterizing feasible sets below.

Consider Figure \ref{fig:swigs_3} for illustration. It depicts the intermediate SWIGs between DAG \ref{fig:dag-2x3} and $\Delta$-SWIGs \ref{fig:d-swig-2x3} and \ref{fig:d-swig-2x3-d-x}.
For example, Proposition \ref{prop:min-z0} applied to the SWIG without treatment-covariate feedback in Figure \ref{fig:swig-2x3} leads to
\begin{enumerate}
    \item $\Delta Y_{1,2} (\overline{0}) \indep D_2 \mid  X_{0}, X_{1}, X_{2}, D_{1} = 0 \Rightarrow ATT(2,2)$ is identified; $\mathbf{S}_{2,2}(\overline{0}) = \mathbf{S}_{2,2} = \{X_{0}, X_{1}, X_{2}\}$
    \item $\Delta Y_{0,2} (\overline{0}) \indep D_1, D_2 \mid X_{0}, X_{1}, X_{2} \Rightarrow ATT(1,2)$ is identified; $\mathbf{S}_{1,2}(\overline{0}) = \mathbf{S}_{1,2} = \{X_{0}, X_{1}, X_{2}\}$
\end{enumerate}
or applied to SWIG \ref{fig:swig-2x3-d-x} with treatment-covariate feedback yields
\begin{enumerate}[start=3]
    \item $\Delta Y_{1,2} (\overline{0}) \indep D_2 \mid X_{0}, X_{1}, X_{2}(\dzero), D_{1} = \dzero \Rightarrow ATT(2,2)$ is identified; $\mathbf{S}_{2,2}(\overline{0}) = \{X_{0}, X_{1}, X_{2}(0)\}$, $\mathbf{S}_{2,2} = \{X_{0}, X_{1}, X_{2}\}$
    \item $\Delta Y_{0,2} (\overline{0}) \indep D_1, D_2 \mid X_{0}, X_{1}, X_{2}(0) \Rightarrow ATT(1,2)$ is \textit{not} identified; $\mathbf{S}_{1,2}(\overline{0}) = \{X_{0}, X_{1}, X_{2}(0)\}$, $\mathbf{S}_{1,2} = \{X_{0}, X_{1}, X_{2}\}$
\end{enumerate}
The last example is an instance where the minimal sufficient adjustment set is not enough for identification because potential covariate $X_{2}(0)$ is not observable. However, Proposition \ref{prop:min-z0} still provides a useful starting point to discuss feasible adjustment sets under different model restrictions.

\subsubsection{Valid Adjustment Sets}

\begin{figure}[t]
    \centering
    \caption{SWIGs for $T = 3$ with and without treatment-covariate feedback}
    \label{fig:swigs_3}
\begin{subfigure}[b]{0.45\textwidth}
\centering
\makebox[\linewidth][c]{%
  \resizebox{1\linewidth}{!}{%
    \begin{tikzpicture}
        \tikzset{line width=1.5pt, ell/.style={draw, fill = white, inner xsep=5pt,inner ysep=5pt, line width=1.5pt}, unobs/.style={ fill = none, inner xsep=5pt,inner ysep=5pt, line width=1.5pt}, swig vsplit={gap=5pt, inner line width right=0.5pt}};

% Nodes
    \node[name=y0, ell, shape=ellipse]{$Y_0\textcolor{gray}{(0,0)}$};
    \node[name=y1, ell, shape=ellipse] at ($(y0)+(3.5cm,0cm)$) {$Y_1(0,\textcolor{gray}{0})$};
    \node[name=y2, ell, shape=ellipse] at ($(y1)+(3.5cm,0cm)$) {$Y_2(0,0)$};
    % \node[name=y4, ell, shape=ellipse] at ($(y2)+(3.5cm,0cm)$) {$Y_4$};
    \node[name=d1, ell, shape=swig vsplit] at ($(y1)+(-0.5cm,-2cm)$) {\nodepart{left}{$D_1$} \nodepart{right}{$0$}} ;
    \node[name=d2, ell, shape=swig vsplit] at ($(y2)+(-0.5cm,-2cm)$) {\nodepart{left}{$D_2(0)$} \nodepart{right}{$0$}} ;
    % \node[name=d4, ell, shape=ellipse] at ($(y4)+(-0.5cm,-2.5cm)$) {$D_4$};
    \node[name=x0, ell, shape=ellipse] at ($(y0)+(-0.5cm,-4cm)$) {$X_0$};
    \node[name=x1, ell, shape=ellipse] at ($(y1)+(-1cm,-4cm)$) {$X_1$};
    \node[name=x2, ell, shape=ellipse] at ($(y2)+(-1cm,-4cm)$) {$X_2$};
    % \node[name=x4, ell, shape=ellipse] at ($(y4)+(-1cm,-4.5cm)$) {$X_4$};
    \node[name=u, unobs, shape=ellipse] at ($(y1)+(0cm,-6cm)$) {$U$};
    \node[name=e0, unobs, shape=ellipse] at ($(y0)+(0cm,+1.5cm)$) {$U_{Y_0}$};
    \node[name=e1, unobs, shape=ellipse] at ($(y1)+(0cm,+1.5cm)$) {$U_{Y_1}$};
    \node[name=e2, unobs, shape=ellipse] at ($(y2)+(0cm,+1.5cm)$) {$U_{Y_2}$};
    % \node[name=e4, unobs, shape=ellipse] at ($(y4)+(0cm,+1.5cm)$) {$U_{Y_4}$};

% Edges

\draw[->,line width=1pt,>=stealth]
(d1) edge (y2)
(d1) edge (d2)
(u) edge[bend left = 30] node[pos = 0.95, right] {$+ \alpha$}  (y0)
(u) edge[bend right = 30] node[pos = 0.95, right] {$+ \alpha$} (y1)
(u) edge[bend right = 60] node[pos = 0.95, right] {$+ \alpha$}  (y2.350)
(u) edge (d1)
(u) edge[bend left = 0] (d2)
(u) edge (x0)
(u) edge (x1)
(u) edge (x2)
(e0) edge (y0)
(e1) edge (y1)
(e2) edge (y2)
(x1) edge[bend left = 45] (y1)
(x2) edge[bend right = 60] (y2)
(x1) edge (d1.250)
(x2) edge (d2)
(x0) edge (d1)
(x1) edge (d2.215)
(x0) edge (x1)
(x1) edge (x2)
(x0) edge[bend left = 15] (y0)
(x0) edge[bend left = 10, dotted] (y1.195)
(x0) edge[bend left = 18, dotted] (y2)
(x0) edge (d2)
(x1) edge[dotted] (y2.230)
;

\draw[->,line width=1pt,>=stealth] (d1.40) -- (y1);
\draw[->,line width=1pt,>=stealth] (d2.40) -- (y2);

\end{tikzpicture}
    }
    }
\caption{SWIG $T=3$, $D_{1}\not\rightarrow X_2$}
        \label{fig:swig-2x3}
\end{subfigure}
\hfill
\begin{subfigure}[b]{0.45\textwidth}
\centering
\makebox[\linewidth][c]{%
  \resizebox{1\linewidth}{!}{%
    \begin{tikzpicture}
        \tikzset{line width=1.5pt, ell/.style={draw, fill = white, inner xsep=5pt,inner ysep=5pt, line width=1.5pt}, unobs/.style={ fill = none, inner xsep=5pt,inner ysep=5pt, line width=1.5pt}, swig vsplit={gap=5pt, inner line width right=0.5pt}};

% Nodes
    \node[name=y0, ell, shape=ellipse]{$Y_0\textcolor{gray}{(0,0)}$};
    \node[name=y1, ell, shape=ellipse] at ($(y0)+(3.5cm,0cm)$) {$Y_1(0,\textcolor{gray}{0})$};
    \node[name=y2, ell, shape=ellipse] at ($(y1)+(3.5cm,0cm)$) {$Y_2(0,0)$};
    % \node[name=y4, ell, shape=ellipse] at ($(y2)+(3.5cm,0cm)$) {$Y_4$};
    \node[name=d1, ell, shape=swig vsplit] at ($(y1)+(-0.5cm,-2cm)$) {\nodepart{left}{$D_1$} \nodepart{right}{$0$}} ;
    \node[name=d2, ell, shape=swig vsplit] at ($(y2)+(-0.5cm,-2cm)$) {\nodepart{left}{$D_2(0)$} \nodepart{right}{$0$}} ;
    % \node[name=d4, ell, shape=ellipse] at ($(y4)+(-0.5cm,-2.5cm)$) {$D_4$};
    \node[name=x0, ell, shape=ellipse] at ($(y0)+(-0.5cm,-4cm)$) {$X_0$};
    \node[name=x1, ell, shape=ellipse] at ($(y1)+(-1cm,-4cm)$) {$X_1$};
    \node[name=x2, ell, shape=ellipse] at ($(y2)+(-1cm,-4cm)$) {$X_2(0)$};
    % \node[name=x4, ell, shape=ellipse] at ($(y4)+(-1cm,-4.5cm)$) {$X_4$};
    \node[name=u, unobs, shape=ellipse] at ($(y1)+(0cm,-6cm)$) {$U$};
    \node[name=e0, unobs, shape=ellipse] at ($(y0)+(0cm,+1.5cm)$) {$U_{Y_0}$};
    \node[name=e1, unobs, shape=ellipse] at ($(y1)+(0cm,+1.5cm)$) {$U_{Y_1}$};
    \node[name=e2, unobs, shape=ellipse] at ($(y2)+(0cm,+1.5cm)$) {$U_{Y_2}$};
    % \node[name=e4, unobs, shape=ellipse] at ($(y4)+(0cm,+1.5cm)$) {$U_{Y_4}$};

% Edges

\draw[->,line width=1pt,>=stealth]
(d1) edge (y2)
(d1) edge (d2)
(u) edge[bend left = 30] node[pos = 0.95, right] {$+ \alpha$}  (y0)
(u) edge[bend right = 25] node[pos = 0.95, right] {$+ \alpha$} (y1)
(u) edge[bend right = 60] node[pos = 0.95, right] {$+ \alpha$}  (y2.350)
(u) edge (d1)
(u) edge[bend left = 8] (d2)
(u) edge (x0)
(u) edge (x1)
(u) edge (x2)
(e0) edge (y0)
(e1) edge (y1)
(e2) edge (y2)
(x1) edge[bend left = 45] (y1)
(x2) edge[bend right = 60] (y2)
(x1) edge (d1.250)
(x2) edge (d2)
(x0) edge (d1)
(x1) edge (d2.215)
(x0) edge (x1)
(x1) edge (x2)
(x0) edge[bend left = 15] (y0)
(x0) edge[bend left = 10, dotted] (y1.195)
(x0) edge[bend left = 18, dotted] (y2)
(x0) edge (d2)
(x1) edge[dotted] (y2.230)
(d1) edge (x2)
;

\draw[->,line width=1pt,>=stealth] (d1.40) -- (y1);
\draw[->,line width=1pt,>=stealth] (d2.40) -- (y2);

\end{tikzpicture}
    }
    }
\caption{SWIG $T=3$, $D_{1} \rightarrow X_2$}
        \label{fig:swig-2x3-d-x}
\end{subfigure}
\begin{minipage}{\linewidth}
\begin{small}
\textit{Note:} All structural equations are provided in Figure \ref{fig:SCMs_T3} in the Appendix.
\end{small}
\end{minipage}
\end{figure}
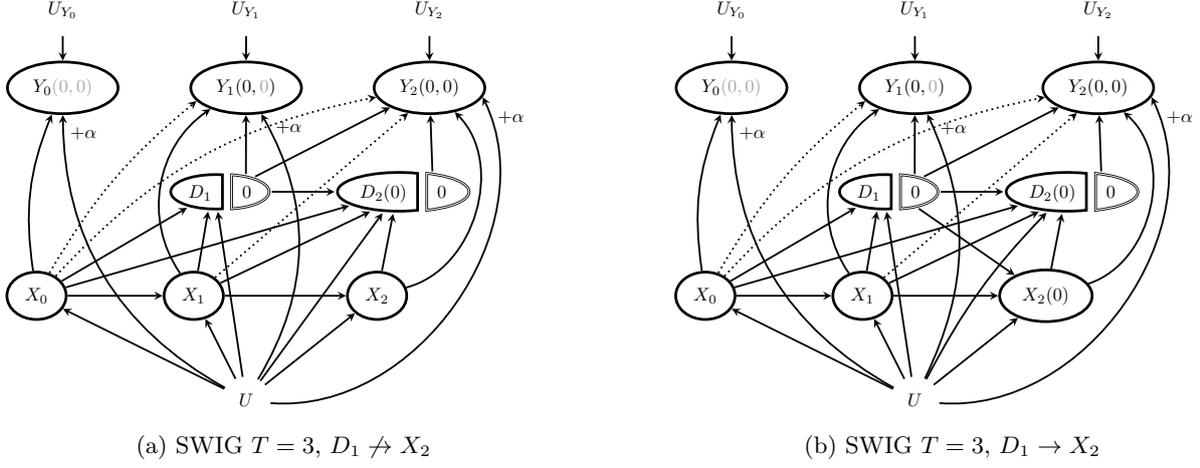

To organize the discussion, we adapt the concept of valid adjustment sets (VAS) from the causal graph literature on unconfoundedness for the DiD setting \parencite[see e.g.][]{shpitser_validity_2012}. Define VAS as all sets of time-varying covariates ensuring that the conditional DiD estimand is unbiased for $ATT(g,t)$ in all SCMs that obey restrictions $R$, i.e.~$\mathcal{Z}_{g,t}(R) := \{\mathbf{Z} \subseteq \overline{X} : ATT(g,t) = DiD_{g,t}(\mathbf{Z}) ~\forall~ \mathcal{M} \in \mathfrak{M} \text{ satisfying }R \}$. For example, the discussion in Section \ref{sec:2x2-ydyn} implies already that $\mathcal{Z}_{g,t}(\assref{ass:SWAS-staggered}) = \varnothing ~\forall~g,t$, i.e.~no VAS exists under additive separability Assumption \assref{ass:SWAS-staggered} alone. We gradually impose now additional model restrictions to study existence and form of the respective minimal VAS $\mathbf{Z}^{min}_{g,t}(R)$, which represents the minimal element of $\mathcal{Z}_{g,t}(R)$. The focus on the minimal VAS has two reasons: First, it is arguably the most interesting VAS from a practical perspective. Second, the maximal VAS and sets in between follow directly once the minimal VAS is established, as we show at the end of the section.

\textbf{Minimal valid adjustment sets:} Proposition \ref{prop:min-z0} provides the minimal sufficient adjustment set under Assumptions \assref{ass:SWAS-staggered} and \assref{ass:no-y-dyn}. It also implies the existence of minimal VAS for $g>t$ without further restrictions. This follows directly from \eqref{eq:cia-general} because in pre-treatment periods any potential variables in $\mathbf{S}_{g,t}(\overline{0})$ are observable conditional on $\overline{D}_{g-1} = \overline{0}_{g-1}$ such that the independence still holds if $\mathbf{S}_{g,t}(\overline{0})$ is replaced by observable $\mathbf{S}_{g,t}$. Thus, $\mathbf{Z}^{min}_{g,t}(\assref{ass:SWAS-staggered},\assref{ass:no-y-dyn}) = \mathbf{S}_{g,t} = \overline{X}_{g-1} ~\forall~g>t$.\footnote{For example, in both SWIGs of Figure \ref{fig:swigs_3} $\mathbf{Z}^{min}_{2,0}(\assref{ass:SWAS-staggered},\assref{ass:no-y-dyn}) = \mathbf{S}_{2,0} = \{X_0, X_1\}$.} In contrast, for all $g \leq t$, no VAS exists without further restrictions. This shows that \textit{parallel pre-trends} require fewer restrictions than identifying the treatment effects of interest.\footnote{Note that $ATT(g,t) = 0, ~\forall~g>t$ by construction (see end of the identification proof in Appendix \ref{app:attgt-ident}). Thus, $ATT(g,t) = DiD_{g,t}(\mathbf{Z}) = 0$ by definition if $\mathbf Z$ is a VAS. As $DiD_{g,t}(\mathbf{Z}) = 0, g > t$ is interpreted as parallel pre-trends, a VAS for all $g > t$ implies parallel pre-trends.} We illustrate and discuss practical implications of this finding below.

To identify the \textit{short-term effect} $ATT(g,g)$ via a VAS, it suffices to assume that $D_t$ does not affect $X_t$:
\begin{assumption}[$R^{DX}_t$] \label{ass:x-d-ordering}
    (no within-period treatment-covariate effect) $X_t \not\in Desc(D_t) ~\forall~t$ in $\mathcal{M} \in \mathfrak{M}$.
\end{assumption}
This assumption holds naturally if $X_t$ precedes $D_t$ within time period $t$. This is the case in most settings depicted above. The exception is the setting of \textcite{caetano_difference_2024} discussed in Section \ref{sec:post-treatment} where Assumption \assref{ass:x-d-ordering} holds after removing edge $D_t \rightarrow X_t$. Proposition \ref{prop:min-z0} yields $\mathbf{Z}^{min}_{g,t}(\assref{ass:SWAS-staggered},\assref{ass:no-y-dyn},\assref{ass:x-d-ordering}) = \mathbf{S}_{g,t} = \overline{X}_{g}~\forall~g \geq t$ such that $ATT(g,g)$ is identifiable under Assumptions \assref{ass:SWAS-staggered}, \assref{ass:no-y-dyn}, and \assref{ass:x-d-ordering}. Example 3 in Section \ref{sec:min-sas} illustrates this result via SWIG \ref{fig:swig-2x3-d-x} for $ATT(2,2)$ where $\mathbf{Z}^{min}_{2,2}(\assref{ass:SWAS-staggered},\assref{ass:no-y-dyn},\assref{ass:x-d-ordering}) = \mathbf{S}_{2,2} = \{X_{0}, X_{1}, X_{2}\}$.

Identification also of the \textit{dynamic effects} $ATT(g,t)$ for all $g,t$ requires further restrictions. The most common restriction is to rule out treatment-covariate feedback from $D_t$ to future covariates \parencite[e.g.][]{caetano_differenceindifferences_2024, ghanem_selection_2024}:
\begin{assumption}[$R^{DX}_{t+1}$] \label{ass:no-d-x}
    (no treatment-covariate feedback) $\underline{X}_{t+1} \cap Desc(D_t) = \varnothing ~\forall~t$ in $\mathcal{M} \in \mathfrak{M}$.
\end{assumption}
Then, the minimal sufficient adjustment set equals the minimal VAS, i.e~$\mathbf{Z}^{min}_{g,t}(\assref{ass:SWAS-staggered},\assref{ass:no-y-dyn},\assref{ass:x-d-ordering},\assref{ass:no-d-x}) = \mathbf{S}_{g,t} = \mathbf{S}_{g,t}(\overline{0}) = \overline{X}_{t} ~\forall~g,t$ because SWIG $\mathcal{G}(\overline{0})$ contains no potential covariates. Example 2 in Section \ref{sec:min-sas} illustrates this result, while Example 4 shows why dynamic effects are not identified without a further restriction.

So far, the added restrictions lead to the existence of additional minimal VAS. However, restrictions can also further shrink already existing minimal VAS. One prominent example is to rule out that $X_t$ directly affects future outcomes \parencite[e.g.][]{borusyak_revisiting_2024, ghanem_selection_2024, caetano_differenceindifferences_2024}:
\begin{assumption}[$R^{XY}_{t+1}$] \label{ass:no-x-dyn}
    (no direct covariate-outcome dynamics) $X_t \not\in Pa(Y_s) ~\forall~s > t$ in $\mathcal{M} \in \mathfrak{M}$.
\end{assumption}
The SWIGs in Figure \ref{fig:swigs_3} fulfill this assumption if all dotted edges are absent.
Then, each potential outcome node is affected by only one contemporary covariate leading to adjustment sets with two elements, one element for each level of the respective difference.\footnote{Consequently, Examples 1 and 3 in Section \ref{sec:min-sas} would not include $X_0$, and Examples 2 and 4 would not include $X_1$. Appendix \ref{app:t4} discusses also a $T=4$ example for illustration.}

Finally, also the remaining contemporary covariate-outcome effect can be ruled out:
\begin{assumption}[$R^{XY}_{t}$] \label{ass:no-x-y}
    (no within-period covariate-outcome effect) $X_t \not\in Pa(Y_t) ~\forall~ t$ in $\mathcal{M} \in \mathfrak{M}$.
\end{assumption}
Combined with \assref{ass:no-x-dyn} this would rule out any direct covariate-outcome effects and would be encoded by removing all edges from $X_t$ to outcome nodes in the graphs above. Then, the empty set is the minimal VAS regardless of the presence of treatment-covarate feedback, i.e.~$\mathbf{Z}^{min}_{g,t}(\assref{ass:SWAS-staggered},\assref{ass:no-y-dyn},\assref{ass:no-x-dyn},\assref{ass:no-x-y}) = \varnothing ~\forall~g,t.$ %, because each potential outcome $Y_t(\overline{0})$ in the respective intermediate SWIG has no other parents than $U$, $U_{Y_t}$, and $\overline{0}$. 

The remaining possible restrictions within model class $\mathfrak{M}$ are (i) covariate-covariate, (ii) covariate-treatment, (iii) treatment-treatment, and (iv) treatment-outcome. (i) and (ii) would not refine the presented results as they neither affect the parents of $Y_t(\overline{0})$ nor the presence of potential covariates. (iii) would rule out staggered treatment. (iv) would rule out the effects of interest. We therefore ignore these possibilities.
Table \ref{tab:min-vas-did} in the next section summarizes the minimal VAS under the relevant restrictions and discusses practical implications.

% \begin{table}
%     \centering
%     \begin{tabular}{l|cccccc}
%         & $X_t$ & $\underline{X}_{t+1}$ & $D_t$ & $\underline{D}_{t+1}$ & $Y_t$ & $\underline{Y}_{t+1}$ \\
%         \hline
%         $X_t \not\rightarrow$ & -- & $R_{t+1}^{XX}$ & $R_t^{XD}$ & $R_{t+1}^{XD}$  & $R_t^{XY}$ & $R_{t+1}^{XY}$ \\
%         $D_t \not\rightarrow$ & $R_t^{DX}$ & $R_{t+1}^{DX}$ & -- & $R_{t+1}^{DD}$  & $R_t^{DY}$ & $R_{t+1}^{DY}$ \\
%         $Y_t \not\rightarrow$ & $R_t^{YX}$ & $R_{t+1}^{YX}$ & $R_t^{YD}$ & $R_{t+1}^{YD}$ & -- & $R_{t+1}^{YY}$ \\
%     \end{tabular}
%     \caption{Caption}
%     \label{tab:placeholder}

%     \begin{tabular}{l|cccccc}
%         & $X_t$ & $\underline{X}_{t+1}$ & $D_t$ & $\underline{D}_{t+1}$ & $Y_t$ & $\underline{Y}_{t+1}$ \\
%         \hline
%         $X_t \not\rightarrow$ & -- & redundant & redundant & redundant & $R_t^{XY}$ & $R_{t+1}^{XY}$ \\
%         $D_t \not\rightarrow$ & $R_t^{DX}$ & $R_{t+1}^{DX}$ & -- & against A2  & effect of interest & effect of interest \\
%         $Y_t \not\rightarrow$ & $R_t^{YX}$ & $R_{t+1}^{YX}$ & $R_t^{YD}$ & $R_{t+1}^{YD}$  & -- & $R_{t+1}^{YY}$ \\
%     \end{tabular}
%     \caption{Caption}
%     \label{tab:placeholder}
% \end{table}

\textbf{All valid adjustment sets $\mathcal{Z}_{g,t}(R)$:} In the $T=3$ settings of Section \ref{sec:t3}, whenever identification holds conditional on a subsequence of $\overline{X}$, it also holds conditional on the full sequence. This result generalizes and allows to compactly characterize all VAS:
\begin{proposition} (valid adjustment sets) \label{prop:all-vas}
    Under the assumptions of Proposition \ref{prop:min-z0}. Whenever there exists a minimal valid adjustment set for model class $\mathfrak{M}$ under restrictions $R$, i.e.~$\mathbf{Z}^{min}_{g,t}(R) = \mathbf{S}_{g,t}$, then
\begin{align}
\mathcal{Z}_{g,t}(R) = \{\mathbf{Z} : \mathbf{Z}^{min}_{g,t}(R) \subseteq \mathbf{Z} \subseteq  \overline{X} \}.
\end{align}
\end{proposition}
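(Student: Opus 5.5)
We prove the two inclusions separately. For ``$\supseteq$'', take any $\mathbf{Z}$ with $\mathbf{S}_{g,t} \subseteq \mathbf{Z} \subseteq \overline{X}$ and show that $\Delta Y_{g-1,t}(\overline{0}) \indep \underline{D}_g \mid \mathbf{Z}, \overline{D}_{g-1} = \overline{0}_{g-1}$ holds in every $\mathcal{M} \in \mathfrak{M}$ satisfying $R$. By the identification argument in Appendix \ref{app:attgt-ident}, this gives $ATT(g,t) = DiD_{g,t}(\mathbf{Z})$.

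The idea is to work in the $\Delta$-SWIG built from $\mathcal{G}(\overline{0})$ and prune it so that $\Delta Y_{g-1,t}(\overline{0})$ has as parents exactly $\mathbf{S}_{g,t}(\overline{0})$, the exogenous terms $U_{Y_{g-1}}, U_{Y_t}$, and fixed nodes. There is no $U$ edge, by Assumption \assref{ass:SWAS-staggered}. By Assumption \assref{ass:no-y-dyn} the level nodes have no descendants, so they can be pruned (Theorem \ref{thm:Markov}). In the pruned graph, $\Delta Y_{g-1,t}(\overline{0})$ is then a sink whose only non-exogenous random parents lie in $\mathbf{S}_{g,t}(\overline{0})$.

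Now condition on $\mathbf{S}_{g,t}(\overline{0})$ together with any additional covariate nodes $\mathbf{W}(\overline{0}) \subseteq \overline{X}(\overline{0})$. Any path from $\underline{D}_g$ to the sink must enter it through a parent. Paths entering through an $\mathbf{S}$-node are blocked, because that node is a non-collider: its edge points into the sink. Paths entering through $U_{Y_{g-1}}$ or $U_{Y_t}$ must continue from that exogenous node, but after pruning those nodes have no other children. With the level nodes pruned, a collider can only open if it is an ancestor of a conditioned $X$. No $Y$-node survives to be such a collider, and the $U_Y$'s are roots. Paths entering through fixed nodes are blocked by $\mathbf{d}$. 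So the extra conditioning on $\mathbf{W}(\overline{0})$ cannot open any path, and $d$-separation survives. This is the step that needs care: we must check that colliders among $X$/$D$ nodes are irrelevant because every path's final edge already enters through a conditioned parent.

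Next we translate potential covariates into observed ones. When $\mathbf{Z}^{min}_{g,t}(R) = \mathbf{S}_{g,t}$ exists, the relevant potential covariates coincide with the observed ones on the event that defines the estimand's comparison groups. This mirrors how \eqref{eq:2x3-d-x-cov-uc} and \eqref{eq:2x3-cia5} were derived: use consistency on $\overline{D}_{g-1} = \overline{0}_{g-1}$ and on the never-treated arm, and use Assumption \assref{ass:staggered-trmnt} for degeneracy after treatment. Covariates in $\mathbf{W}$ that are post-treatment and affected by treatment need the same trick as in footnote-style argument for $ATT(1,1)$: show $\Delta Y_{g-1,t}(\overline{0}) \indep \mathbf{W} \mid \mathbf{S}, \overline{D}$-arm separately in each arm, so adding them leaves both the treated-group and never-treated conditional means unchanged. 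This is the main obstacle, and I would handle it arm by arm.

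For ``$\subseteq$'', suppose $\mathbf{Z}$ is valid. If $\mathbf{Z} \not\supseteq \mathbf{S}_{g,t}$, then by the second part of Proposition \ref{prop:min-z0} the independence cannot be deduced. To turn this into a failure of validity, we construct an SCM in the class, e.g.\ a linear one with a nonzero $X_s \to Y$ coefficient for some omitted $X_s \in \mathbf{S}_{g,t}\setminus\mathbf{Z}$, in which the open path $D \leftarrow U \rightarrow X_s \rightarrow \Delta Y$ creates bias. Hence $DiD_{g,t}(\mathbf{Z}) \neq ATT(g,t)$, a contradiction. The upper bound $\mathbf{Z} \subseteq \overline{X}$ holds by definition.
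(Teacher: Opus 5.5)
\textbf{Gap in the ``$\supseteq$'' direction.} You try to prove the full conditional independence $\Delta Y_{g-1,t}(\overline{0}) \indep \underline{D}_g \mid \mathbf{Z}, \overline{D}_{g-1}=\overline{0}_{g-1}$. The step that translates $\mathbf{W}(\overline{0})$ into the observed $\mathbf{W}=\mathbf{Z}\setminus\mathbf{S}_{g,t}$ fails whenever $\mathbf{W}$ contains covariates that respond to treatment. The proposition explicitly covers such cases. An example is $\mathbf{Z}=\overline{X}$ for $ATT(g,g)$ when $R^{DX}_{t+1}$ is not imposed (row 3 of Table \ref{tab:min-vas-did}), so that $X_{g+1}$ may respond to $D_g$.

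On the treated arm $G=g$, the observed $\mathbf{W}$ is $\mathbf{W}(\overline{0}_{g-1},\underline{1}_g)$, which is not a node of $\mathcal{G}(\overline{0})$. Your route needs the treated-arm independence $\Delta Y_{g-1,t}(\overline{0}) \indep \mathbf{W} \mid \mathbf{S}_{g,t}, G=g$. That is a cross-world statement, and no $d$-separation in the single-world $\Delta$-SWIG plus consistency delivers it. The paper's own $T=3$ feedback example shows this concretely. In \eqref{eq:2x3-d-x-cia1} the observed $X_2$ is no longer optional, so the joint CIA given $X_0,X_1,X_2$ cannot be read off the graph. Yet $\{X_0,X_1,X_2\}$ is still valid for $ATT(1,1)$. ``Handle it arm by arm'' therefore leaves the treated arm unproved. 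Closing it your way would need an argument outside Theorem \ref{thm:Markov}. One option is mutually independent exogenous errors, so that $\Delta Y_{g-1,t}(\overline{0})$ is a function of $\mathbf{S}_{g,t}(\overline{0})$, $U_{Y_{g-1}}$ and $U_{Y_t}$ alone.

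\textbf{Missing idea.} The treated arm is never needed, because you may use the hypothesis that $\mathbf{S}_{g,t}$ is already a valid adjustment set. Then $ATT(g,t)=DiD_{g,t}(\mathbf{S}_{g,t})$, and $\mathbf{Z}$ enters $DiD_{g,t}(\mathbf{Z})$ only through the never-treated regression $\E[\Delta Y_{g-1,t}\mid\mathbf{Z},\overline{D}=\overline{0}]$.

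Run your graphical paragraph with $\overline{X}(\overline{0})\setminus\mathbf{S}_{g,t}(\overline{0})$ as the target set and the random treatment nodes of $\mathcal{G}(\overline{0})$ as neutral conditioning. It gives that $\Delta Y_{g-1,t}(\overline{0})$ is independent of $\overline{X}(\overline{0})\setminus\mathbf{S}_{g,t}(\overline{0})$ given $\mathbf{S}_{g,t}(\overline{0})$ and those treatment nodes. Now evaluate on the event where all treatments are zero; there consistency makes every potential variable observed. This yields $\Delta Y_{g-1,t}\indep \mathbf{Z}\setminus\mathbf{S}_{g,t}\mid\mathbf{S}_{g,t},\overline{D}=\overline{0}$, which is Lemma \ref{lemma:all-vas}.

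Hence the inner regression is the same function of $\mathbf{S}_{g,t}$ under either conditioning set. Averaging it over $G=g$ gives $DiD_{g,t}(\mathbf{Z})=DiD_{g,t}(\mathbf{S}_{g,t})=ATT(g,t)$. This is the paper's proof. It is also what the $ATT(1,1)$ footnote you invoke actually does: it adds $X_2$ only in the $D_1=0$ arm, never in the treated arm.

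Your sketch for ``$\subseteq$'' is acceptable. It is somewhat more explicit than the paper, which takes that inclusion from the minimality of $\mathbf{S}_{g,t}$.
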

The proof in Appendix \ref{sec:proofs_porps_multiple} uses that $\Delta Y_{g-1,t}(\overline{0}) \indep \overline{X} \setminus  \mathbf{Z}^{min}_{g,t}(R)  \mid  \mathbf{Z}^{min}_{g,t}(R), \overline{D} = \overline{0}$ holds whenever $\mathbf{Z}^{min}_{g,t}(R)$ exists and generalizes the arguments  for $T=3$ around \eqref{eq:2x3-d-x-cov-uc}.

The key insight from Proposition \ref{prop:all-vas} is that the full sequence $\overline{X}$ is always the maximal VAS if a minimal VAS exists. Also, any subset of $\overline{X}$ that contains the minimal VAS remains a VAS.

\section{Practical Considerations} \label{sec:implications}

Table \ref{tab:min-vas-did} summarizes the minimal VAS for different questions under the different restrictions. It also implicitly characterizes all VAS according to Proposition \ref{prop:all-vas}. Each cell containing a minimal VAS can be augmented with any combination of other time-varying covariates up to the full sequence and still remains a VAS.\footnote{Table \ref{tab:vas-did} in Appendix \ref{app:vas} provides such an augmented table for completeness.}
Table \ref{tab:min-vas-did} highlights how additional restrictions expand the range of identifiable parameters or shrink the size of the minimal VAS. In the following we discuss how the table can be used to justify adjustment strategies, how it assist in interpreting pre-trend diagnostics, and conclude with practical recommendations.

\begin{table}[ht]
\caption{Minimal valid adjustment sets under different model restrictions} \label{tab:min-vas-did}
    \centering
\begin{threeparttable}  \small
    \begin{tabular}{ccccc|ccc}
      \multicolumn{4}{l}{Restrictions:} & & Parallel pre-trends? & Short-term effect? & Dynamic effect? \\
       \tableref{ass:SWAS-staggered}\&\tableref{ass:no-y-dyn} & \tableref{ass:x-d-ordering} & \tableref{ass:no-d-x} & \tableref{ass:no-x-dyn} & \tableref{ass:no-x-y} & $DiD_{g,t}(\cdot) = 0 ~\forall~ g > t$ & $DiD_{g,g}(\cdot) = \tau_{g,g}$ & $DiD_{g,t}(\cdot) = \tau_{g,t} ~\forall~ g < t$ \\
         \hline
      no & n/y & n/y & n/y & n/y
        & --
        & -- 
        & --  \\
      yes & no & no & no & no
        & $\overline{X}_{g-1}$
        & -- 
        & --  \\
      yes & yes & no & no & no
        & $\overline{X}_{g-1}$
        & $\overline{X}_{g}$
        & --  \\    
      yes & yes & yes & no & no
        & $\overline{X}_{g-1}$
        & $\overline{X}_{g}$
        & $\overline{X}_{t}$ \\    
      yes & no & no & yes & no
        & $X_{t},X_{g-1}$ 
        & -- 
        & --  \\
      yes & yes & no & yes & no
        & $X_{t},X_{g-1}$ 
        & $X_{g-1},X_g$ 
        & --  \\    
      yes & yes & yes & yes & no
        & $X_{t},X_{g-1}$ 
        & $X_{g-1},X_g$ 
        & $X_{g-1},X_t$ \\
      yes & n/y & n/y & yes & yes
        & $\varnothing$ 
        & $\varnothing$ 
        & $\varnothing$ \\
        \hline
    \end{tabular}
\begin{tablenotes}[flushleft]
\footnotesize
\item \textit{Notes:} \assref{ass:SWAS-staggered}: single world additive separability \& staggered treatment; \assref{ass:no-y-dyn}: no outcome dynamics; \assref{ass:x-d-ordering}: no within-period treatment-covariate effect; \assref{ass:no-d-x}: no treatment-covariate feedback; \assref{ass:no-x-dyn}: no direct covariate-outcome dynamics; \assref{ass:no-x-y}: no within-period covariate-outcome effect. n/y means that this restriction does not affect the result. $DiD_{g,t}(\cdot)$ is defined in \eqref{eq:did-estimand}. $\tau_{g,t} = ATT(g,t)$ is defined in \eqref{eq:att-gt}. The cell entries provide the minimal valid adjustment set such that the equality in the header holds when it is inserted into the bracket of $DiD_{g,t}(\cdot)$. -- indicates that no valid adjustment set exists.
The table is representative for settings where $Y_t$ realizes at the end of period $t$. This is without loss of generality because any setting can be rearranged accordingly. The same table holds for DiD using the not-yet-treated as control group, as we show in Appendix \ref{sec:NYT_app}.
\end{tablenotes}
    \end{threeparttable}
\end{table}

\subsection{How Can Adjustment Strategies be Justified?}

Table \ref{tab:min-vas-did} allows to read off valid adjustment sets after researchers commit to a particular set of assumptions.
For illustration, consider the least restrictive potential outcome model in \textcite{caetano_differenceindifferences_2024} that reads in our notation $Y_t(\overline{0}) = \alpha(U) + g_{Y_t}(X_t) + U_{Y_t}$ such that Assumptions \assref{ass:SWAS-staggered}, \assref{ass:x-d-ordering}, \assref{ass:no-d-x} and \assref{ass:no-x-dyn} hold by construction, and \assref{ass:no-y-dyn} is usually also directly or indirectly imposed.\footnote{\label{fn:implied-r} We interpret the fact that observable $X_t$ appears in an equation of a potential outcome as ruling out treatment-covariate feedback because otherwise $X_t(0)$ would enter. Either way, \textcite{caetano_differenceindifferences_2024} rule out treatment-covariate feedback explicitly in the text such that Assumptions \assref{ass:x-d-ordering}/\assref{ass:no-d-x} hold. Also, single $X_t$ and not sequence $\overline{X}_t$ enters the equation such that Assumption \assref{ass:no-x-dyn} holds. State dependence holds because previous outcomes do not enter the equation. Outcome-covariate feedback is also ruled because $X_t$ would otherwise become a potential covariate due to a $D_{t-1} \rightarrow Y_{t-1} \rightarrow X_t$ path. Thus, only outcome-treatment feedback is not ruled out by the functional form. However, it is usually ruled out as well \parencite[see][for a detailed discussion linking it to strict exogeneity]{bonhomme_back_2025}. Assumption \assref{ass:no-y-dyn} is therefore not covered directly implied by the functional from but often implicitly assumed.} 
The second last row of Table \ref{tab:min-vas-did} shows that $\{X_{g-1},X_t\}$ is the minimal VAS to identify $ATT(g,t)$ for all $g \leq t$ under these restrictions and thus all valid adjustment sets are characterized by $\mathcal{Z}_{g,t}(\assref{ass:SWAS-staggered},\assref{ass:no-y-dyn},\assref{ass:x-d-ordering},\assref{ass:no-d-x},\assref{ass:no-x-dyn}) = \{\mathbf{Z} : \{X_{g-1},X_t\} \subseteq \mathbf{Z} \subseteq  \overline{X} \}$ by Proposition \ref{prop:all-vas}. This ``forward engineering'' of VAS has the advantage that the assumptions to be defended are the starting point of the argument and therefore very transparent. It also cleanly separates identification and estimation issues.

However, Table \ref{tab:min-vas-did} can also be consulted to ``backwards engineer'' assumptions under which a particular adjustment strategy could be justified. To this end, we determine those minimal VAS that are a subset of the adjustment strategy and read off sufficient assumptions on the left. 
We apply this for five adjustment strategies that are discussed in the literature and briefly comment on implications of the results:

\textbf{Full covariate sequence:} \textcite{caetano_differenceindifferences_2024} assume parallel trends conditional on the full covariate sequence $\overline{X}$ and discuss estimation based on the identification result $ATT(g,t) = DiD_{g,t}(\overline{X})$. As all minimal VAS in Table \ref{tab:min-vas-did} are subsets of the full sequence, the restrictions in rows 4, 7 and 8 justify controlling for $\overline{X}$ to identify $ATT(g,t)$ for all $g,t$. Row 4 imposes the weakest restrictions, ruling out treatment-covariate feedback while allowing for unrestricted covariate-outcome effects. Even then the maximal VAS $\overline{X}$ only coincides with the minimal VAS $\overline{X}_t$ for $ATT(g,\mathcal{T})$. For $t < \mathcal{T}$, the same assumptions would justify controlling for $\overline{X}_t$ instead of $\overline{X}$. Using the minimal VAS therefore partly addresses the covariate dimensionality issues discussed in \textcite{caetano_differenceindifferences_2024} Section 6 without imposing further assumptions.\footnote{The fraction of variables that can be removed without losing identification of $ATT(g,t)$ is $1 -(t+1)/T$, which is largest in early periods and increases with $T$.}

\textbf{Early level and first difference:} \textcite{caetano_differenceindifferences_2024} further discuss to condition on $\{X_{g-1},\Delta X_{g-1,t}\}$ instead of $\overline{X}$ to reduce the covariate dimension for estimation. This is equivalent to conditioning on the two levels $\{X_{g-1},X_{t}\}$. Thus, Table \ref{tab:min-vas-did} reveals that this strategy can be justified by the restrictions in the last two rows that both limit covariate-outcome dynamics/effects. The reduced dimensionality requires therefore stronger assumptions compared to using $\overline{X}$. This highlights that the decision to move from $\overline{X}$ to $X_{g-1},X_{t}$ is not only an estimation issue but also has implications for identification.\footnote{The third option discussed in \textcite{caetano_differenceindifferences_2024} is to control for the covariate average $T^{-1}\sum_t X_t$. We cannot provide a graphical way of justifying this strategy without violating the ``the future cannot affect the past'' principle and therefore do not discuss this strategy further.}

\textbf{Period 0 controls:} One common strategy is to match observations on $X_0$ and to run unconditional DiD on the matched sample \parencite[e.g.,][]{dupas_women_2024,fenizia_organized_2024,humlum_changing_2025}. This is equivalent to controlling for $X_0$. Only the empty set in the last row of Table \ref{tab:min-vas-did} is a subset of this one variable. Therefore, $X_0$ is a non-minimal VAS under this strictest set of assumptions. However, Assumption \assref{ass:no-x-y} and \assref{ass:no-x-dyn} could be relaxed to only hold for $t > 0$ making $X_0$ the minimal VAS.

\textbf{Earlier period control:} The description of function \texttt{att\_gt()} in the popular \texttt{did} R package \parencite[][]{callaway_did_man} states that \textit{"[...] in each 2x2 comparison, the covariates are taken to be the value of the covariates in the earlier time period [...]"}.\footnote{The quote is taken from version \texttt{2.3.0}.} This corresponds to single covariate $X_{min(g-1,t)}$. Again, this is a non-minimal VAS under the strictest set of assumptions in Table \ref{tab:min-vas-did}.\footnote{Similar to period 0 controls, Assumption \assref{ass:no-x-y} could technically be relaxed for one period $t'$ such that $X_{min(g-1,t)}$ is the minimal VAS for all $ATT(g,t)$ with $g-1 = t'$ or $t = t'$.}

\textbf{Pre-treatment controls:} \textcite{callaway_differenceindifferences_2021} discuss controlling for ``pre-treatment'' variables $X$. The absence of a time index suggests that $X$ is limited to time-invariant covariates. However, it might be interpreted differently as in \textcite{ghanem_when_2026} Section III that discusses controlling for $X_{g}$. Again, this can be justified by ruling out that time-varying covariates directly affect the outcome at all. \textcite{ghanem_when_2026} discuss alternative sufficient conditions without imposing $R_t^{XY}$ but adding restrictions beyond the causal structure.

% It might be counterintuitive that the VAS derived under weaker assumptions include post-treatment covariates. Even these VAS require to rule out treatment-covariate feedback unless researchers are willing to impose additional assumptions.

\textbf{Summary:} The results in Table \ref{tab:min-vas-did} enable a principled analysis of different adjustment strategies. The most compact discussion is possible if assumptions are stated first to directly derive all VAS that ensure identification. Selecting the concrete VAS for estimation requires then statistical considerations. The minimal VAS is a plausible default choice but selecting the optimal VAS in terms of statistical efficiency in a principled manner is a promising direction for future research.

The ``covariates first, assumptions second'' direction is less transparent but reveals noteworthy insights. The benefit of controlling for one pre-treatment covariate as in the final three strategies appears limited. Their inclusion can be justified under a causal structure without direct covariate-outcome effects. Then, $X_0$, $X_{min(g-1,t)}$ or $X_{g}$ are all VAS, but not the minimal VAS. If such covariate-outcome restrictions are considered implausible, the alternative is to rule out treatment-covariate feedback. Then, even the minimal VAS contains post-treatment but pre-outcome covariates. In particular, controlling for pre-outcome covariates $\overline{X}_t$ requires the least additional restrictions to identify $ATT(g,t)$ via DiD.

It is important to note that the graphical perspective is only one possibility to justify an adjustment strategy. For example, \textcite{ghanem_selection_2024} and \textcite{ghanem_when_2026} provide alternative templates based on the treatment selection mechanism that go beyond the causal structure and are thus not naturally covered in the graphical approach. One advantage of our approach is that Table \ref{tab:min-vas-did} also implies testable implications to complement graphical reasoning with statistical evidence, as we discuss in the following.

\subsection{What Can Pre-Trends Test?} \label{sec:ppt-test}

As discussed in the introduction, insignificant parallel pre-trends (PPT) often serve as the only justification for CPT in practice. However, \textcite{bonhomme_back_2025} notes that while rejections of PPT are informative, non-rejections should be interpreted with caution.
Table \ref{tab:min-vas-did} provides a basis for clarifying what exactly can be learned from such rejections and where substantive reasoning is still required. 
For a compact discussion, define five conditional DiD estimands with superscripts related to the number of involved time-varying covariates: $\delta_{g,t}^{0} := DiD_{g,t}(\varnothing)$, $\delta_{g,t}^{2} := DiD_{g,t}(X_t,X_{g-1})$, $\delta_{g,t}^g := DiD_{g,t}(\overline{X}_{g-1})$, $\delta_{g,t}^t := DiD_{g,t}(\overline{X}_{t})$, and $\delta_{g,t}^{T} := DiD_{g,t}(\overline{X})$.
% Define the null hypotheses of PPT conditional on different numbers of time-varying covariates as $H_0^g : DiD_{g,t}(\overline{X}_{g-1}) = 0 ~\forall~g > t$; $H_0^{2} : DiD_{g,t}(X_t,X_{g-1}) = 0 ~\forall~g > t$; $H_0^{0} : DiD_{g,t}(\varnothing) = 0 ~\forall~g > t$.

Rows 2, 5 and 8 of Table \ref{tab:min-vas-did} inform about which assumptions can be rejected if PPT are rejected. The restrictions in the left part of the table are sufficient for PPT with the adjustment set on the right. Therefore, rejecting the respective conditional PPT makes a violation of at least one of the assumptions on the left necessary. 
Concretely, 
\begin{align*}
\text{reject } H_0^{0} : \delta_{g,t}^0 = 0 ~\forall~g > t
&\Rightarrow \text{at least one of }
\{\assref{ass:SWAS-staggered}, \assref{ass:no-y-dyn}, \assref{ass:no-x-dyn}, \assref{ass:no-x-y}\} \text{ is violated}, \\
\text{reject } H_0^{2} : \delta_{g,t}^2 = 0 ~\forall~g > t
&\Rightarrow \text{at least one of }
\{\assref{ass:SWAS-staggered}, \assref{ass:no-y-dyn}, \assref{ass:no-x-dyn}\} \text{ is violated}, \\
\text{reject } H_0^g : \delta_{g,t}^g = 0 ~\forall~g > t
&\Rightarrow \text{at least one of }
\{\assref{ass:SWAS-staggered}, \assref{ass:no-y-dyn}\} \text{ is violated}.
\end{align*}
This reveals a nested structure where specifications with more covariates are more informative about which assumptions are violated.\footnote{Note that hypotheses adding even more covariates, e.g.~$H_0^{T} : \delta_{g,t}^T = 0 ~\forall~g > t = 0 ~\forall~g > t$, hold under all three sets of sufficient assumptions and they all contain \{{\assref{ass:SWAS-staggered}, \assref{ass:no-y-dyn}}\}. Consequently, these two assumptions alone are sufficient for $H_0^T$, implying that it does not assess additional restrictions beyond those that justify $H_0^g$.} Rejecting hypotheses corresponding to more assumptions can be informative about the violated assumptions through comparisons with nested, less restrictive specifications:
\begin{itemize}
    \item If $H_0^{0}$ is rejected, but $H_0^{2}$ is not rejected $\Rightarrow$ \assref{ass:no-x-y} is likely violated.
    \item If $H_0^{2}$ is rejected, but $H_0^{g}$ is not rejected $\Rightarrow$ \assref{ass:no-x-dyn} is likely violated.
\end{itemize}

\subsection{Practical Implications}

The practically most relevant question is to what extent different outcomes of the hypothesis tests can replace or complement economic reasoning. Rejections are highly informative and provide strong evidence against specifications that rely on assumption sets containing the rejected assumptions, potentially overruling economic arguments in favor of these specifications.
The implications of non-rejections are more ambiguous, and using them to justify the credibility of post-treatment results requires additional arguments: 

\textbf{$H_0^{0}$ not rejected:} Arguing in favor of specifications without time-varying covariates based on a non-rejection of $H_0^{0}$ is relatively credible, as identification of post-treatment effects and pre-treatment PPT relies on the same set of assumptions (see final row of Table \ref{tab:min-vas-did}).\footnote{This observations is closely related to the discussion in Section II.A in \textcite{ghanem_when_2026}.} A remaining, well-documented obstacle is statistical in nature: failure to reject may reflect insufficient power rather than validity of the null \parencite[e.g.][]{roth_pretest_2022}. Thus, arguing for credibility based on not rejected PPT still requires arguments that the tests are sufficiently powered. 

\textbf{$H_0^{0}$ rejected, $H_0^{2}$/$H_0^{g}$ not rejected:} Credibility of $\delta_{g,t}^t$ and $\delta_{g,t}^2$ is at least not rejected by the data. However, only a subset of the assumptions that are sufficient for their unbiasedness is tested. In particular, both estimands require to assume \{\assref{ass:x-d-ordering},\assref{ass:no-d-x}\} to be unbiased. These assumptions are by construction untestable using pre-treatment trends, as treatment-covariate feedback unfolds only post-treatment. Therefore, statistical arguments that violations of the tested assumptions would be detected should be complemented by economic arguments supporting the absence of treatment-covariate feedback to rule out wrong world control bias when applying $\delta_{g,t}^t$ or $\delta_{g,t}^2$.

\textbf{$H_0^{0}/H_0^{2}$ rejected, $H_0^{g}$ not rejected:} $\delta_{g,t}^t$ is the only option for which at least parts of the sufficient assumptions are not rejected by the data. However, assessing its credibility still requires reasoning about the plausibility of the untested assumptions \{{\assref{ass:x-d-ordering}, \assref{ass:no-d-x}}\}.

\textbf{$H_0^{g}$ rejected:} This is the most informative but also the most destructive testing outcome. It cleanly rejects at least one of the baseline assumptions \{\assref{ass:SWAS-staggered},\assref{ass:no-y-dyn}\} that are required to justify all VAS implied by Table \ref{tab:min-vas-did} and Proposition \ref{prop:all-vas}. Thus, no specification with or without time-varying covariates is expected to deliver credible results. Importantly, even if $H_0^{2}$ and/or $H_0^{0}$ are not rejected, this is likely due to bias cancelling and/or power differences and does not change this conclusion. Justification of (conditional) DiD requires then arguments beyond additive separability and causal structure.

\begin{remark} \label{rem:attgg-is-special}
    Recall that short-term effects $ATT(g,g)$ require only to defend $\assref{ass:x-d-ordering}$ additionally to \assref{ass:SWAS-staggered} and \assref{ass:no-y-dyn}, and might be justified by the causal ordering within period $t$ alone. Thus, defending credibility of short-term compared dynamic effects might be less demanding.
\end{remark}

\subsection{Practical Recommendations}

We conclude with some practical recommendations for researchers who consider using time-varying covariates in their conditional DiD that are informed by the previous findings and discussions:

\textbf{Graphically represent domain knowledge:} Drawing the DAG, SWIG and/or $\Delta$-SWIG based on domain knowledge facilitates communication and reasoning about the causal structure. Once a structure is established, Table \ref{tab:min-vas-did} in combination with Proposition \ref{prop:all-vas} provides the sets of good control variables that follow directly from the structure.

\textbf{Test pre-trends with all pre-treatment covariates:} Rejecting parallel pre-trends conditional on $\overline{X}_{g-1}$ rejects standard identification arguments involving additive separability and absence of outcome dynamics. Conditional DiD should then be justified by alternative arguments or discarded.

\textbf{Consider three specifications:} Running conditional DiD (i) without time-varying covariates, (ii) with $\overline{X}_t$ and (iii) with $\{X_{g-1}, X_t\}$ provides a principled way to gather statistical evidence and may help to narrow down plausible specifications. In particular, rejected parallel pre-trends for some or all specifications are informative about crucial assumptions, as discussed in the previous two sections.

\textbf{Use full covariate sequence only for computational convenience:} Controlling for full covariate sequence $\overline{X}$ instead of pre-outcome sequence $\overline{X}_t$ does not require weaker assumptions. This means that the covariate dimension would be much larger than required for early $t$. However, applying the pre-outcome sequence requires to run a different specification for each $t$, e.g. leading to $\mathcal{T}$ propensity scores to be estimated. Running one specification with $\overline{X}$ can therefore be attractive from a computational perspective.

% E.g. testing PPT with an invalid VAS is completely uniformative as it does not test components that are needed for the identification.

% How covariates are different in DiD: Covariates must not necessarily affect treatment. Suffices if both are affected by U
% Finding PPTs with covariates that are not also a VAS for effect or at least use the minimal VAS under the same R is uninformative.

\section{Concluding Remarks} \label{sec:conclusion}

This paper introduces $\Delta$-SWIGs as a graphical tool to reason about conditional parallel trends in a transparent and principled manner.
They provide a natural starting point to scrutinize and justify control variables in DiD analyses based on the causal structure but without distributional or parametric assumptions. The only functional form assumption that must be defended outside of the graph is the single world additive separability that generalizes a standard functional form assumption in the literature. A transparent tool to reason about this substantive assumption is still missing.

We apply $\Delta$-SWIGs to study time-varying control variables in a model class with unrestricted unobservables and binary staggered treatment. However, the $\Delta$-SWIG technology can similarly be applied to model classes with more variable categories, restrictions on unobservables, beyond staggered treatment, or beyond binary treatments. For the latter two, the single world additive separability most likely must be strengthened to hold for all treatment levels that serve as control group at some point, thereby restricting treatment effect heterogeneity. A detailed discussion along these lines is left for future research.

We also abstract from estimation and testing issues in this paper. One interesting follow-up question is regarding the interplay between different valid adjustment sets and overlap, which most likely interacts with the question of the most efficient valid adjustment set. Another interesting direction is an exhaustive conceptual and statistical study of the testable implications of the model structure, and the development of powerful tests building on them.

Finally, the paper reinforces recent reservations about relying solely on parallel pre-trends to justify a DiD analysis. In particular, when time-varying covariates are used in the analysis, they are not informative about all required assumptions for unbiased post-treatment estimates, even in large data sets.

% \newpage

\printbibliography

\newpage
\appendix 
\counterwithin{equation}{section}
\counterwithin{figure}{section}
\counterwithin{table}{section}

\section{Literature Review on Conditional Parallel Trends} \label{app:literature}

We conducted a systematic literature search using Google Scholar, searching for all papers published in the \emph{American Economic Review (AER)} in 2024 and 2025 that contained the words "Parallel Trends". 
The result of this search were 25 papers (after excluding 5 papers in \emph{AER: Insights}).
In addition, we excluded the following papers.

\textcite{gadenne_inkind_2024} rely on instrumental variables (IV) as their main identification strategy and refer to parallel trends to support exogeneity of their instrument.
\textcite{karahan_demographic_2024} also use IV as their identification strategy and refer to "parallel trends" only in their descriptives section.
\textcite{lyubich_role_2025} uses a "mover design" to identify place effects. "Parallel trends" is referred to as a potential method for ruling out endogenous moves but not assumed in the paper.
\textcite{kirsi_does_2025} shows up in the Google Scholar search but not in the AER. In addition, "parallel trends" is only mentioned in the descriptives section.
\textcite{goldsmith-pinkham_contamination_2024} is a methodological paper on OLS regression. Their replications of earlier studies do not implement Difference-in-Differences or assume parallel trends.
\textcite{benetton_mortgage_2025} also use IV and only mention "parallel trends"  in the descriptives section.

After excluding these, we end up with 19 papers, which assume parallel trends.
Out of these 19 papers, 14 (74\%) use covariates, either directly in their regression specifications or in a matching/reweighting step before estimating their effects of interest.
Of these papers 13 (68\%) use time-varying covariates. 
%Figure \ref{fig:jel_pt} shows, which initial letters of JEL codes (a) all 19 papers and (b) the 13 papers that use time-varying controls refer to.

% \begin{figure}[t]
%   \centering
%     \caption{Distribution of JEL code initial letters.}
%     \label{fig:jel_pt}
%   \begin{subfigure}[t]{0.49\textwidth}
%     \centering
%     \includegraphics[width=\textwidth]{figure_all.pdf}
%     \caption{All studies}
%   \end{subfigure}
%   \hfill
%   \begin{subfigure}[t]{0.49\textwidth}
%     \centering
%     \includegraphics[width=\textwidth]{figure_Xt.pdf}
%     \caption{Studies with time-varying controls}
%   \end{subfigure}
% \end{figure}

\section{Graph Details}\label{sec:steps_detail}

\subsection{$d$-Separation Definition and Example} \label{sec:app-dsep}

\begin{definition}\label{def:dep} (\textbf{$d$-separation}, following \textcite{peters_elements_2017} and \textcite{pearl_causality_2009})

    Given a DAG $\mathcal{G}$ on a set of variables $\mathbf{V}$, with $\mathbf{X},\mathbf{Y},\mathbf{Z} \subseteq \mathbf V$ and disjoint. $\mathbf{X}$ and $\mathbf{Y}$ are \textbf{$d$-separated} by (or given) $\mathbf{Z}$ if for all $X \in \mathbf{X}$ and $Y \in \mathbf{Y}$,
    all paths between $X$and $Y$ are blocked. 
    
    A path is blocked if it contains at least one node $V_j$ such that one of the following holds:
\begin{itemize}
    \item[(a)] $V_j$ is a collider on the path, i.e.,
    $V_{j-1} \rightarrow V_j \leftarrow V_{j+1}$, and $V_j \notin \mathbf Z$ and
    none of the descendants of $V_j$ are in $\mathbf Z$;
    \item[(b)] $V_j$ is a non-collider on the path appearing in one of the
    structures
    $V_{j-1} \rightarrow V_j \rightarrow V_{j+1}$,
    $V_{j-1} \leftarrow V_j \rightarrow V_{j+1}$, or
    $V_{j-1} \leftarrow V_j \leftarrow V_{j+1}$, and $V_j \in \mathbf Z$.
\end{itemize}

We denote this by $\mathbf{X} \indep_\mathcal{G} \mathbf{Y} \mid \mathbf{Z}$. If there is an unblocked path between any $X \in \mathbf{X}$ and $Y \in \mathbf{Y}$, $\mathbf{X}$ and $\mathbf{Y}$ are \textbf{d-connected} given $\mathbf{Z}$, denoted $\mathbf{X} \centernot \indep_{\!\! \mathcal{G}} \mathbf{Y} \mid \mathbf{Z}$.
\end{definition}

Given that SWIGs are a particular instance of DAGs, $d$-separation applies identically to variables $\mathbf{V}(\mathbf d)$ in SWIG $\mathcal{G}(\mathbf d)$.

Figure \ref{fig:ex-dsep} provides an instructive example to showcase arguments that are frequently but implicitly applied in the complex structures of the main text. To $d$-separate $X$ and $Y$, we need to block three paths $X \rightarrow A \rightarrow Y$, $X \leftarrow B \rightarrow Y$, and $X \rightarrow C \leftarrow D \rightarrow Y$. The first two paths can only be blocked by conditioning on $\{A\}$ and $\{B\}$, respectively, following part (b) of Definition \ref{def:dep}. However, there are three ways to block the third path that contains a collider node: the empty set $\varnothing$ blocks the path due to part (a) of Definition \ref{def:dep}, $\{C\}$ blocks the path because both part (a) and (b) hold, but also $\{C,D\}$ blocks the path due to part (b). Taken together, we can state that $X \indep_{\mathcal{G}} Y \mid A,B\textcolor{gray}{,D}$ or $X \indep_{\mathcal{G}} Y \mid A,B,D\textcolor{gray}{,C}$ where the gray variables are optional once the black variables are conditioned on. Most importantly this showcases how we can, but do not have to, condition on a collider node on a longer path as long as the path is blocked by a non-collider node.

\begin{figure}[h]
\centering
\begin{tikzpicture}[scale=0.7, every node/.style={scale=1}]
\tikzset{
  ell/.style={draw, fill=white, inner xsep=5pt, inner ysep=5pt, line width=1.5pt},
  myarrow/.style={->, >=stealth, line width=1.5pt}
}

% --- Nodes ---
\node[name=y, ell, shape=ellipse] {$X$};
\node[name=z, ell, shape=ellipse] at ($(y)+(8cm,0cm)$) {$Y$};

\node[name=a, ell, shape=ellipse] at ($(y)+(4cm,2.5cm)$) {$A$};
\node[name=b, ell, shape=ellipse] at ($(y)+(4cm,0cm)$) {$B$};

% C between Y and B, D between B and Z (same level)
\node[name=c, ell, shape=ellipse] at ($(y)+(2cm,-2.5cm)$) {$C$};
\node[name=d, ell, shape=ellipse] at ($(y)+(6cm,-2.5cm)$) {$D$};

% --- Edges ---
% Path 1: X <- A -> Y
\draw[myarrow] (y) -- (a);
\draw[myarrow] (a) -- (z);

% Path 2: X -> B -> Y
\draw[myarrow] (b) -- (y);
\draw[myarrow] (b) -- (z);

% Path 3: X -> C <- D -> Y
\draw[myarrow] (y) -- (c);
\draw[myarrow] (d) -- (c);
\draw[myarrow] (d) -- (z);

\end{tikzpicture}
\caption{Example DAG}
\label{fig:ex-dsep}
\end{figure}

\subsection{Formalization of Edge Cancelling} \label{app:edge-cancel}

The structural equation $V_j = f_{V_j}(Pa(V_j),U_{V_j})$ in DAG $\mathcal{G}$ becomes $V_j(\mathbf{d}) = f_{V_j(\mathbf{d})}(Pa_{\mathcal{G}(\mathbf{d})}(V_j(\mathbf{d})) \setminus \mathbf{d} ,U_{V_j})$ in SWIG $\mathcal{G}(\mathbf{d})$. The subscript in the parent operator emphasizes that the parents of $V_j(\mathbf{d})$ in $\mathcal{G}(\mathbf{d})$ might themselves be relabeled. Further, note that the new function $f_{V_j(\mathbf{d})}$ does not depend on $\mathbf{d}$. The structural equation underlying the new node described in step 6 of Procedure \ref{proc:d-swig} can then be expressed as $\Delta_{j,k}(\mathbf{d}) = V_j(\mathbf{d}) - V_k(\mathbf{d}) =  f_{V_j(\mathbf{d})}(Pa_{\mathcal{G}(\mathbf{d})}(V_j(\mathbf{d})) \setminus \mathbf{d},U_{V_j}) - f_{V_k(\mathbf{d})}(Pa_{\mathcal{G}(\mathbf{d})}(V_k(\mathbf{d})) \setminus \mathbf{d},U_{V_k}) =: f_{\Delta_{j,k}(\mathbf{d})}(P_{j,k} ,U_{V_j},U_{V_k})$ where $P_{j,k} := (Pa_{\mathcal{G}(\mathbf{d})}(V_j(\mathbf{d}))  \setminus \mathbf{d}) \cup (Pa_{\mathcal{G}(\mathbf{d})}(V_k(\mathbf{d}))  \setminus \mathbf{d})$ is the union of the parents of the level nodes. If there exists a function $h$ such that $f_{\Delta_{j,k}(\mathbf{d})}(P_{j,k} ,U_{V_j},U_{V_k}) = h(P_{j,k} \setminus R,U_{V_j},U_{V_k})$, i.e.~$f_{\Delta_{j,k}(\mathbf{d})}$ does not vary with $R$, then $R$ is redundant in the difference node equation and the edges from $R$ to the difference node can be omitted.

\subsection{Additional Definitions} \label{app:definitions}

\begin{definition}(Markov property, \textcite{peters_elements_2017})\label{def:markov}
    We say that a probability distribution $P(\mathbf{V})$ satisfies the Markov property with respect to a DAG $\mathcal{G}$ if the following three equivalent statements holds:
    \begin{itemize}
        \item[(a)] Each variable $V \in \mathbf{V}$ is independent of its non-descendants conditional on its parents.
        \item[(b)] The distribution $P(\mathbf{V})$ factorizes according to the graph $\mathcal{G}$:
        \begin{equation*}
            P(\mathbf{V}) = \prod_{V_j \in \mathbf{V}}P(V_j \mid Pa_{\mathcal{G}}(V_j))
        \end{equation*}

        \item[(c)] For $\mathbf{X},\mathbf{Y},\mathbf{Z}$ disjoint subsets of $\mathbf{V}$:
        \begin{equation*}
            \mathbf{X} \indep_{\mathcal{G}} \mathbf{Y} \mid \mathbf{Z} \Rightarrow \mathbf{X} \indep \mathbf{Y} \mid \mathbf{Z} 
        \end{equation*}
    \end{itemize}

The proof equivalence of the statements can be found in \textcite{lauritzen_graphical_1996}.
    
\end{definition}

\begin{definition}(Subgraph of SWIG $\mathcal{G}(\mathbf{d})$ on $\mathbf{V}(\mathbf{d})$ obtained by removing all fixed nodes $\mathbf{d}$)

    We denote the subgraph of $\mathcal{G}(\mathbf{d})$ on $\mathbf{V}(\mathbf{d})$ obtained by removing all fixed nodes $\mathbf{d}$ by $(\mathcal{G}(\mathbf{d}))_{\mathbf{V}(\mathbf{d})}$.
\end{definition}

\begin{remark} \label{rem:subgraph_markov}
    Theorem 12 of \textcite{richardson_single_2013} shows that: if the probability distribution $P(\mathbf{V}(\mathbf{d}))$ factorizes according to the SWIG $(\mathcal{G}(\mathbf{d}))$, $d$-separation, with fixed nodes included in the conditioning set, implies conditional independence.
    Section 3.5 of \textcite{richardson_single_2013} links the notion of factorization according to a SWIG to the standard Markov property as stated in Definition \ref{def:markov}.
    Factorizing according to a SWIG is equivalent to satisfying the standard Markov property with respect to the subgraph $(\mathcal{G}(\mathbf{d}))_{\mathbf{V}(\mathbf{d})}$ of the SWIG, thus, $d$-separation in  $(\mathcal{G}(\mathbf{d}))_{\mathbf{V}(\mathbf{d})}$ implies conditional independence.
    In addition, Proposition 14 of \textcite{richardson_single_2013} shows that $d$-separation among the random nodes in $(\mathcal{G}(\mathbf{d}))_{\mathbf{V}(\mathbf{d})}$ is equivalent to $d$-separation in  $\mathcal{G}(\mathbf{d})$ given the fixed nodes. 
    Thus, for $\mathbf{X},\mathbf{Y},\mathbf{Z}$ disjoint subsets of $\mathbf{V}(\mathbf{d})$ :
    \begin{equation*}
        \mathbf{X} \indep_{(\mathcal{G}(\mathbf{d}))_{\mathbf{V}(\mathbf{d})}} \mathbf{Y} \mid \mathbf{Z} \Leftrightarrow  \mathbf{X} \indep_{\mathcal{G}(\mathbf{d})} \mathbf{Y} \mid \mathbf{Z} \cup \mathbf{d} \Rightarrow \mathbf{X} \indep \mathbf{Y} \mid \mathbf{Z}.
    \end{equation*}    
Proposition 11 of \textcite{richardson_single_2013} shows that the distribution $P(\mathbf{V}(\mathbf{d}))$ factorizes according to the SWIG $\mathcal{G}(\mathbf{d})$ if the data is generated by the underlying SCM.

\end{remark}

\begin{definition}\label{def:sink}
    (Sink)
A \textbf{sink} is a variable in a structural causal model and node in the corresponding graph that is not in the structural equation for any other variable and thus the node has no descendants. A \textbf{deterministic sink} is a sink that is a deterministic function of the variables represented by its parents.  
\end{definition}

\newpage

\subsection{Proof of Theorem \ref{thm:Markov}}\label{sec:thm_markov_proof}

The structure is as follows:

We start with a probability distribution $P(\mathbf{V})$ that satisfies the Markov property with respect to a DAG $\mathcal G$ which corresponds to a causal model $\mathcal{M}$. We show that we can add a deterministic sink (Definition \ref{def:sink}) to this model and graph and 
\begin{itemize}
    \item The graph $\mathcal{G}'$ is still acyclic (Lemma \ref{lemma:acyclic}).
    \item The probability distribution on the variables including the sink satisfies the Markov property with respect to the DAG $\mathcal{G}'$ (Lemma \ref{lemma:Augmented_Markov}). Thus, $d$-separation in $\mathcal{G}'$ implies conditional independence.
    \item Finally, we show that we can prune a graph by deleting sinks that are not part of the $d$-separation relation of interest without $d$-separation among the other nodes being affected (\ref{lemma:sink}). 
\end{itemize}

The final proof of Theorem \ref{thm:Markov} links these general results on causal graphs to the steps involved in constructing a $\Delta$-SWIG as outlined in Procedure \ref{proc:d-swig}.

\begin{lemma} \label{lemma:acyclic}
    Given a DAG $\mathcal{G}$ and structural model $\mathcal{M}$ on $\mathbf{V}$, the augmented graph $\mathcal{G}'$ and model $\mathcal{M}'$ on $\mathbf{V}\cup \{S\}$, with $S$ a deterministic sink, is acyclic.
\end{lemma}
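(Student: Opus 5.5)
We argue by contradiction via a topological order; alternatively, we can argue directly on directed paths. Let $\mathcal{G}'$ be obtained from $\mathcal{G}$ by adding the node $S$ together with directed edges $Pa_{\mathcal{G}'}(S) \to S$, where $Pa_{\mathcal{G}'}(S) \subseteq \mathbf{V}$ are the variables entering the structural equation of $S$ in $\mathcal{M}'$. Since $S$ is a sink in the sense of Definition \ref{def:sink}, $S$ appears in no structural equation of any $V_j \in \mathbf{V}$. Hence $S$ has no outgoing edges in $\mathcal{G}'$, and all edges among nodes of $\mathbf{V}$ in $\mathcal{G}'$ coincide with those of $\mathcal{G}$: the structural equations $f_{V_j}$ of $\mathcal{M}$ are carried over unchanged to $\mathcal{M}'$.

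The key step is to show that no directed cycle can exist in $\mathcal{G}'$. Suppose a directed cycle $V_{a_1} \rightarrow V_{a_2} \rightarrow \dots \rightarrow V_{a_k} \rightarrow V_{a_1}$ existed. There are two cases.
\begin{itemize}
    \item[(i)] The cycle contains $S$. Then $S$ must have an outgoing edge on the cycle, namely the edge leaving $S$ towards its successor. This contradicts the fact that $S$ has no outgoing edges.
    \item[(ii)] The cycle does not contain $S$. Then every edge of the cycle lies among nodes of $\mathbf{V}$, so the cycle is already a directed cycle in $\mathcal{G}$. This contradicts the assumption that $\mathcal{G}$ is a DAG.
\end{itemize}
Hence $\mathcal{G}'$ is acyclic.

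An equivalent constructive phrasing can be given through a topological order. Take any topological order of $\mathcal{G}$, which exists because $\mathcal{G}$ is acyclic, and append $S$ at the end. Every edge of $\mathcal{G}'$ then points forward in this order, because edges into $S$ come from earlier nodes and $S$ emits no edges. Existence of a topological order is equivalent to acyclicity.

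We do not expect a genuine obstacle. The only point requiring care is making explicit that adding $S$ does not modify the edges among $\mathbf{V}$. This is exactly what the sink definition guarantees, and we will state it first. Determinism of $S$ plays no role in this lemma; it is only needed later, for the Markov property in Lemma \ref{lemma:Augmented_Markov}.
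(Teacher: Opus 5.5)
Your proof is correct and follows essentially the same argument as the paper: a directed cycle in $\mathcal{G}'$ that avoids $S$ would already be a cycle in $\mathcal{G}$, and a cycle through $S$ would need an outgoing edge from $S$, which contradicts $S$ being a sink. Your topological-order rephrasing and your remark that determinism of $S$ plays no role in this lemma are both accurate additions.
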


\begin{proof}
    The DAG and model are augmented by the structural equation for $S$ and the node in the graph.
    The equation is:
    \begin{equation*}
        S \defeq f_S(Pa_{\mathcal{G}'}(S)), \ \ Pa_{\mathcal{G}'}(S) \subseteq \mathbf{V}.
    \end{equation*}
    Graphically, it the node $S$ has only incoming edges. The DAG $\mathcal{G}$ is acyclic by assumption.
    $\mathcal{G}'$ contains additional edges and paths involving $S$.
    All paths without $S$ are also contained in $\mathcal{G}$ and thus acyclic.
    Suppose there is a directed cycle in $\mathcal{G}'$. It has to involve $S$.
    Such a path would be:
    \begin{equation*}
        V_j \rightarrow ... \rightarrow S \rightarrow ...\rightarrow V_j
    \end{equation*}

    This is a contradiction to $S$ being a sink, as it requires an outgoing edge/a child of $S$.
    
\end{proof}

A notable special case of this is that a full $\Delta$-SWIG, created from an acyclic SWIG is acyclic, as the $\Delta Y_{t}(\overline{0})$-node (or, in fact, any difference node) is a sink. 
A full $\Delta$-SWIG with several $\Delta$-nodes is acyclic by repeatedly invoking Lemma \ref{lemma:acyclic}.

\begin{lemma}\label{lemma:Augmented_Markov}
    Let $\mathcal{G}$ be a DAG and $\mathcal{M}$ a structural causal model with variables and corresponding nodes $\mathbf{V}$, $P(\mathbf{V})$ satisfies the Markov property with respect to $\mathcal{G}$.
    Let the augmented DAG and structural model $\mathcal{G}'$ and $\mathcal{M}'$ be the result of adding a deterministic sink $S$. The probability distribution is $P(\mathbf{V'})$ with $\mathbf{V'} \defeq \mathbf{V}\cup \{S\}$.
    $P(\mathbf{V'})$ satisfies the Markov property with respect to $\mathcal{G}'$.
    
\end{lemma}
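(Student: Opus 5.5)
The plan is to verify the local Markov property, statement (a) of Definition \ref{def:markov}, for every node of $\mathcal{G}'$; the three statements are equivalent for DAGs, and $\mathcal{G}'$ is a DAG by Lemma \ref{lemma:acyclic}. An alternative route is to check the factorization (b) directly. We have $P(\mathbf{V}') = P(\mathbf{V})\, P(S \mid \mathbf{V})$. Because $S = f_S(Pa_{\mathcal{G}'}(S))$ is deterministic, $P(S \mid \mathbf{V})$ is the point mass at $f_S(Pa_{\mathcal{G}'}(S))$. This depends on $\mathbf{V}$ only through $Pa_{\mathcal{G}'}(S)$, so $P(S\mid \mathbf{V}) = P(S \mid Pa_{\mathcal{G}'}(S))$.

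For every $V_j \in \mathbf{V}$, adding $S$ creates no new incoming edges, so $Pa_{\mathcal{G}'}(V_j) = Pa_{\mathcal{G}}(V_j)$. Combining this with the factorization of $P(\mathbf{V})$ with respect to $\mathcal{G}$ gives
\begin{equation*}
P(\mathbf{V}') = \prod_{V_j \in \mathbf{V}} P(V_j \mid Pa_{\mathcal{G}'}(V_j)) \cdot P(S \mid Pa_{\mathcal{G}'}(S)),
\end{equation*}
which is exactly factorization (b) with respect to $\mathcal{G}'$.

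Because densities may fail to exist for a deterministic $S$, I would phrase the argument in the local-Markov form (a) instead of densities. There are two cases.

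\textbf{Case $V_j \in \mathbf{V}$.} The non-descendants of $V_j$ in $\mathcal{G}'$ are its non-descendants in $\mathcal{G}$, plus possibly $S$. $S$ is a non-descendant of $V_j$ exactly when no parent of $S$ is a descendant of $V_j$ or $V_j$ itself; equivalently, $Pa_{\mathcal{G}'}(S) \subseteq ND_{\mathcal{G}}(V_j)$. By Markovness in $\mathcal{G}$, $V_j \indep ND_{\mathcal{G}}(V_j) \mid Pa(V_j)$. Since $S$ is then a measurable function of variables in $ND_{\mathcal{G}}(V_j)$, we get $V_j \indep ND_{\mathcal{G}}(V_j)\cup\{S\} \mid Pa(V_j)$, because appending a function of the conditioned-out block preserves independence.

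\textbf{Case $S$.} Its non-descendants are all of $\mathbf{V}\setminus Pa(S)$. $S$ is degenerate given $Pa(S)$, and a degenerate variable is independent of anything conditionally.

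The main subtlety is the first case. One must check that $S$ can enter $V_j$'s non-descendant set only when all of $S$'s parents do, and handle the measure-theoretic independence of functions carefully. The rest is bookkeeping.
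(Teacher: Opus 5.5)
Your proof is correct. Your first paragraph is exactly the paper's proof: write $P(\mathbf{V}') = P(\mathbf{V})P(S\mid\mathbf{V})$, replace $P(S\mid\mathbf{V})$ by $P(S\mid Pa_{\mathcal{G}'}(S))$ because $S$ is a function of its parents, and use $Pa_{\mathcal{G}'}(V_j)=Pa_{\mathcal{G}}(V_j)$. The paper stops there and relies on the stated equivalence of (b) and (c).

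Your preferred route, checking the local property (a) node by node, is genuinely different, and what it buys is rigor. A difference node is a deterministic and typically continuous sink, so $(\mathbf{V},S)$ has no joint density with respect to a product measure. The product in (b) is then meaningful only as a composition of Markov kernels, and the textbook equivalence (b)$\Leftrightarrow$(c) (Lauritzen; Peters et al.) is stated for distributions with such a density.

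Your two cases need only two elementary facts about conditional independence:
\begin{itemize}
\item enlarging the independent block by a measurable function of that block preserves independence;
\item a $\sigma(Pa_{\mathcal{G}'}(S))$-measurable variable is conditionally independent of anything given $Pa_{\mathcal{G}'}(S)$.
\end{itemize}
You combine these with the correct observation that, because $S$ is a sink, it is a non-descendant of $V_j$ exactly when $Pa_{\mathcal{G}'}(S)\subseteq ND_{\mathcal{G}}(V_j)$.

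The price is that you must finish with the density-free implication (a)$\Rightarrow$(c) for DAGs. This holds for arbitrary distributions because it uses only the semi-graphoid axioms (Verma and Pearl; Lauritzen, Dawid, Larsen and Leimer). Cite that version explicitly rather than the density-based equivalence in the paper's Definition; otherwise the density issue you set out to avoid re-enters at the last step.

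A minor wording fix: in the case of $S$, the non-descendants are all of $\mathbf{V}$, and the statement to verify is $S \indep \mathbf{V}\setminus Pa_{\mathcal{G}'}(S) \mid Pa_{\mathcal{G}'}(S)$.
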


\begin{proof}

The proof builds on the insight that the deterministic sink $S$ has the structural equation $S \defeq f_S(pa_{\mathcal{G}'}(S))$.
By assumption, $P(\mathbf{V})$ factorizes as:

\begin{equation}\label{eq:Markov_original}
  P(\mathbf{V}) = \prod_{V_j \in \mathbf{V}} P(V_j \mid Pa_\mathcal{G}(V_j))
\end{equation}

Then, the joint distribution of $\mathbf{V'} \defeq \mathbf{V} \cup S$ can be written as:
\begin{align*}
    P(\mathbf{V'}) &= P(\mathbf{V},S) = P(S \mid \mathbf{V}) P(\mathbf{V}) & (\text{factorization}) \\
                    &= P(S \mid Pa_{\mathcal{G}'}(S)) P(\mathbf{V}) &  \text{($S$ is a deterministic function of $Pa_{\mathcal{G}'}(S)) \subseteq \mathbf{V}$)} \\
                    &= P(S \mid Pa_{\mathcal{G}'}(S)) \prod_{V_j \in \mathbf{V}} P(V_j \mid Pa_\mathcal{G}(V_j)) & \text{(\ref{eq:Markov_original})} \\
                    &= P(S \mid Pa_{\mathcal{G}'}(S)) \prod_{V_j \in \mathbf{V}} P(V_j \mid Pa_{\mathcal{G}'}(V_j)) & \text{($Pa_{\mathcal{G}}(V_j)$ same as $Pa_{\mathcal{G}'}(V_j)$ for $V_j \in \mathbf{V}$)} \\
                    &= \prod_{V_j \in \mathbf{V}'} P(V_j \mid Pa_{\mathcal{G}'}(V_j))
\end{align*}
         
\end{proof}

Again, a special case is the SWIG and full $\Delta$-SWIG. The distribution $P(\mathbf{V}(\overline{0}))$ factorizes according to (is Markov to) $(\mathcal{G}(\overline{0}))_{\mathbf{V}(\overline{0})}$ by Assumption as stated in Theorem \ref{thm:Markov}.
When creating a difference node and the corresponding structural equation, the new variable is a deterministic function of (a subset of) the parents of the variables it was created from. With the edges drawn according to the functional dependencies, it is a deterministic function of its parents in the subgraph $(\mathcal{G}_\Delta^{full}(\overline{0}))_{\mathbf{V'}(\overline{0})}$ of the full $\Delta$-SWIG $\mathcal{G}_\Delta^{full}(\overline{0})$.
Following Lemma \ref{lemma:Augmented_Markov}, the distribution $P(\mathbf{V'}(\overline{0}))$ with $\mathbf{V'}(\overline{0}) = \mathbf{V}(\overline{0}) \cup \Delta Y_{t}(\overline{0})$ factorizes according to the subgraph $(\mathcal{G}_\Delta^{full}(\overline{0}))_{\mathbf{V'}(\overline{0})}$ of the full $\Delta$-SWIG $\mathcal{G}_\Delta^{full}(\overline{0})$.
If we add another $\Delta$-node $\Delta Y_{t'}(\overline{0})$, the distribution $P(\mathbf{V''}(\overline{0}))$ with $\mathbf{V''}(\overline{0}) = \mathbf{V'}(\overline{0}) \cup \Delta Y_{t'}(\overline{0})$  factorizes according to the subgraph of the augmented full $\Delta$-SWIG.
This follows from Lemma \ref{lemma:Augmented_Markov} with the $\Delta$-SWIG with one $\Delta$-node as the starting point.

To show that we can prune the full $\Delta$-SWIG, we additionally make use of the following Lemma:

\begin{lemma}\label{lemma:sink}
    Let $\mathcal{G}$ be a DAG containing nodes corresponding to variables $\mathbf{V}$ and $P(\mathbf{V})$ satisfy the Markov property with respect to $\mathcal{G}$, in particular, that $d$-separation implies conditional independence. 
    Let $\mathbf X, \mathbf Y, \mathbf Z$ be disjoint subsets of $\mathbf V$.
    In addition, $S \in \mathbf V$ is a sink and $S \notin \mathbf X \cup \mathbf Y \cup \mathbf Z $.
    Let $\Tilde{\mathcal{G}}$ be a subgraph containing $\mathbf{\Tilde{V}} = \mathbf{V}\setminus \{S\}$, obtained by removing $S$ and all its incoming edges.
    Then:
    \begin{equation*}
        \mathbf X \indep_{\Tilde{\mathcal{G}}} \mathbf Y \mid \mathbf Z \Leftrightarrow \mathbf X \indep_{\mathcal{G}} \mathbf Y \mid \mathbf Z
    \end{equation*}
    and
    \begin{equation*}
        \mathbf X \indep_{\Tilde{\mathcal{G}}} \mathbf Y \mid \mathbf Z \Rightarrow \mathbf X \indep \mathbf Y \mid \mathbf Z.
    \end{equation*}
    
\end{lemma}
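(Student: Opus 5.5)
The plan is to prove the graphical equivalence first and then derive the probabilistic implication from it. The equivalence is a bijection argument on paths.

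\textbf{Step 1: paths between $\mathbf X$ and $\mathbf Y$ in $\mathcal{G}$ that contain $S$.} Since $S$ is a sink, every edge incident to $S$ points into $S$. Any path that passes through $S$ as an interior node therefore has the form $\ldots \rightarrow S \leftarrow \ldots$, so $S$ is a collider on it. Its endpoints lie in $\mathbf X \cup \mathbf Y$, and $S \notin \mathbf X \cup \mathbf Y$, so $S$ is always interior. Moreover, $S \notin \mathbf Z$, and $S$ has no descendants other than itself. Hence by part (a) of Definition \ref{def:dep}, every such path is blocked in $\mathcal{G}$ given $\mathbf Z$.

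\textbf{Step 2: paths avoiding $S$.} Every path avoiding $S$ is simultaneously a path in $\mathcal{G}$ and in $\Tilde{\mathcal{G}}$, with the same edge orientations, and conversely every path in $\Tilde{\mathcal{G}}$ is such a path. On these paths:
\begin{itemize}
\item Non-collider blocking under part (b) depends only on membership in $\mathbf Z$, so it is identical in both graphs.
\item Collider blocking under part (a) depends on whether the collider or one of its descendants lies in $\mathbf Z$. Removing $S$ changes descendant sets only by deleting $S$ itself. No directed path passes through $S$ to reach another node, because $S$ has no children. Since $S \notin \mathbf Z$, the condition ``a descendant of $V_j$ lies in $\mathbf Z$'' is the same in $\mathcal{G}$ and $\Tilde{\mathcal{G}}$.
\end{itemize}
So each such path is blocked in $\mathcal{G}$ if and only if it is blocked in $\Tilde{\mathcal{G}}$.

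Combining Steps 1 and 2 gives that all paths are blocked in $\mathcal{G}$ if and only if all paths are blocked in $\Tilde{\mathcal{G}}$. This is the first equivalence.

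\textbf{Step 3: the implication.} If $\mathbf X \indep_{\Tilde{\mathcal{G}}} \mathbf Y \mid \mathbf Z$, the equivalence gives $\mathbf X \indep_{\mathcal{G}} \mathbf Y \mid \mathbf Z$. Part (c) of the Markov property (Definition \ref{def:markov}), assumed for $P(\mathbf V)$ with respect to $\mathcal{G}$, then yields $\mathbf X \indep \mathbf Y \mid \mathbf Z$. This is a statement about the marginal distribution of variables in $\Tilde{\mathbf V}$, so it is unaffected by dropping $S$.

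I expect the main obstacle to be the descendant bookkeeping in Step 2. The argument has to make explicit that deleting a childless node removes no other node's descendant relations, which follows because any directed path through $S$ would need an outgoing edge from $S$. Step 1 also needs care: it must rule out $S$ appearing as an endpoint, which the assumption $S \notin \mathbf X \cup \mathbf Y \cup \mathbf Z$ does.

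Repeated application of the lemma, with one deleted sink at a time and noting that each deletion may create new sinks, then justifies the pruning step 8 of Procedure \ref{proc:d-swig}.
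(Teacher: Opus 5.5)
Your proof is correct and follows essentially the same route as the paper. Paths through the sink $S$ are blocked because $S$ is an unconditioned collider with no descendants, paths avoiding $S$ are common to $\mathcal{G}$ and $\Tilde{\mathcal{G}}$, and the probabilistic claim follows from the Markov property for $\mathcal{G}$. The paper argues the two directions separately by contrapositive, whereas you do it in one path-by-path pass and make explicit that $Desc_{\Tilde{\mathcal{G}}}(V) = Desc_{\mathcal{G}}(V)\setminus\{S\}$, a point the paper's first direction uses only implicitly.
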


\begin{proof} The proof follows from the definition of $d$-separation and the fact that $\mathcal{G}$ differs from $\Tilde{\mathcal{G}}$ only by paths that contain $S$ and its incoming edges.

\begin{itemize}
    \item $\mathbf X \indep_{\Tilde{\mathcal{G}}} \mathbf Y \mid \mathbf Z \Rightarrow \mathbf X \indep_{\mathcal{G}} \mathbf Y \mid \mathbf Z$

    For $\mathbf X \indep_{\Tilde{\mathcal{G}}} \mathbf Y \mid \mathbf Z$, (by the definition of $d$-separation), for all $X \in \mathbf X$ and $Y \in \mathbf Y$ in all paths $P = (X - ... - Y)$ that connect the two in $\Tilde{\mathcal{G}}$, there is:

\begin{itemize}
    \item[a)] a chain $V_l \rightarrow V_m \rightarrow V_n$ or a fork $V_l \leftarrow V_m \rightarrow V_n$, where the middle node $V_m \in \mathbf Z$, or:
    \item[b)] a collider $V_l \rightarrow V_m \leftarrow V_n$ where $(\{V_m\} \cup Desc_{\Tilde{\mathcal{G}}}(V_m)) \cap  \mathbf Z = \varnothing$, or both.
\end{itemize}

The graph $\mathcal{G}$ additionally contains the node $S$ and its incoming edges. All paths between $\mathbf{X}$ and $\mathbf{Y}$ from $\Tilde{\mathcal{G}}$ are also in $\mathcal{G}$.

Suppose that $\mathbf X \centernot\indep_{\!\!\! \mathcal{G}} \mathbf Y \mid \mathbf Z$.
This means that in $\mathcal{G}$, there is at least one unblocked path between some $X \in \mathbf{X}$ and $Y \in \mathbf{Y}$.
A path is unblocked if a) every collider on the path or one of its descendants $\in \mathbf{Z}$ and b) no other node (the middle node of a chain or fork) $\in \mathbf{Z}$.

$S$ is a sink, thus, for any $X \in \mathbf{X}$ and $Y \in \mathbf{Y}$ it can only be: i) a collider on a path connecting $X$ and $Y$, that is in $\mathcal{G}$ but not in $\Tilde{\mathcal{G}}$, ii) a descendant of a collider on a path connecting $X$ and $Y$ or iii) a descendant of any another variable.
As $S$ is a sink, every path between $X$ and $Y$ that contains $S$ as an interior node contains it as a collider. As $S \notin \mathbf{Z}$, and $S$ has no descendants that could be in $\mathbf{Z}$, these paths are blocked.
The paths that are not blocked by $\mathbf{Z}$ must be paths not containing $S$.
Any path that does not contain $S$ is also in $\Tilde{\mathcal{G}}$. 
The existence of an unblocked path between $X$ and $Y$ in $\mathcal{G}$ implies the existence of an unblocked path in $\Tilde{\mathcal{G}}$.
Thus, $\mathbf X \centernot\indep_{\!\!\! \mathcal{G}} \mathbf Y \mid \mathbf Z \Rightarrow \mathbf X \centernot\indep_{\!\!\!\Tilde{\mathcal{G}}} \mathbf Y \mid \mathbf Z$ which is equivalent to $\mathbf X \indep_{\Tilde{\mathcal{G}}} \mathbf Y \mid \mathbf Z \Rightarrow \mathbf X \indep_{\mathcal{G}} \mathbf Y \mid \mathbf Z$.

    \item $\mathbf X \indep_{\mathcal{G}} \mathbf Y \mid \mathbf Z \Rightarrow \mathbf X \indep_{\Tilde{\mathcal{G}}} \mathbf Y \mid \mathbf Z$

We prove that  $\mathbf X \centernot\indep_{\!\!\!\Tilde{\mathcal{G}}} \mathbf Y \mid \mathbf Z \Rightarrow \mathbf X \centernot\indep_{\!\!\! \mathcal{G}} \mathbf Y \mid \mathbf Z$.
Note that $\mathbf X \centernot\indep_{\!\!\!\Tilde{\mathcal{G}}} \mathbf Y \mid \mathbf Z$ means, there is at least one unblocked path between some $X$ and $Y$, $P_u$, such that, for all chains and forks on $P_u$, $V_m \notin \mathbf Z$ and all colliders on $P_u$,  $V_m \in \mathbf Z$ or some descendant of $V_m \in \mathbf Z$.
All paths that exist in $\Tilde{\mathcal{G}}$, also exist in $\mathcal{G}$. 
Thus, if there is a path in $\Tilde{\mathcal{G}}$ that is unblocked by $\mathbf{Z}$, it is also in $\mathcal{G}$ and $\mathbf{X}$ and $\mathbf{Y}$ are not $d$-separated given $\mathbf{Z}$.

\item By the Markov property $\mathbf X \indep_{\mathcal{G}} \mathbf Y \mid \mathbf Z \Rightarrow \mathbf X \indep \mathbf Y \mid \mathbf Z$. As $\mathbf X \indep_{\Tilde{\mathcal{G}}} \mathbf Y \mid \mathbf Z \Leftrightarrow \mathbf X \indep_{\mathcal{G}} \mathbf Y \mid \mathbf Z$ we get:
\begin{equation*}
            \mathbf X \indep_{\Tilde{\mathcal{G}}} \mathbf Y \mid \mathbf Z \Rightarrow \mathbf X \indep \mathbf Y \mid \mathbf Z.
\end{equation*}

\end{itemize}
    
\end{proof}

Now we can restate Theorem \ref{thm:Markov} from the main text.

\MarkovTheorem*

\begin{proof} Using the lemmas above. 
\begin{itemize}
    \item[(a)] If $\mathcal{G}_\Delta(\mathbf{d})$ is the full $\Delta$-SWIG:
    
    $\mathcal{G}_\Delta^{full}(\mathbf{d})$ is acyclic: We assume the original structural causal model $\mathcal{M}$ and corresponding graph $\mathcal{G}$ to be acyclic. The SWIG $\mathcal{G}(\mathbf{d})$ is acyclic, by Lemma \ref{lemma:acyclic}, $\mathcal{G}_\Delta^{full}(\mathbf{d})$ is acyclic, as the full $\Delta$-SWIG is obtained by adding several deterministic sinks. 
    The subgraph $(\mathcal{G}_\Delta^{full}(\mathbf{d}))_{\mathbf{V'}(\mathbf{d})}$ is also acyclic as a subgraph of an acyclic graph is always acyclic. To see this, suppose $(\mathcal{G}_\Delta^{full}(\mathbf{d}))_{\mathbf{V'}(\mathbf{d})}$ contained a cyclic graph. As it is a subgraph of $\mathcal{G}_\Delta^{full}(\mathbf{d})$, i.e., it contains only a subset of the nodes and edges, the cyclic graph would also exist in $\mathcal{G}_\Delta^{full}(\mathbf{d})$.

    The distribution $P(\mathbf{V}_\Delta(\mathbf{d}))$ satisfies the Markov property with respect to $(\mathcal{G}_\Delta^{full}(\mathbf{d}))_{\mathbf{V}_\Delta(\mathbf{d})}$: 
    $P(\mathbf{V}(\mathbf{d}))$ satisfies the Markov property with respect to $(\mathcal{G}(\mathbf{d}))_{\mathbf{V}(\mathbf{d})}$, with $U_{V_j}, U_{V_k}$ explicit, by Assumption.  
    By Lemma \ref{lemma:Augmented_Markov}, $P(\mathbf{V}_\Delta(\mathbf{d}))$ satisfies the Markov property with respect to $(\mathcal{G}_\Delta^{full}(\mathbf{d}))_{\mathbf{V}_\Delta(\mathbf{d})}$, with $\mathbf{V}_\Delta(\mathbf{d}) = \mathbf{V}(\mathbf{d}) \cup \Delta_{j,k}(\mathbf{d})$ as $\Delta_{j,k}(\mathbf{d})$ is a sink that is deterministically determined by its parents in $(\mathcal{G}_\Delta^{full}(\mathbf{d}))_{\mathbf{V}_{\Delta}(\mathbf{d})}$.
    With several $\Delta$-nodes, repeatedly invoke Lemma \ref{lemma:Augmented_Markov}.

    So, we can conclude that $d$-separation in $(\mathcal{G}_\Delta^{full}(\mathbf{d}))_{\mathbf{V}_\Delta(\mathbf{d})}$ implies conditional independence. The statement in the Theorem follows from the insight that $d$-separation in $(\mathcal{G}_\Delta^{full}(\mathbf{d}))_{\mathbf{V}_\Delta(\mathbf{d})}$ is equivalent to $d$-separation in $\mathcal{G}_\Delta^{full}(\mathbf{d})$ given in addition the sequence of fixed nodes $\mathbf{d}$ \parencite[Proposition 14 of ][]{richardson_single_2013}.

\item[(b)] If $\mathcal{G}(\mathbf{d})$ is the pruned $\Delta$-SWIG, i.e., $\mathbf{V}_\Delta(\mathbf{d})$ does not contain the sinks $S$ that are removed in the pruning step: 

    We know from Part (a), that $d$-separation in $\mathcal{G}_\Delta^{full}(\mathbf{d})$ implies conditional independence in $P(\mathbf{V}_\Delta(\mathbf{d}),\mathbf{S})$.
    The setting is thus the same as in Lemma \ref{lemma:sink}.
    The statement of the Theorem follows from (repeatedly) using Lemma \ref{lemma:sink}, i.e., $d$-separation in  $\mathcal{G}_\Delta^{prune}(\mathbf{d})$, which results from removing one or several sinks is equivalent to $d$-separation in $\mathcal{G}_\Delta^{full}(\mathbf{d})$. $\mathbf{X,Y,Z}$ do not contain any of the deleted sinks by definition, as the sinks are not contained in $\mathbf{V}_{\Delta}(\mathbf{d})$.
\end{itemize}

\end{proof}

\section{Post-Treatment Covariates} \label{app:post-treatment}

We show here that the causal structure with post-treatment covariates depicted in Figure \ref{fig:DAG_DSWIG_post} can not be modified by adding nodes or edges such that the resulting $\Delta$-SWIG implies the two independencies
\begin{align}
    \Delta Y_1(0) &\indep D \mid X_0, X_1(0) \label{eq:caetano_CPT_post} \\
    X_1(0) & \indep D \mid X_0,Y_0 \label{eq:caetano_cov_unconf_post}
\end{align}
without implying at the same time
\begin{equation}
     X_1(0)  \indep D \mid X_0. \label{eq:caetano_cov_unconf}
\end{equation}

First, note that obtaining \eqref{eq:caetano_cov_unconf_post} but not \eqref{eq:caetano_cov_unconf} would require $d$-separation of $D$ and $X_1(0)$ given $X_0$ and $Y_0$, but no $d$-separation given only $X_0$ (and the fixed node $0$):
\begin{align}
    X_1(0) & \indep_{\mathcal G_{\Delta} (0)} D \mid X_0,Y_0,0 \label{eq:caetano_cov_unconf_post_dsep} \\
    X_1(0) & \centernot \indep_{\! \mathcal G_{\Delta} (0)} D \mid X_0,0 \label{eq:caetano_cov_unconf_post_nodsep}
\end{align}

Both can only be true if there is at least one path connecting $D$ and $X_1(0)$ that is unblocked by $X_0,0$, but blocked by $X_0, Y_0,0$. This requires that $Y_0$ appears on \textit{all} such unblocked paths as a non-collider.

The following Lemma is useful for the rest of the argument:

\begin{lemma}\label{lemma:post_treatment_controls}
    Consider the path $P_u = D - ... -Y_0 - ... - X_1(0)$ with $Y_0$ an non-collider.\footnote{We use the notational convention that $A-...-B$ denotes the paths connecting nodes $A$ and $B$, where $-$ denotes an edge that can point towards either $A$ or $B$ and $...$ means that there could be additional nodes and edges on the path.}
    Let $0$ denote a fixed node (i.e., it has no incoming edges).
    If $P_u$ is unblocked given $X_0,0$ then \emph{at least one of the following} is true:
    \begin{itemize}
    \item[(a)]  $D-...\leftarrow Y_0$ is unblocked given $X_0, X_1(0),0$
    \item[(b)] $X_0 \in Desc(Y_0)$
    \item[(c)] $X_1(0) \in Desc(Y_0)$
\end{itemize}
\end{lemma}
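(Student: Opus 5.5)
The plan is a case analysis on the orientation of the two edges incident to $Y_0$ on $P_u$. Because $Y_0$ is a non-collider, at least one of them points away from $Y_0$. So either the $D$-side segment ends in $\ldots \leftarrow Y_0$, or the $X_1(0)$-side segment starts in $Y_0 \rightarrow \ldots$, or both.

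\textbf{Case 1: the $D$-side ends in $\leftarrow Y_0$.} Consider the subpath $P_D := D - \ldots \leftarrow Y_0$. It is unblocked given $X_0,0$, since every subpath of an unblocked path is unblocked at its interior nodes.

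We ask whether adding $X_1(0)$ to the conditioning set can block $P_D$. Adding a variable can only block a path if it is a non-collider on that path. So if $X_1(0)$ does not lie on $P_D$ as a non-collider, $P_D$ stays unblocked given $X_0,X_1(0),0$, and (a) holds. Adding a variable never blocks colliders: conditioning can only open, not close, collider nodes.

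Suppose instead $X_1(0)$ lies on $P_D$ as a non-collider. Walk from $Y_0$ along $P_D$ toward $D$. The first edge points out of $Y_0$. As long as the walk keeps moving along outgoing edges, every node reached is a descendant of $Y_0$. The direction can reverse only at a collider. So the first node on this walk that is either a collider or $X_1(0)$ is reached by a directed path from $Y_0$.

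If $X_1(0)$ is reached before any collider, then $X_1(0) \in Desc(Y_0)$, which is (c).

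If instead a collider $C$ is reached first, then $C \in Desc(Y_0)$. Since $P_u$ is unblocked given $X_0,0$, either $C$ or one of its descendants is in $\{X_0\}$; the fixed node $0$ has no incoming edges, so it cannot be a descendant of anything. Hence $X_0 \in Desc(Y_0)$, or $X_0 = C$, which is itself a descendant of $Y_0$. This gives (b).

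\textbf{Case 2: the $X_1(0)$-side starts with $Y_0 \rightarrow$.} Apply the same walk from $Y_0$ toward $X_1(0)$. Either the walk reaches $X_1(0)$ along a directed path, giving (c), or it first meets a collider $C \in Desc(Y_0)$. That collider must be activated by $X_0$, for the same reason as above, which gives (b).

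If both edges point out of $Y_0$, Case 1 already applies, so Case 2 only needs to handle the remaining orientation, where the $D$-side edge points into $Y_0$.

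\textbf{Technical points.} The main obstacle is the bookkeeping in the directed-walk argument. First, nodes on the walk that are in the conditioning set are not a problem, because the path is unblocked given $X_0,0$. Second, the fixed node $0$ cannot activate colliders. Third, $X_0$ may itself lie on the path, but only as a collider, since $P_u$ is unblocked; a collider equal to $X_0$ is covered by (b). I expect these points to be routine once the "first collider after leaving $Y_0$ is a descendant of $Y_0$" observation is stated as a small claim.
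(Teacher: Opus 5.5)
Your proof is correct and follows the paper's argument. Both split on the orientation of the edges at $Y_0$, obtain (a) directly when the $D$-side edge points out of $Y_0$, and otherwise walk from $Y_0$ toward $X_1(0)$: the first collider reached is a descendant of $Y_0$ and can only be activated by $X_0$, which yields (b) or (c). The only difference is that the paper dismisses your subcase ``$X_1(0)$ lies on $P_D$ as a non-collider'' outright, because $X_1(0)$ is the endpoint of $P_u$ and a path cannot revisit it; your second walk argument in Case~1 is valid but vacuous.
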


\begin{proof}
    $Y_0$ is a non-collider on $P_u$. $P_u$ thus contains one of the local patterns $\leftarrow Y_0 \rightarrow$, $\leftarrow Y_0 \leftarrow$ or $\rightarrow Y_0 \rightarrow$. We proceed by discussing each case noting that a path is unblocked only if every subpath (part) of it is unblocked, as blocking a path \textit{once} suffices to block the path:
    \begin{itemize}
        \item $\leftarrow Y_0 \rightarrow$:    
    Consider the unblocked subpath $D-...\leftarrow Y_0$ of $P_u$ that does not contain $X_1(0)$ by construction. Thus, $D-...\leftarrow Y_0$ is also unblocked by $X_0,X_1(0),0$.

    For unblocked subpath $Y_0 \rightarrow ... -X_1(0)$ there are two scenarios: (i) This subpath could have all arrows pointing in the same direction. Then, $X_1(0) \in Desc(Y_0)$.
    (ii) If not all arrows point in the same direction, there has to be a collider that is a descendant of $Y_0$ because of the initial $Y_0 \rightarrow$. The subpath is only unblocked given $X_0,0$ if either $X_0$ is the collider or $X_0$ is a descendant of the collider (as the fixed node has no incoming edges, it can never be the descendant of any other node). Both imply that $X_0 \in Desc(Y_0)$. 

    $\Rightarrow$ (a) and (b) or (a) and (c) of Lemma \ref{lemma:post_treatment_controls} are true.

    \item $\leftarrow Y_0 \leftarrow$:
    By the same argument as in the first case, $D-...\leftarrow Y_0$ is unblocked by $X_0, X_1(0),0$. 
    
    $\Rightarrow$ (a) of Lemma \ref{lemma:post_treatment_controls} is true.

    \item $\rightarrow Y_0 \rightarrow$:
    By the same argument as in the first case, $Y_0 \rightarrow ... -X_1(0)$ unblocked by $X_0,0$ implies either  $X_0 \in Desc(Y_0)$ or $X_1(0) \in Desc(Y_0)$. 
    
    $\Rightarrow$ Either (b) or (c) of Lemma \ref{lemma:post_treatment_controls} is true.
    \end{itemize}
For each of the three possible local patterns at least one of (a), (b), or (c) is true, which concludes the proof.
\end{proof}

(\ref{eq:caetano_CPT_post}) would be implied by:
\begin{equation}
    \Delta Y_1(0) \indep_{\mathcal G_{\Delta} (0)} D \mid X_0, X_1(0),0\label{eq:caetano_CPT_post_dsep}
\end{equation}

First, note that a $\Delta$-SWIG in two time periods with unobserved confounding always contains the path $P = D \leftarrow U \rightarrow Y_0 \leftarrow U_{Y_0} \rightarrow \Delta Y_1(0)$.
The subpath $D \leftarrow U \rightarrow Y_0$ represents the unobserved confounding between $D$ and $Y_0$. The subpath $Y_0 \leftarrow U_{Y_0} \rightarrow \Delta Y_1(0)$ exists by construction, as exogenous variable $U_{Y_0}$ is inherited by $\Delta Y_1(0)$.

(\ref{eq:caetano_cov_unconf_post_dsep}) and (\ref{eq:caetano_cov_unconf_post_nodsep}) imply the existence of $P_u$ as in Lemma \ref{lemma:post_treatment_controls} and thus at least one of (a), (b) or (c) in the Lemma. 
This leads to a contradiction with (\ref{eq:caetano_CPT_post_dsep}).

If (a) is true, then due to the path $Y_0 \leftarrow U_{Y_0} \rightarrow \Delta Y_1(0)$ (subpath of $P$ above), there is a path between $D$ and $\Delta Y_1(0)$ that is unblocked given $X_0,X_1(0),0$. This path is obtained by concatenating $D - ... \leftarrow Y_0$ and $Y_0 \leftarrow U_{Y_0} \rightarrow \Delta Y_1(0)$ and noting that concatenating two unblocked paths by a non-collider creates an unblocked path, i.e.~\eqref{eq:caetano_CPT_post_dsep} is not true.

Alternatively, consider (b) or (c) being true. In either case, the path $P$ between $D$ and $\Delta Y_1(0)$ would be unblocked given $X_0,X_1(0),0$, as this would condition on descendants of the collider $Y_0$. So, (\ref{eq:caetano_CPT_post_dsep}) is again not true.

This shows that (\ref{eq:caetano_cov_unconf_post_dsep}), (\ref{eq:caetano_cov_unconf_post_nodsep}) and (\ref{eq:caetano_CPT_post_dsep}) can not be true at the same time.
Either $Y_0$ is required for $d$-separation of $D$ and $X_1(0)$, implying covariate unconfoundedness conditional on $X_0, Y_0$, but then $\Delta Y_1(0)$ and $D$ are not $d$-separated given $X_0,X_1(0),0$, i.e.~CPT is not fulfilled.
Or, $Y_0$ is a neutral control for covariate unconfoundedness meaning that $X_1(0)$ and $D$ are $d$-separated given \textit{only} $X_0,0$ such that infeasible CPT \eqref{eq:caetano_CPT_post} still holds and collapses to feasible $\Delta Y_1(0) \indep D \mid X_0$ as in the main text.

% \newpage

\section{Difference-in-Differences}
\subsection{Single World Additive Separability in the Literature} \label{app:swas-lit}

Earlier references assume ``all worlds'' additive separability (see, e.g.~Equation 18 in \textcite{angrist_empirical_1999} or Equation 34 in \textcite{imbens_recent_2009}) with constant individual treatment effects instead of leaving them unrestricted. The first reference we could find that leaves individual treatment effects unrestricted and only assumes single world additive separability is \textcite{blundell_alternative_2009}. The more recent references directly state the untreated potential outcomes and usually leave the treated potential outcomes unrestricted by not imposing a model for them.

Table \ref{tab:SWAS_lit} collects examples of Assumption \assref{ass:SWAS-staggered} in the literature and shows how their functions map to  $\alpha(U,C)$ and $g_{Y_t}(\cdot)$ in our notation in the final two columns. The third column shows which assumptions are either explicitly discussed in the reference or implied by its stated assumptions. Assumption \assref{ass:no-y-dyn} is not deduced from the functional forms following the discussion in Footnote \ref{fn:implied-r}.

\begin{landscape}

\begin{table}[h!]
\centering \small
\caption{Examples of Single World Additive Separability in the DiD literature}
\label{tab:SWAS_lit}
\begin{tabular}{@{}p{7cm} p{5cm} p{3.5cm} p{1.5cm} p{6cm}@{}}
\hline
\textbf{Reference} & \textbf{Equation} & \textbf{Assumptions} & $\alpha(U,C)$ & $g_{Y_t}(\cdot)$ \\ % &\textbf{Exact place in the reference} \\
\hline
\textcite{blundell_alternative_2009} & $y_{it} = \beta + \alpha_i d_{it} + u_{it}, \ \E[u_{it}\mid d_i,t] = \E[\eta_i \mid d_i] + m_t$ & \assref{ass:SWAS-staggered}, \assref{ass:no-x-y}, \assref{ass:no-x-dyn} & $\beta + \eta_i$ & $ \underbrace{\E[g_{Y_t}(U_{Y_t})]}_{m_t} + g_{Y_t}(U_{Y_t}) - E[g_{Y_t}(U_{Y_t})] $\\ %  & Equation (13)  \\
\textcite{arkhangelsky_causal_2024} & $Y_{it}(0) = \alpha_i + \beta_t + \varepsilon_{it}$ & \assref{ass:SWAS-staggered}, \assref{ass:no-x-y}, \assref{ass:no-x-dyn}   & $\alpha_i$  &  $\underbrace{\E[g_{Y_t}(U_{Y_t})]}_{\beta_t} + \underbrace{g_{Y_t}(U_{Y_t}) - \E[g_{Y_t}(U_{Y_t})]}_{\varepsilon_{it}}$\\ % & Assumption 5.1\\
[1em]
\textcite{dechaisemartin_credible_2023} & $Y_{g,t}(\mathbf{0}_t) = \alpha_g + \gamma_t + \varepsilon_{g,t}$ & \assref{ass:SWAS-staggered}, \assref{ass:no-x-y}, \assref{ass:no-x-dyn} & $\alpha_g$  &  $\underbrace{\E[g_{Y_t}(U_{Y_t})]}_{\gamma_t} + \underbrace{g_{Y_t}(U_{Y_t}) - \E[g_{Y_t}(U_{Y_t})]}_{\varepsilon_{g,t}}$\\ %  & Equation (2.4) \\
[1em]
\textcite{roth_whats_2023} & $Y_{i,t}(0) = \alpha_i + \phi_t + \epsilon_{i,t}$ & \assref{ass:SWAS-staggered}, \assref{ass:no-x-y}, \assref{ass:no-x-dyn} & $\alpha_i$ & $ \underbrace{\E[g_{Y_t}(U_{Y_t})]}_{\phi_t} + \underbrace{g_{Y_t}(U_{Y_t}) - E[g_{Y_t}(U_{Y_t})]}_{\epsilon_{i,t}}$\\ %  & Discussion after Assumption 1 \\
[1em]
\textcite{caetano_differenceindifferences_2024} & $Y_{it}(0) = g_t(Z_i, X_{it}) + \eta_i  + e_{it}$ & \assref{ass:SWAS-staggered}, \assref{ass:x-d-ordering}, \assref{ass:no-d-x}, \assref{ass:no-x-dyn} & $\eta_i$ & $   \underbrace{g_{Y_t}(Z, X_{t}, U_{Y_t})}_{g_t(Z_i, X_{it}) + e_{it}}$\\ %  & Section on "Models that rationalize parallel trends assumptions" \\
[1em]
\textcite{ghanem_selection_2024} & $Y_{it} (0) = \alpha_i + \lambda_t + \varepsilon_{it}$ & \assref{ass:SWAS-staggered}, \assref{ass:no-x-y}, \assref{ass:no-x-dyn} &  $\alpha_i$  & $ \underbrace{\E[g_{Y_t}(U_{Y_t})]}_{\lambda_t} + \underbrace{g_{Y_t}(U_{Y_t}) - \E[g_{Y_t}(U_{Y_t})]}_{\varepsilon_{it}}$\\ %  & Equation (5) and section 3.3.5 (02/2026 version) \\
& $Y_{it} (0) = \alpha_i + \chi_t(X_{it}) +\lambda_t + \varepsilon_{it}$ & \assref{ass:SWAS-staggered}, \assref{ass:x-d-ordering}, \assref{ass:no-d-x}, \assref{ass:no-x-dyn} &  $\alpha_i$ & $\underbrace{g_{Y_t}(X_t, U_{Y_t}) }_{ \chi_t(X_{it}) + \lambda_t + \varepsilon_{it}}$\\ %   & Appendix D.2 \\
% & $Y_{it}(0) = \mu(X_{it}^\mu, \alpha_i^\mu, \varepsilon_{it}^\mu) + \lambda_t(X_{it}^\lambda, \alpha_i^\lambda, \varepsilon_{it}^\lambda)$  &  & $\mu(X_{it}^\mu, \alpha_i^\mu, \varepsilon_{it}^\mu)$, w/ $(X_{i1}^\mu,\varepsilon_{i1}^\mu) = (X_{i2}^\mu,\varepsilon_{i2}^\mu)$ & $\lambda_t(X_{it}^\lambda, \alpha_i^\lambda, \varepsilon_{it}^\lambda)$\\ % &   Assumption NSP-X \\
% \textcite{abadie_semiparametric_2005} & $Y_{i,t} = \delta_t + \alpha D_{i,t} + \eta_i + v_{i,t}$ & Equation (1)\\
%      & $Y_{i,t} = \mu + X_i' \pi_t + \tau \cdot D_{i,1} + \delta \cdot t + \alpha \cdot D_{i,t} + \varepsilon_{i,t}$ & Equation (8)\\
% \textcite{lechner_estimation_2011} & $\E[Y_t(0) \mid X = x, D = d] = \alpha + t \delta^0 + d \gamma + x \beta + t x \lambda^0 + dx \pi$ &  & $\underbrace{\E[\alpha(U,X)\mid X = x, D = d]}_{\alpha + d \gamma + x (\beta + d \pi)x} + \varepsilon^\alpha  $ & $ \underbrace{\E[g_{Y_t}(X,U_{Y_t})\mid X = x, D = d]}_{t\delta^0  + t x \lambda^0 } + \varepsilon^{g_{Y_t}} $\\ %  & Equation (3.8) \\
% & $\E_{U \mid D = d}\E[Y_t(0) \mid D = d, U = u]= \E[Y_t(0) \mid D = d] + \E_{U \mid D = d}f^d(U)$ & Equation (3.5) &  &  \\
 [1em]
 \textcite{borusyak_revisiting_2024} & $\E[Y_{it}(0)] = \alpha_i + \beta_t$ & \assref{ass:SWAS-staggered}, \assref{ass:no-x-y}, \assref{ass:no-x-dyn}  & $\alpha_i$ & $ \underbrace{\E[g_{Y_t}(U_{Y_t})]}_{\beta_t} + \underbrace{g_{Y_t}(U_{Y_t}) - \E[g_{Y_t}(U_{Y_t})]}_{\varepsilon_{it}}$\\ %  & Assumption 1 (Parallel Trends)  \\
 [1em]
 \textcite{deng_doubly_2026} & $Y_t(\infty) = f_t(Z,Z_t) + U + \varepsilon_t$ & \assref{ass:SWAS-staggered} & $U$ & $f_t(Z,Z_t) + \varepsilon_t$\\ %  & Motivation for Assumption 2 (Parallel Trends) \\
    % \textcite{gardner_twostage_2025} & $\E[Y_{it}(0) \mid \{ X_{it} \}_{t = 1}^T, G_i = g] = \lambda_i + X_{it}' \gamma$ & \assref{ass:SWAS-staggered}, \assref{ass:x-d-ordering}, \assref{ass:no-d-x}, \assref{ass:no-x-dyn} & $\underbrace{\E[\alpha(U) \mid \{ X_{t} \}_{t = 1}^T, G = g]}_{\lambda_i} + \varepsilon_{t}^\alpha$ & $\underbrace{\E[g_{Y_t}(X_{t}, U_{Y_t})\mid X_t]}_{X_{it}'\gamma} + \varepsilon_{t}^{g_{Y_t}}$\\ %  &Assumption 1   \\
\hline
\end{tabular}
\end{table}

\end{landscape}
\ref{ass:SWAS-staggered}
\newpage

\subsection{Beyond Staggered Treatment} \label{app:general-swas}

While it is not relevant in this paper, we state a single world additive separability for the general case without staggered treatment for potential future reference. The complication is now that the individual treatment effect may be different for every possible treatment sequence, including those starting as being treated in period 0 or those switching out of treatment.

\begin{assumption} \label{ass:SWAS-general}
    (single world additive separability - general) \\
For all $t = 0,...,\mathcal{T}$, the structural equation for $Y_t$ has the form
\begin{align}
    Y_t \defeq  f_{Y_t}(Pa(Y_t), U_{Y_t}) & =\alpha (U,C) + g_{Y_t}(Pa(Y_t) \setminus \{ U, \overline{D}_t \}, U_{Y_t}) + \sum_{s \in \overline{\mathcal{D}}_t \setminus \{\overline{0}_t\}} \mathbf{1}[\overline{D}_t = s] \cdot \tau_{t,s}(Pa(Y_t)\setminus \overline{D}_t, U_{Y_t})
\end{align}
where $U$ is a set of unobserved, time-invariant confounders, $C \defeq \left(\bigcap_{t = 0}^\mathcal{T} Pa(Y_t) \right) \setminus \{\overline{D}, U \}$ are all time-invariant parents of $Y_t$, and $\overline{\mathcal{D}}_t := \{0,1\}^{t+1}$ is the set of all possible treatment paths until $t$.
\end{assumption}

\subsection{Identification Proof of $ATT(g,t)$ in General Setting} \label{app:attgt-ident}

\subsubsection{Identification Proof of $ATT(g,t)$ - Never-Treated}

Assume that conditional independence (CIA) $\Delta Y_{g-1,t} (\overline{0}) \indep \underline{D}_{g} \mid \mathbf{Z}, \overline{D}_{g-1} = \overline{0}_{g-1}$ holds. Then,    \begin{align*}
       DiD_{g,t}(\mathbf{Z}) &  = \E[Y_t - Y_{g-1}\mid \overline{D}_{g-1} = \overline{0}_{g-1}, \underline{D}_g = \underline{1}_g]  \\
       & \quad - \E[\E[Y_t - Y_{g-1}\mid \mathbf{Z}, \overline{D} = \overline{0}]\mid \overline{D}_{g-1} = \overline{0}_{g-1}, \underline{D}_g = \underline{1}_g ] \\
       &  = \E[Y_t(\overline{0}_{g-1},\underline{1}_g) - Y_{g-1} (\overline{0}_{g-1}\textcolor{gray}{,\underline{1}_g})\mid \overline{D}_{g-1} = \overline{0}_{g-1}, \underline{D}_g = \underline{1}_g] \\
       & \quad - \E[\E[Y_t(\overline{0}) - Y_{g-1}(\overline{0}) \mid \mathbf{Z}, \overline{D} = \overline{0}]\mid \overline{D}_{g-1} = \overline{0}_{g-1}, \underline{D}_g = \underline{1}_g ] & (consistency) \\
       &  = \E[Y_t(\overline{0}_{g-1},\underline{1}_g) - Y_{g-1} (\overline{0}_{g-1}\textcolor{gray}{,\underline{1}_g})\mid \overline{D}_{g-1} = \overline{0}_{g-1}, \underline{D}_g = \underline{1}_g]  \\
       & \quad - \E[\E[Y_t(\overline{0}) - Y_{g-1}(\overline{0}) \mid \mathbf{Z}, \overline{D}_{g-1} = \overline{0}_{g-1}, \underline{D}_g = \underline{1}_g]\mid \overline{D}_{g-1} = \overline{0}_{g-1}, \underline{D}_g = \underline{1}_g ]  & (CIA) \\
       &  = \E[Y_t(\overline{0}_{g-1},\underline{1}_g) - Y_{g-1} (\overline{0}_{g-1}\textcolor{gray}{,\underline{1}_g})\mid \overline{D}_{g-1} = \overline{0}_{g-1}, \underline{D}_g = \underline{1}_g]  \\
       & \quad - \E[Y_t(\overline{0}) - Y_{g-1}(\overline{0}) \mid \overline{D}_{g-1} = \overline{0}_{g-1}, \underline{D}_g = \underline{1}_g ]  \\
       &  = \E[Y_t(\overline{0}_{g-1},\underline{1}_g) - Y_{g-1} (\overline{0}_{g-1}\textcolor{gray}{,\underline{1}_g})\mid \overline{D}_{g-1} = \overline{0}_{g-1}, \underline{D}_g = \underline{1}_g]  \\
       & \quad - \E[Y_t(\overline{0}) - Y_{g-1}(\overline{0}_{g-1}\textcolor{gray}{,\underline{0}_g}) \mid \overline{D}_{g-1} = \overline{0}_{g-1}, \underline{D}_g = \underline{1}_g ]  \\
       &  = \E[Y_t(\overline{0}_{g-1},\underline{1}_g) - Y_t(\overline{0}) \mid \overline{D}_{g-1} = \overline{0}_{g-1}, \underline{D}_g = \underline{1}_g ] \\
       & = ATT(g,t) \\
       if~ t<g & = \E[Y_t(\overline{0}_{g-1}\textcolor{gray}{,\underline{1}_g}) - Y_t(\overline{0}_{g-1}\textcolor{gray}{,\underline{0}_g}) \mid \overline{D}_{g-1} = \overline{0}_{g-1}, \underline{D}_g = \underline{1}_g ] = 0
    \end{align*}

\subsubsection{Identification Proof of $ATT(g,t)$ - Not-Yet-Treated} \label{app:attgt-ident-nyt}

Let $max(g,t) \leq s$ and assume that conditional independence (CIA-NYT)~$\Delta Y_{g-1,t} (\overline{0}) \indep \overline{D}_{g,s} \mid \mathbf{Z}, \overline{D}_{g-1} = \overline{0}_{g-1}$ holds. Then,    \begin{align*}
       DiD_{g,t}^{NYT}(\mathbf{Z}) &  = \E[Y_t - Y_{g-1}\mid \overline{D}_{g-1} = \overline{0}_{g-1}, \underline{D}_g = \underline{1}_g]  \\
       & \quad - \E[\E[Y_t - Y_{g-1}\mid \mathbf{Z}, \overline{D}_{s} = \overline{0}_{s}]\mid \overline{D}_{g-1} = \overline{0}_{g-1}, \underline{D}_g = \underline{1}_g ] \\
       &  = \E[Y_t(\overline{0}_{g-1},\underline{1}_g) - Y_{g-1} (\overline{0}_{g-1}\textcolor{gray}{,\underline{1}_g})\mid \overline{D}_{g-1} = \overline{0}_{g-1}, \underline{D}_g = \underline{1}_g] \\
       & \quad - \E[\E[Y_t(\overline{0}_{s},\textcolor{gray}{\underline{0}_{s+1}}) - Y_{g-1}(\overline{0}_{s},\textcolor{gray}{\underline{0}_{s+1}}) \mid \mathbf{Z}, \overline{D}_{s} = \overline{0}_{s}]\mid \overline{D}_{g-1} = \overline{0}_{g-1}, \underline{D}_g = \underline{1}_g ] & (consistency) \\
       &  = \E[Y_t(\overline{0}_{g-1},\underline{1}_g) - Y_{g-1} (\overline{0}_{g-1}\textcolor{gray}{,\underline{1}_g})\mid \overline{D}_{g-1} = \overline{0}_{g-1}, \underline{D}_g = \underline{1}_g]  \\
       & \quad - \E[\E[Y_t(\overline{0}) - Y_{g-1}(\overline{0}) \mid \mathbf{Z}, \overline{D}_{g-1} = \overline{0}_{g-1}, \overline{D}_{g,s} = \overline{1}_{g,s}]\mid \overline{D}_{g-1} = \overline{0}_{g-1}, \underline{D}_g = \underline{1}_g ]  & (CIA-NYT) \\
        &  = \E[Y_t(\overline{0}_{g-1},\underline{1}_g) - Y_{g-1} (\overline{0}_{g-1}\textcolor{gray}{,\underline{1}_g})\mid \overline{D}_{g-1} = \overline{0}_{g-1}, \underline{D}_g = \underline{1}_g]  \\
       & \quad - \E[\E[Y_t(\overline{0}) - Y_{g-1}(\overline{0}) \mid \mathbf{Z}, \overline{D}_{g-1} = \overline{0}_{g-1}, \underline{D}_g = \underline{1}_g]\mid \overline{D}_{g-1} = \overline{0}_{g-1}, \underline{D}_g = \underline{1}_g ]  & (\assref{ass:staggered-trmnt}) \\
       &  = \E[Y_t(\overline{0}_{g-1},\underline{1}_g) - Y_{g-1} (\overline{0}_{g-1}\textcolor{gray}{,\underline{1}_g})\mid \overline{D}_{g-1} = \overline{0}_{g-1}, \underline{D}_g = \underline{1}_g]  \\
       & \quad - \E[Y_t(\overline{0}) - Y_{g-1}(\overline{0}) \mid \overline{D}_{g-1} = \overline{0}_{g-1}, \underline{D}_g = \underline{1}_g ]  \\
       &  = \E[Y_t(\overline{0}_{g-1},\underline{1}_g) - Y_{g-1} (\overline{0}_{g-1}\textcolor{gray}{,\underline{1}_g})\mid \overline{D}_{g-1} = \overline{0}_{g-1}, \underline{D}_g = \underline{1}_g]  \\
       & \quad - \E[Y_t(\overline{0}) - Y_{g-1}(\overline{0}_{g-1}\textcolor{gray}{,\underline{0}_g}) \mid \overline{D}_{g-1} = \overline{0}_{g-1}, \underline{D}_g = \underline{1}_g ]  \\
       &  = \E[Y_t(\overline{0}_{g-1},\underline{1}_g) - Y_t(\overline{0}) \mid \overline{D}_{g-1} = \overline{0}_{g-1}, \underline{D}_g = \underline{1}_g ] \\
       & = ATT(g,t) \\
       if~ t<g & = \E[Y_t(\overline{0}_{g-1}\textcolor{gray}{,\underline{1}_g}) - Y_t(\overline{0}_{g-1}\textcolor{gray}{,\underline{0}_g}) \mid \overline{D}_{g-1} = \overline{0}_{g-1}, \underline{D}_g = \underline{1}_g ] = 0.
    \end{align*}

\newpage

\section{Proofs in Multiple Time Periods}

\subsection{Additional Notation Convention}

We make use of the following notational shortcut for convenience: $\overline{D}_t(\overline{0}_{t-1}) = \{D_1,D_2(0),...,D_t(\overline{0}_{t-1}) \}$ with $\overline{D}_1(\overline{0}_{0}) = D_1$ and $\overline{D}_0(\overline{0}_{0-1}) = \varnothing$.

\subsection{Auxiliary Lemmata}

\begin{lemma} \label{lemma:d_sep_neutral_controls}
    Let $\mathcal{G}(\mathbf{d})$ be a SWIG containing nodes $\mathbf{V}(\mathbf{d}) \cup \mathbf{d}$.
    $\mathbf{Z}, \mathbf{R}, \mathbf{N}$ are pairwise disjoint subsets of $\mathbf{V}(\mathbf{d})$. 
    Let $V \in \mathbf{V}(\mathbf{d})$ be a random node, $V \notin \mathbf{Z} \cup \mathbf{R} \cup \mathbf{N}$. $V$ is a sink. The set of its parents fulfills $Pa_{\mathcal{G}(\mathbf{d})}(V) \cap \mathbf{R} = \varnothing$ and all of its parents $V_p \in Pa_{\mathcal{G}(\mathbf{d})}(V)$ fulfill one of the following criteria:
\begin{itemize}
    \item[(a)]  $V_p \in \mathbf{Z}$. 
    \item[(b)]  $V_p$ is a fixed (split) node: $V_p \in \mathbf{d}$
    \item[(c)]  $V_p$ has no parents and all its children are sinks $S_j \notin \mathbf{R}$. In addition:
    \begin{equation*}
        S_j  \in \mathbf{Z} \cup \mathbf{N} \Rightarrow Pa_{\mathcal{G}(\mathbf{d})}(S_j) \setminus \{V_p\} \subseteq (\mathbf{Z} \cup \mathbf{d})
    \end{equation*}
\end{itemize}
    Then, 
\begin{equation}\label{eq:d_sep_min}
    V \indep_{\mathcal{G}(\mathbf{d})} \mathbf{R} \mid \mathbf{Z} , \mathbf{d}
\end{equation}
and
\begin{equation}\label{eq:d_sep_min_plus_neutral}
    V \indep_{\mathcal{G}(\mathbf{d})} \mathbf{R} \mid \mathbf{Z} ,\mathbf{d} , \mathbf{N}.
\end{equation}
\end{lemma}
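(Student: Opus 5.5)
We argue directly from the definition of $d$-separation (Definition \ref{def:dep}). Fix an arbitrary $R \in \mathbf{R}$ and an arbitrary path $P$ between $V$ and $R$ in $\mathcal{G}(\mathbf{d})$. We show that $P$ is blocked given $\mathbf{Z}\cup\mathbf{d}$, and then again given $\mathbf{Z}\cup\mathbf{d}\cup\mathbf{N}$.

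Because $V$ is a sink, $P$ must leave $V$ through an incoming edge, so $P = V \leftarrow V_p - \dots - R$ with $V_p \in Pa_{\mathcal{G}(\mathbf{d})}(V)$. Since $Pa_{\mathcal{G}(\mathbf{d})}(V)\cap\mathbf{R}=\varnothing$, $V_p \neq R$, and so $V_p$ is an interior node of $P$. We split into the three cases allowed for $V_p$.

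\textbf{Case (a).} If $V_p\in\mathbf{Z}$, then $V_p$ is an interior node with an outgoing edge toward $V$ on $P$. Hence it is a non-collider (chain or fork) that is conditioned on, and $P$ is blocked.

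\textbf{Case (b).} If $V_p\in\mathbf{d}$ is a fixed node, it has no incoming edges, so it is again a non-collider (fork) on $P$. It lies in the conditioning set, so $P$ is blocked.

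\textbf{Case (c).} If $V_p$ has no parents, its next neighbour on $P$ must be a child $S_j$, giving $V \leftarrow V_p \rightarrow S_j - \dots$. Since $S_j\notin\mathbf{R}$, $S_j$ is an interior node of $P$. Because $S_j$ is a sink, the continuation must be $V_p \rightarrow S_j \leftarrow W$ with $W\in Pa(S_j)\setminus\{V_p\}$, so $S_j$ is a collider. Being a sink, $S_j$ has no descendants.
\begin{itemize}
\item If $S_j\notin \mathbf{Z}$ (respectively $\notin \mathbf{Z}\cup\mathbf{N}$), the collider is unconditioned and has no conditioned descendants, so $P$ is blocked at $S_j$.
\item If $S_j$ is conditioned on, i.e.\ $S_j\in\mathbf{Z}\cup\mathbf{N}$, the side condition in (c) gives $W\in\mathbf{Z}\cup\mathbf{d}$. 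Now $W \neq R$, because $R\in\mathbf{R}$ is disjoint from $\mathbf{Z}$, and $\mathbf{R}\subseteq\mathbf{V}(\mathbf{d})$ is disjoint from $\mathbf{d}$. So $W$ is interior, and its edge $W\rightarrow S_j$ is outgoing, which makes $W$ a non-collider on $P$. As $W\in\mathbf{Z}\cup\mathbf{d}$ is conditioned on, $P$ is blocked at $W$.
\end{itemize}

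The argument for \eqref{eq:d_sep_min_plus_neutral} is identical. In cases (a) and (b), enlarging the conditioning set by $\mathbf{N}$ cannot unblock a non-collider that is already conditioned on. In case (c), the only places where $\mathbf{N}$ could matter are the colliders $S_j$, and these are exactly covered by the side condition, which is stated for $\mathbf{Z}\cup\mathbf{N}$.

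The main obstacle is case (c). We need to verify that (i) $S_j$ has no descendants that could open the collider, (ii) $W$ cannot coincide with an endpoint, and (iii) $W$ is genuinely a non-collider. Point (ii) uses disjointness of $\mathbf{R}$ from $\mathbf{Z}\cup\mathbf{d}$; points (i) and (iii) use that $S_j$ is a sink.
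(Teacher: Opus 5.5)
Your proof is correct and follows essentially the same route as the paper: every path has to leave the sink $V$ through a parent $V_p$. That path is blocked at $V_p$ in cases (a) and (b). In case (c) it is forced into the pattern $V \leftarrow V_p \rightarrow S_j \leftarrow W$, and is blocked either at the unconditioned sink collider $S_j$ or at the conditioned non-collider $W \in \mathbf{Z}\cup\mathbf{d}$. The differences are cosmetic: you handle the enlarged set $\mathbf{Z}\cup\mathbf{d}\cup\mathbf{N}$ directly rather than by contradiction, and you state explicitly why $W$ cannot be the endpoint in $\mathbf{R}$, namely that $\mathbf{R}$ is disjoint from both $\mathbf{Z}$ and $\mathbf{d}$.
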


\begin{proof}
    To show that (\ref{eq:d_sep_min}) holds, check all paths that may connect $V$ and $\mathbf{R}$. As $V$ is a sink all paths are $... V_p \rightarrow V$, i.e., $V_p \in Pa_{\mathcal{G}(\mathbf{d})}(V)$ are non-colliders. As $V_p \notin \mathbf{R}$, any path has to continue past $V_p$ and thus contain it as a non-collider.

    If $V_p$ is as in (a), the paths are blocked as $\mathbf{Z}$ is in the adjustment set.

    If $V_p$ as in (b), blocked as $V_p \in \mathbf{d}$ and $\mathbf{d}$ is in the conditioning set for the $d$-separation statement.

    If $V_p$ as in (c), blocked due to the sink $S_j$ if $S_j \notin \mathbf{Z}$. 
    To see this, note that any path connecting $V$ and $\mathbf{R}$ with $V_p$ as in (c) has to contain: 
    \begin{equation}\label{eq:path_parents_sinks}
        V \leftarrow V_p \rightarrow S_j \leftarrow V_j
    \end{equation}
    This is because (1) $V$ is a sink, (2) $V_p$ has no parents, (3) any child of $V_p$ is a sink $S_j \notin \mathbf{R}$, so the path has to continue with a collider $S_j$. 
    With $S_j \notin \mathbf{Z}$, the path is blocked.\footnote{Note that a collider $S_j$ can not be in the set $\mathbf{d}$, as fixed node contain no incoming edges by construction. In addition, as $S_j$ is a sink, it has no descendants that could be conditioned on.}
    If $S_j \in \mathbf{Z}$, then $V_j \in \mathbf{Z} \cup \mathbf{d}$. As $\mathbf{Z}$ and $\mathbf{R}$ are disjoint, $V_j \notin   \mathbf{R}$. The path has to continue beyond (\ref{eq:path_parents_sinks}) but is blocked due to the non-collider $V_j \in (Pa_{\mathcal{G}(\mathbf{d})}(S_j) \setminus \{V_p\}) \subseteq (\mathbf{Z} \cup \mathbf{d})$.

For (\ref{eq:d_sep_min_plus_neutral}), recall that (\ref{eq:d_sep_min}) holds but now suppose:
\begin{equation*}
     V \centernot \indep_{\! \! \mathcal{G}(\mathbf{d})} \mathbf{R} \mid \mathbf{Z} \cup \mathbf{d} \cup \mathbf{N}.
\end{equation*}
This can only hold if there is at least one path that is blocked given $\mathbf{Z} \cup \mathbf{d}$ but unblocked given $\mathbf{Z} \cup \mathbf{d} \cup \mathbf{N}$.
All paths contain the non-collider $V_p$, where $V_p$ fulfills one of (a), (b) or (c).
Suppose the unblocked path contains $V_p$ as in (a) or (b). $V_p$ is a non-collider and $V_p \in \mathbf{Z,d}$. The path is blocked. Contradiction.

Now, suppose the unblocked path contains the non-collider $V_p$ as in (c). The path is blocked given $\mathbf{Z,d}$.
If $S_j \notin \mathbf{Z} \cup \mathbf{R} \cup \mathbf{N}$, the path is blocked by the collider $S_j$.
If $S_j  \in \mathbf{Z} \cup \mathbf{N}$ then $(Pa_{\mathcal{G}(\mathbf{d})}(S_j) \setminus \{V_p\}) \subseteq (\mathbf{Z} \cup \mathbf{d})$. The path is blocked due to the non-collider $V_j \in \mathbf{Z} \cup \mathbf{d}$.
Hence every path remains blocked after adding $\mathbf{N}$, contradicting the supposition.

\end{proof}

\begin{lemma} \label{lemma:D_sequence_contraction}
    Under consistency and Assumption \assref{ass:staggered-trmnt}, for $t''\geq t'$:
\begin{align*}
    \Delta Y_{g-1,t}(\overline{0}) & \indep D_{s}(\overline{0}_{s-1}) \mid Z, \overline{D}_{s-1}(\overline{0}_{s-2}), \ \forall \  1 \leq s \leq \mathcal{T}  \\
    &\Rightarrow \Delta Y_{g-1,t}(\overline{0}) \indep \overline{D}_{t',t''} \mid Z, \overline{D}_{t'-1} = \overline{0}_{t'-1}. 
\end{align*}   

and as a special case:

\begin{equation*}
    \Delta Y_{g-1,t}(\overline{0}) \indep \underline{D}_{t'} \mid Z, \overline{D}_{t'-1} = \overline{0}_{t'-1}.
\end{equation*}

\end{lemma}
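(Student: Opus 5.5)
The plan is to convert each single-period independence involving a potential treatment into one involving observed treatments, and then chain these statements together by the contraction property of conditional independence. This generalizes the $T=3$ argument around \eqref{eq:2x3-cia1}--\eqref{eq:2x3-joint-indep}. Write $W:=\Delta Y_{g-1,t}(\overline{0})$. Because $W$ is the same random variable in every statement, only the treatment side and the conditioning event change.

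\textbf{Step 1 (from potential to observed treatments).} Fix $s$. Condition the hypothesis $W \indep D_s(\overline{0}_{s-1}) \mid Z, \overline{D}_{s-1}(\overline{0}_{s-2})$ on the event $\overline{D}_{s-1}(\overline{0}_{s-2})=\overline{0}_{s-1}$. On this event, consistency applied recursively (first $D_2(0)=D_2$ given $D_1=0$, and so on) gives $D_r(\overline{0}_{r-1})=D_r$ for all $r\le s$. It also makes the event itself equal to $\overline{D}_{s-1}=\overline{0}_{s-1}$. Hence
\begin{equation*}
W \indep D_s \mid Z, \overline{D}_{s-1}=\overline{0}_{s-1}.
\end{equation*}
Next, extend this to conditioning on the whole vector $\overline{D}_{s-1}$ restricted to $\overline{D}_{t'-1}=\overline{0}_{t'-1}$. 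Under Assumption \assref{ass:staggered-trmnt}, on any event $\overline{D}_{s-1}=\overline{d}$ with $\overline{d}\neq \overline{0}$ we have $D_{s-1}=1$ and thus $D_s=1$. There $D_s$ is degenerate, so the independence holds trivially. Together these give $W \indep D_s \mid Z, \overline{D}_{t'-1}=\overline{0}_{t'-1}, \overline{D}_{t',s-1}$ for every $s\ge t'$. This mirrors \eqref{eq:2x3-cia4}$\Rightarrow$\eqref{eq:2x3-cia5}.

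\textbf{Step 2 (contraction).} Induct on $s$ from $t'$ to $t''$. The base case $s=t'$ is Step 1 with an empty $\overline{D}_{t',t'-1}$. For the induction step, suppose $W \indep \overline{D}_{t',s-1} \mid Z, \overline{D}_{t'-1}=\overline{0}_{t'-1}$ holds. Step 1 gives $W \indep D_s \mid Z, \overline{D}_{t'-1}=\overline{0}_{t'-1}, \overline{D}_{t',s-1}$. Contraction then yields $W \indep \overline{D}_{t',s} \mid Z, \overline{D}_{t'-1}=\overline{0}_{t'-1}$. Stopping at $s=t''$ gives the first claim. The special case follows by taking $t''=\mathcal{T}$, since $\underline{D}_{t'}=\overline{D}_{t',\mathcal{T}}$.

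\textbf{Main obstacle.} The delicate part is the bookkeeping in Step 1. I must justify that conditioning the hypothesis on a particular value of $\overline{D}_{s-1}(\overline{0}_{s-2})$ is legitimate: a conditional independence given a random vector implies it given each positive-probability value of that vector. I must also apply consistency correctly, in the form that $D_r(\overline{0}_{r-1})=D_r$ on $\{\overline{D}_{r-1}=\overline{0}_{r-1}\}$, so that potential and observed treatments coincide jointly with $W$ on that event. Finally, zero-probability cells and overlap need care; I would handle them by the paper's standing convention that overlap holds.
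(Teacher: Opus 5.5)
Your proposal is correct and follows essentially the same route as the paper: recursive consistency on $\{\overline{D}_{s-1}=\overline{0}_{s-1}\}$ turns each single-period potential-treatment independence into an observed one. Staggered treatment then makes $D_s$ degenerate on every other history, so you may condition on $\overline{D}_{t',s-1}$, and induction with contraction assembles the joint statement. The only difference is cosmetic: you start the induction at $s=t'$, whereas the paper's base case is the two-period block $\overline{D}_{t',t'+1}$, so your version covers the case $t''=t'$ directly.
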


\begin{proof}
    Proof by induction. For the base step, start with:
\begin{align}
     \Delta Y_{g-1,t}(\overline{0})  &\indep D_{t'}(\overline{0}_{t'-1}) \mid Z, \overline{D}_{t'-1}(\overline{0}_{t'-2}) & \notag \\
    \Rightarrow \Delta Y_{g-1,t}(\overline{0})  &\indep D_{t'}(\overline{0}_{t'-1}) \mid Z, \overline{D}_{t'-1} = \overline{0}_{t'-1} & (cons.) \notag \\
    \Rightarrow \Delta Y_{g-1,t}(\overline{0})  &\indep D_{t'} \mid Z, \overline{D}_{t'-1} = \overline{0}_{t'-1} & (cons.) \label{eq:induc_base_t}
\end{align}
and
\begin{align}
     \Delta Y_{g-1,t}(\overline{0})  &\indep D_{t'+1}(\overline{0}_{t'}) \mid Z, \overline{D}_{t'}(\overline{0}_{t'-1}) & \notag \\
    \Rightarrow \Delta Y_{g-1,t}(\overline{0})  &\indep D_{t'+1}(\overline{0}_{t'}) \mid Z, \overline{D}_{t'} = \overline{0}_{t'} & (cons.) \notag \\
    \Rightarrow \Delta Y_{g-1,t}(\overline{0})  &\indep D_{t'+1} \mid Z, \overline{D}_{t'} = \overline{0}_{t'} & (cons.)
\end{align}

In addition, for $D_{t'} = 1$, $D_{t'+1}$ is a constant ($=1$). Thus,

\begin{equation*}
    \Delta Y_{g-1,t}(\overline{0})  \indep D_{t'+1} \mid Z, \overline{D}_{t'-1} = \overline{0}_{t'-1}, D_{t'} = 1
\end{equation*}

also holds. This implies

\begin{equation}
    \Delta Y_{g-1,t}(\overline{0})  \indep D_{t'+1} \mid Z, \overline{D}_{t'-1} = \overline{0}_{t'-1}, D_{t'} \label{eq:induc_base_tp1}
\end{equation}

and by contraction: 

\begin{equation}
    (\ref{eq:induc_base_t}) \land (\ref{eq:induc_base_tp1}) \Rightarrow \Delta Y_{g-1,t}(\overline{0})  \indep \overline{D}_{t',t'+1} \mid Z, \overline{D}_{t'-1} = \overline{0}_{t'-1}
\end{equation}

Induction Step. For $t'' > t'$, take: 

\begin{equation}
    \Delta Y_{g-1,t}(\overline{0})  \indep \overline{D}_{t',t''} \mid Z, \overline{D}_{t'-1} = \overline{0}_{t'-1} \label{eq:induc_first}
\end{equation}

and, by consistency:

\begin{align}
     \Delta Y_{g-1,t}(\overline{0}) & \indep D_{t''+1}(\overline{0}_{t''}) \mid Z, \overline{D}_{t''}(\overline{0}_{t''-1}) \notag \\
     \Rightarrow & \Delta Y_{g-1,t}(\overline{0}) \indep D_{t''+1} \mid Z, \overline{D}_{t''} = \overline{0}_{t''}
\end{align}

Now, write $\overline{D}_{t''} = \overline{0}_{t''}$ as $\overline{D}_{t'-1} = \overline{0}_{t'-1}, \overline{D}_{t',t''} = \overline{0}_{t',t''}$ and note that if any of $D_{t'},...,D_{t''}$ is unequal to zero, then $D_{t''+1} = 1$ by staggered adoption, and thus independent of $\Delta Y_{g-1,t}(\overline{0})$. Thus:

\begin{equation}
    \Delta Y_{g-1,t}(\overline{0}) \indep D_{t''+1} \mid Z, \overline{D}_{t'-1} = \overline{0}_{t'-1}, \overline{D}_{t',t''} .\label{eq:induc_sec}
\end{equation}

By contraction: 

\begin{equation}
    (\ref{eq:induc_first}) \land (\ref{eq:induc_sec}) \Rightarrow  \Delta Y_{g-1,t}(\overline{0})  \indep \overline{D}_{t',t''+1} \mid Z, \overline{D}_{t'-1} = \overline{0}_{t'-1}
\end{equation}

for arbitrary $t''+1\geq t'$.

So, in particular:

\begin{equation*}
    \Delta Y_{g-1,t}(\overline{0})  \indep \underline{D}_{t'} \mid Z, \overline{D}_{t'-1} = \overline{0}_{t'-1}
\end{equation*}

\end{proof}

\subsection{Proofs of the Results in the Main Text}\label{sec:proofs_porps_multiple}

We state and proof the following Lemma, which can be used to show the Propositions in the main text:

\begin{lemma}\label{lemma:parent_indep}
    Consider a SCM $\mathcal{M} \in \mathfrak{M}$ and corresponding DAG $\mathcal{G}$. Let $\mathcal{G}(\overline{0}_m)$ be the SWIG for the intervention $\overline{D}_m = \overline{0}_m$ constructed in step 3 of Procedure \ref{proc:d-swig}. 
    % Additionally, let the following statements hold:
    % \begin{itemize}
    %     \item No outcome dynamics: $Y_t$ are sinks, i.e., $Desc_{\mathcal{G}}(Y_t) = \varnothing$
    %     \item Single world additive separability: Assumption XXX
    %     \item The distribution $P(\mathbf{V}(\overline{0}_m))$) satisfies the Markov property \parencite[][or Definition \ref{def:markov} in the Appendix]{kiiveri_recursive_1984} with respect to the subgraph $(\mathcal{G}(\overline{0}_m))_{\mathbf{V}(\overline{0}_m)}$ of $\mathcal{G}(\overline{0}_m)$ that is obtained by removing all fixed nodes $\overline{0}_m$.
    % \end{itemize}
%
Let $\mathbf{B} \subseteq (\overline{X}(\overline{0}_m) \cup \overline{D} (\overline{0}_m))$ and $\mathbf{N} \subseteq (\overline{X}(\overline{0}_m)) \cup \overline{D} (\overline{0}_m)))$.
In addition: $\mathbf{B}$, $\mathbf{N}$ and $Pa_{\mathcal{G}(\overline{0}_m))}(Y_{g-1}(\overline{0}_m))) \cup Pa_{\mathcal{G}(\overline{0}_m))}(Y_{t}(\overline{0}_m)))$ are pairwise disjoint. Under Assumptions \assref{ass:SWAS-staggered} and \assref{ass:no-y-dyn} the following conditional independence holds for all $m\geq max( g-1,t)$ and $g,t \geq 1$:
\begin{equation}
    \Delta Y_{g-1,t} (\overline{0}_m) \indep \mathbf{B} \mid Pa_{\mathcal{G}(\overline{0}_m)}(Y_{g-1}(\overline{0}_m)) \setminus \{U,U_{Y_{g-1}}, \overline{0}_m\}, Pa_{\mathcal{G}(\overline{0}_m)}(Y_{t}(\overline{0}_m)) \setminus \{U,U_{Y_{t}}, \overline{0}_m\}, \textcolor{gray}{\mathbf{N}}.
\end{equation}

where the gray $\textcolor{gray}{\mathbf{N}}$ in the conditioning set indicates that it is optional. Without further restrictions, no member in $Pa_{\mathcal{G}(\overline{0}_m)}(Y_{g-1}(\overline{0}_m)) \setminus \{U,U_{Y_{g-1}}, \overline{0}_m\}, Pa_{\mathcal{G}(\overline{0}_m)}(Y_{t}(\overline{0}_m)) \setminus \{U,U_{Y_{t}}, \overline{0}_m\}$ can be removed from the conditioning set. 
    
\end{lemma}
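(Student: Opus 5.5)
The plan is to build the $\Delta$-SWIG for the difference node $\Delta Y_{g-1,t}(\overline{0}_m)$ from $\mathcal{G}(\overline{0}_m)$ via Procedure \ref{proc:d-swig}, with $U_{Y_{g-1}}$ and $U_{Y_t}$ drawn explicitly. Theorem \ref{thm:Markov} then lets me read off the independence from $d$-separation. The core of the argument is Lemma \ref{lemma:d_sep_neutral_controls}, applied with $V = \Delta Y_{g-1,t}(\overline{0}_m)$, $\mathbf{R} = \mathbf{B}$, $\mathbf{Z}$ equal to the stated parent set, and $\mathbf{N}$ the optional neutral set.

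First I determine the parents of the difference node. Under Assumption \assref{ass:SWAS-staggered} and $m \geq \max(g-1,t)$, both level nodes are never-treated potential outcomes: $Y_s(\overline{0}_m) = \alpha(U,C) + g_{Y_s}(\cdot, U_{Y_s})$ for $s \in \{g-1,t\}$, since the treatment term vanishes once $\overline{D}_s$ is fixed to zero. The $\alpha(U,C)$ terms cancel in the difference, so by the edge-cancelling rule of Appendix \ref{app:edge-cancel} the difference node has no edge from $U$. Its parents are therefore (i) the non-$U$, non-exogenous parents of the two level nodes, which form exactly the conditioning set $\mathbf{Z}$; (ii) possibly fixed nodes in $\overline{0}_m$; and (iii) $U_{Y_{g-1}}$ and $U_{Y_t}$. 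By Assumption \assref{ass:no-y-dyn} the difference node is a sink, and so are the level nodes $Y_s(\overline{0}_m)$.

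Next I check conditions (a)--(c) of Lemma \ref{lemma:d_sep_neutral_controls}.
\begin{itemize}
\item Parents of type (i) satisfy (a).
\item Parents of type (ii) satisfy (b).
\item For type (iii), $U_{Y_s}$ has no parents, and its only children are $Y_s(\overline{0}_m)$ and the difference node(s). All of these are sinks, and none lies in $\mathbf{B}$, because $\mathbf{B}$ contains only $X$ and $D$ nodes.
\end{itemize}
The side condition in (c) applies only if some such sink lies in $\mathbf{Z}\cup\mathbf{N}$. The level nodes are outcomes and so are never in $\mathbf{Z}\cup\mathbf{N}$ (whose elements are covariate or treatment nodes), which makes the condition vacuous. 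The disjointness of $\mathbf{B}$ from the parent sets gives $Pa(V)\cap\mathbf{R}=\varnothing$. Lemma \ref{lemma:d_sep_neutral_controls} then yields $d$-separation both with and without $\mathbf{N}$, and Theorem \ref{thm:Markov} turns this into the stated conditional independence.

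For the minimality claim ("no member can be removed"), I construct for each $W \in \mathbf{Z}$ a model in $\mathfrak{M}$ in which dropping $W$ leaves an open path. The candidate path is $\mathbf{B}\ni D_j \leftarrow U \rightarrow W \rightarrow \Delta Y_{g-1,t}(\overline{0}_m)$: without further restrictions, $U$ may cause every covariate and every treatment, and $W$ enters only one level's $g_{Y_s}$, so its edge into the difference node does not cancel. Because $U$ is unobserved and $W$ is the only non-collider available on this path, removing $W$ leaves the path open, and the models in which $d$-connection yields genuine dependence are the generic ones.

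I expect the main obstacle to be making this minimality step rigorous. One issue is whether a parent $W$ shared by both levels could have its edge cancel; I will handle this by choosing distinct $g_{Y_{g-1}}$ and $g_{Y_t}$ (when $g-1=t$ the difference is trivially zero and the claim is vacuous). The other issue is handling parents that are themselves $D$-nodes or potential covariates, which requires checking that the $U$-edge construction is permitted within $\mathfrak{M}$.
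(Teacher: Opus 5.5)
Your proposal is correct and follows the paper's proof essentially step for step. Like the paper, you apply Lemma \ref{lemma:d_sep_neutral_controls} with $V=\Delta Y_{g-1,t}(\overline{0}_m)$, $\mathbf{R}=\mathbf{B}$ and $\mathbf{Z}$ the parent set, split the difference node's parents into the conditioning set, fixed nodes, and $U_{Y_{g-1}},U_{Y_t}$ (the latter handled through the sink condition (c) using \assref{ass:no-y-dyn}), and conclude with Theorem \ref{thm:Markov}. Your minimality argument via the explicit path $D_j \leftarrow U \rightarrow W \rightarrow \Delta Y_{g-1,t}(\overline{0}_m)$ is more concrete than the paper's, which only asserts that some SCM in $\mathfrak{M}$ can be constructed in which the removed parent is $d$-connected to $\mathbf{B}$.
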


\begin{proof}
    Let $\mathcal{G}_{\Delta}(\overline{0}_m)$ denote a $\Delta$-SWIG obtained via Procedure \ref{proc:d-swig} containing the Difference node $\Delta Y_{g-1,t}(\overline{0}_m) \defeq Y_{t}(\overline{0}_m) - Y_{g-1}(\overline{0}_m)$.

    We show that Lemma \ref{lemma:d_sep_neutral_controls} applies with $V = \Delta Y_{g-1,t}(\overline{0})$, $\mathbf{R} = \mathbf{B}$, $\mathbf{Z} = Pa_{\mathcal{G}(\overline{0}_m)}(Y_{g-1}(\overline{0}_m)) \setminus \{U,U_{Y_{g-1}}, \overline{0}_m\} \cup Pa_{\mathcal{G}(\overline{0}_m)}(Y_{t}(\overline{0}_m)) \setminus \{U,U_{Y_{t}}, \overline{0}_m\}$, $\mathbf{d} = \overline{0}_m$, $\mathbf{N} = \mathbf{N}$.

    First, note that the following properties stated in Lemma \ref{lemma:d_sep_neutral_controls} are all fulfilled:

    \begin{itemize}
        \item $\mathbf{Z} = Pa_{\mathcal{G}(\overline{0}_m)}(Y_{g-1}(\overline{0}_m)) \setminus \{U,U_{Y_{g-1}}, \overline{0}_m\} \cup Pa_{\mathcal{G}(\overline{0}_m)}(Y_{t}(\overline{0}_m)) \setminus \{U,U_{Y_{t}}, \overline{0}_m\}$, $\mathbf{R} = \mathbf{B}$,  $\mathbf{N} = \mathbf{N}$ are pairwise disjoint subsets as stated in the Lemma.
        \item By Procedure \ref{proc:d-swig}, $\Delta Y_{g-1,t}(\overline{0}_m)$ is a sink, it is not in $\mathbf{Z}$ (it can not be as it is not in $\mathcal{G} (\overline{0}_m)$). It is not in $\mathbf{N}$ or $\mathbf{B}$ (as the latter two are by definition $\subseteq (\overline{X}(\overline{0}_m) \cup \overline{D} (\overline{0}_m))$.
        \item $\mathbf{R} = \mathbf{B} \cap  Pa_{\mathcal{G}_{\Delta}(\overline{0}_m)} (\Delta Y_{g-1,t}(\overline{0}_m)) = \varnothing$ as $Pa_{\mathcal{G}_{\Delta}(\overline{0}_m)} (\Delta Y_{g-1,t}(\overline{0}_m)) \subseteq Pa_{\mathcal{G}(\overline{0}_m)}(Y_{g-1}(\overline{0}_m)) \cup Pa_{\mathcal{G}(\overline{0}_m)}(Y_{t}(\overline{0}_m))$ and the Lemma text states $\mathbf{B} \cap (Pa_{\mathcal{G}(\overline{0}_m)}(Y_{g-1}(\overline{0}_m)) \cup Pa_{\mathcal{G}(\overline{0}_m)}(Y_{t}(\overline{0}_m))) = \varnothing$.
    \end{itemize}

    Under Assumption \assref{ass:SWAS-staggered}, as $\mathcal{G}_{\Delta}(\overline{0}_m)$ is obtained via Procedure \ref{proc:d-swig} and $m \geq t,g-1$, the functional dependency of $\Delta Y_{g-1,t}(\overline{0}_m)$ on $U$ cancels and thus the corresponding edge can be removed:
    \begin{align*}
        Pa_{\mathcal{G}_{\Delta}(\overline{0}_m)} (\Delta Y_{g-1,t}(\overline{0}_m)) = &(Pa_{\mathcal{G}(\overline{0}_m)}(Y_{g-1}(\overline{0}_m)) \cup Pa_{\mathcal{G}(\overline{0}_m)}(Y_{t}(\overline{0}_m)) ) \setminus \{U\} \\
         = &\underbrace{Pa_{\mathcal{G}(\overline{0}_m)}(Y_{g-1}(\overline{0}_m)) \setminus \{U,U_{Y_{g-1}}, \overline{0}_m\} \cup Pa_{\mathcal{G}(\overline{0}_m)}(Y_{t}(\overline{0}_m)) \setminus \{U,U_{Y_{t}}, \overline{0}_m\}}_{\textbf{(a)}} \\
         &\cup \underbrace{(\overline{0}_m \cap (Pa_{\mathcal{G}(\overline{0}_m)}(Y_{g-1}(\overline{0}_m)) \cup Pa_{\mathcal{G}(\overline{0}_m)}(Y_{t}(\overline{0}_m)) ))}_{(b), \subseteq \overline{0}_m} \cup \underbrace{\{U_{Y_{g-1}}, U_{Y_{t}}\}}_{(c)}
    \end{align*}    

It suffices to show that all these parents fulfill one of the conditions in Lemma \ref{lemma:d_sep_neutral_controls}.

For parts (a) and (b) the mapping to Lemma \ref{lemma:d_sep_neutral_controls} is immediate. 

For (c), note that by Procedure \ref{proc:d-swig} and the facts that $Pa_\mathcal{G}(U_{Y_t}) = \varnothing$ and $Children_{\mathcal{G}}(U_{Y_t}) = \{Y_t\}$, all descendants of $U_{Y_{g-1}}, U_{Y_{t}}$ in $\mathcal{G}_{\Delta}(\overline{0}_m)$ are $Y_{g-1}(\overline{0}_m), Y_{t}(\overline{0}_m)$ or additional difference nodes. 
Under Assumption \assref{ass:no-y-dyn}, all of these nodes are sinks. Let $S$ denote any of these nodes. $S \notin \mathbf{B}$, as $\mathbf{B} \subseteq (\overline{X}(\overline{0}) \cup \overline{D} (\overline{0}))$.
In addition, note that:
\begin{align*}
    \mathbf{Z} \cup \mathbf{N} = Pa_{\mathcal{G}(\overline{0}_m)}(Y_{g-1}(\overline{0}_m)) \setminus \{U,U_{Y_{g-1}}, \overline{0}_m\} \cup Pa_{\mathcal{G}(\overline{0}_m)}(Y_{t}(\overline{0}_m)) \setminus \{U,U_{Y_{t}}, \overline{0}_m\} \cup \mathbf{N} \subseteq (\overline{X}(\overline{0}) \cup \overline{D} (\overline{0})).
\end{align*}

which follows from noticing that the sinks $Y_t(\overline{0}_m)$ can never be parents of any other node. Thus, the only nodes that can be parents of $Y_{g-1}(\overline{0}_m), Y_{t}(\overline{0}_m)$ after excluding $U, U_{Y_{t}}, U_{Y_{g-1}}, \overline{0}_m$, are $\overline{X}(\overline{0}_m)$.

This implies $S \notin \mathbf{Z} \cup \mathbf{N}$. So, all parents of $\Delta Y_{g-1,t}(\overline{0}_m)$ fulfill one of the conditions in Lemma \ref{lemma:d_sep_neutral_controls}.

Thus, by Lemma \ref{lemma:d_sep_neutral_controls}:
\begin{equation}\label{eq:d-sep_parents_res}
     \Delta Y_{g-1,t} (\overline{0}_m) \indep_{\mathcal{G}_{\Delta}(\overline{0}_m)} \mathbf{B} \mid Pa_{\mathcal{G}(\overline{0}_m)}(Y_{g-1}(\overline{0}_m)) \setminus \{U,U_{Y_{g-1}}, \overline{0}_m\}, Pa_{\mathcal{G}(\overline{0}_m)}(Y_{t}(\overline{0}_m)) \setminus \{U,U_{Y_{t}}, \overline{0}_m\}, \overline{0}_m, \textcolor{gray}{\mathbf{N}}.
\end{equation}

To see that, without further restrictions, no node in $Pa_{\mathcal{G}(\overline{0}_m)}(Y_{g-1}(\overline{0}_m)) \setminus \{U,U_{Y_{g-1}}, \overline{0}_m\}, Pa_{\mathcal{G}(\overline{0}_m)}(Y_{t}(\overline{0}_m)) \setminus \{U,U_{Y_{t}}, \overline{0}_m\}$ can be left out of the conditioning set, suppose that a node $V_p \in (Pa_{\mathcal{G}(\overline{0}_m)}(Y_{g-1}(\overline{0}_m)) \setminus \{U,U_{Y_{g-1}}, \overline{0}_m\} \cup Pa_{\mathcal{G}(\overline{0}_m)}(Y_{t}(\overline{0}_m)) \setminus \{U,U_{Y_{t}}, \overline{0}_m\}$ is removed from the conditioning set. Without further restrictions, $V_p$ might be $d$-connected to $\mathbf{B}$ given the remaining conditioning set, i.e., one can construct a SCM $\mathcal{M} \in \mathfrak{M}$ under \assref{ass:SWAS-staggered} and \assref{ass:no-y-dyn} such that $V_p$ is $d$-connected to some node in $\mathbf{B}$. The fact that $V_p \in  Pa_{\mathcal{G}_{\Delta}(\overline{0}_m)} (\Delta Y_{g-1,t}(\overline{0}_m))$ implies $\Delta Y_{g-1,t}(\overline{0}_m)$ is then not $d$-separated from $\mathbf{B}$.

By Theorem \ref{thm:Markov}, (\ref{eq:d-sep_parents_res}) implies: 

\begin{equation*}
         \Delta Y_{g-1,t} (\overline{0}_m) \indep \mathbf{B} \mid Pa_{\mathcal{G}(\overline{0}_m)}(Y_{g-1}(\overline{0}_m)) \setminus \{U,U_{Y_{g-1}}, \overline{0}_m\}, Pa_{\mathcal{G}(\overline{0}_m)}(Y_{t}(\overline{0}_m)) \setminus \{U,U_{Y_{t}}, \overline{0}_m\}, \textcolor{gray}{\mathbf{N}}.
\end{equation*}

\end{proof}

\begin{lemma}\label{lemma:min-z0_general}
    Consider SCM $\mathcal{M} \in \mathfrak{M}$ and let $\mathcal{G}(\overline{0})$ be the SWIG constructed in step 3 of Procedure \ref{proc:d-swig}.
Then, under Assumptions \assref{ass:SWAS-staggered}, \assref{ass:no-y-dyn} and consistency the following conditional independence holds for all $g,t$ and $t'\geq g$:
    \begin{align} \label{eq:cia-general-generalized-sequence-version}
    \Delta Y_{g-1,t} (\overline{0}) \indep \overline{D}_{g,t'} \mid~ & \underbrace{Pa_{\mathcal{G}(\overline{0})}(Y_{g-1}(\overline{0})) \setminus \{U,U_{Y_{g-1}}, \overline{0}\}, Pa_{\mathcal{G}(\overline{0})}(Y_{t}(\overline{0})) \setminus \{U,U_{Y_{t}}, \overline{0}\}}_{\mathbf{S}_{g,t}(\overline{0})}, \overline{D}_{g-1} = \overline{0}_{g-1}
\end{align}
Without further restrictions there is no $\mathbf{S}(\overline{0})$ such that also $\Delta Y_{g-1,t} (\overline{0}) \indep  \overline{D}_{g,t'} \mid \mathbf{S}(\overline{0}), \overline{D}_{g-1} = \overline{0}_{g-1}$ can be deduced but $\mathbf{S}_{g,t}(\overline{0}) := Pa_{\mathcal{G}(\overline{0})}(Y_{g-1}(\overline{0})) \setminus \{U,U_{Y_{g-1}}, \overline{0}\}, Pa_{\mathcal{G}(\overline{0})}(Y_{t}(\overline{0})) \setminus \{U,U_{Y_{t}}, \overline{0}\}$ is not a subset of $\mathbf{S}(\overline{0})$.
\end{lemma}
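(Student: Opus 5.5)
The plan is to combine Lemma \ref{lemma:parent_indep} with Lemma \ref{lemma:D_sequence_contraction}. Abbreviate $\mathbf{S} := \mathbf{S}_{g,t}(\overline{0})$ and work in the SWIG $\mathcal{G}(\overline{0})$ with the full intervention, so that $m = \mathcal{T} \geq \max(g-1,t)$. Lemma \ref{lemma:D_sequence_contraction} asks for the single-period statements
\[
\Delta Y_{g-1,t}(\overline{0}) \indep D_s(\overline{0}_{s-1}) \mid \mathbf{S}, \overline{D}_{s-1}(\overline{0}_{s-2}) \quad \text{for all } 1 \leq s \leq \mathcal{T}.
\]
Given these, its conclusion with $t' = g$ and $t'' = t'$ (in the notation of the present statement) is exactly \eqref{eq:cia-general-generalized-sequence-version}.

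Each single-period statement will come from Lemma \ref{lemma:parent_indep} with $m = \mathcal{T}$, $\mathbf{B} = \{D_s(\overline{0}_{s-1})\}$ and $\mathbf{N} = \overline{D}_{s-1}(\overline{0}_{s-2})$. For this choice to be admissible, $\mathbf{B}$, $\mathbf{N}$ and the parents of $Y_{g-1}(\overline{0})$ and $Y_t(\overline{0})$ must be pairwise disjoint. In $\mathcal{G}(\overline{0})$ every treatment node is split. Hence the outcome nodes receive edges only from the fixed parts $\overline{0}$, and never from the random halves $D_s(\overline{0}_{s-1})$. The parents of $Y_{g-1}(\overline{0})$ and $Y_t(\overline{0})$ therefore consist of $U$, $U_{Y}$, fixed nodes and potential covariates only, and disjointness follows. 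The random treatment nodes $\mathbf{B}$ and $\mathbf{N}$ are trivially disjoint from each other.

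Lemma \ref{lemma:parent_indep} then delivers the independence conditional on $\mathbf{S}$ with $\mathbf{N}$ optional, and we take the version that includes $\mathbf{N}$. Passing from $D_s(\overline{0}_{s-1})$ and the conditioning sequence $\overline{D}_{s-1}(\overline{0}_{s-2})$ to their observable counterparts uses consistency. Evaluated at $\overline{D}_{s-1}=\overline{0}_{s-1}$, these potential treatments coincide with observed ones, and this step is already internal to Lemma \ref{lemma:D_sequence_contraction}.

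The minimality claim transfers from the non-removability part of Lemma \ref{lemma:parent_indep}. Suppose some $V_p \in \mathbf{S}$ is missing from a candidate set $\mathbf{S}(\overline{0})$. Since $V_p$ is a parent of the difference node, one can construct an $\mathcal{M}\in\mathfrak{M}$ satisfying \assref{ass:SWAS-staggered} and \assref{ass:no-y-dyn} in which $V_p$ is $d$-connected to some $D_s(\overline{0}_{s-1})$ with $s\ge g$. A simple choice is an edge $V_p \rightarrow D_g$, or a path $V_p \leftarrow U \rightarrow D_g$, which is always present under unobserved confounding. In that model $\Delta Y_{g-1,t}(\overline{0})$ and $\overline{D}_{g,t'}$ are not $d$-separated, so the independence cannot be deduced.

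I expect the main obstacle to be this last step. Read literally, the graph implies only $d$-connection, not dependence. The claim therefore has to be phrased, as in Lemma \ref{lemma:parent_indep}, as ``cannot be deduced'' graphically. A second point also needs care. After conditioning on $\overline{D}_{g-1}=\overline{0}_{g-1}$, the open path must not be blocked by the fixed nodes and must persist through the consistency step. To guarantee this I would place the offending edge directly into $D_g$ rather than into an earlier treatment.
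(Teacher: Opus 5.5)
Your proposal is correct and follows the paper's proof essentially verbatim. Both apply Lemma~\ref{lemma:parent_indep} with $m=\mathcal{T}$, $\mathbf{B}=\{D_s(\overline{0}_{s-1})\}$ and $\mathbf{N}=\overline{D}_{s-1}(\overline{0}_{s-2})$, then contract via Lemma~\ref{lemma:D_sequence_contraction}, and take minimality from the non-removability part of Lemma~\ref{lemma:parent_indep}. One small caution on your explicit minimality construction: the direct edge $V_p \rightarrow D_g$ is unavailable when $V_p$ is a potential covariate that descends from $D_g$, since it would create a cycle, so in that case you should rely on the fork $V_p \leftarrow U \rightarrow D_g$ instead.
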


\begin{proof}
    By Lemma \ref{lemma:parent_indep} with $m\geq max( g-1,t)$, $\mathbf{B} = D_{s}(\overline{0}_{s-1})$ and $\textcolor{gray}{N = \overline{D}_{s-1}(\overline{0}_{s-2})}$ we get, for all $1 \leq g,t,s \leq \mathcal{T}$:\footnote{Thos expression contains a slight abuse of notation. $D_s(\overline{0}_{s-1}) = D_s(\overline{0}_{m})$ for $s>m+1$.}
\begin{equation*}
    \Delta Y_{g-1,t}(\overline{0}_m) \indep D_{s}(\overline{0}_{s-1}) \mid Pa_{\mathcal{G}(\overline{0}_m)}(Y_{g-1}(\overline{0}_m)) \setminus \{U,U_{Y_{g-1}}, \overline{0}_m\}, Pa_{\mathcal{G}(\overline{0}_m)}(Y_{t}(\overline{0}_m)) \setminus \{U,U_{Y_{t}}, \overline{0}_m\}, \textcolor{gray}{\overline{D}_{s-1}(\overline{0}_{s-2})}, 
\end{equation*}

where without further restrictions, no node in $Pa_{\mathcal{G}(\overline{0}_m)}(Y_{g-1}(\overline{0}_m)) \setminus \{U,U_{Y_{g-1}}, \overline{0}_m\}, Pa_{\mathcal{G}(\overline{0}_m)}(Y_{t}(\overline{0}_m)) \setminus \{U,U_{Y_{t}}, \overline{0}_m\}$ can be left out of the conditioning set (see Lemma \ref{lemma:parent_indep}). 
Let $m = \mathcal{T}$. By Lemma \ref{lemma:D_sequence_contraction}, as assumption \assref{ass:SWAS-staggered} includes Assumption \assref{ass:staggered-trmnt} we get:
\begin{equation*}
    \Delta Y_{g-1,t}(\overline{0}) \indep \overline{D}_{g,t'} \mid Pa_{\mathcal{G}(\overline{0})}(Y_{g-1}(\overline{0})) \setminus \{U,U_{Y_{g-1}}, \overline{0}\}, Pa_{\mathcal{G}(\overline{0})}(Y_{t}(\overline{0})) \setminus \{U,U_{Y_{t}}, \overline{0}\}, \overline{D}_{g-1} = \overline{0}_{g-1},  \ \ 1 \leq g,t\leq \mathcal{T} .
\end{equation*}

\end{proof}

\textbf{Proof of Proposition \ref{prop:min-z0}:}

\begin{proof}
    Special case of Lemma \ref{lemma:min-z0_general} with $t' = \mathcal{T}$. 
\end{proof}

\begin{lemma}\label{lemma:all-vas}
    Consider SCM $\mathcal{M} \in \mathfrak{M}$ and and let $\mathcal{G}(\overline{0})$ be the SWIG constructed in step 3 of Procedure \ref{proc:d-swig}. Under Assumptions \assref{ass:SWAS-staggered} and \assref{ass:no-y-dyn} and consistency we get, for all $g,t$ and $t'\geq max(g-1,t)$:

    \begin{equation*}
        \Delta Y_{g-1,t}(\overline{0}) \indep \overline{X} \setminus \mathbf{S}_{g,t} \mid \mathbf{S}_{g,t}, \overline{D}_{t'} = \overline{0}_{t'}.
    \end{equation*}
\end{lemma}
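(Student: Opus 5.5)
The plan is to apply Lemma \ref{lemma:parent_indep} in the SWIG $\mathcal{G}(\overline{0}_{t'})$, which fixes only the first $t'$ treatments, rather than in $\mathcal{G}(\overline{0})$. Fixing only $\overline{D}_{t'}$ matters because of consistency. On the event $\overline{D}_{t'} = \overline{0}_{t'}$, \emph{every} potential variable in $\mathcal{G}(\overline{0}_{t'})$ coincides with its observable counterpart. This includes covariates $X_s(\overline{0}_{t'})$ with $s>t'$, which may be descendants of later treatments that are left unfixed. In $\mathcal{G}(\overline{0})$ this would fail for $X_s$ with $s>t'$ under treatment-covariate feedback. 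Lemma \ref{lemma:parent_indep} allows any $m\geq \max(g-1,t)$, so $m=t'$ is admissible.

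\textbf{Step 1 (choice of sets).} In Lemma \ref{lemma:parent_indep}, take $m=t'$ and $\mathbf{B} = \overline{X}(\overline{0}_{t'}) \setminus \mathbf{S}_{g,t}(\overline{0}_{t'})$. Take the neutral set $\mathbf{N} = \overline{D}_{t'}(\overline{0}_{t'-1}) = \{D_1, D_2(0),\dots,D_{t'}(\overline{0}_{t'-1})\}$, i.e.\ the random halves of the split treatment nodes. Here $\mathbf{S}_{g,t}(\overline{0}_{t'})$ is the parent set $Pa_{\mathcal{G}(\overline{0}_{t'})}(Y_{g-1}(\overline{0}_{t'}))\setminus\{U,U_{Y_{g-1}},\overline{0}_{t'}\}$ together with the analogous set for $Y_t$.

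I then check the disjointness requirements. The parents of $Y_s(\overline{0}_{t'})$ in the SWIG are only covariates, $U$, $U_{Y_s}$ and fixed nodes. By Assumption \assref{ass:no-y-dyn} no outcome is a parent. Because of the split, random treatment halves never point to outcomes: the outgoing edges go to the fixed parts. Hence the removed parent set consists only of covariate nodes. Therefore $\mathbf{B}$, $\mathbf{N}$ and the parent sets are pairwise disjoint, since $\mathbf{B}$ is by construction the complement within $\overline{X}(\overline{0}_{t'})$ and $\mathbf{N}$ contains only treatment nodes. Lemma \ref{lemma:parent_indep} then yields
\begin{equation*}
\Delta Y_{g-1,t}(\overline{0}_{t'}) \indep \overline{X}(\overline{0}_{t'})\setminus \mathbf{S}_{g,t}(\overline{0}_{t'}) \mid \mathbf{S}_{g,t}(\overline{0}_{t'}), \overline{D}_{t'}(\overline{0}_{t'-1}).
\end{equation*}

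\textbf{Step 2 (restriction to an event and consistency).} A conditional independence given $\overline{D}_{t'}(\overline{0}_{t'-1})$ holds in particular on the value $\overline{D}_{t'}(\overline{0}_{t'-1})=\overline{0}_{t'}$. By sequential consistency, as in Lemma \ref{lemma:D_sequence_contraction}, this event equals $\overline{D}_{t'}=\overline{0}_{t'}$. On that event, consistency replaces every $X_s(\overline{0}_{t'})$ by $X_s$. This turns $\mathbf{S}_{g,t}(\overline{0}_{t'})$ into the observable $\mathbf{S}_{g,t}$ and $\mathbf{B}$ into $\overline{X}\setminus\mathbf{S}_{g,t}$.

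\textbf{Step 3 (identification with $\overline{0}$-objects; main obstacle).} It remains to show that $\Delta Y_{g-1,t}(\overline{0}_{t'}) = \Delta Y_{g-1,t}(\overline{0})$. It must also be checked that the set $\mathbf{S}_{g,t}$ obtained here is the same as the one defined via $\mathcal{G}(\overline{0})$ in Proposition \ref{prop:min-z0}. I expect this bookkeeping to be the main obstacle.

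The argument uses temporal ordering. Since $t'\geq \max(g-1,t)$, no treatment $D_s$ with $s>t'$ is an ancestor of $Y_{g-1}$, $Y_t$, or any of their parents, because the future cannot affect the past. Fixing the additional treatments $\underline{D}_{t'+1}$ therefore leaves these nodes, their labels and their parent sets unchanged up to the redundant gray relabeling of Remark \ref{rem:no-anti}. Consequently the parent sets in $\mathcal{G}(\overline{0}_{t'})$ and $\mathcal{G}(\overline{0})$ coincide, and so do the resulting $\mathbf{S}_{g,t}$. Combining the three steps gives the statement of the lemma.
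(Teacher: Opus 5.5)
Your proposal is correct and is essentially the paper's proof. The paper applies Lemma \ref{lemma:parent_indep} in $\mathcal{G}(\overline{0}_m)$ for any $\max(g-1,t)\le m\le t'$, with the same $\mathbf{B}=\overline{X}(\overline{0}_m)\setminus\mathbf{S}_{g,t}(\overline{0}_m)$ and $\mathbf{N}=\overline{D}_{t'}(\overline{0}_m)$, restricts to $\overline{D}_{t'}=\overline{0}_{t'}$ via consistency, and then simply asserts $\Delta Y_{g-1,t}(\overline{0})=\Delta Y_{g-1,t}(\overline{0}_m)$. You take the special case $m=t'$ and spell out the temporal-ordering bookkeeping that the paper leaves implicit. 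Your Step 1 disjointness check also relies on that temporal ordering: unsplit treatments $D_s$ with $s>t'$ can be parents of later outcomes, just not of $Y_{g-1}$ or $Y_t$.
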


\begin{proof}
    Let $max(g-1,t) \leq m \leq t'$. Define:
    \begin{align*}
  \mathbf{S}_{g,t}(\overline{0}_m)  &:=  Pa_{\mathcal{G}(\overline{0}_m)}(Y_{g-1}(\overline{0}_m)) \setminus \{U,U_{Y_{g-1}}, \overline{0}_m\}, Pa_{\mathcal{G}(\overline{0}_m)}(Y_{t}(\overline{0}_m)) \setminus \{U,U_{Y_{t}}, \overline{0}_m\},
\end{align*}
    By Lemma \ref{lemma:parent_indep}, with $\mathbf{B} = \overline{X}(\overline{0}_m) \setminus  \mathbf{S}_{g,t}(\overline{0}_m) $,  $\textcolor{gray}{\mathbf{N}} = \overline{D}_{t'}(\overline{0}_m)$ we get:

\begin{equation*}
     \Delta Y_{g-1,t} (\overline{0}_m) \indep \overline{X}(\overline{0}_m) \setminus  \mathbf{S}_{g,t}(\overline{0}_m)  \mid  \mathbf{S}_{g,t}(\overline{0}_m), \overline{D}_{t'}(\overline{0}_m),
\end{equation*}
This implies:
\begin{equation*}
         \Delta Y_{g-1,t} (\overline{0}_m) \indep \overline{X}(\overline{0}_m) \setminus  \mathbf{S}_{g,t}(\overline{0}_m)  \mid  \mathbf{S}_{g,t}(\overline{0}_m) , \overline{D}_{t'}(\overline{0}_m) = \overline{0}_{t'}.
\end{equation*}

by consistency, since $t' \geq m$:
\begin{equation*}
     \Delta Y_{g-1,t} (\overline{0}_m) \indep \overline{X}\setminus \mathbf{S}_{g,t} \mid \mathbf{S}_{g,t}, \overline{D}_{t'} = \overline{0}_{t'}.
\end{equation*}

The statement in the Lemma uses: $\Delta Y_{g-1,t}(\overline{0})  =  \Delta Y_{g-1,t}(\overline{0}_m)$

\end{proof}

\textbf{Proof of Proposition \ref{prop:all-vas}:}

\begin{proof}
    As stated, we assume existence of a valid minimal adjustment set $\mathbf{Z}^{min}_{g,t}(R) = \mathbf{S}_{g,t}$. This set can not be reduced by removing nodes/variables. Let $\mathbf{Z}$ be a set that satisfies $\mathbf{S}_{g,t} \subseteq \mathbf{Z} \subseteq \overline{X}$. In conjunction with Lemma \ref{lemma:all-vas} we can see:
    \begin{align*}
        ATT(g,t) &= DiD_{g,t}(\mathbf{S}_{g,t}) \\
        &= \E[\Delta Y_{g-1,t}\mid \overline{D}_{g-1} = \overline{0}_{g-1}, \underline{D}_g = \underline{1}_g]  - \E[\E[\Delta Y_{g-1,t} \mid \mathbf{S}_{g,t}, \overline{D} = \overline{0}]\mid \overline{D}_{g-1} = \overline{0}_{g-1}, \underline{D}_g = \underline{1}_g] \\
        &= \E[\Delta Y_{g-1,t}\mid \overline{D}_{g-1} = \overline{0}_{g-1}, \underline{D}_g = \underline{1}_g]  - \E[\E[\Delta Y_{g-1,t} \mid \mathbf{Z}, \overline{D} = \overline{0}]\mid \overline{D}_{g-1} = \overline{0}_{g-1}, \underline{D}_g = \underline{1}_g].
    \end{align*}

Meaning that $\mathbf{Z}$ is a valid adjustment set. The first and second equality follow by the definition of a valid adjustment set and $DiD_{g,t}(\mathbf{S}_{g,t})$. The third equality uses that Lemma \ref{lemma:all-vas} with $t' = \mathcal{T}$ and consistency imply:
\begin{align*}
    \Delta Y_{g-1,t}(\overline{0}) &\indep \overline{X} \setminus \mathbf{S}_{g,t} \mid \mathbf{S}_{g,t}, \overline{D} = \overline{0} & \\
    \Rightarrow \Delta Y_{g-1,t} &\indep \overline{X} \setminus \mathbf{S}_{g,t} \mid \mathbf{S}_{g,t}, \overline{D} = \overline{0} & \text{(cons.)} \\
    \Rightarrow \Delta Y_{g-1,t} &\indep \mathbf{Z} \setminus \mathbf{S}_{g,t} \mid \mathbf{S}_{g,t}, \overline{D} = \overline{0} & \text{(decomposition)}
\end{align*}
    
\end{proof}

\subsection{Not-Yet-Treated}\label{sec:NYT_app}

The following is similar to Proposition \ref{prop:min-z0} but adapted for the not-yet-treated.

\begin{proposition} (minimal sufficient adjustment set - not-yet-treated) \label{prop:min-z0-nyt}
    Consider SCM $\mathcal{M} \in \mathfrak{M}$ and let $\mathcal{G}(\overline{0})$ be the corresponding SWIG constructed in step 3 of Procedure \ref{proc:d-swig}.
Then, under Assumptions \assref{ass:SWAS-staggered} and \assref{ass:no-y-dyn} the following conditional independence holds for all $g,t$ and $s \geq g$:
    \begin{align} \label{eq:cia-general-nyt}
    \Delta Y_{g-1,t} (\overline{0}) \indep \overline{D}_{g,s} \mid~ & \underbrace{Pa_{\mathcal{G}(\overline{0})}(Y_{g-1}(\overline{0})) \setminus \{U,U_{Y_{g-1}}, \overline{0}\}, Pa_{\mathcal{G}(\overline{0})}(Y_{t}(\overline{0})) \setminus \{U,U_{Y_{t}}, \overline{0}\}}_{\mathbf{S}_{g,t}(\overline{0})}, \overline{D}_{g-1} = \overline{0}_{g-1}
\end{align}
Without further restrictions there is no $\mathbf{S}(\overline{0})$ such that also $\Delta Y_{g-1,t} (\overline{0}) \indep \underline{D}_{g} \mid \mathbf{S}(\overline{0}), \overline{D}_{g-1} = \overline{0}_{g-1}$ can be deduced but $\mathbf{S}_{g,t}(\overline{0}) := Pa_{\mathcal{G}(\overline{0})}(Y_{g-1}(\overline{0})) \setminus \{U,U_{Y_{g-1}}, \overline{0}\}, Pa_{\mathcal{G}(\overline{0})}(Y_{t}(\overline{0})) \setminus \{U,U_{Y_{t}}, \overline{0}\}$ is not a subset of $\mathbf{S}(\overline{0})$. 
\end{proposition}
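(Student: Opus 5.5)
The plan is to obtain Proposition \ref{prop:min-z0-nyt} as a direct consequence of Lemma \ref{lemma:min-z0_general}, which already delivers the conditional independence for an arbitrary finite treatment window $\overline{D}_{g,t'}$ with $t' \geq g$. The not-yet-treated statement only requires this window to end at $s \geq g$ rather than at $\mathcal{T}$. Setting $t' = s$ in Lemma \ref{lemma:min-z0_general} therefore yields \eqref{eq:cia-general-nyt} verbatim, with the same conditioning set $\mathbf{S}_{g,t}(\overline{0})$ and the same conditioning event $\overline{D}_{g-1} = \overline{0}_{g-1}$.

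To make the argument self-contained, I would retrace the chain behind that lemma. First, I would apply Lemma \ref{lemma:parent_indep} with $m = \mathcal{T}$, $\mathbf{B} = D_r(\overline{0}_{r-1})$ and neutral set $\mathbf{N} = \overline{D}_{r-1}(\overline{0}_{r-2})$ for each $1 \leq r \leq \mathcal{T}$. This gives $\Delta Y_{g-1,t}(\overline{0}) \indep D_r(\overline{0}_{r-1}) \mid \mathbf{S}_{g,t}(\overline{0}), \overline{D}_{r-1}(\overline{0}_{r-2})$. The disjointness hypothesis holds because, under \assref{ass:no-y-dyn}, the parents of $Y_{g-1}(\overline{0})$ and $Y_t(\overline{0})$ other than $U$, $U_{Y}$ and the fixed nodes are covariates, whereas $\mathbf{B}$ and $\mathbf{N}$ consist of treatment nodes. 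Second, I would feed these single-period independencies into Lemma \ref{lemma:D_sequence_contraction} with $t' = g$ and $t'' = s$. That lemma uses consistency and the staggered structure \assref{ass:staggered-trmnt}, and its contraction induction produces the joint independence of $\overline{D}_{g,s}$ given $\overline{D}_{g-1} = \overline{0}_{g-1}$.

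The step I expect to need the most care is the minimality claim. The statement as written refers to $\underline{D}_g$, although the natural object in the not-yet-treated setting is $\overline{D}_{g,s}$. I would argue as in Lemma \ref{lemma:parent_indep}. Suppose some $V_p \in \mathbf{S}_{g,t}(\overline{0})$ is dropped. Since $V_p$ is a parent of the difference node, one can construct an SCM in $\mathfrak{M}$, for instance with edges $U \rightarrow V_p$ and $U \rightarrow D_g$, in which $V_p$ is $d$-connected to $D_g(\overline{0}_{g-1})$. Then $\Delta Y_{g-1,t}(\overline{0})$ is $d$-connected to $D_g$, and $D_g$ is contained in both $\overline{D}_{g,s}$ and $\underline{D}_g$. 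Hence no such reduced $\mathbf{S}(\overline{0})$ exists for either version of the treatment set. This is exactly the non-removability part of Lemma \ref{lemma:min-z0_general}, and it carries over unchanged because $s \geq g$ guarantees that $D_g$ is in the tested set.
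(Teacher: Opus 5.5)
Your proposal is correct and matches the paper, whose entire proof is that the proposition is the special case $t' = s$ of Lemma \ref{lemma:min-z0_general}; your retracing through Lemma \ref{lemma:parent_indep} with $m = \mathcal{T}$ and Lemma \ref{lemma:D_sequence_contraction} with $t' = g$, $t'' = s$ reproduces how that lemma is itself proved. You also flag that the minimality clause is stated for $\underline{D}_g$ while the lemma gives it for $\overline{D}_{g,s}$, and you resolve this by noting that the open path $D_g(\overline{0}_{g-1}) \leftarrow U \rightarrow V_p \rightarrow \Delta Y_{g-1,t}(\overline{0})$ ends in $D_g$, which lies in both sets; this is a small but genuine tightening of the paper's one-line argument.
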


\begin{proof}
     Special case of Lemma \ref{lemma:min-z0_general} with $t' = s$.
\end{proof}

For the conditional difference-in-differences estimand based on the not-yet-treated, with $s \geq max(g-1,t)$:
{\small
\begin{align} \label{eq:did-estimand-nyt}
     DiD_{g,t}^{NYT}(\mathbf{Z}) & := \underbrace{\E[\Delta Y_{g-1,t}\mid \overline{D}_{g-1} = \overline{0}_{g-1}, \underline{D}_g = \underline{1}_g]}_{\text{observable trend of group g}}  - \underbrace{\E[\E[\Delta Y_{g-1,t} \mid \mathbf{Z}, \overline{D}_s = \overline{0}_s]\mid \overline{D}_{g-1} = \overline{0}_{g-1}, \underline{D}_g = \underline{1}_g]}_{\text{covariate adjusted not-yet-treated trend}},
\end{align} }

we can now define VAS as all sets of time-varying covariates ensuring that the conditional DiD estimand based on the not-yet-treated is unbiased for $ATT(g,t)$ in all SCMs that obey restrictions $R$, i.e.~$\mathcal{Z}_{g,t}^{NYT}(R) := \{\mathbf{Z} \subseteq \overline{X} : ATT(g,t) = DiD_{g,t}^{NYT}(\mathbf{Z}) ~\forall~ \mathcal{M} \in \mathfrak{M} \text{ satisfying }R \}$.

The fact that Proposition \ref{prop:min-z0-nyt} uses the same conditioning set as Proposition \ref{prop:min-z0}, means that we can use the same reasoning as in the main text for the minimal VAS to use in $DiD_{g,t}^{NYT}(\mathbf{Z})$. We can thus compactly denote $\mathbf{Z}^{min}_{g,t}(R)$ as the minimal valid adjustment set for for $DiD_{g,t}(\mathbf{Z})$ and $DiD_{g,t}^{NYT}(\mathbf{Z})$ and Table \ref{tab:min-vas-did} also holds for DiD based on the not-yet-treated.

In addition:

\begin{proposition} (valid adjustment sets - not-yet-treated) \label{prop:all-vas-nyt}
    Under the assumptions of Proposition \ref{prop:min-z0-nyt}. Whenever there exists a minimal valid adjustment set for model class $\mathfrak{M}$ under restrictions $R$, i.e.~$\mathbf{Z}^{min}_{g,t}(R) = \mathbf{S}_{g,t}$, then
\begin{align}
\mathcal{Z}_{g,t}^{NYT}(R) = \{\mathbf{Z} : \mathbf{Z}^{min}_{g,t}(R) \subseteq \mathbf{Z} \subseteq  \overline{X} \}.
\end{align}
\end{proposition}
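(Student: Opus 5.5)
Proposition \ref{prop:all-vas-nyt} is proved the same way as Proposition \ref{prop:all-vas}. The never-treated conditioning event $\overline{D}=\overline{0}$ is replaced by the not-yet-treated event $\overline{D}_s=\overline{0}_s$ with $s\geq\max(g-1,t)$.

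The plan is to show both inclusions.

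For "$\supseteq$", fix $\mathbf{Z}$ with $\mathbf{S}_{g,t}=\mathbf{Z}^{min}_{g,t}(R)\subseteq\mathbf{Z}\subseteq\overline{X}$. Because $\mathbf{S}_{g,t}$ is a VAS for the not-yet-treated estimand, $ATT(g,t)=DiD^{NYT}_{g,t}(\mathbf{S}_{g,t})$ for every admissible $\mathcal{M}$. This uses the fact, noted after Proposition \ref{prop:min-z0-nyt}, that the minimal VAS coincides for both estimands. It therefore suffices to show that the inner regression is unchanged:
\[
\E[\Delta Y_{g-1,t}\mid \mathbf{S}_{g,t},\overline{D}_s=\overline{0}_s]=\E[\Delta Y_{g-1,t}\mid \mathbf{Z},\overline{D}_s=\overline{0}_s].
\]
To get this, apply Lemma \ref{lemma:all-vas} with $t'=s$, which is allowed since $s\geq\max(g-1,t)$. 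It gives $\Delta Y_{g-1,t}(\overline{0})\indep \overline{X}\setminus\mathbf{S}_{g,t}\mid \mathbf{S}_{g,t},\overline{D}_s=\overline{0}_s$.

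Next, pass to observables by consistency. On $\overline{D}_s=\overline{0}_s$ with $s\geq\max(g-1,t)$, both outcome levels satisfy $Y_{g-1}=Y_{g-1}(\overline{0})$ and $Y_t=Y_t(\overline{0})$. This holds because only $D_1,\dots,D_{\max(g-1,t)}$ can affect them, and Lemma \ref{lemma:all-vas} already works with $\Delta Y_{g-1,t}(\overline{0}_m)$ for $m\leq s$. Decomposition then restricts $\overline{X}\setminus\mathbf{S}_{g,t}$ to $\mathbf{Z}\setminus\mathbf{S}_{g,t}$, which gives the displayed equality of conditional means. Substituting it into \eqref{eq:did-estimand-nyt} yields $DiD^{NYT}_{g,t}(\mathbf{Z})=DiD^{NYT}_{g,t}(\mathbf{S}_{g,t})=ATT(g,t)$.

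For "$\subseteq$", every VAS must contain $\mathbf{Z}^{min}_{g,t}(R)$. This follows from the minimality part of Proposition \ref{prop:min-z0-nyt}: if $\mathbf{Z}$ omits an element of $\mathbf{S}_{g,t}$, then without further restrictions there is an SCM in $\mathfrak{M}$ satisfying $R$ in which that parent is $d$-connected to $\overline{D}_{g,s}$. The CPT, and hence unbiasedness, then fails. By definition, $\mathbf{Z}^{min}_{g,t}(R)$ is the minimal element of $\mathcal{Z}^{NYT}_{g,t}(R)$, and every element lies in $\overline{X}$.

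The main obstacle is the consistency step on the not-yet-treated event. We must check that conditioning on $\overline{D}_s=\overline{0}_s$ makes all potential covariates in $\mathbf{S}_{g,t}(\overline{0})$ and in $\overline{X}(\overline{0}_m)$ equal to their observed counterparts. For covariates dated after $s$ this is not automatic. I would handle it by using $m=\max(g-1,t)$ in Lemma \ref{lemma:all-vas} together with the "$\mathbf{Z}^{min}=\mathbf{S}_{g,t}$" hypothesis, which guarantees that $\mathbf{S}_{g,t}(\overline{0})$ contains no genuinely unobservable potential covariates. Covariates $X_r$ with $r>s$ are descendants only of $D_{s+1},\dots$ or of $\overline{D}_s$ (already fixed at $0$); I would treat them as neutral nodes and invoke $d$-separation in $\mathcal{G}(\overline{0}_s)$ rather than $\mathcal{G}(\overline{0})$.
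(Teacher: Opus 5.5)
Your proposal is correct and follows the paper's proof. The paper likewise starts from $ATT(g,t)=DiD^{NYT}_{g,t}(\mathbf{S}_{g,t})$ and uses Lemma~\ref{lemma:all-vas} with $t'=s$, consistency, and decomposition to show that replacing $\mathbf{S}_{g,t}$ by any $\mathbf{Z}$ with $\mathbf{S}_{g,t}\subseteq\mathbf{Z}\subseteq\overline{X}$ leaves the inner regression unchanged. It leaves the reverse inclusion to the definition of the minimal VAS; your $d$-connection argument for that direction is an extra heuristic that would need a faithfulness-type construction to be rigorous. The consistency issue you flag for covariates dated after $s$ is already absorbed in Lemma~\ref{lemma:all-vas}, whose proof works in $\mathcal{G}(\overline{0}_m)$ with $\max(g-1,t)\le m\le t'$, so nothing further is needed there.
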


\begin{proof}
    As stated, we assume existence of a valid minimal adjustment set $\mathbf{Z}^{min}_{g,t}(R) = \mathbf{S}_{g,t}$. Let $\mathbf{Z}$ be a set that satisfies $\mathbf{S}_{g,t} \subseteq \mathbf{Z} \subseteq \overline{X}$. In conjunction with Lemma \ref{lemma:all-vas} we can see:
    \begin{align*}
        ATT(g,t) &= DiD_{g,t}^{NYT}(\mathbf{S}_{g,t}) \\
        &= \E[\Delta Y_{g-1,t}\mid \overline{D}_{g-1} = \overline{0}_{g-1}, \underline{D}_g = \underline{1}_g]  - \E[\E[\Delta Y_{g-1,t} \mid \mathbf{S}_{g,t}, \overline{D}_s = \overline{0}_s]\mid \overline{D}_{g-1} = \overline{0}_{g-1}, \underline{D}_g = \underline{1}_g] \\
        &= \E[\Delta Y_{g-1,t}\mid \overline{D}_{g-1} = \overline{0}_{g-1}, \underline{D}_g = \underline{1}_g]  - \E[\E[\Delta Y_{g-1,t} \mid \mathbf{Z}, \overline{D}_s = \overline{0}_s]\mid \overline{D}_{g-1} = \overline{0}_{g-1}, \underline{D}_g = \underline{1}_g].
    \end{align*}

Meaning that $\mathbf{Z}$ is a valid adjustment set. The first and second equality follow by the definition of a valid adjustment set and $DiD_{g,t}^{NYT}(\mathbf{S}_{g,t})$. The third equality uses that Lemma \ref{lemma:all-vas} with $t' = s$ and consistency imply:
\begin{align*}
    \Delta Y_{g-1,t}(\overline{0}) &\indep \overline{X} \setminus \mathbf{S}_{g,t} \mid \mathbf{S}_{g,t}, \overline{D}_s = \overline{0}_s & \\
    \Rightarrow \Delta Y_{g-1,t} &\indep \overline{X} \setminus \mathbf{S}_{g,t} \mid \mathbf{S}_{g,t}, \overline{D}_s = \overline{0}_s & \text{(cons.)} \\
    \Rightarrow \Delta Y_{g-1,t} &\indep \mathbf{Z} \setminus \mathbf{S}_{g,t} \mid \mathbf{S}_{g,t}, \overline{D}_s = \overline{0}_s. & \text{(decomposition)}
\end{align*}
    
\end{proof}

\section{SWIGs and SCMs}

This section contains the intermediate SWIGs and SCMs for the DAGs and $\Delta$-SWIGs used in the main text:
\begin{itemize}
    \item The SWIG and structural equations in Figure \ref{fig:SWIG_time_varying_T2} correspond to Figure \ref{fig:DAG_DSWIG_time_varying_T2}.
    \item Figure \ref{fig:SWIG_outcome_dyn} shows the intermediate SWIG and SCM equations for Figure \ref{fig:DAG_DSWIG_outcome_dyn}.
    \item Figure \ref{fig:SWIG_post} shows the intermediate SWIG and structural equations for Figure \ref{fig:DAG_DSWIG_post}.
    \item Figure \ref{fig:SCMs_T3} shows the structural equations for Figures \ref{fig:2x3} and \ref{fig:swigs_3}.
\end{itemize}

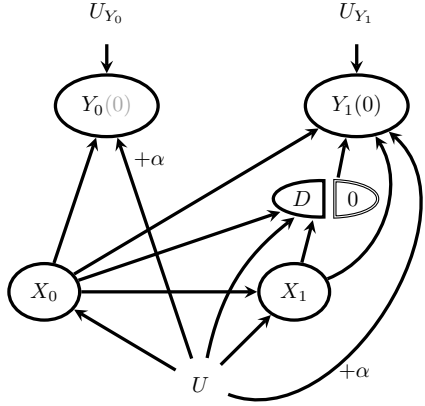
\begin{figure}[h!]
    \centering
    \caption{SWIG and SCM with time-varying covariates, T = 2}
    \label{fig:SWIG_time_varying_T2}
\begin{subfigure}[b]{0.44\textwidth}
\vspace{0pt}
\centering
\makebox[\linewidth][c]{%
  \resizebox{0.9\linewidth}{!}{%
        \begin{tikzpicture}
        \tikzset{line width=1.5pt, ell/.style={draw, fill = white, inner xsep=5pt,inner ysep=5pt, line width=1.5pt}, unobs/.style={ fill = none, inner xsep=5pt,inner ysep=5pt, line width=1.5pt}, swig vsplit={gap=5pt, inner line width right=0.5pt}};

% Nodes
    \node[name=y0, ell, shape=ellipse]{$Y_0\textcolor{gray}{(0)}$};
    \node[name=y1, ell, shape=ellipse] at ($(y0)+(4cm,0cm)$) {$Y_1(0)$};
    % \node[name=d, ell, shape=swig vsplit] at ($(y2)+(-0.5cm,-1.5cm)$) {\nodepart{left}{$D$} \nodepart{right}{$0$}};
    \node[name=d, ell, shape=swig vsplit] at ($(y1)+(-0.5cm,-1.5cm)$) {\nodepart{left}{$D$} \nodepart{right}{$0$}};
    \node[name=x0, ell, shape=ellipse] at ($(y0)+(-1cm,-3cm)$) {$X_0$};
    \node[name=x1, ell, shape=ellipse] at ($(y1)+(-1cm,-3cm)$) {$X_1$};
    \node[name=u_dyx, unobs, shape=ellipse] at ($(y1)+(-2.5cm,-4.5cm)$) {$\UDYX$};
    \node[name=u_y0, unobs, shape=ellipse] at ($(y0)+(0cm,+1.5cm)$) {$U_{Y_0}$};
    \node[name=u_y1, unobs, shape=ellipse] at ($(y1)+(0cm,+1.5cm)$) {$U_{Y_1}$};

% Edges

\draw[->,line width=1.5pt,>=stealth]
(d) edge (y1)
% (u_dyx) edge[color = blue] node[midway, right] {$+ \alpha$} (y0)
% (u_dyx) edge[bend right = 60,color = blue] node[midway, right] {$+ \alpha$} (y1)
(u_dyx) edge node[pos = 0.9, right] {$+ \alpha$} (y0)
(u_dyx) edge[bend right = 80] node[pos = 0.3, right] {$+ \alpha$} (y1)
(u_dyx) edge[bend left = 20] (d)
(x0) edge (y0)
(x0) edge (y1)
(x1) edge[bend right= 50] (y1)
(x0) edge (x1)
(x0) edge (d)
% (x1) edge (d.left south)
(x1) edge (d)
(u_dyx) edge (x0)
(u_dyx) edge (x1)
(u_y0) edge (y0)
(u_y1) edge (y1);
    \end{tikzpicture}
    }
    }
\end{subfigure}
\hfill
\begin{subfigure}[b]{0.55\textwidth}
\begin{minipage}[b]{\linewidth}
{\small
\centering
    \begin{align}
    Y_0 \textcolor{gray}{(0)} = Y_0  &\defeq f_{Y_0}(\UDYX,X_0,U_{Y_0})     \label{eq:T2_basic_Y0} \tag{M1.$Y_0$}  \\
    &= \alpha(\UDYX,X_0) + g_{Y_0}(X_0,U_{Y_0}) \notag \\
    Y_1 &\defeq f_{Y_1}(\UDYX,X_0,X_1,D,U_{Y_1}) \notag \\
    & =\alpha(\UDYX,X_0) + g_{Y_1}(X_0,X_1,U_{Y_1})  \label{eq:T2_basic_Y1} \tag{M1.$Y_1$} \\
    & \quad + D \cdot \tau_1(\UDYX, X_0,X_1, U_{Y_1}) \notag \\
    Y_1(0) &\defeq f_{Y_1}(\UDYX,X_0,X_1,0,U_{Y_1}) \notag \\
    & =\alpha(\UDYX,X_0) + g_{Y_1}(X_0,X_1,U_{Y_1})  \label{eq:T2_basic_Y1_0} \tag{M1.$Y_1(0)$} \\
    \Delta Y_1(0) &\defeq f_{\Delta Y_1(0)}(X_0,X_1,U_{Y_0},U_{Y_1}) \label{eq:T2_basic_DY1} \tag{M1.$\Delta Y_1(0)$} \\
    & \defeq g_{Y_1}(X_0, X_1,U_{Y_1}) - g_{Y_0}(X_0,U_{Y_0}) \notag \\
    D   &\defeq f_{D}(\UDYX,X_0,X_1,U_D) \label{eq:T2_basic_D} \tag{M1.$D$} \\
    X_0  &\defeq f_{X_0}(\UDYX,U_{X_0}) \label{eq:T2_basic_X0} \tag{M1.$X_0$} \\
    X_1  &\defeq f_{X_1}(\UDYX,X_0,U_{X_1}) \label{eq:T2_basic_X1} \tag{M1.$X_1$} \\
    U & \defeq f_U (U_U) \label{eq:T2_basic_U} \tag{M1.$U$}
\end{align}
}
\end{minipage}
\end{subfigure}
\end{figure}

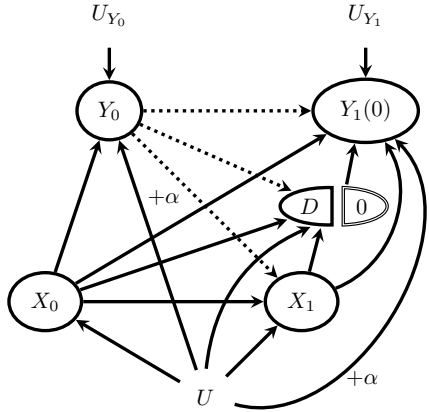
\begin{figure}[h!]
    \centering
    \caption{SWIG and SCM with time varying control variables, T = 2}
    \label{fig:SWIG_outcome_dyn}
\begin{subfigure}[b]{0.44\textwidth}
\vspace{0pt}
\centering
\makebox[\linewidth][c]{%
  \resizebox{0.9\linewidth}{!}{%
        \begin{tikzpicture}
        \tikzset{line width=1.5pt, ell/.style={draw, fill = white, inner xsep=5pt,inner ysep=5pt, line width=1.5pt}, unobs/.style={ fill = none, inner xsep=5pt,inner ysep=5pt, line width=1.5pt}, swig vsplit={gap=5pt, inner line width right=0.5pt}};

% Nodes
    \node[name=y0, ell, shape=ellipse]{$Y_0$};
    \node[name=y1, ell, shape=ellipse] at ($(y0)+(4cm,0cm)$) {$Y_1(0)$};
    % \node[name=d, ell, shape=swig vsplit] at ($(y2)+(-0.5cm,-1.5cm)$) {\nodepart{left}{$D$} \nodepart{right}{$0$}};
    \node[name=d, ell, shape=swig vsplit] at ($(y1)+(-0.5cm,-1.5cm)$)  {\nodepart{left}{$D$} \nodepart{right}{$0$}};
    \node[name=x0, ell, shape=ellipse] at ($(y0)+(-1cm,-3cm)$) {$X_0$};
    \node[name=x1, ell, shape=ellipse] at ($(y1)+(-1cm,-3cm)$) {$X_1$};
    \node[name=u_dyx, unobs, shape=ellipse] at ($(y1)+(-2.5cm,-4.5cm)$) {$\UDYX$};
    \node[name=u_y0, unobs, shape=ellipse] at ($(y0)+(0cm,+1.5cm)$) {$U_{Y_0}$};
    \node[name=u_y1, unobs, shape=ellipse] at ($(y1)+(0cm,+1.5cm)$) {$U_{Y_1}$};

% Edges

\draw[->,line width=1.5pt,>=stealth]
(d) edge (y1)
% (u_dyx) edge[color = blue] node[midway, right] {$+ \alpha$} (y0)
% (u_dyx) edge[bend right = 60,color = blue] node[midway, right] {$+ \alpha$} (y1)
(u_dyx) edge node[pos = 0.75, right] {$+ \alpha$} (y0)
(u_dyx) edge[bend right = 75]node[pos = 0.3, right] {$+ \alpha$} (y1)
(u_dyx) edge[bend left = 30] (d.230)
(x0) edge (y0)
(x0) edge (y1)
(x1) edge[bend right= 50] (y1)
(x0) edge (x1)
(x0) edge (d)
% (x1) edge (d.left south)
(x1) edge (d)
(u_dyx) edge (x0)
(u_dyx) edge (x1)
(u_y0) edge (y0)
(u_y1) edge (y1)
(y0) edge[dotted]  (y1)
(y0) edge[dotted]  (d)
(y0) edge[dotted]  (x1);
    \end{tikzpicture}
    }
    }
    \caption{2x2 SWIG with outcome dynamics}
        \label{fig:swig-2x2-dyn}
\end{subfigure}
\hfill
\begin{subfigure}[b]{0.55\textwidth}
\begin{minipage}[t]{\linewidth}
{\small
\centering
\eqref{eq:T2_basic_Y0} \& \eqref{eq:T2_basic_D} \& \eqref{eq:T2_basic_X0} \& \eqref{eq:T2_basic_X1} \& \eqref{eq:T2_basic_U}
    \begin{align*}
    Y_1 &\defeq f_{Y_1}(\UDYX,Y_0,X_0,X_1,D,U_{Y_1}) \\
    & =\alpha(\UDYX,X_0) + g_{Y_1}(Y_0,X_0,X_1,U_{Y_1})  \\
    & \quad + D \cdot \tau_1(\UDYX, Y_0,X_0,X_1, U_{Y_1}) \\
    Y_1(0) &\defeq f_{Y_1}(\UDYX,Y_0,X_0,X_1,0,U_{Y_1})  \\
    & =\alpha(\UDYX,X_0) + g_{Y_1}(Y_0,X_0,X_1,U_{Y_1})  \\
   \Delta Y_1(0) &\defeq f_{\Delta Y_1(0)}(Y_0,X_0,X_1,U_{Y_0},U_{Y_1})  \\
    & \defeq g_{Y_1}(Y_0,X_0, X_1,U_{Y_1}) - g_{Y_0}(X_0,U_{Y_0})  
\end{align*}
}
\vspace{0.2cm}
\end{minipage}
    \caption{SCM state dependence}
        \label{fig:scm-2x2-y-y}
\end{subfigure}
\begin{subfigure}{0.45\textwidth}
\vspace{0.5cm}
\begin{minipage}{\linewidth}
{\small
\centering
\eqref{eq:T2_basic_Y0} \& \eqref{eq:T2_basic_Y1} \& \eqref{eq:T2_basic_Y1_0} \& \eqref{eq:T2_basic_DY1}  \& \eqref{eq:T2_basic_X0} \& \eqref{eq:T2_basic_X1} \& \eqref{eq:T2_basic_U} 
    \begin{align*}
    D   &\defeq f_{D}(\UDYX,Y_0,X_0,X_1,U_D) 
\end{align*}
}
\end{minipage}
\vspace{0.2cm}
    \caption{SCM outcome-treatment feedback}
        \label{fig:scm-2x2-y-d}
\end{subfigure}
\hfill
\begin{subfigure}{0.45\textwidth}
\vspace{0.5cm}
\begin{minipage}{\linewidth}
{\small
\centering
\eqref{eq:T2_basic_Y0} \& \eqref{eq:T2_basic_Y1} \& \eqref{eq:T2_basic_Y1_0} \& \eqref{eq:T2_basic_DY1}  \& \eqref{eq:T2_basic_D} \& \eqref{eq:T2_basic_X0}  \& \eqref{eq:T2_basic_U} 
    \begin{align*}
    X_1   &\defeq f_{D}(\UDYX,Y_0,X_0,X_1,U_D) 
\end{align*}
}
\end{minipage}
\vspace{0.2cm}
    \caption{SCM outcome-covariate feedback}
        \label{fig:scm-2x2-y-x}
\end{subfigure}
\end{figure}

\begin{figure}[h!]
    \centering
    \caption{SWIG and SCM with post-treatment control variables, T = 2}
    \label{fig:SWIG_post}
\begin{subfigure}[b]{0.48\textwidth}
\centering
\makebox[\linewidth][c]{%
  \resizebox{0.8\linewidth}{!}{%
        \begin{tikzpicture}
        \tikzset{line width=1.5pt, ell/.style={draw, fill = white, inner xsep=5pt,inner ysep=5pt, line width=1.5pt}, unobs/.style={ fill = none, inner xsep=5pt,inner ysep=5pt, line width=1.5pt}, swig vsplit={gap=5pt, inner line width right=0.5pt}};

% Nodes
    \node[name=y0, ell, shape=ellipse]{$Y_0$};
    \node[name=y1, ell, shape=ellipse] at ($(y0)+(4cm,0cm)$) {$Y_1(0)$};
    % \node[name=d, ell, shape=swig vsplit] at ($(y2)+(-0.5cm,-1.5cm)$) {\nodepart{left}{$D$} \nodepart{right}{$0$}};
    \node[name=d, ell, shape=swig vsplit] at ($(y1)+(-1cm,-1.5cm)$) {\nodepart{left}{$D$} \nodepart{right}{$0$}};
    \node[name=x0, ell, shape=ellipse] at ($(y0)+(-1cm,-3cm)$) {$X_0$};
    \node[name=x1, ell, shape=ellipse] at ($(y1)+(-0.5cm,-3cm)$) {$X_1(0)$};
    \node[name=u_dyx, unobs, shape=ellipse] at ($(y1)+(-2.5cm,-5cm)$) {$\UDYX$};
    \node[name=u_y0, unobs, shape=ellipse] at ($(y0)+(0cm,+1.5cm)$) {$U_{Y_0}$};
    \node[name=u_y1, unobs, shape=ellipse] at ($(y1)+(0cm,+1.5cm)$) {$U_{Y_1}$};

% Edges

\draw[->,line width=1.5pt,>=stealth]
(d) edge (y1)
% (u_dyx) edge[color = blue] node[midway, right] {$+ \alpha$} (y0)
% (u_dyx) edge[bend right = 60,color = blue] node[midway, right] {$+ \alpha$} (y1)
(u_dyx) edge node[pos = 0.9, right] {$+ \alpha$} (y0)
(u_dyx) edge[bend right = 70] node[pos = 0.3, right] {$+ \alpha$} (y1)
(u_dyx) edge (d)
(x0) edge (y0)
(x0) edge (y1)
(x1) edge[bend right = 20]  (y1)
(x0) edge (x1)
(x0) edge (d)
% (x1) edge (d.left south)
(d) edge (x1)
(u_dyx) edge (x0)
(u_dyx) edge[color = orange, dotted]  (x1)
(u_y0) edge (y0)
(u_y1) edge (y1);
    \end{tikzpicture}
    }
    }
\caption{SWIG}
        \label{fig:swig-2x2-d-x}
\end{subfigure}
\hfill
\begin{subfigure}[b]{0.48\textwidth}
\centering
\begin{minipage}{\linewidth}
{\small
\centering \eqref{eq:T2_basic_Y0} \& \eqref{eq:T2_basic_Y1}  \& \eqref{eq:T2_basic_X0} \& \eqref{eq:T2_basic_U} 
\begin{align*}
Y_1(0) &\defeq f_{Y_1}(\UDYX,X_0,X_1(0),0,U_{Y_1}) \\
    & =\alpha(\UDYX,X_0) + g_{Y_1}(X_0,X_1(0),U_{Y_1}) \\
\Delta Y_1(0) &\defeq f_{\Delta Y_1(0)}(X_0,X_1(0),U_{Y_0},U_{Y_1}) \\
    & \defeq g_{Y_1}(X_0, X_1(0),U_{Y_1}) - g_{Y_0}(X_0,U_{Y_0}) \\
D & \defeq f_{D}(\UDYX, X_0, U_D) \\ 
X_1  &\defeq f_{X_1}(\textcolor{orange}{\UDYX},X_0,D,U_{X_1}) \\
X_1(0)  &\defeq f_{X_1}(\textcolor{orange}{\UDYX},X_0,0,U_{X_1}) 
\end{align*}
}
\vspace{0.2 cm}
\end{minipage}
\caption{SCM}
        \label{fig:SCM-2x2-d-x}
\end{subfigure}
\end{figure}
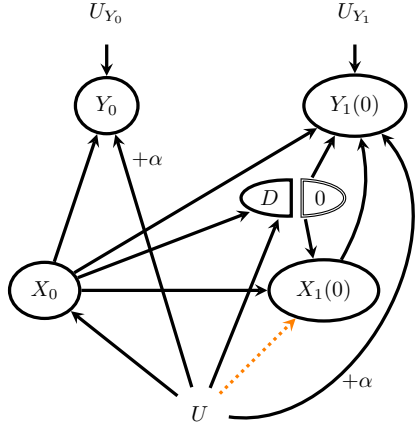

\begin{figure}[h]
    \centering
    \caption{SCMs with $T = 3$}
    \label{fig:SCMs_T3}
\begin{subfigure}[b]{0.99\textwidth}
\centering
\begin{minipage}{\linewidth}
{\small
\begin{align*}
 Y_0 \textcolor{gray}{(0,0)} & \defeq f_{Y_0}(U,X_0,U_{Y_0}) = \alpha(U, X_0) + g_{Y_0}(X_0,U_{Y_0}) \label{eq:T3_Y0} \tag{M2.$Y_0$} \\
 Y_1 &\defeq f_{Y_1}(\UDYX,X_0,X_1,D_1,U_{Y_1}) =\alpha(\UDYX,X_0) + g_{Y_1}(X_0,X_1,U_{Y_1})   \quad + D_1 \cdot \tau_1(\UDYX, X_0,X_1, U_{Y_1}) \label{eq:T3_Y1} \tag{M2.$Y_1$}  \\
Y_1(0\textcolor{gray}{,0}) & \defeq f_{Y_1}(\UDYX,X_0,X_1,0,U_{Y_1}) =\alpha(\UDYX,X_0) + g_{Y_1}(X_0,X_1,U_{Y_1})  \label{eq:T3_Y1_0} \tag{M2.$Y_1(0)$} \\
    \Delta Y_1(0\textcolor{gray}{,0}) &\defeq f_{\Delta Y_1(0)}(X_0,X_1,U_{Y_0},U_{Y_1}) \defeq g_{Y_1}(X_0, X_1,U_{Y_1}) - g_{Y_0}(X_0,U_{Y_0}) \label{eq:T3_DY1} \tag{M2.$\Delta Y_1(0)$}   \\
Y_2 & \defeq f_{Y_2}(\UDYX,X_0,X_1,X_2,D_1,D_2,U_{Y_2}) \notag \\
    & =\alpha(\UDYX,X_0) + g_{Y_2}(X_0,X_1,X_2,U_{Y_2})  \quad + D_2 \cdot \tau_2(\UDYX, X_0,X_1,X_2,D_1, U_{Y_2})   \label{eq:T3_Y2} \tag{M2.$Y_2$} \\
Y_2(0,0) & \defeq f_{Y_2}(\UDYX,X_0,X_1,X_2,0,0,U_{Y_2}) \\
& =\alpha(\UDYX,X_0) + g_{Y_2}(X_0,X_1,X_2,U_{Y_2})  \label{eq:T3_Y2_0} \tag{M2.$Y_2(0,0)$} \\
 \Delta Y_2(0,0) &\defeq f_{\Delta Y_2(0,0)}(X_0,X_1,X_2,U_{Y_1},U_{Y_2})  \defeq g_{Y_2}(X_0,X_1,X_2,U_{Y_2}) - g_{Y_1}(X_0,X_1,U_{Y_1}) \label{eq:T3_DY2} \tag{M2.$\Delta Y_2(0,0)$} \\
    D_1 & \defeq f_{D_1}(\UDYX, X_0, X_1, U_{D_1}) \label{eq:T3_D1} \tag{M2.$D_1$} \\
     D_2 & \defeq f_{D_2}(\UDYX,X_0, X_1, X_2,D_1,U_{D_t}) = (1-D_{1}) g_{D_2}( \UDYX,X_0, X_1, X_2, U_{D_2}) + D_{1} \label{eq:T3_D2} \tag{M2.$D_2$} \\
      D_2(0) & \defeq f_{D_2}(\UDYX,X_0, X_1, X_2,0,U_{D_t}) = g_{D_2}( \UDYX,X_0, X_1, X_2, U_{D_2}) \label{eq:T3_D2_0} \tag{M2.$D_2(0)$} \\
     X_0  &\defeq f_{X_0}(\UDYX,U_{X_0}) \label{eq:T3_X0} \tag{M2.$X_0$} \\
    X_1  &\defeq f_{X_1}(\UDYX,X_0,U_{X_1}) \label{eq:T3_X1} \tag{M2.$X_1$} \\
    X_2  &\defeq f_{X_2}(\UDYX,X_1,U_{X_2}) \label{eq:T3_X2} \tag{M2.$X_2$} \\
    U & \defeq f_U (U_U) \label{eq:T3_U} \tag{M2.$U$}
\end{align*}
}
%\vspace{0.2 cm}
\end{minipage}
\caption{SCM, $T = 3$, $D_{1}\not\rightarrow X_2$}
        \label{fig:SCM-3}
\end{subfigure}
\hfill
\begin{subfigure}[b]{0.99\textwidth}
\centering
\begin{minipage}{\linewidth}
\vspace{0.5cm}
\centering \eqref{eq:T3_Y0} \& \eqref{eq:T3_Y1}  \& \eqref{eq:T3_Y1_0} \& \eqref{eq:T3_DY1} \& \eqref{eq:T3_Y2} \&  \eqref{eq:T3_D1} \& \eqref{eq:T3_D2}  \& \eqref{eq:T3_X0} \& \eqref{eq:T3_X1} \& \eqref{eq:T3_U}
{\small
\begin{align*}
Y_2(0,0) & \defeq f_{Y_2}(\UDYX,X_0,X_1,X_2(0),0,0,U_{Y_2}) \\
& =\alpha(\UDYX,X_0) + g_{Y_2}(X_0,X_1,X_2,U_{Y_2}) \\
 \Delta Y_2(0,0) &\defeq f_{\Delta Y_2(0,0)}(X_0,X_1,X_2(0),U_{Y_1},U_{Y_2})  \defeq g_{Y_2}(X_0,X_1,X_2(0),U_{Y_2}) - g_{Y_1}(X_0,X_1,U_{Y_1})  \\
      D_2(0) & \defeq f_{D_2}(\UDYX,X_0, X_1, X_2(0),0,U_{D_t}) = g_{D_2}( \UDYX,X_0, X_1, X_2(0), U_{D_2}) \\
    X_2  &\defeq f_{X_2}(\UDYX,X_1,D_1,U_{X_2}) \\
        X_2(0)  &\defeq f_{X_2}(\UDYX,X_1,0,U_{X_2})
\end{align*}
}
%\vspace{0.2 cm}
\end{minipage}
\caption{SCM, $T = 3$, $D_{1}\rightarrow X_2$}
        \label{fig:fig:SCM-3-d-x}
\end{subfigure}
\end{figure}

\FloatBarrier

\section{Simulations}\label{app:simulations}

The simulation is an implementation of the DAG with multiple time periods, variation in treatment timing, and three different unobserved confounders, i.e., $U_{XY}$, $U_{XD}$ and $U_{DY}$, which we draw as:

{\singlespacing
\begin{align*}
(U_{DY}, U_{XD}, U_{XY})' &= \mathcal{N}((0,0,0)', \mathbf{\Sigma}) \\
\mathbf{\Sigma} &= \left[\begin{array}
{rrr}
1 & \rho & \rho \\
\rho & 1 & \rho  \\
\rho & \rho & 1 \\
\end{array}\right]
\end{align*}
}

with $\rho = 0.9$.

Furthermore initialize:
\begin{align*}
    X_{0,i} &= X_{1,i} (\mathbf{0}) = \mathbf{1}[0.3\cdot U_{XD,i} + 0.3 \cdot U_{XY,i} + \varepsilon_{X_1,i} > 0] ,\quad  \varepsilon_{X_1} \sim \mathcal{N}(0,1.5) \\
    D_{0,i} &= 0
\end{align*}

Then, for the covariates $X_t$ in $t = 1,...\mathcal{T}$:
\begin{align*}
    v_{t,i} &= 0.3 \cdot U_{XD,i} + 0.3 \cdot U_{XY,i} + 0.1 \cdot \left(X_{t-1,i} - 0.6 \right)+ \beta_{XD} \cdot D_{t-1,i} + \varepsilon_{X_t,i} , \quad  \varepsilon_{X_t,i} \sim \mathcal{N}(0,1.5) \\
X_{t,i} &= \mathbf{1}[v_{t,i} > 0]
\end{align*}

The treatment sequence for $t = 1,...,\mathcal{T}$:
\begin{align*}
    D_{t,i} &=D_{t-1,i} + (1-D_{t-1,i}) \cdot \mathbf 1 \big[ U_{DY, i} + U_{XD, i} + 0.15 \cdot t \cdot \left (X_{t-1,i}- 0.6 \right) \\
    &\qquad \qquad + 0.35 \cdot t \cdot \left(X_{t,i} - 0.6 \right) + \varepsilon_{D_t,i} > 0.7 - 0.3t\big],\\
    \varepsilon_{D_t,i} &\sim \mathcal{N}(0,1.5)
\end{align*}
This encodes staggered adoption.
And for the outcomes:
\begin{align*}
    Y_{t,i} &= 0.6 \cdot \sqrt{t} + U_{DY,i} + 0.8\cdot U_{XY,i} + (0.5 + 0.1 t)\cdot X_{t,i} + \varepsilon_{Y_t,i}  \\
    & \qquad + D_{t,i} \cdot (0.2 + 0.2 \cdot t \cdot U_{DY,i} +0.05 \cdot  U_{XY,i} \cdot X_{t,i}), \\ 
    \varepsilon_{Y_t,i} &\sim \mathcal{N}(0,0.1)
\end{align*}

This satisfies Assumptions \assref{ass:SWAS-staggered}, \assref{ass:no-y-dyn}, \assref{ass:no-x-dyn} and \assref{ass:x-d-ordering}. 
$\beta_{XD}$ can be used to specify whether $D_{t-1}$ has any effect on $X_t$, i.e., changes its distribution. For the simulations with $D_{t-1} \rightarrow X_t$ feedback we set $\beta_{XD} = 0.5$, and for the simulations without $D_{t-1} \rightarrow X_t$ feedback we set $\beta_{XD} = 0$, thus making Assumption \assref{ass:no-d-x} hold.
The different estimation strategies correspond to $DiD_{g,t}(\mathbf{Z})$ as defined in \eqref{eq:did-estimand}, where we use $\mathbf{Z} = \overline{X}_g$ for "pre-treatment", $\mathbf{Z} = \overline{X}_{max(g-1,t)}$ for "pre-outcome", and $\mathbf{Z} = \overline{X}$ for "full sequence." 
As all $X_t$ are binary, we use saturated regressions to estimate the conditional expectations.
The sample size is set to 10,000,000 to limit finite-sample distortions.

\section{More Results}

\subsection{All Valid Adjustment Sets} \label{app:vas}

Table \ref{tab:vas-did} compactly provides all valid adjustment sets for $\mathfrak{M}$ model class. The gray sequences or any of their subsequences are optional.

\begin{table}[ht] \small
\caption{Valid adjustment sets under different model restrictions} \label{tab:vas-did}
\hspace*{-1.2cm}
    \centering
\begin{threeparttable} \small
    \begin{tabular}{cccccc|ccc}
      \multicolumn{5}{l}{Restrictions:} & & Parallel pre-trends? & Short-term effect? & Dynamic effect? \\
       \assref{ass:SWAS-staggered} & \assref{ass:no-y-dyn} & \assref{ass:x-d-ordering} & \assref{ass:no-d-x} & \assref{ass:no-x-dyn} & \assref{ass:no-x-y} & $DiD_{g,t}(\cdot) = 0 ~\forall~ g > t$ & $DiD_{g,g}(\cdot) = \tau_{g,g}$ & $DiD_{g,t}(\cdot) = \tau_{g,t} ~\forall~ g < t$ \\
         \hline
      no & n/y & n/y & n/y & n/y & n/y
        & --
        & -- 
        & --  \\
      n/y & no & n/y & n/y & n/y & n/y
        & --
        & -- 
        & --  \\
      yes & yes & no & no & no & no
        & $\overline{X}_{g-1}\textcolor{gray}{,\underline{X}_{g}}$
        & -- 
        & --  \\
      yes & yes & yes & no & no & no
        & $\overline{X}_{g-1}\textcolor{gray}{,\underline{X}_{g}}$
        & $\overline{X}_{g}\textcolor{gray}{,\underline{X}_{g+1}}$
        & --  \\    
      yes & yes & yes & yes & no & no
        & $\overline{X}_{g-1}\textcolor{gray}{,\underline{X}_{g}}$
        & $\overline{X}_{g}\textcolor{gray}{,\underline{X}_{g+1}}$
        & $\overline{X}_{t}\textcolor{gray}{,\underline{X}_{t+1}}$ \\    
      yes & yes & no & no & yes & no
        & $\textcolor{gray}{\overline{X}_{t-1},} X_{t}\textcolor{gray}{,\overline{X}_{t+1,g-2}},X_{g-1}\textcolor{gray}{,\underline{X}_{g}}$ 
        & -- 
        & --  \\
      yes & yes & yes & no & yes & no
        & $\textcolor{gray}{\overline{X}_{t-1},} X_{t}\textcolor{gray}{,\overline{X}_{t+1,g-2}},X_{g-1}\textcolor{gray}{,\underline{X}_{g}}$ 
        & $\textcolor{gray}{\overline{X}_{g-2},} X_{g-1},X_g \textcolor{gray}{,\underline{X}_{g+1}}$  
        & --  \\    
      yes & yes & yes & yes & yes & no
        & $\textcolor{gray}{\overline{X}_{t-1},} X_{t}\textcolor{gray}{,\overline{X}_{t+1,g-2}},X_{g-1}\textcolor{gray}{,\underline{X}_{g}}$ 
        & $\textcolor{gray}{\overline{X}_{g-2},} X_{g-1},X_g \textcolor{gray}{,\underline{X}_{g+1}}$  
        &  $\textcolor{gray}{\overline{X}_{g-2}}, X_{g-1}\textcolor{gray}{,\overline{X}_{g,t-1}},X_t\textcolor{gray}{,\underline{X}_{t+1}}$ \\
        yes & yes & n/y & n/y & yes & yes
        & $\textcolor{gray}{\overline{X}}$ 
        & $\textcolor{gray}{\overline{X}}$
        & $\textcolor{gray}{\overline{X}}$ \\
        \hline
    \end{tabular}
\begin{tablenotes}[flushleft]
\footnotesize
\item \textit{Notes:} \assref{ass:SWAS-staggered}: single world additive separability and staggered treatment; \assref{ass:no-y-dyn}: no outcome dynamics; \assref{ass:x-d-ordering}: no within-period treatment-covariate effect; \assref{ass:no-d-x}: no treatment-covariate feedback; \assref{ass:no-x-dyn}: no direct covariate-outcome dynamics; \assref{ass:no-x-y}: no within-period covariate-outcome effect. n/y indicates that this restriction does not affect the result. $DiD_{g,t}(\cdot)$ is defined in \eqref{eq:did-estimand}. $\tau_{g,t} = ATT(g,t)$ is defined in \eqref{eq:att-gt}. The cell entries provide the minimal valid adjustment set such that the equality in the header holds when it is inserted into the bracket of $DiD_{g,t}(\cdot)$. -- indicates that no valid adjust set exists.
The table is representative for settings where $Y_t$ realizes at the end of period $t$. This is without loss of generality because any setting can be rearranged accordingly.
\end{tablenotes}
\end{threeparttable}
\end{table}

\FloatBarrier

\subsection{$T=4$} \label{app:t4}

Consider Figure \ref{fig:SWIG_T4} for illustration. It depicts a setting with $T=4$ and limited covariate dynamics, i.e.~no edges from past covariates to future treatments or outcomes. This keeps the graph readable and captures causal structures that assume that $Y_t(\overline{0})$ only depends on $X_t$, as is common in the literature (see Appendix \ref{app:swas-lit} for a detailed discussion).
Proposition \ref{prop:min-z0} applied to the SWIG without treatment-covariate feedback in Figure \ref{fig:swig-2x4} leads for example to
\begin{enumerate}
    \item $\Delta Y_{1,2} (\overline{0}) \indep D_2,D_3 \mid X_{1}, X_{2}, D_{1} = 0 \Rightarrow ATT(2,2)$ is identified; $\mathbf{S}_{2,2}(\overline{0}) = \mathbf{S}_{2,2} = \{X_{1}, X_{2}\}$
    \item $\Delta Y_{1,3} (\overline{0}) \indep D_2,D_3 \mid X_{1}, X_{3}, D_{1} = 0 \Rightarrow ATT(2,3)$ is identified; $\mathbf{S}_{2,3}(\overline{0}) = \mathbf{S}_{2,3} = \{X_{1}, X_{3}\}$
\end{enumerate}
or applied to SWIG \ref{fig:swig-2x4-d-x} with treatment-covariate feedback yields
\begin{enumerate}[start=3]
    \item $\Delta Y_{1,2} (\overline{0}) \indep D_2,D_3 \mid X_{1}, X_{2}(\dzero), D_{1} = \dzero \Rightarrow ATT(2,2)$ is identified; $\mathbf{S}_{2,2}(\overline{0}) = \{X_{1}, X_{2}(0)\}$, $\mathbf{S}_{2,2} = \{X_{1}, X_{2}\}$
    \item $\Delta Y_{1,3} (\overline{0}) \indep D_2,D_3 \mid X_{1}, X_{3}(0,0), D_{1} = 0 \Rightarrow ATT(2,3)$ is \textit{not} identified; $\mathbf{S}_{2,3}(\overline{0}) = \{X_{1}, X_{3}(0,0)\}$, $\mathbf{S}_{2,3} = \{X_{1}, X_{3}\}$
\end{enumerate}

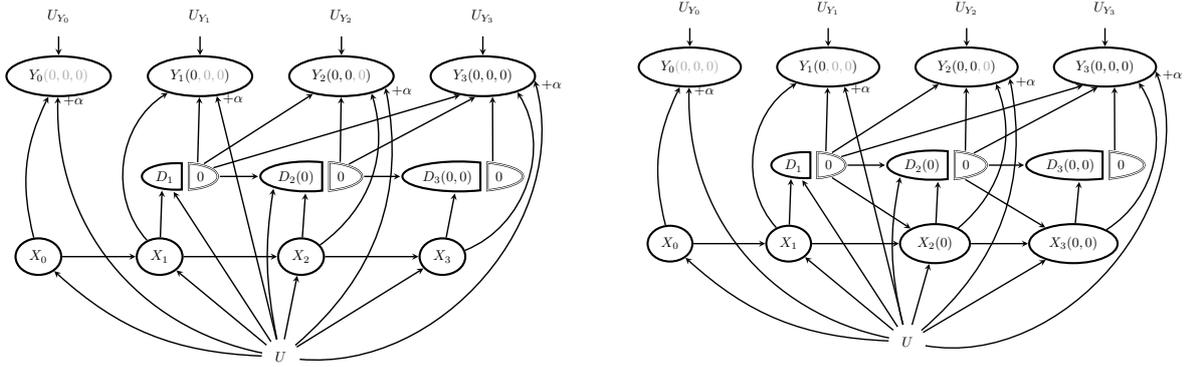
\begin{figure}[h]
    \centering
    \caption{SWIGs with $T = 4$ and no covariate dynamics}
    \label{fig:SWIG_T4}
\begin{subfigure}[b]{0.48\textwidth}
\centering
\makebox[\linewidth][c]{%
  \resizebox{0.98\linewidth}{!}{%
    \begin{tikzpicture}
        \tikzset{line width=1.5pt, ell/.style={draw, fill = white, inner xsep=5pt,inner ysep=5pt, line width=1.5pt}, unobs/.style={ fill = none, inner xsep=5pt,inner ysep=5pt, line width=1.5pt}, swig vsplit={gap=5pt, inner line width right=0.5pt}};

% Nodes
    \node[name=y0, ell, shape=ellipse]{$Y_0\textcolor{gray}{(0,0,0)}$};
    \node[name=y1, ell, shape=ellipse] at ($(y0)+(3.5cm,0cm)$) {$Y_1(0\textcolor{gray}{,0,0})$};
    \node[name=y2, ell, shape=ellipse] at ($(y1)+(3.5cm,0cm)$) {$Y_2(0,0\textcolor{gray}{,0})$};
    \node[name=y3, ell, shape=ellipse] at ($(y2)+(3.5cm,0cm)$) {$Y_3(0,0,0)$};
    \node[name=d1, ell, shape=swig vsplit] at ($(y1)+(-0.5cm,-2.5cm)$){\nodepart{left}{$D_1$} \nodepart{right}{$0$}};
    \node[name=d2, ell, shape=swig vsplit] at ($(y2)+(-0.75cm,-2.5cm)$) {\nodepart{left}{$D_2(0)$} \nodepart{right}{$0$}};
    \node[name=d3, ell, shape=swig vsplit] at ($(y3)+(-0.5cm,-2.5cm)$) {\nodepart{left}{$D_3(0,0)$} \nodepart{right}{$0$}};
    \node[name=x0, ell, shape=ellipse] at ($(y0)+(-0.5cm,-4.5cm)$) {$X_0$};
    \node[name=x1, ell, shape=ellipse] at ($(y1)+(-1cm,-4.5cm)$) {$X_1$};
    \node[name=x2, ell, shape=ellipse] at ($(y2)+(-1cm,-4.5cm)$) {$X_2$};
    \node[name=x3, ell, shape=ellipse] at ($(y3)+(-1cm,-4.5cm)$) {$X_3$};
    \node[name=u_dyx, unobs, shape=ellipse] at ($(y1)+(2cm,-7cm)$) {$\UDYX$};
    \node[name=u0, unobs, shape=ellipse] at ($(y0)+(0cm,+1.5cm)$) {$U_{Y_0}$};
    \node[name=u1, unobs, shape=ellipse] at ($(y1)+(0cm,+1.5cm)$) {$U_{Y_1}$};
    \node[name=u2, unobs, shape=ellipse] at ($(y2)+(0cm,+1.5cm)$) {$U_{Y_2}$};
    \node[name=u3, unobs, shape=ellipse] at ($(y3)+(0cm,+1.5cm)$) {$U_{Y_3}$};

% Edges

\draw[->,line width=1pt,>=stealth]
(d1) edge (y2)
(d1) edge (y3.220)
(d1) edge (d2)
(d2) edge (d3)
(u_dyx) edge[bend left = 40] node[pos = 0.99, right] {$+ \alpha$}  (y0)
(u_dyx) edge node[pos = 0.99, right] {$+ \alpha$} (y1.310)
(u_dyx) edge[bend right = 30] node[pos = 0.99, right] {$+ \alpha$} (y2.345)
(u_dyx) edge[bend right = 55] node[pos = 0.99, right] {$+ \alpha$} (y3.355)
(u_dyx) edge (d1.260)
(u_dyx) edge[bend left = 10] (d2.200)
(u_dyx) edge[bend left = 20] (x0)
(u_dyx) edge (x1)
(u_dyx) edge (x2)
(u_dyx) edge (x3)
(u0) edge (y0)
(u1) edge (y1)
(u2) edge (y2)
(u3) edge (y3)
(x0) edge[bend left = 20] (y0)
(x1) edge[bend left = 50] (y1)
(x2) edge[bend right = 35] (y2.330)
(x3) edge[bend right = 50] (y3.335)
(x1) edge (d1.225)
(x2) edge (d2)
(x3) edge (d3)
(x0) edge (x1)
(x1) edge (x2)
(x2) edge (x3)
;

\draw[->,line width=1pt,>=stealth] (d2.30) -- (y2);
\draw[->,line width=1pt,>=stealth] (d1.45) -- (y1);
\draw[->,line width=1pt,>=stealth] (d3.30) -- (y3.295);
\draw[->,line width=1pt,>=stealth] (d2.20) -- (y3.250);

    \end{tikzpicture}
    }
    }
\begin{minipage}{\linewidth}
{\footnotesize
\begin{align*}
    Y_0  &\defeq f_{Y_0}(\UDYX,X_0,U_{Y_0})        \\
    &= \alpha(\UDYX) + g_{Y_0}(X_0,U_{Y_0})  \\
    Y_t(\overline{0}_t) & \defeq \alpha(\UDYX) + g_{Y_t}(X_t,U_{Y_t}), \ \ t \geq 1  \\
    D_1   &\defeq f_{D_1}(\UDYX,X_1,U_{D_1}),  \\
    D_t(\overline{0}_{t-1})   &\defeq f_{D_t}(\UDYX,X_t(\overline{0}_{t-1}),\overline{0}_{t-1},U_{D_t}), \ t \geq 2  \\
    X_0  &\defeq f_{X_0}(\UDYX,U_{X_0})   \\
    X_1  &\defeq f_{X_1}(\UDYX,U_{X_1})  \\
    X_t &\defeq f_{X_t}(\UDYX,U_{X_t}),  \ \ t \geq 2 
\end{align*} 
}
\end{minipage}
\caption{Without treatment-covariate feedback}
        \label{fig:swig-2x4}
\end{subfigure}
\hfill
\begin{subfigure}[b]{0.48\textwidth}
\centering
\makebox[\linewidth][c]{%
  \resizebox{0.98\linewidth}{!}{%
    \begin{tikzpicture}
        \tikzset{line width=1.5pt, ell/.style={draw, fill = white, inner xsep=5pt,inner ysep=5pt, line width=1.5pt}, unobs/.style={ fill = none, inner xsep=5pt,inner ysep=5pt, line width=1.5pt}, swig vsplit={gap=5pt, inner line width right=0.5pt}};

% Nodes
    \node[name=y0, ell, shape=ellipse]{$Y_0\textcolor{gray}{(0,0,0)}$};
    \node[name=y1, ell, shape=ellipse] at ($(y0)+(3.5cm,0cm)$) {$Y_1(0\textcolor{gray}{,0,0})$};
    \node[name=y2, ell, shape=ellipse] at ($(y1)+(3.5cm,0cm)$) {$Y_2(0,0\textcolor{gray}{,0})$};
    \node[name=y3, ell, shape=ellipse] at ($(y2)+(3.5cm,0cm)$) {$Y_3(0,0,0)$};
    \node[name=d1, ell, shape=swig vsplit] at ($(y1)+(-0.5cm,-2.5cm)$){\nodepart{left}{$D_1$} \nodepart{right}{$0$}};
    \node[name=d2, ell, shape=swig vsplit] at ($(y2)+(-0.75cm,-2.5cm)$) {\nodepart{left}{$D_2(0)$} \nodepart{right}{$0$}};
    \node[name=d3, ell, shape=swig vsplit] at ($(y3)+(-0.5cm,-2.5cm)$) {\nodepart{left}{$D_3(0,0)$} \nodepart{right}{$0$}};
    \node[name=x0, ell, shape=ellipse] at ($(y0)+(-0.5cm,-4.5cm)$) {$X_0$};
    \node[name=x1, ell, shape=ellipse] at ($(y1)+(-1cm,-4.5cm)$) {$X_1$};
    \node[name=x2, ell, shape=ellipse] at ($(y2)+(-0.8cm,-4.5cm)$) {$X_2(0)$};
    \node[name=x3, ell, shape=ellipse] at ($(y3)+(-0.8cm,-4.5cm)$) {$X_3(0,0)$};
    \node[name=u_dyx, unobs, shape=ellipse] at ($(y1)+(2cm,-7cm)$) {$\UDYX$};
    \node[name=u0, unobs, shape=ellipse] at ($(y0)+(0cm,+1.5cm)$) {$U_{Y_0}$};
    \node[name=u1, unobs, shape=ellipse] at ($(y1)+(0cm,+1.5cm)$) {$U_{Y_1}$};
    \node[name=u2, unobs, shape=ellipse] at ($(y2)+(0cm,+1.5cm)$) {$U_{Y_2}$};
    \node[name=u3, unobs, shape=ellipse] at ($(y3)+(0cm,+1.5cm)$) {$U_{Y_3}$};

% Edges

\draw[->,line width=1pt,>=stealth]
(d1) edge (y2)
(d1) edge (y3.220)
(d1) edge (d2)
(d2) edge (d3)
(u_dyx) edge[bend left = 40] node[pos = 0.99, right] {$+ \alpha$}  (y0)
(u_dyx) edge node[pos = 0.99, right] {$+ \alpha$} (y1.310)
(u_dyx) edge[bend right = 30] node[pos = 0.99, right] {$+ \alpha$} (y2.345)
(u_dyx) edge[bend right = 60] node[pos = 0.99, right] {$+ \alpha$} (y3.355)
(u_dyx) edge (d1.260)
(u_dyx) edge[bend left = 12] (d2.200)
(u_dyx) edge[bend left = 20] (x0)
(u_dyx) edge (x1)
(u_dyx) edge (x2)
(u_dyx) edge (x3)
(u0) edge (y0)
(u1) edge (y1)
(u2) edge (y2)
(u3) edge (y3)
(x0) edge[bend left = 20] (y0)
(x1) edge[bend left = 50] (y1)
(x2) edge[bend right = 35] (y2.330)
(x3) edge[bend right = 45] (y3.335)
(x1) edge (d1.225)
(x2) edge (d2.285)
(d1) edge (x2)
(x3) edge (d3)
(d2) edge (x3)
(x0) edge (x1)
(x1) edge (x2)
(x2) edge (x3)
;

\draw[->,line width=1pt,>=stealth] (d2.30) -- (y2);
\draw[->,line width=1pt,>=stealth] (d1.45) -- (y1);
\draw[->,line width=1pt,>=stealth] (d3.30) -- (y3.295);
\draw[->,line width=1pt,>=stealth] (d2.20) -- (y3.250);

    \end{tikzpicture}
    }
    }
\begin{minipage}{\linewidth}
{\footnotesize
\begin{align*}
    Y_0  &\defeq f_{Y_0}(\UDYX,X_0,U_{Y_0})        \\
    &= \alpha(\UDYX) + g_{Y_0}(X_0,U_{Y_0})  \\
    Y_t(\overline{0}_t) & \defeq \alpha(\UDYX) + g_{Y_t}(X_t(\overline{0}_{t-1}),U_{Y_t}), \ \ t \geq 1  \\
    D_1   &\defeq f_{D_1}(\UDYX,X_1,U_{D_1}),  \\
    D_t(\overline{0}_{t-1})   &\defeq f_{D_t}(\UDYX,X_t(\overline{0}_{t-1}),\overline{0}_{t-1},U_{D_t}), \ t \geq 2  \\
    X_0  &\defeq f_{X_0}(\UDYX,U_{X_0})   \\
    X_1  &\defeq f_{X_1}(\UDYX,U_{X_1})  \\
    X_t(\overline{0}_{t-1}) &\defeq f_{X_t}(\UDYX,\overline{0}_{t-1},U_{X_t}),  \ \ t \geq 2 
\end{align*} 
}
\end{minipage}
\caption{With treatment-covariate feedback}
        \label{fig:swig-2x4-d-x}
\end{subfigure}
\end{figure}

\end{document}